\newif\ifready\readyfalse
\pgfplotsset{
    compat=1.3,
    legend image code/.code={
        \draw [#1] (0cm,-0.1cm) rectangle (0.6cm,0.1cm);
    },
}
\def\R{\mathbb{R}}
\newtheorem{theorem}{Theorem}[section]
\newtheorem*{theorem*}{Theorem}
\newtheorem{proposition}[theorem]{Proposition}
\newtheorem*{proposition*}{Proposition}
\newtheorem{lemma}[theorem]{Lemma}
\newtheorem*{lemma*}{Lemma}
\newtheorem{corollary}[theorem]{Corollary}
\newtheorem*{conjecture*}{Conjecture}
\newtheorem*{fact*}{Fact}
\newtheorem{observation}[theorem]{Observation}
\newtheorem*{hypothesis*}{Hypothesis}
\newtheorem{conjecture}[theorem]{Conjecture}
\newtheorem{itheorem}[theorem]{Informal Theorem}
\newtheorem{claim}[theorem]{Claim}
\newtheorem*{claim*}{Claim}
\theoremstyle{definition}
\newtheorem{definition}[theorem]{Definition}
\newtheorem{question}[theorem]{Question}
\theoremstyle{remark}
\newtheorem*{remark*}{Remark}
\newcommand{\eat}[1]{}
\newcommand{\xtil}{\tilde{x}}
\newcommand{\su}[1]{^{(#1)}}
\newcommand{\N}{\mathbb{N}}
\newcommand{\Z}{\mathbb{Z}}
\newcommand{\calA}{\mathcal{A}}
\newcommand{\calN}{\mathcal{N}}
\newcommand{\calU}{\mathcal{U}}
\newcommand{\X}{\mathcal{X}}
\newcommand{\cE}{\mathcal{E}}
\newcommand{\cS}{\mathcal{S}}
\newcommand{\poly}{\mathrm{poly}}
 \newcommand{\set}[1]{\{ #1 \} }
\newcommand{\norm}[1]{\lVert #1 \rVert}
\newcommand{\Bignorm}[1]{\Big\lVert#1\Big\rVert}
\newcommand{\iprod}[1]{\langle#1\rangle}
\newcommand{\Esymb}{\mathbb{E}}
\newcommand{\Psymb}{\mathbb{P}}
\DeclareMathOperator*{\E}{\Esymb}
 \DeclareMathOperator*{\ProbOp}{\Psymb}
\renewcommand{\Pr}{\ProbOp}
\newcommand{\diag}{\text{diag}}
\newcommand{\rank}{\mathsf{rank}}
 \newcommand{\eps}{\varepsilon}
\renewcommand{\epsilon}{\varepsilon}
\newcommand{\calV}{\mathcal{V}}
\newcommand{\ba}{\mathbf{a}}
\newcommand{\field}{\R}
\newcommand{\U}{\mathcal{U}}
\newcommand{\tcU}{\widetilde{\mathcal{U}}}
\newcommand{\veep}{\circledast}
\newcommand{\Sym}{\mathsf{Sym}}
\newcommand{\spn}{\mathrm{span}}
\newcommand{\colspn}{\mathrm{colspan}}
\newcommand{\projsym}{\Sym}
\newif\ifnotes\notesfalse
\definecolor{mygrey}{gray}{0.50}
\newcommand{\notename}[2]{{\textcolor{blue}{\footnotesize{\bf (#1:} {#2}{\bf ) }}}}
\def\eric{\color{orange}}
\newcommand{\enote}[1]{{\notename{Eric}{#1}}}
\newcommand{\vnote}[1]{{\notename{Vaidehi}{#1}}}
\newcommand{\bnote}[1]{{\notename{Aditya}{#1}}}
\newcommand{\anote}[1]{{\notename{Aravindan}{#1}}}
\newcommand{\notename}[2]{{}}
\newcommand{\enote}[1]{}
\newcommand{\vnote}[1]{}
\newcommand{\bnote}[1]{}
\newcommand{\anote}[1]{}
\newcommand{\vx}{\mathbf{x}}
\newcommand{\Xtil}{\widetilde{X}}
\newcommand{\calM}{\mathcal{M}}
\title{
New Tools for Smoothed Analysis: Least Singular Value Bounds for Random Matrices with Dependent Entries
}
\date{}
\author{Aditya Bhaskara\thanks{University of Utah, \href{mailto:bhaskaraaditya@gmail.com}{\texttt{bhaskaraaditya@gmail.com}}}, Eric Evert\thanks{Northwestern University, \href{mailto:eric.evert@northwestern.edu}{\texttt{eric.evert@northwestern.edu}}}, Vaidehi Srinivas\thanks{Northwestern University, \href{mailto:vaidehi@u.northwestern.edu}{\texttt{vaidehi@u.northwestern.edu}}}, Aravindan Vijayaraghavan\thanks{Northwestern University, \href{mailto:aravindv@northwestern.edu}{\texttt{aravindv@northwestern.edu}}}}
\begin{document}

\maketitle
\begin{abstract}

We develop new techniques for proving lower bounds on the least singular value of random matrices with limited randomness. The matrices we consider have entries that are given by polynomials of a few underlying base random variables. This setting captures a core technical challenge for obtaining smoothed analysis guarantees in many algorithmic settings. Least singular value bounds often involve showing strong anti-concentration inequalities that are intricate and much less understood compared to concentration (or large deviation) bounds. 

First, we introduce a general technique  
involving a hierarchical $\epsilon$-nets to prove least singular value bounds. 
Our second tool is a new statement about least singular values to reason about higher-order lifts of smoothed matrices, and the action of linear operators on them. 

Apart from getting simpler proofs of existing smoothed analysis results, we use these tools to now handle more general families of random matrices. This allows us to produce smoothed analysis guarantees in several previously open settings. These include new smoothed analysis guarantees for power sum decompositions, subspace clustering and certifying robust entanglement of subspaces, where prior work could only establish least singular value bounds for fully random instances or only show non-robust genericity guarantees.
    
\end{abstract}

\thispagestyle{empty}

\newpage
\thispagestyle{empty}
\tableofcontents

\newpage
\clearpage
\pagenumbering{arabic} 

\section{Introduction} \label{sec:intro}

Over the past two decades, there has been significant progress in using algebraic methods for high-dimensional statistical estimation (e.g.,~\cite{AnandkumarGHKT15}). Techniques like tensor decomposition have been used for parameter estimation in mixture models~\cite{AnandkumarHK12,BCV,GoyalVX2014}, shallow neural networks~\cite{SedghiA16,ATV21}, stochastic block models~\cite{AnandkumarGHKT15}, and more~\cite{Nikosreview}. Recently, more sophisticated decomposition methods based on tensor networks~\cite{MoitraW19}, 
circuit complexity~\cite{GKS2020} and algebraic geometry~\cite{GKS2020,JLV2023} have given to rise to new algorithms for many problems in high-dimensional geometry and parameter estimation. 
These algorithms start by building appropriate algebraic structures that ``encode'' the hidden parameters of interest. Then, they use the algebraic techniques described above for recovering the solution. 

Unfortunately, in most of these applications, the recovery problem turns out to be NP hard in general. So the algorithms have provable recovery guarantees only under certain {\em algebraic} conditions. 
Typically, these conditions can be formulated in terms of appropriately defined matrices being well-conditioned, i.e., having a non-negligible least singular value. 
Furthermore, the least singular value determines the sample complexity and running time, and so it is important to obtain inverse polynomial bounds.

Now it is natural to ask: {\em do the algebraic conditions typically hold?} 
Due to NP hardness, we know there exist parameters for which the conditions do not hold. But how common or rare are such parameter settings/instances? A strong way to address this question is via the framework of smoothed analysis, developed in the seminal work of Spielman and Teng~\cite{ST, bwcabook,Teng2023Survey}. A condition is said to hold in a smoothed analysis setting if for \emph{any} instance, a small random perturbation of magnitude, say $\rho=1/n^2$, results in an instance that satisfies the condition with high probability. 
Smoothed analysis guarantees show that any potential bad instance  is isolated or degenerate: most other instances in a small ball around it have good guarantees. 
On the one hand, smoothed analysis gives a much stronger guarantee than \emph{average case analysis}, where one shows that the condition holds w.h.p. for a random choice of parameters from some distribution. On the other hand, it provides quantitative, robust analogs of {\em genericity} results in algebraic settings, which are needed in most algorithmic applications. 

Considering the flavor of the algebraic non-degeneracy conditions, 
the problem of smoothed analysis boils down to the following: {\em given a matrix $\calM$ whose entries are functions (typically polynomials) of some base variables, does randomly perturbing the variables result in $\calM$ having a non-negligible least singular value with high probability? }

This question is non-trivial even in very specialized settings, as it is a statement about anti-concentration --- a topic that is less understood in probability theory than concentration or large deviation bounds. 
For example when the underlying variables form a matrix $U \in \R^{n \times m}$, the structured matrix $\calM = U \odot U=\big(u_i \otimes u_i\big)_{i \in [m]}$,\footnote{Here, $\otimes$ represents the standard tensor product or Kronecker product.} represents the Khatri-Rao product, and has been the subject of much past work~\cite{BCMV,ADMPSV18,BCPV} that developed intricate arguments specialized for this setting. Least singular value bounds of $\calM = \widetilde{U} \odot \widetilde{U}$ for randomly perturbed $\widetilde{U}$ have lead to smoothed analysis guarantees for several problems including tensor decomposition~\cite{BCMV}, recovering assemblies of neurons~\cite{ADMPSV18}, parameter estimation of latent variable models like mixtures of Gaussians~\cite{GHK}, hidden Markov models~\cite{BCPV}, independent component analysis~\cite{GVX14} and even learning shallow neural networks~\cite{ATV21}.
Another approach is to use concentration bounds to prove lower bounds on the least singular value~\cite{VershyninTensor, Ahn2016GraphMN, BHKX2022, RajendranT23Conc} for analyzing random instances; these techniques based on concentration bounds cannot handle smoothed instances. We
lack a broader toolkit that allows us to analyze more general classes of random matrices that arise in many other smoothed analysis settings of interest. 

Consider, for example 
the symmetric lift of the matrix $\widetilde{U}$ represented by 
\[
\widetilde{U}^{\veep 2} \coloneqq ((\tilde{u}_i \otimes \tilde{u}_j + \tilde{u}_j \otimes \tilde{u}_i) : 1 \le i \le j \le m),
\]
where the columns (up to reshaping) give a basis for the space of all the symmetric matrices that are supported on the subspace $\tilde{U}$. Here $\veep$ denotes the symmetrized Kronecker product. 

\begin{question}\label{qn:intro}
For a linear operator $\Phi$ 
 acting on the space of symmetric $n \times n$ matrices 
 (e.g., a projection matrix), can we obtain an inverse polynomial lower bound with high probability on the least singular value of the matrix
\[ \calM = \Phi (\widetilde{U}^{\veep 2})=\Big( \Phi(\tilde{u}_i \otimes \tilde{u_j} + \tilde{u}_j \otimes \tilde{u}_i): 1 \le i \le j \le m \Big),\]
when $m \le c n $ for a sufficiently small $c \in (0,1)$?
\end{question}

The new techniques developed in this paper, to our knowledge, give the first inverse polynomial lower bound on the least singular value of $\calM$, and its higher order generalizations; see Theorem~\ref{ithm:kronecker:sym}. As it turns out, this already captures the Khatri-Rao product $\widetilde{U} \odot \widetilde{U}$ setting as a special case by setting $m=1$ and $\Phi$ appropriately.  
One interpretation of the statement is that \(\widetilde{U} \veep \widetilde{U}\) acts like ``truly random'' subspace in the lifted space  $\Sym(\R^{n} \otimes \R^{n})$
with the same dimension.  With high probability, a random subspace of $\Sym(\R^n \otimes \R^n)$\footnote{$\Sym(\R^n \otimes \R^n)$ is the space of all symmetric $n \times n$ matrices.} with dimension $o(n^2)$ will not contain any vector near the kernel of $\Phi$. The affirmative answer to the above question shows that the lifted space that corresponds to column space of $(\widetilde{U})^{\veep 2}$ behaves similarly and is far from the kernel of $\Phi$! In other words, it is rotationally well-spread; it is not too aligned with any specific subspace.  Note that $\widetilde{U}$ only has about $nm$ truly independent coordinates or ``bits'', whereas a random subspace of the same dimension has $c \cdot n^2 m^2$ independent coordinates. Hence the lift $\calU^{\veep 2}$ of a smoothed subspace $\calU$ acts ``pseudorandom'' -- it acts like a random subspace in the lifted space with respect to all linear operators of reasonable rank. 

Matrices of this flavor arise in open questions about the smoothed analysis of various algebraic algorithms for problems like robust certification of quantum entanglement in subspaces, certifying distance from varieties~\cite{JLV2023}, and decomposition into sums of powers of polynomials~\cite{GKS2020, BHKX2022}. 
Specifically, rank-$1$ matrices (of unit norm) correspond to separable or non-entangled states in bipartite quantum systems. For a certain specific choice of $\Phi$, the positive resolution of Question~\ref{qn:intro} certifies that a smoothed subspace of $n_1 \times n_2$ matrices of dimension $c n_1 n_2$ (for some $c>0$) is far from any rank-$1$ matrix of unit norm. 
Moreover, in the recent algebraic algorithms of \cite{GKS2020, BHKX2022}, they consider generic or random subspaces $\U_1, \U_2, \dots, \U_t \subset \R^{n}$ and they need to argue that the corresponding $d$th order lifts $\U_1^{\veep d}, \U_2^{\veep d}, \dots, \U_t^{\veep d}$ are far from each other.

Our results give a novel and modular way to analyze such matrices. Our contributions are two fold: 
\begin{itemize}
    \item We give new tools for proving least singular value lower bounds via $\eps$-nets. This involves identifying a key property that is sufficient for carrying forth net based arguments, and giving a new tool for proving such a property.
    \item We consider higher-order lifts of smoothed matrices and linear operators applied to them, as considered in Question~\ref{qn:intro}. We prove lower bounds on the least singular value of these higher-order lifts. This is our main technical result and this helps resolve open questions in smoothed analysis raised in~\cite{GKS2020,BHKX2022,JLV2023}.
    \anote{Edited this paragraph.}
\end{itemize}

\subsection{Our Results -- New Tools for Least Singular Value Lower Bounds}\label{sec:results}

\subsubsection{Hierarchical Nets}
Our first set of results focus on $\eps$-net based arguments for proving bounds for least singular values. Suppose we have a random matrix $\calM$, the idea is to consider a fixed ``test'' vector $\alpha$, prove that $\norm{\calM \alpha}$ is large enough with high probability, and then take a union bound over ``all possible vectors $\alpha$''. As the set of candidate $\alpha$ is infinite, the idea is to take a fine enough net over possible vectors $\alpha$. 
The challenge when dealing with structured matrices (of the kind discussed above) is that for a single test vector $\alpha$, we do not obtain a sufficiently strong probability guarantee. This is because the individual columns of $\calM$ may not have ``sufficient randomness'', and since we do not know how $\alpha$ spreads its mass across columns, the bound will be weak. Our main observation is that in the matrices we consider for our application, as long as $\alpha$ is \emph{well spread}, we can obtain a much stronger bound. We refer to this as a ``combination amplifies anticoncentration'' (CAA) property of $\calM$. 

\emph{CAA Property} (Informal Definition). We say that $\calM$ has the CAA property if for every $k \geq 1$, for any test vector $\alpha$ that has $k$ entries of magnitude $\ge \delta$, we have that $\norm{\calM \alpha} \ge \Omega(\delta)$, with probability $1 - \exp(-\omega(k))$. 

Formally, to capture the $\omega(k)$ term, we have a parameter $\beta$. See Definition~\ref{defn:caa-property} for details. Our first result is that for any matrix with this property, we have a bound on $\sigma_{\min}(\calM)$.

\begin{itheorem}
    Suppose $\calM$ is a random matrix with $m$ columns and that $\calM$ satisfies the CAA property with parameter $\beta >0$. Then with high probability (indeed, exponentially small probability of failure), we have $\sigma_{\min} (\calM) > \poly(1/m)$. (See Theorem~\ref{thm:caa-to-sigma} for the formal statement.)
\end{itheorem}

The proof uses a novel $\eps$-net construction. Nets that use structural properties of the test vector $\alpha$ have been used in prior works in the context of proving least singular value bounds, notably in the celebrated work of Rudelson and Vershynin~\cite{RudelsonV}. In proving our result, the natural approach of constructing a hierarchy of nets based on increasing $k$ (and using some threshold $\delta$) does not work. Informally, this is because the error from ignoring terms that are slightly smaller than $\delta$ can add up significantly, causing the argument to fail. We introduce a new hierarchical construction that overcomes this problem.

The above technique shows that establishing  the CAA property for the random matrix $\calM$ of interest suffices for proving least singular value lower bounds. This can be shown via a direct argument when $\calM$ is simple, e.g., a random matrix with independent entries. However, for matrices with more structured entries, it can need a careful analysis. 
As one of our applications, in  subsection~\ref{intro:jacobianrank}, we establish the CAA property by using a new technique involving the Jacobian of some polynomial maps, and obtain alternate proofs of existing smoothed analysis results in~\cite{BCMV} and~\cite{ADMPSV18}. This technique may have other applications to proving anti-concentration of polynomials of random variables. 
\anote{Moved subsection and added some text here.}

\subsubsection{Higher Order Lifts of Smoothed Matrices}
\anote{edited this on May 1. Please read if you can. }
Next, we consider a general class of matrices that are obtained by linear operator $\Phi$ applied to higher order lifts i.e.,  taking the symmetrized Kronecker product of some $\rho$-perturbation $\widetilde{U}$ of an underlying matrix $U$. Here, the $\rho$-perturbation $\widetilde{U}$ of $U$ means that every entry of $\widetilde{U}$ is obtained by adding a small Gaussian perturbation $\calN(0,\rho^2)$ to the corresponding entry of $U$, independent of other entries.  
In other words, the matrix of interest is $\calM = \Phi \widetilde{U}^{\veep d}$, where $d$ is a constant.
We can ask the question: are there conditions on $\Phi$ under which we can prove that $\sigma_{\min} (\calM)$ is large, with high probability over the perturbation? This question is interesting even in the setting when $\Phi$ is a projection matrix onto a subspace. We show that $\sigma_{\min}(\calM)$ is non-negligible with high probability, for any matrix $\Phi$ of sufficiently large rank (note that an assumption on the rank is necessary). 

This question captures a variety of settings studied previously. For example,~\cite{BCPV} studies matrices $\calM$ whose columns are tensor products of some underlying vectors (i.e., the columns have the form $u_{i_1} \otimes u_{i_2} \otimes \dots \otimes u_{i_d}$). This turns out to be a special case of our setting above. Likewise, in the works of both of~\cite{BHKX2022} and \cite{chandra2024Learning},  they consider a random matrix $\calM$ formed by concatenating the Kronecker products of a collection of underlying matrices, and the analysis of their algorithms relies on $\sigma_{\min} (\calM)$ being non-negligible. This also falls into our setting by choosing $\Phi$ appropriately (as we show in Corollary~\ref{corr:kron:blocks}). 
Finally, as we discuss in our applications, the setting $\calM = \Phi \widetilde{U}^{\veep d}$ also directly appears in the work of~\cite{JLV2023}.

The following is an informal statement of our result. $\Sym_d (\R^n)$ will refer to a symmetrization of $(\R^n)^{\otimes d}$.\footnote{The latter can be viewed as having a coordinate for all ``ordered'' monomials of degree $d$ in $n$ variables (e.g., $x_i x_j$ and $x_j x_i$ correspond to different coordinates), while the former collects the terms with the same product. See Section~\ref{sec:prelims} for a formal description.} Also, as before, $\sigma_{\min}$ corresponds to right singular vectors.

\begin{itheorem}\label{ithm:kronecker:sym}
Suppose $\Phi$ be a projection matrix of rank $\delta \binom{n+d-1}{d}$ for some constant $\delta >0$, and let $U$ be any $n\times m$ matrix. Let $\widetilde{U}$ be a $\rho$-perturbation of $U$. Then as long as $m \le cn$ for some constant $c$, we have $\ge 1- \exp(-\Omega(n))$,
\[ \sigma_{\min} ( \Phi \widetilde{U}^{\veep d}) \ge \poly \left( \rho, \frac{1}{n} \right). \]
(See Theorem~\ref{thm:kronecker:sym} for a formal statement.)
\end{itheorem}
\anote{Modified some text here.}
Note that the above Theorem~\ref{ithm:kronecker:sym} with $d=2$ answers Question~\ref{qn:intro} affirmatively. It also proves a similar statement for $d$th order lifts of the form $\widetilde{U}^\veep d$. It shows that the column space of this lift is far from the nullspace of any linear operator of large rank, similar to a random subspace in the lifted space of the same dimension. This ``pseudorandom'' property is surprisingly true even though we have only $dnm$ random ``bits'' as opposed to $\Omega_d((mn)^d)$. 
As we describe in Section~\ref{sec:overview}, the proof relies on first moving to non-symmetric products via a new decoupling argument. In the case of non-symmetric products, we end up having to analyze the least singular value of a matrix of the form $\Phi (\widetilde{U}^{(1)} \otimes \widetilde{U}^{(2)} \otimes \dots \otimes \widetilde{U}^{(d)} )$. This can be interpreted as a ``modal contraction'' (or dimension reduction of the mode) defined by $\{ \widetilde{U}^{(i)}\}$ applied to the tensor $\Phi$. We then show how to analyze such {\em smoothed modal contractions}, which ends up being one our technical contributions (see Section~\ref{sec:overview:kronecker} and Theorem~\ref{thm:kronecker:asym}). The above theorem constitutes the main technical result of this paper, and helps establish the smoothed analysis results that we describe in sections~\ref{sec:results:quantum},\ref{sec:results:bafna} and \ref{sec:results:subspace}.  



\subsection{Applications}

\subsubsection{Anti-concentration of a Vector of Polynomials and alternate proof of [Bhaskara, Charikar, Moitra, Vijayaraghavan]}
\label{intro:jacobianrank}
\anote{Maybe we need to add a connecting statement?}
\bnote{added some: please take a look}

In order to apply our hierarchical nets technique, we have to prove that a random matrix $\calM$ of interest satisfies the CAA property. This requires taking a well-spread (in the sense discussed earlier) vector $\alpha$, and bounding the probability that $\norm{\calM \alpha}$ is small. In general, viewing $\calM \alpha$ as a vector whose entries are polynomials of the underlying random variables, this amounts to proving an anti-concentration statement for a vector of polynomials. 
We  develop a sufficient condition for proving such results. This lets us establish the CAA property for some well-studied settings for $\calM$.  

Consider $P(x) := (p_1 (x), p_2 (x), \dots, p_N (x))$, where each $p_i$ is a polynomial of $n$ ``base'' random variables. Suppose we wish to show anti-concentration bounds for $P(\tilde{x})$, where $\tilde{x}$ is a perturbation of some $x$ (i.e., we wish to bound the probability that $P(\tilde{x})$ is within a small ball of a point $y$ is small, for all $y$).  One hope is to use a coordinate-wise bound (e.g., using known results like~\cite{KimVu}) and take the product over $1, 2, \dots, N$. It is easy to see that this is too good to be true: consider an example where $p_i$ are all equal; here having $N$ coordinates is the same as having just one. So we need a good metric for ``how different'' the polynomials $p_i$ are for a \emph{typical} $x$. We capture this notion using the Jacobian of the polynomial map $P$.  Recall that in this case, the Jacobian $J(x)$ is a matrix with one column per $p_i$, containing the vector of partial derivatives, $\nabla p_i (x)$. 

\emph{Jacobian rank property} (Informal Definition).  We say that $P(x)$ has the Jacobian rank property if for every $x$, at a perturbed point $\tilde{x}$, $J(\tilde{x})$ has at least $k$ singular values that are \emph{large enough} (where $k$ is a parameter).

We refer to Definition~\ref{defn:jacobian-rank} for the formal statement. Our result here is that this property implies anticoncentration:

\begin{itheorem}
    Suppose $P(x)$ defined as above satisfies the Jacobian rank property with parameter $k$. Then for a perturbation of any point $x$, we have that $\forall y$, $\Pr[ \norm{P(\tilde{x}) - y} < \eps ] < \exp(-\Omega(k))$. (Here, $\eps$ is a quantity that depends on the dimensions, $k$, the perturbation, and the singular value guarantee; see Theorem~\ref{thm:jacobian-to-antic} for the formal statement.)
\end{itheorem}

Intuitively, the Jacobian having several large singular values must result in anticoncentration (because $P(x)$ locally behaves linearly). However, the challenging aspect is that the Jacobian need not \emph{always} have many large singular values. Our assumption (Jacobian rank property) is itself made for a perturbed vector, i.e., we assume that $J(\tilde{x})$ has many high singular values with high probability. Further, the magnitude of these singular values will depend on the perturbation: if a ``bad'' $x$ was perturbed by $\rho$, $J(\tilde{x})$ will have most of the large singular values being $\approx \rho$. Dealing with this issue turns out to be the main challenge in proving the theorem (see Theorem~\ref{thm:jacobian-to-antic} for a formal statement). 

As an application of the Jacobian rank method, we re-prove the main result  of~\cite{BCMV} and~\cite{ADMPSV18}. They consider random matrices $\calM$ where the $i$th column is $\tilde{u}_i \otimes \tilde{v}_i$, and $\tilde{u}_i, \tilde{v}_i$ are perturbed vectors in $\R^n$. We show that this $\calM$ satisfies the CAA property, and thus our first result (above) implies a condition number lower bound. In order to prove the CAA property, we consider a combination of the columns $\sum_i \alpha_i (\tilde{u}_i \otimes \tilde{v}_i)$ and prove that if $\alpha$ has $k$ entries $\ge \delta$, then the Jacobian has $n k/2$ large singular values. Using our second result, we obtain a strong anticoncentration bound, thus completing the proof. This technique also lets us tackle Question~\ref{qn:intro} described above, but in what follows, we describe a different technique that also generalizes to higher orders.

\subsubsection{Certifying distance from variety and quantum entanglement} \label{sec:results:quantum}
Next, we discuss applications of our results on least singular value bounds for higher order lifts. The first such application is to the problem of certifying that a variety is ``far'' from a \emph{generic} linear subspace. As a simple motivation, suppose we have a linear subspace $\X$ of dimension $\delta n$ in $\R^n$ (assume $\delta < 1/2$). Then for a randomly $\rho$-perturbed subspace $\tcU$ of dimension $< n/2$, we can show that the two spaces have no overlap in a strong sense: every unit vector $u \in \X$ is at a distance $\Omega(\rho)$ from $\tcU$. It is natural to ask if a similar statement holds when $\X$ is an algebraic variety (as opposed to a subspace). This problem also has applications to quantum information, by instantiating with specific choices of the variety $\X$ e.g., the set of all rank-$1$ matrices of dimension $n_1 \times n_2$ (see~\cite{JLV2023} and references therein). Furthermore, we can ask if there is an efficient algorithm that can \emph{certify} that every unit vector in $\X$ is far from $\tcU$. 

We answer both these questions in the affirmative.

\begin{itheorem}
Suppose $\X \subset \R^n$ is an irreducible variety cut out by $\delta \binom{n+d-1}{d}$ homogeneous degree $d$ polynomials. There exists a $c>0$ such that for any $\rho$-perturbed subspace $\tcU$ of dimension at most $cn$, with probability $1- \exp(-\Omega(n))$, every unit vector in $\X$ has distance $\ge \poly\left( \rho, \frac{1}{n} \right)$ to $\tcU$. Further, this can be certified by an efficient algorithm. (See Theorem~\ref{thm:certifyingsections} for the formal statement.)
\end{itheorem}
%
%

The recent work of~\cite{JLV2023} gave an algorithm that we also use, but our new least singular value bounds imply the quantitative distance lower bound stated above. 
Applying this theorem with the variety of rank-$1$ matrices gives the following direct corollary. 
\begin{corollary}\label{cor:X1}
There is a polynomial time algorithm that given a random $\rho$-perturbed subspace $\widetilde{\calU}$ of $n_1 \times n_2$ matrices of dimension $m \le c n_1n_2$ (for some universal constant $c>0$) certifies w.h.p. that $\widetilde{\calU}$ is at least $\poly(\rho,1/n)$ far from every rank-$1$ matrix of unit norm. 
\end{corollary}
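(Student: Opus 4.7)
The plan is to instantiate the preceding certifying-distance theorem (the informal theorem just above Corollary~\ref{cor:X1}) with $\mathcal{X}$ taken to be the variety $\mathcal{X}_1 \subset \R^{n_1 n_2}$ of rank-one $n_1 \times n_2$ matrices, under the standard identification $\R^{n_1 \times n_2} \cong \R^{n_1 n_2}$. Both the quantitative distance bound and the polynomial-time certification algorithm then follow verbatim, provided that the two hypotheses of that theorem are checked for $\mathcal{X}_1$.

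The first hypothesis is irreducibility: $\mathcal{X}_1$ is the affine cone over the Segre embedding $\mathbb{P}^{n_1-1} \times \mathbb{P}^{n_2-1} \hookrightarrow \mathbb{P}^{n_1 n_2 - 1}$, which is a classical irreducible projective variety, so $\mathcal{X}_1$ is irreducible. The second hypothesis is that $\mathcal{X}_1$ is cut out by at least $\delta \binom{n+d-1}{d}$ homogeneous degree-$d$ polynomials for some absolute constant $\delta > 0$. Taking $d = 2$ and $n := n_1 n_2$, a matrix $X = (x_{ij})$ lies in $\mathcal{X}_1$ iff every $2 \times 2$ minor $x_{i_1 j_1} x_{i_2 j_2} - x_{i_1 j_2} x_{i_2 j_1}$ vanishes, giving $\binom{n_1}{2}\binom{n_2}{2}$ homogeneous quadratic equations. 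A short combinatorial check shows these minors are linearly independent in $\Sym^2(\R^{n_1 n_2})$: the monomial $x_{i_1 j_1} x_{i_2 j_2}$ with $i_1<i_2$ and $j_1 \ne j_2$ appears in precisely one minor. Since $\dim \Sym^2(\R^{n_1 n_2}) = \binom{n_1 n_2 + 1}{2}$, the ratio
\[
\frac{\binom{n_1}{2}\binom{n_2}{2}}{\binom{n_1 n_2 + 1}{2}} = \frac{(n_1-1)(n_2-1)}{2(n_1 n_2 + 1)}
\]
tends to $1/2$ as $n_1, n_2 \to \infty$, so the hypothesis is satisfied with any fixed $\delta < 1/2$ once $n_1, n_2$ exceed a small absolute constant (the remaining finite set of small cases is trivial, as brute-force certification suffices there).

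With both hypotheses verified, the certifying-distance theorem applied with $d = 2$ and this $\delta$ immediately yields a universal constant $c > 0$ such that for any $\rho$-perturbed subspace $\tcU \subset \R^{n_1 n_2}$ of dimension $m \le c\, n_1 n_2$, with probability at least $1 - \exp(-\Omega(n_1 n_2))$, every unit vector in $\mathcal{X}_1$ has distance at least $\poly(\rho, 1/n_1 n_2)$ to $\tcU$. The efficient certifier is the one already produced by the theorem (built on the subspace-section algorithm of \cite{JLV2023}); fed the explicit list of $2 \times 2$ minors, it runs in time polynomial in $n_1 n_2$ and in a basis description of $\tcU$.

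The only non-bookkeeping step is the linear-independence count establishing that $\delta$ is a positive absolute constant; this is the main (and still easy) obstacle. Once that is in hand, Corollary~\ref{cor:X1} is a direct specialization of the certifying-distance theorem, and no additional probabilistic estimate or algorithmic ingredient is needed.
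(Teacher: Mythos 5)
Your proposal is correct and follows exactly the same route as the paper: Corollary~\ref{cor:X1} is obtained in the paper as the $r=1$ case of Corollary~\ref{cor:XR}, which is in turn an instantiation of Theorem~\ref{thm:certifyingsections} with the determinantal variety $\X_r$ cut out by the $(r+1)\times(r+1)$ minors; for $r=1$ this is precisely the $2\times 2$-minor count, irreducibility, and linear-independence argument you spell out (the paper cites \cite{JLV2023} for the count rather than re-deriving it, but the content is identical). Your explicit leading-monomial check for linear independence and the remark about small dimensions are fine, optional details that the paper leaves implicit.
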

The above theorem also has a direct implication to robustly certifying entanglement of different kinds, which we describe in Section~\ref{sec:entangled}. 

\subsubsection{Decomposing sums of powers of polynomials}\label{sec:results:bafna}
Our next application is to the problem of  ``decomposing power sums'' of polynomials, a question that has applications to learning mixtures of distributions. In the simplest setting,~\cite{GKS2020} and~\cite{BHKX2022} consider the following problem: given a polynomial $p(\vx)$ that can be expressed as
\begin{equation}\label{eq:intro:powersum}
    p(\vx) = \sum_{t \in [m]} a_t (\vx)^3 + e(\vx) 
\end{equation}
where $a_t$ are quadratic polynomials and $e(\vx)$ is a small enough error term, the goal is to recover $\{ a_t(\vx) \}_{t\in [m]}$.\footnote{This corresponds to the setting $K=2, D=1$ in their framework. We focus only on this setting, as it turns out to be representative of their techniques.} The work of~\cite{BHKX2022} gave an algorithm for this problem, but their analysis relies on certain \emph{non-degeneracy} conditions, which can be formulated as a lower bound on the least singular value of appropriate matrices. They prove that these conditions hold if the instances (i.e., the polynomials $a_t$) are \emph{random}, using the machinery of graph matrices~\cite{Ahn2016GraphMN}. However, the question of obtaining a smoothed analysis guarantee is left open. As discussed earlier, a smoothed analysis guarantee is much stronger than a guarantee for random instances, as it shows that even in the neighborhood of hard instances, most instances are easy.  


Their analysis requires least singular value bounds for various matrices that arise from higher order lifts and polynomials of some underlying random variables.  
For example, they require least singular value bounds on matrices of the form \(\Phi (\tilde{U}^{\veep 3})\), for a specific symmetrization operator \(\Phi\) that acts on the lifted space.  Another type of matrix that they analyze are \emph{block Kronecker products}, of the form \(V = [\tilde{U}_1^{\veep 2} \dots \tilde{U}_m^{\veep 2}]\) that arise from different partial derivatives.\footnote{The actual matrix is slightly different, and is described in detail in Section \ref{sec:bafna}.}  These kinds of matrices are ideal candidates for our techniques. 
These least singular bounds allow us to conclude that the algorithm of \cite{BHKX2022} indeed has a smoothed analysis guarantee. 

\anote{Added a pointer to \eqref{eq:intro:powersum}}
\begin{itheorem}
    The algorithm of \cite{BHKX2022} to decompose sums of 3rd powers of quadratic polynomials as expressed in \eqref{eq:intro:powersum}\footnote{This corresponds to the setting of \(d = 3, K = 2\) in the notation of \cite{BHKX2022}.} admits a smoothed analysis guarantee.  (This corresponds the formal statements of Propositions \ref{claim:symmetrization-invertible-over-Kronecker-basis}, \ref{lem:projected-U-equals-projected-V}, and \ref{claim:not-too-many-solutions-to-polynomial-equation}.) 
\end{itheorem}

In Section~\ref{sec:bafna}, we outline the algorithm of~\cite{BHKX2022}, identify the different non-degeneracy conditions required and show that each of these conditions holds for \emph{smoothed}/perturbed polynomials $a_t$. Interestingly, we can avoid the technically heavy machinery of graph matrices, while obtaining stronger (smoothed) results. 
Our new techniques can also help obtain smoothed analysis guarantees for other algebraic algorithms, as we describe next. 

\subsubsection{Subspace clustering}  \label{sec:results:subspace}
Very recently, \cite{chandra2024Learning} extended the framework of \cite{GKS2020} to obtain robust analogs of their corresponding unsupervised learning algorithms.  One application of their work is to the problem of \emph{subspace clustering}.
In this problem, we see a set of points \(A \subset \mathbb{R}^n\), that admits a hidden partition into \(A_1, \dots, A_s\), such that the points in each \(A_i\) are drawn from a hidden underlying \(m\)-dimensional subspace \(W_i\).  Our goal is to recover the subspaces \(W_1, \dots, W_s\) from \(A\), assuming that the subspaces \(W_i\) along with the points in \(A\) are sufficiently perturbed.    



To prove that this algorithm succeeds in the smoothed framework, \cite{chandra2024Learning} assume two conjectures.  We provide proofs of these conjectures, establishing rigorous guarantees for their method.  





\begin{itheorem}
    The algorithm of \cite{chandra2024Learning} for subspace clustering admits a smoothed analysis guarantee.  (This corresponds to establishing the formal statements in Conjecture \ref{conj:SubspaceBlockKron} and Conjecture \ref{kayal-conjecture-2}.)
\end{itheorem}

In Section \ref{sec:subspace-clustering}, we outline the algorithm, and the two conjectures that they assume.  Then we address the conjectures using our techniques.  We note that this algorithm is one in a class of algebraic algorithms that \cite{chandra2024Learning} design.  It is an interesting future direction to see if these techniques can also extend to their more general framework.  

\section{Proof Overview and Techniques} \label{sec:overview}
\subsection{Improved Net Analyses} \label{sec:overview:nets}

\paragraph{$\eps$-Nets and limitations.} The classic approach to proving least singular value bounds is an $\eps$-net argument. The argument proceeds by trying to prove that $\norm{\calM \alpha}$ is large for all $\alpha$ in the unit sphere. It does so by constructing a fine ``net'' over points in the sphere with the properties that (a) the net has a small number of points, and hence a union bound can establish the desired bound for points in the net, and (b) for every other point $\alpha$ in the sphere, there is a point $\alpha'$ in the net that is close enough, and hence the bound for $\alpha'$ ``translates'' to a bound for $\alpha$.  However, in settings where the columns $\Xtil_i$ of $\calM$ have ``limited randomness'', this approach cannot be applied in many parameter regimes of interest. The simplest example is one where each $\Xtil_i$ is of the form $\tilde{u}_i \otimes \tilde{u}_i$, where $\tilde{u}_i \in \R^n$ and we have around $m = n^2/4$ such vectors. In this case, (a) above causes a problem: the size of a net for unit vectors in a sphere in $\R^m$ is $\exp(m) = \exp(n^2/4)$. This is much too big for applying a union bound, since each column only has ``$n$ bits'' of randomness, so the failure probability we can obtain for a general $\alpha$ is $\exp(-n)$. For this specific example, the works~\cite{BCMV, ADMPSV18} overcome this limitation by considering alternative methods for showing least singular value bounds, not based on $\eps$-nets.

\paragraph{Main idea from Section~\ref{sec:eps-nets}.}  As described above, the limited randomness in each column $\Xtil_i$ limits the probability with which we can show that $\Pr[\norm{\calM \alpha}]$ is large. However, we observe that in many settings, as long as we consider an $\alpha$ that is \emph{spread out}, we can show that $\Pr[ \norm{\calM \alpha}]$ is large with a significantly better probability. Informally, in this case, the randomness across many different columns gets ``accumulated'', thus amplifying the resulting bound. We refer to this phenomenon as \emph{combination amplifies anticoncentration} (CAA) (described informally in Section~\ref{sec:results}; see Definition~\ref{defn:caa-property}). Our first theorem states that the CAA property automatically implies a lower bound on $\sigma_{\min} (\calM)$ with high probability.

To outline the proof of the theorem, let us consider some unit vector $\alpha \in \R^m$. If $\alpha$ has say $m/2$ ``large enough'' entries, then the CAA property implies that $\norm{\calM \alpha}$ is non-negligible with probability $1-\exp(-m)$ (roughly), and so we can take a union bound over a (standard) $\eps$-net, and we would be done. However, suppose $\alpha$ had only $k$ entries that are large enough (defined as $>\delta$ for some threshold), and $k \ll m$. In this case, the CAA property implies that $\norm{\calM \alpha} \ge c\delta$ with probability roughly $1-\exp(-k)$. While this is large enough to allow a union bound over just the large entries of $\alpha$ (placing a zero in the other entries), the problem is that there can be \emph{many} entries in $\alpha$ that are just slightly smaller than $\delta$. In this case, having $\norm {\calM \alpha_{\ge \delta}} \ge c\delta$ (where $\alpha_{\ge \delta}$ is the vector $\alpha$ restricted to the entries $\ge \delta$ in magnitude, and zeros everywhere else) does not let us conclude that $\norm{\calM \alpha} > 0$, unless $c$ is very large. Since we cannot ensure that $c$ is large, we need a different argument.

The idea will be to use the fact that our definition of the CAA comes with a slack parameter $\beta$. In particular, for $\alpha$ as above with $k$ values of magnitude $\ge \delta$, it allows us to take a union bound over $k \cdot m^{\beta}$ parameters. Thus, if we knew that there are at most $k \cdot m^\beta$ entries that are ``slightly smaller'' (by a factor roughly $\theta$) than $\delta$, we can include them in the $\eps$-net. Defining $\theta$ appropriately, we can ensure that the problem described above (where the slightly smaller entries cancel out the $\calM \alpha_{\ge \delta}$) does not occur. The problem now is when $\alpha$ has $> k \cdot m^{\beta}$ entries of magnitude between $\theta \delta$ and $\delta$. While this is indeed a problem for this value of $\delta$, it turns out that we can try to work with $\theta \delta$ instead. Now the problem can recur, but it cannot recur more than $(1/\beta)$ times (because each time, $k$ grows by an $m^{\beta}$ factor). This allows to define a hierarchical net, which helps us identify the threshold $\delta$ for which the ratio of the number of entries $\ge \theta \delta$ and $\ge \delta$ is smaller than $m^{\beta}$. 

By carefully bounding the sizes of all the nets and setting $\theta$ appropriately, Theorem~\ref{thm:caa-to-sigma} follows.

\subsection{Jacobian Based Anticoncentration and alternate proof of [Bhaskara, Charikar, Moitra, Vijaraghavan]}

As described in Section~\ref{intro:jacobianrank}, 
proving that a random matrix $\calM$ of interest satisfies the CAA property requires taking a well-spread vector $\alpha$, and bounding the probability that $\norm{\calM \alpha}$ is small. In general, viewing $\calM \alpha$ as a vector whose entries are polynomials of the underlying random variables, this amounts to proving an anti-concentration statement for a vector of polynomials
%
of the form $P(x) = (p_1(x), p_2(x), \dots, p_N(x))$ in some underlying variables $x$. The goal is to show that for every $x$, evaluating $P$ at a $\rho$-perturbed point $\tilde{x}$ gives a vector that is not too small in magnitude. (A slight generalization is to show that $P(\tilde{x})$ is not too close to any fixed $y$.)

We first observe that such a statement is not hard to prove if we know that the Jacobian $J(x)$ of $P(x)$ has many large singular values at \emph{every} $x$, and if the perturbation $\rho$ is small enough. This is because around the given point $x$, we can consider the linear approximation of $P(\tilde{x})$ given by the Jacobian. Now as long as the perturbation has a high enough projection onto the span of the corresponding singular vectors of $J(x)$, $P(\tilde{x})$ can be shown to have desired anticoncentration properties (by using the standard anticoncentration result for Gaussians). Finally, if $J(x)$ has $k$ large singular values, a random $\rho$-perturbation will have a large enough projection to the span of the singular vectors with probability $1-\exp(-k)$.

Now, in the applications we are interested in, the polynomials $P$ tend to have the Jacobian property above for ``typical'' points $x$, but not all $x$. Our main result here is to show that this property suffices. Specifically, suppose we know that for every $x$, the Jacobian at a $\rho$ perturbed point has $k$ singular values of magnitude $\ge c \rho$ with high probability. Then, in order to show anticoncentration, we view the $\rho$ perturbation of $x$ as occurring in two independent steps: first perturb by $\rho \sqrt{1-z^2}$ for some parameter $z$, and then perturb by $\rho z$. The key observation is that for Gaussian perturbations, this is identical to a $\rho$ perturbation!

This gives an approach for proving anticoncentration. We use the fact that the first perturbation yields a point with sufficiently many large Jacobian singular values with high probability, and combine this with our earlier result (discussed above) to show that if $z$ is small enough, the linear approximation can indeed be used for the second perturbation, and this yields the desired anticoncentration bound. 

\emph{Applications.}  The simplest application for our framework is the setting where $\calM$ has columns being $\tilde{u}_i \otimes \tilde{v}_i$, for some $\rho$-perturbations of underlying vectors $u_i, v_i$. (This setting was studied in~\cite{BCMV, ADMPSV18} and already had applications to parameter recovery in statistical models.) Here, we can show that $\calM$ has the CAA property. To show this, we consider some combination $\sum_i \alpha_i (\tilde{u}_i \otimes \tilde{v}_i)$ with $k$ ``large'' coefficients in $\alpha$, and show that in this case, the Jacobian property holds. Specifically, we show that the Jacobian has $\Omega(kn)$ large singular values. This establishes the CAA property, which in turn implies a lower bound on $\sigma_{\min}(\calM)$. This gives an alternative proof of the results of the works above.

\subsection{Higher-order lifts and Structured Matrices from Kronecker Products} \label{sec:overview:kronecker}



Our second set of techniques allow us to handle structured matrices that arise from the action of a linear operator on Kronecker products, as described in Question~\ref{qn:intro}. For simplicity let us focus on the setting when $d=2$, and let $\Phi: \Sym(\R^n \otimes \R^n) \to \R^k$ be an (orthogonal) projection matrix of rank $R \ge 0.01 n^2$ acting on the space of symmetric matrices $\Sym(\R^n \otimes \R^n)$ (in general $\Phi$ can also be any linear operator of large rank). 
Let $m =o(n)$ and $\widetilde{U} \in \R^{n \times m}$ be a small random $\rho$-perturbation of arbitrary matrix $U \in \R^{n \times m}$. 
The columns of the matrix $\widetilde{U}^{\veep 2}$ are linearly independent with high probability, and span the symmetric lift of the column space of $\widetilde{U}$. 
An arbitrary subspace of $\Sym(\R^n \otimes \R^n)$ of the same dimension may intersect non-trivially, or lie close to the kernel of $\Phi$. 
Theorem~\ref{ithm:kronecker:sym} shows that the column space of $\widetilde{U}^{\veep 2}$ for a smoothed $\widetilde{U}$ is in fact far from the kernel of $\Phi$ with high probability. 
Note that $\widetilde{U}$ only has about $nm$ truly independent coordinates or ``bits'', whereas a random subspace (matrix) of the same dimension has $c \cdot n^2 m^2$ independent coordinates. 

\paragraph{Challenge with existing approaches} This setting captures many kinds of random matrices that have been studied earlier including \cite{BCMV, ADMPSV18, BCPV}. For example, \cite{BCPV} studies the setting when a fixed polynomial map $f: \R^n \to \R^k$ applied to a randomly perturbed vector $\tilde{u}_i$ to produce the $i$th column $f(\tilde{u}_i)$. It turns out to be a special case of our setting above when $m=1$. These works use the {\em leave-one-out approach} to lower bound the least singular value, where they establish that every column has a non-negligible component orthogonal to the span of the rest of the columns (see Lemma~\ref{lem:leaveoneout}). However this approach crucially relies on the columns bringing in independent randomness.\footnote{The work of \cite{BCPV} also handles some specific settings with a small overlap across columns, but these specialized ideas do not extend more generally to our setting.} This does not hold in our setting, since every column share randomness with $\Omega(m)$ other columns.  

In the recent algebraic algorithms of \cite{GKS2020, BHKX2022} for decomposing sum of powers of polynomials, the analysis of the algorithm involves analyzing the least singular value of different random matrices. One such  matrix $\calM$ is formed by concatenating the Kronecker products of a collection of underlying matrices.  This allows us to 
reason about that the non-overlap or distance between the lifts of a collection of subspaces. 
The work of \cite{BHKX2022} analyzed the {\em fully random} setting and proves least singular value bounds with intricate arguments involving graph matrices, matrix concentration, and other ideas. Specifically, like in \cite{VershyninTensor}, they show that $\E[\calM]$ has good least singular value, and then prove deviation bounds on the largest singular value of $\calM - \E[\calM]$ to get a bound of $\sigma_{\min}(\E[\calM]) - \norm{\calM - \E[\calM]}$. But this approach does not extend to the smoothed setting, since the underlying arbitrary matrix $U$ makes it challenging to get good bounds for  $\norm{\calM - \E[\calM]}$.   

For the smoothed case, when $d=2$, it turns out that we can use ideas similar to those described in Sections~\ref{sec:overview:nets} and \ref{sec:overview:anticonc} to show Theorem~\ref{ithm:kronecker:sym}. However, the approach runs into technical issues for larger $d$. Thus, we develop an alternate technique to analyze higher-order lifts that proves Theorem~\ref{ithm:kronecker:sym} for all constant $d$. In order to prove Theorem~\ref{ithm:kronecker:sym} we first move to a decoupled setting where we are analyzing the action of a linear operator on decoupled products of the form
$$\Phi( \widetilde{U} \otimes \widetilde{V}), $$
where $\widetilde{V}$ has a random component that is independent of $\widetilde{U}$. This new decoupling step leverages symmetry and the Taylor expansion and carefully groups together terms in a way that decouples the randomness. The main technical statement we prove is the following non-symmetric version of Theorem~\ref{ithm:kronecker:sym}  which analyzes a linear operator acting on a Kronecker product of different smoothed matrices. 

\begin{figure}
\centering
\includegraphics[width=0.6\textwidth]{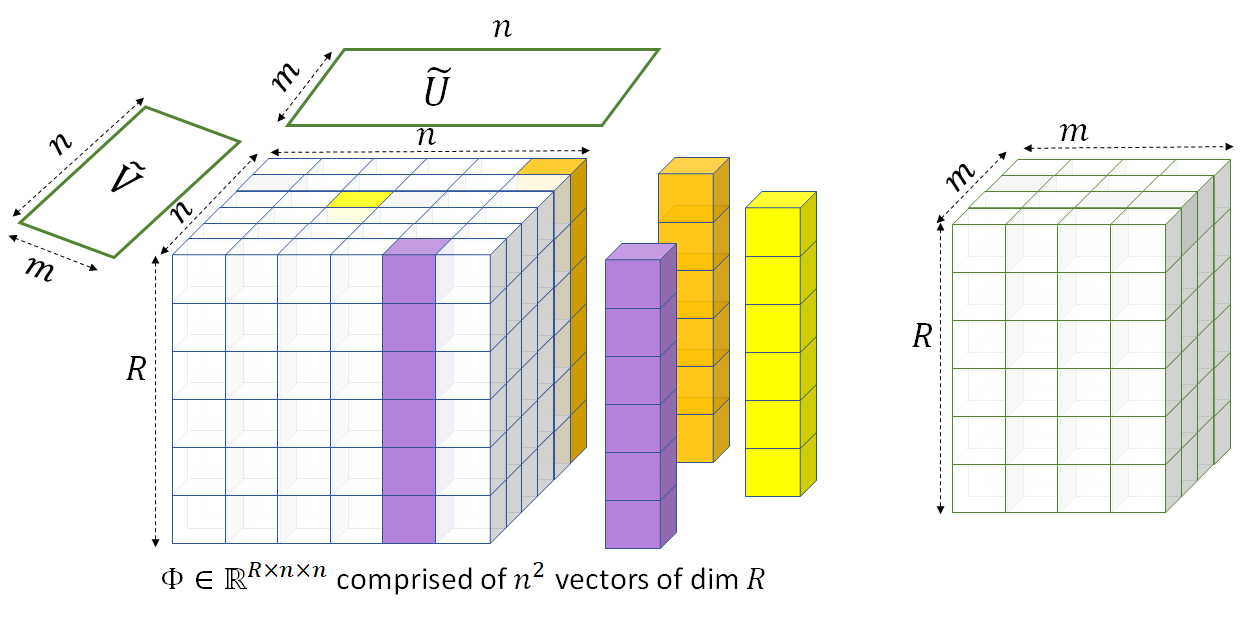}
\caption{\small The figure shows the setting of Theorem~\ref{ithm:kronecker:nsym} with $d=2$. \textit{Left}: The linear operator $\Phi: \R^{n \times n} \to \R^R$ interpreted as a tensor consisting of a $n \times n$ array of $R$-dimensional vectors. There are {\em smoothed} or random contractions applied using matrices $\widetilde{U}, \widetilde{V} \in \R^{n \times m}$. \textit{Right}: The operator $\Phi(\widetilde{U} \otimes \widetilde{V}): \R^{m \times m} \to \R^R$ interpreted as an $m^2$ array of $R$-dimensional vectors. Theorem~\ref{ithm:kronecker:nsym} shows that under the conditions of the theorem, with high probability the robust rank is $m^2$.} 
\label{fig:phi}
\end{figure}

\begin{itheorem}[Non-symmetric version for $d=2$ and modal contractions]\label{ithm:kronecker:nsym}
Suppose $\Phi \in \R^{R \times n^d}$ is a matrix with at least $\Omega( n^2)$ singular values larger than $1$, and let $\widetilde{U}, \widetilde{V}$ be random $\rho$-perturbations of arbitrary matrices $U, V$. Then if $m \le cn$  for an appropriate small constant $c>0$, we have with probability $\ge 1- \exp(-\Omega(n))$ that
\[ \sigma_{\min} \Big( \Phi (\widetilde{U} \otimes \widetilde{V})\Big)\ge \poly \left( \rho, \frac{1}{n} \right). \]
(See Theorem~\ref{thm:kronecker:asym} for the formal statement for general $d$.)
\end{itheorem}


\paragraph{Smoothed modal contractions} While $\Phi$ is specified as a linear operator or a matrix of dimension $R \times n^2$ in Theorem~\ref{ithm:kronecker:nsym}, one can alternately view $\Phi$ as a order-$3$ tensor of dimensions $R \times n \times n$ as shown in Figure~\ref{fig:phi}. Theorem~\ref{ithm:kronecker:nsym} then gives a lower bound for the multilinear rank\footnote{The multilinear rank(s) of a tensor is the rank of the matrix after flattening all but one mode of the tensor.} (or its robust analog) under smoothed modal contractions (dimension reduction) along the modes of dimension $n$ each.  The proof of this theorem is by induction on the order $d$. We perform each modal contraction one at a time. As shown in Figure~\ref{fig:introcontract}, we first do modal contraction by $\widetilde{V}$ to obtain a $R \times n \times m$ tensor $W$ and then by $\widetilde{U}$ to form the final $R \times m \times m$ tensor. We need to argue about the (robust) ranks of the matrix slices (we also call them blocks) and tensors obtained in intermediate steps. For any matrix $M$ (potentially a matrix slice of the tensor $\Phi$) of large (robust) rank $k > 1.1m$, a smoothed contraction $M \tilde{U}$ has full rank $m$ (i.e., non-negligible least singular value) with probability $1-\exp(-\Omega(k))$. 
To argue that the final tensor (when flattened) has full rank $m^2$, we need to argue that for the tensor in the intermediate step $W$, each of the $m$ slices (along the contracted mode) has rank at least $\Omega(n)$. The original rank of $\Phi$ was large, so we know that a constant fraction of the slices $\Phi_1, \dots, \Phi_n$ must have rank $\Omega(n)$. But this alone may not be enough since many of the slices can be identical, in which case the $m$ slices are not sufficiently different from each other. 

We can use the large rank of $\Phi$ to argue that a constant fraction of the matrix slices should have large ``marginal rank'' i.e., they have large rank even if we project out the column spaces of the slices that were 
chosen before it. While this strategy may work in the non-robust setting, this incurs an exponential blowup in the least singular value.  
Instead we use the following {\em randomized} strategy of finding a collection of blocks or slices $S_1 \subset [n]$, each of which has a {\em large ``relative rank''}, even after we project out the column spaces of all the other blocks in $S_1$ (we show these statements in a robust sense, formalized using appropriate least singular values). 

\paragraph{Finding many blocks with large relative rank}

We note that while the idea is quite intuitive, the proof of the corresponding claim (Lemma~\ref{lem:block}) is non-trivial because we require that in any selected block, there must be many vectors with a large component orthogonal to the \emph{entire span} of the other selected blocks. As a simple example, consider setting $n_2 = 2t$ and $\Phi_1 = \{e_1, e_2, \dots, e_t, \epsilon e_{t+1}, \epsilon e_{t+2}, \dots, \epsilon e_{2t} \}$, and $\Phi_2 = \{\epsilon e_1, \epsilon e_2, \dots, \epsilon e_t, e_{t+1}, e_{t+2}, \dots, e_{2t} \}$. In this case, even if $\epsilon$ is tiny, we cannot choose both the blocks, because the span of the vectors in $\Phi_2$ contains all the vectors in $\Phi_1$. 

The proof will proceed by first identifying a set of roughly $R=\Omega(n^2)$ vectors (spread across the blocks) that form a well conditioned matrix, followed by randomly restricting to a subset of the blocks. We start with the following claim, which gives us the first step.

\begin{claim}[Same as Lemma~\ref{lem:sigma-to-col-subset}]
Suppose $A$ is an $m \times n$ matrix such that $\sigma_k (A) \ge \theta$. Then there exists a submatrix $A_S$ with $|S| = k$ columns, such that $\sigma_k (A_{S}) \ge \theta / \sqrt{nk}$.
\end{claim}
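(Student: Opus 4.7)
The plan is to reduce the column-selection problem on $A$ to a row-selection problem on the top-$k$ right singular vectors of $A$, and then exhibit the desired subset via a max-volume (Cramer's rule) argument. Let $V_k \in \R^{n \times k}$ be the matrix whose columns are the top $k$ right singular vectors of $A$, so $V_k^T V_k = I_k$, and for $S \subset [n]$ with $|S|=k$, let $V_k|_S$ denote the $k\times k$ submatrix formed by the rows of $V_k$ indexed by $S$. I would first show that
\[
\sigma_k(A_S) \;\ge\; \sigma_k(A) \cdot \sigma_k(V_k|_S) \;\ge\; \theta\cdot \sigma_k(V_k|_S),
\]
using that for any $v\in\R^n$, $\|Av\|\ge \sigma_k(A)\,\|\Pi_W v\|$ where $\Pi_W$ is the projection onto the top-$k$ right singular subspace of $A$, combined with the observation that $\Pi_W E_S x = V_k\,(V_k|_S)^T x$ has norm equal to $\|(V_k|_S)^T x\|$ by orthonormality of the columns of $V_k$ (here $E_S$ is the $n\times k$ column-selection matrix). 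So it is enough to exhibit $S$ with $\sigma_k(V_k|_S) \ge 1/\sqrt{nk}$.

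To produce such an $S$, I take the subset of size $k$ that maximizes $|\det(V_k|_S)|$ (the classical ``max-volume'' subset; one such $S$ with positive determinant exists since $V_k$ has rank $k$). The crucial observation is that for this choice, every entry of the $n\times k$ matrix $M := V_k\,(V_k|_S)^{-1}$ has absolute value at most $1$. For $i\in S$ the rows of $M$ indexed by $S$ form the identity, so this is immediate. For $i\notin S$, Cramer's rule identifies $M_{i,\ell}$ with $\pm\det(V_k|_{S'})/\det(V_k|_S)$, where $S'$ is obtained from $S$ by replacing its $\ell$-th element with $i$; the maximality of $|\det(V_k|_S)|$ then forces $|M_{i,\ell}|\le 1$. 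Consequently $\|M\|_F^2 \le nk$.

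The proof then concludes by converting this entrywise bound on $M$ into a spectral bound on $(V_k|_S)^{-1}$ using the identity $V_k^T V_k = I_k$:
\[
\|M\|_F^2 \;=\; \tr\bigl((V_k|_S)^{-T}\, V_k^T V_k\, (V_k|_S)^{-1}\bigr) \;=\; \|(V_k|_S)^{-1}\|_F^2 .
\]
Hence $\|(V_k|_S)^{-1}\|_2 \le \sqrt{nk}$, i.e.\ $\sigma_k(V_k|_S) \ge 1/\sqrt{nk}$, which combined with the first paragraph gives $\sigma_k(A_S) \ge \theta/\sqrt{nk}$.

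I expect the main obstacle to be producing the right subset $S$: naive alternatives do not give the stated $1/\sqrt{nk}$ bound. A greedy Gram--Schmidt column-pivoted QR selection controls $|\det(V_k|_S)|$ as a product of pivots but does not by itself control $\sigma_k(V_k|_S)$, since for an upper-triangular $R$ the smallest singular value can be much smaller than the smallest diagonal entry. A Cauchy--Binet averaging argument yields some $S$ with $|\det(V_k|_S)|^2 \ge 1/\binom{n}{k}$, but converting this determinantal bound into a spectral bound via $\sigma_k \ge |\det|/\sigma_1^{k-1}$ loses an exponential-in-$k$ factor. The max-volume step is precisely what converts ``large determinant'' into a uniform entrywise bound on $M$, and thus into the needed spectral bound.
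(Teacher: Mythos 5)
Your proof is correct. It takes a genuinely different (and arguably cleaner) route than the paper, though the central combinatorial object is morally the same in both. The paper first projects onto the top-$k$ right singular subspace to reduce to the square case, chooses $S$ to be an \emph{Auerbach basis} for the \emph{columns} of the projected $A$ (precisely the property that every other column is a $\pm 1$-bounded combination of the chosen ones), and then proceeds by contradiction: if $\sigma_k(A_S)$ were small, some chosen column $A_j$ would nearly lie in the span of $A_{S\setminus\{j\}}$, forcing every column's component orthogonal to that $(k-1)$-dimensional span to be $\le \theta/\sqrt{n}$, giving a rank-$(k-1)$ Frobenius approximation error below $\theta^2$, contradicting $\sigma_k(A)\ge\theta$. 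You instead reduce the column-selection problem on $A$ to a row-selection problem on $V_k$, pick the max-volume row subset (which is how one constructs an Auerbach basis anyway, so your Cram\'er step is recovering the same bounded-coefficient property but for rows of $V_k$ rather than the $\Sigma_k$-weighted columns the paper works with), and then get the spectral bound directly from $\|M\|_F^2 = \|(V_k|_S)^{-1}\|_F^2 \le nk$ via $V_k^T V_k = I$. What your approach buys: the orthonormality of $V_k$ replaces the paper's contradiction-with-approximation-error argument with a two-line Frobenius norm identity, so the $\sqrt{nk}$ appears transparently rather than emerging from a low-rank error count; your first step also makes the reduction $\sigma_k(A_S) \ge \sigma_k(A)\,\sigma_k(V_k|_S)$ explicit instead of implicitly appealing to ``projection only helps.'' Both proofs are self-contained and correct; yours is closer in spirit to the column-subset-selection / interpolative decomposition literature, where the same $\sqrt{nk}$ Frobenius bound on $(V_k|_S)^{-1}$ for a max-volume $S$ is standard.
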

The lemma is a robust version of the simple statement that if $\sigma_k (A) >0$, then there exist $k$ linearly independent columns. The proof of the claim is elegant and uses the choice of a so-called Auerbach basis or a well-conditioned basis for the column span. 
%

The outline of the main argument is as follows:
\begin{enumerate}
    \item First find a submatrix $M$ of $R = \delta n^2$ columns of $\Phi$ such that $\sigma_R(M)$ is large \item Randomly sample a subset $T \subseteq [n]$ of the blocks.
    \item Discard any block $j \in T$ that has fewer than $\delta n / 6$ vectors with a non-negligible component orthogonal to the span of $\cup_{r \in (T \setminus \{j\})} \Phi_r$; argue that there are $\Omega(\delta n)$ blocks remaining.
\end{enumerate}
We remark that the above idea of a random restriction to obtain many blocks with large relative rank (in a robust sense) seems of independent interest and also comes in handy in the application to power sum decompositions (Claim~\ref{claim:bafna:largerelrank}). 

\begin{figure}[ht]
\centering
\includegraphics[width=0.7\textwidth]{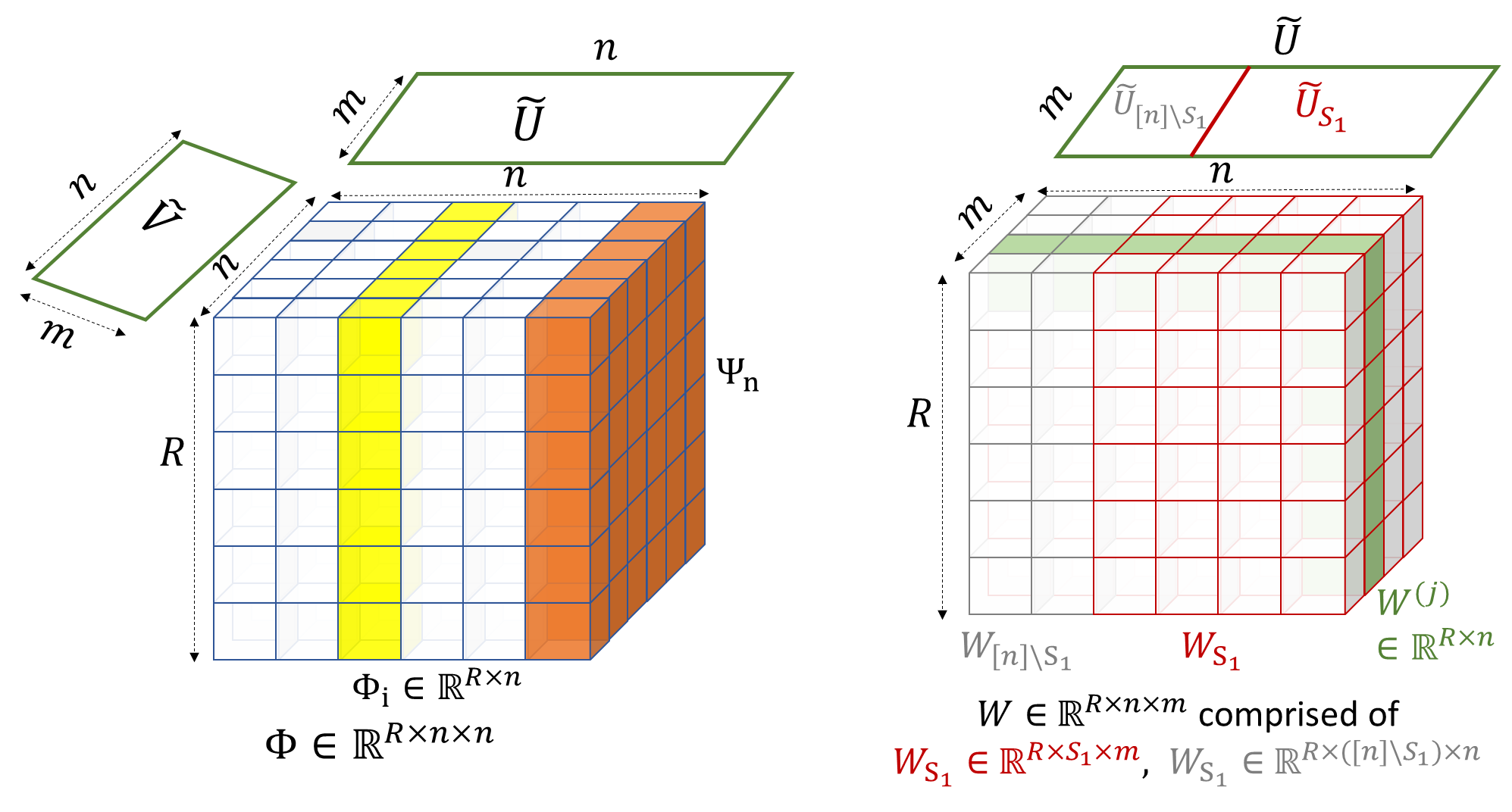}
\caption{\textit{Left}: The setting of $d=2$ with linear operator $\Phi: \R^{n \times n} \to \R^R$ having slices $\Phi_1, \dots, \Phi_n \in \R^{R \times n}$. The modal contractions $\widetilde{U}, \widetilde{V} \in \R^{n \times m}$ have not yet been applied. \textit{Right}: After modal contraction along $\widetilde{V} \in \R^{n \times m}$, we get $W \in \R^{R \times n \times m}$ with slices $W_1, \dots, W_n$.  $W_{S_1} \in \R^{R \times S_1 \times m}$ represents the slices obtained from the ``good'' blocks $S_1 \subset [n]$, and $W_{[n] \setminus S_1} \in \R^{R \times ([n]\setminus S_1)\times m}$ represents the remaining slices. The random modal contraction $\widetilde{U}$ is also split into $\widetilde{U}_{S_1} \in \R^{S_1 \times m}, \widetilde{U}_{[n]\setminus S_1} \in \R^{[n]\setminus S_1 \times m}$. 
}
\label{fig:introcontract}
\end{figure}

\paragraph{Finishing the inductive argument}
As shown in Figure~\ref{fig:introcontract}, after modal contraction along $\tilde{V} \in \R^{n \times m}$, we get $W \in \R^{R \times n \times m}$ with slices $W_1, \dots, W_n$. 

Now we would like to argue that when we perform a smoothed contraction with $\widetilde{U}$, the contracted slices have large rank, while simultaneously preserving the relative rank across the slices. Let $W_{S_1} \in \R^{R \times S_1 \times m}$ represent the subtensor corresponding to the slices obtained from the ``good'' blocks $S_1 \subset [n]$ (which have large relative rank), and let $W_{[n] \setminus S_1} \in \R^{R \times ([n]\setminus S_1)\times m}$ represent the remaining slices. Also let $W^{(j)} \in \R^{R \times n}$ denote the matrix slices along the alternate mode for each $j \in [m]$.  We can show that the randomly contracted matrices $W^{(j)}_{S_1}$ have large relative rank with respect to each other. The random modal contraction $\widetilde{U}$ can also now be split into $\widetilde{U}_{S_1} \in \R^{S_1 \times m}, \widetilde{U}_{[n]\setminus S_1} \in \R^{[n]\setminus S_1 \times m}$. The final matrix slice obtained for each $j \in [m^{d-1}]$ can be written as 
$$M^{(j)}= W^{(j)}_{S_1} \widetilde{U}_{S_1}+W^{(j)}_{[n]\setminus S_1} \widetilde{U}_{[n]\setminus S_1},$$ 
where the randomness in the two summands is independent. Arguing that the high relative rank across the slices is preserved involves some work, and this is achieved in Lemma~\ref{lem:remaining}.  The lemma proves that with high probability, every test unit vector $\alpha \in \R^{m\cdot m}$ has non-negligible value of $\norm{M \alpha}_2$. A standard argument would consider a net over all potential unit vectors $\alpha \in \R^{m\cdot m}$. However this approach fails here, since we cannot get high enough concentration (of the form $e^{-\Omega(m^2)}$) that is required for this argument. Instead, we argue that if there were such a test vector $\alpha \in \R^{m \cdot m}$, there exists a block $j^*\in [m]$ where we get a highly unlikely event. This allows us to conclude the inductive proof that establishes Theorem~\ref{ithm:kronecker:nsym}.   


\section{Preliminaries}\label{sec:prelims}
We now introduce our basic definitions and notation. For a matrix $U \in \R^{n \times m}$, let $\|U\|$ and $\|U\|_F$ denote the operator and Frobenius norms of $U$, respectively. Central to the paper are $\rho$-smoothed matrices. In particular, given a matrix $U \in \R^{n \times m}$, we let $\tilde{U} = U+E$ where $E \in \mathcal{N}(0,\rho^2)$. We commonly call $\tilde{U}$ a $\rho$-smoothing of $U$ or a $\rho$-perturbation of $U$. Similar notation is used for vector inputs $x=(x_1,\dots,x_n)$ to a polynomial $p:\R^n \to \R^m$. I.e., $\tilde{x} = x+\eta$ where $\eta \in \mathcal{N}(0,\rho^2)$. Thus, for example, $p(\tilde{x})$ is the evaluation of $p$ on a $\rho$-smoothed $x$. 

\anote{Added more and organized this section a little more.}

\paragraph{Products.} We also frequently use the Kronecker product, denoted $\otimes$, and the Khatri-Rao product, denoted $\odot$. Given matrices, $A \in \R^{n \times m}$ and $B \in \R^{k \times \ell}$, the Kronecker product $A \otimes B$ is the block matrix
\[
A \otimes B = \begin{bmatrix}
    a_{11} B & \dots &  a_{1m_1} B \\
    \vdots & \ddots & \vdots \\
    a_{n_1 1} B & \dots & a_{n_1m_1} B
\end{bmatrix} \in R^{n k \times m \ell}.
\]
We let $A^{\otimes d} \in R^{n^d \times m^d}$ denote the Kroncker product of a total of $d$ copies of $A$. In the case that $m=\ell$, the Khatri-Rao product $A \odot B$ is defined by
\[
A \odot B = \begin{bmatrix}
    \uparrow &  & \uparrow \\
    a_1 \otimes b_1 & \dots & a_{m} \otimes b_{m} \\
    \downarrow &  & \downarrow
\end{bmatrix} \in \R^{n k \times m}.
\]
Here $a_j$ and $b_j$ denote the $j$th column of $A$ and $B$, respectively, and $a_j \otimes b_j$ is the Kronecker product (or simply the tensor product) of these columns. 

For vector spaces $\calU, \calV$, the tensor product space $\calU \otimes \calV = \set{ u \otimes v : u \in \calU, v \in \calV}$. When $\calU=\calV$, we also call $\calU^{\otimes 2} = \calU \otimes \calU$ a lift of the space $\calU$ (of degree/order $2$). This can also be generalized to $d$-wise products and lifts. When $\calU = \R^n$, the space $(\R^n)^{\otimes d}$ corresponds to the space of all $d$-th order tensors of dimensions $n \times n \dots \times n$. This is isomorphic to the space $\R^{n^d}$; each tensor can be flattened to form a vector in $n^d$ dimensions i.e., $(\R^n)^{\otimes d}\cong \R^{n^d}$. 

\paragraph{Symmetrized products.} We are often concerned with symmetrized versions of matrix products. To handle these, we introduce a (partially) symmetrized Kronecker product $\veep$ which is defined for tuples of matrices $(U^{(1)},\dots,U^{(d)})$ where $U^{(j)} \in \R^{n_j \times m}$. We define $U^{(1)} \veep U^{(2)} \veep \dots \veep U^{(d)} \in \R^{\Pi_{i=1}^d n_j \times \binom{m+d-1}{d}}$ to be the matrix with columns indexed by tuples $(i_1,i_2,\dots,i_d)$ with $1\leq i_1 \leq i_2 \leq \dots \leq i_d \leq m$ where the column corresponding to $(i_1,i_2,\dots,i_d)$ is 
\[
\frac{1}{|S_d|} \sum_{\pi \in S_d} u^{(1)}_{i_{\pi(1)}} \otimes u^{(2)}_{i_{\pi(2)}} \otimes \dots \otimes  u^{(d)}_{i_{\pi(d)}}.
\]
Here $S_d$ denotes the symmetric group on $[d]$ and $u^{(j)}_{i_{\pi(j)}}$ denotes the $i_{\pi(j)}$th column of $U^{(j)}$. For example, for matrices $U,V \in \R^{n \times m}$, the column of $U \veep V$ corresponding to a tuple $(i,j)$ with $i\leq j$ is 
\[
\frac{1}{2} (u_i \otimes v_j + u_j \otimes v_j).
\]
In the case that $i=j$, this reduces to $u_i \otimes v_i$. For a matrix $U \in \R^{n \times m}$, we let $U^{\veep d} \in \R^{n^d \times \binom{m+d-1}{d}}$ denote the $\veep$ product of a total of $d$ copies of $U$. The product $\veep$ can be viewed as a partially symmetrized version of the Kronecker product since all columns of $U^{\veep d}$ are symmetric with respect to the natural symmetrization of $\R^{n^d} \cong (\R^n)^{\otimes d}$. 

Along these lines, we introduce the operator $\Sym_d: \R^{n^d} \to \R^{n^d}$ which symmetrizes elements of $\R^{n^d}$ with respect to the identification $\R^{n^d} \cong (\R^n)^{\otimes d}$. With this notation, we have that 
\[
\Sym_d (U^{\veep d}) = U^{\veep d}.
\]
Moreover, the columns of the matrix $U^{\veep d}$ are precisely the \emph{unique} columns of the matrix $\Sym_d (U^{\otimes d})$.

Finally, for a vector space $\calU$, we have that $\calU^{\veep d} = \Sym_d(\calU^{\otimes d})$ is the space of symmetric $d$th tensors over the spacd $\calU$. We also call this the symmetric $d$th order left of the space $\calU$.  

\paragraph{Leave-one-out distance.}
The leave-one-out distance of a matrix $U$ is a useful tool for analyzing least singular values. Given $U \in \R^{n \times m}$, define the leave-one-out distance $\ell(U)$ by 
\[
\ell(U) = \min_{i} \mathrm{dist} \left(u_i, \mathrm{Span} \{u_j : j \neq i\} \right).
\]
The least singular value of $U$ is related to the leave-one-out distance of $U$ through the following lemma \cite{RudelsonV}.

\begin{lemma}[Leave one out distance]\label{lem:leaveoneout}
Let $U \in \R^{n \times m}$. Then
\[
\frac{\ell(U)}{\sqrt{m}} \leq \sigma_{\min} (U) \leq \ell(U). 
\]
\end{lemma}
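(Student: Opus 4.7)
The plan is to prove the two inequalities separately, each by constructing or analyzing a well-chosen test vector. Both directions follow from elementary linear-algebra manipulations, and I do not expect any substantive obstacle; the argument is essentially bookkeeping with projections and the pigeonhole bound on the largest coordinate of a unit vector.

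For the upper bound $\sigma_{\min}(U) \le \ell(U)$, I would choose the index $i^*$ attaining the minimum in the definition of $\ell(U)$. Writing the orthogonal projection of $u_{i^*}$ onto $\mathrm{Span}\{u_j : j \neq i^*\}$ as $\sum_{j \neq i^*} c_j u_j$, the residual $u_{i^*} - \sum_{j \neq i^*} c_j u_j$ has norm exactly $\ell(U)$ and is of the form $U\alpha$, where $\alpha \in \R^m$ has $\alpha_{i^*} = 1$ and $\alpha_j = -c_j$ for $j \neq i^*$. Since $\|\alpha\|_2 \ge |\alpha_{i^*}| = 1$, the variational definition of $\sigma_{\min}$ gives $\sigma_{\min}(U) \le \|U\alpha\|_2/\|\alpha\|_2 \le \ell(U)$.

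For the lower bound $\sigma_{\min}(U) \ge \ell(U)/\sqrt{m}$, I would take an arbitrary unit vector $\alpha \in \R^m$ and let $i^*$ be the index of largest magnitude, so that $|\alpha_{i^*}| \ge 1/\sqrt{m}$ by the pigeonhole principle applied to $\sum_j \alpha_j^2 = 1$. Decompose
\[
U\alpha = \alpha_{i^*} u_{i^*} + \sum_{j \neq i^*} \alpha_j u_j = \alpha_{i^*} u_{i^*} + w,
\]
where $w \in \mathrm{Span}\{u_j : j \neq i^*\}$. Since $\|U\alpha\|_2$ is at least the distance from $U\alpha$ to that span, and this distance equals $|\alpha_{i^*}| \cdot \mathrm{dist}(u_{i^*}, \mathrm{Span}\{u_j : j \neq i^*\}) \ge |\alpha_{i^*}|\,\ell(U) \ge \ell(U)/\sqrt{m}$, we conclude $\|U\alpha\|_2 \ge \ell(U)/\sqrt{m}$. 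Taking the infimum over unit $\alpha$ yields the claimed lower bound on $\sigma_{\min}(U)$, completing the proof.
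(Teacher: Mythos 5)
Your proof is correct, and this is the standard argument for this well-known fact. The paper itself does not supply a proof of Lemma~\ref{lem:leaveoneout} — it simply cites it to Rudelson and Vershynin — so there is no internal proof to compare against, but your argument (test vector with a coordinate fixed to $1$ for the upper bound; largest-coordinate pigeonhole plus distance-to-span for the lower bound) is exactly the argument one finds in that source and in the folklore.

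One small remark on presentation: in the lower-bound step, when you write ``this distance equals $|\alpha_{i^*}|\cdot\mathrm{dist}(u_{i^*},\mathrm{Span}\{u_j:j\neq i^*\})$,'' it is worth spelling out the one-line justification — namely that $w$ lies in the span, so $\mathrm{dist}(\alpha_{i^*}u_{i^*}+w,\,V)=\mathrm{dist}(\alpha_{i^*}u_{i^*},\,V)$ because translating by an element of $V$ preserves distance to $V$, and then homogeneity of the distance function in the scalar $\alpha_{i^*}$ does the rest. As written the step is correct but slightly terse.
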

See also Lemma \ref{lemma:BlockLeaveoneOut} for a block-version of leave-one-out singular value bounds.

In our work we also encounter the Jacobian of a polynomial map. Given a vector valued function $P(x) = (p_1 (x), p_2(x), \dots, p_N (x))$ over underlying variables $x = (x_1, x_2, \dots, x_n)$, the Jacobian is defined as the $(n \times N)$ matrix of partial derivatives where the $(i,j)$th entry is $\frac{\partial p_j}{\partial x_i}$. Thus, the linear approximation of $P(x)$ around a point $x$ is simply $P(x + \eta) = P(x) + J(x)^T \eta$. 

\section{Hierarchical Nets and Anti-concentration from Jacobian Conditioning}\label{sec:jacobian-net}

In this section, we will primarily deal with a matrix $\calM$ of dimensions $N \times m$ where $m < N$. The columns will be denoted by $\Xtil_i$, and we wish to show a lower bound on $\sigma_m (\calM)$. 

In this section, we describe the finer $\eps$-net argument outlined in Section~\ref{sec:overview}. We begin with a formal definition of the CAA property.


\begin{definition}[CAA property] \label{defn:caa-property}
We say that a random matrix $\calM$ with $m$ columns has the CAA property with parameter $\beta>0$, if 
for all $k \geq 1$, 
for all test vectors $\alpha \in \R^m$ with at least $k$ coordinates of magnitude $\delta$, there exist $\lambda >0$ and $c \ge \frac{8}{\beta}$ (dependent only on $\calM$) such that 
\[ \forall h \in (0, 1), \quad \Pr[ \norm{\calM \alpha} < \delta h / \lambda ] \le \exp \left( - c \min(m, k m^{\beta}) \log(1/h) \right). \]
\end{definition}

\paragraph{Remark.} We note that the condition $c \ge 8/\beta$ may seem strong; however, as we will see in applications, it is satisfied as long as $m$ is small enough compared to $N$, the number of rows of the matrix. 

\subsection{Hierarchical nets}\label{sec:eps-nets}

The following shows that the CAA property implies a least singular value guarantee.

\begin{theorem}\label{thm:caa-to-sigma}
Suppose $\calM$ is a random matrix with $m$ columns and that $\calM$ satisfies the CAA property with some parameter $\beta > 0$. Suppose additionally that we have the spectral norm bound $\norm{\calM} \le L $ with probability $1-\eta$. Then with probability at least $1-\exp(-m^{\beta}) - \eta$, we have 
\[ \sigma_m (\calM) \ge \frac{1}{(L m \lambda)^{2 \lceil \frac{1}{\beta} \rceil}}, \]
where $\lambda$ comes from the CAA property.
\end{theorem}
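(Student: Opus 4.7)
The plan is to build a hierarchical $\epsilon$-net over the unit sphere in $\R^m$, stratified by how many ``large'' coordinates a test vector has, and argue via pigeonhole that every unit vector lives at a level where the net is small enough for a union bound to close when combined with the CAA lower bound.

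Concretely, I would fix a geometric sequence of thresholds $\delta_j = \theta^j/\sqrt{m}$ for $j = 0, 1, \dots, r$, where $r = \lceil 1/\beta \rceil$ and $\theta \in (0,1)$ is a small parameter to be chosen (essentially $\theta \approx 1/(Lm\lambda)^2$). For a unit vector $\alpha$, let $k_j(\alpha) = |\{i : |\alpha_i| \ge \delta_j\}|$. Since $\norm{\alpha} = 1$ we have $k_0 \ge 1$, and trivially $k_r \le m$. Pigeonhole then produces a level $j^* = j^*(\alpha)$ with $k_{j^*+1} \le m^\beta \cdot k_{j^*}$, since otherwise $k_r > m^{r\beta} \ge m$, a contradiction. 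At this level, $\alpha$ is essentially determined by its entries of magnitude $\ge \delta_{j^*+1}$: the ``tail'' of entries below $\delta_{j^*+1}$ contributes at most $L \sqrt{m}\, \delta_{j^*+1} = L \sqrt{m}\, \theta\, \delta_{j^*}$ to $\norm{\calM \alpha}$ on the event $\norm{\calM} \le L$, which is smaller than the CAA lower bound once $\theta$ is small enough.

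Next, for each level $j$ and each $k \in [1, m]$, I would build a net $N_{j,k}$ of vectors supported on some $S \subseteq [m]$ of size at most $k m^\beta$, with entries on $S$ discretized at resolution $h \delta_j$ for $h \approx 1/(Lm\lambda)^2$, and zeros elsewhere. The size is bounded by $\binom{m}{k m^\beta} (C/h)^{k m^\beta}$. Applying the CAA property to a fixed $\alpha' \in N_{j,k}$ (it retains $k$ coordinates of magnitude $\ge \delta_j$ inherited from $\alpha$) yields failure probability $\exp(-c \min(m, k m^\beta) \log(1/h))$. The precondition $c \ge 8/\beta$ is precisely what lets the exponent $c \, k m^\beta \log(1/h)$ absorb both $\log \binom{m}{k m^\beta} \le k m^\beta \log m$ and the discretization factor $k m^\beta \log(C/h)$; a triple union bound over $(j, k, \alpha')$ then produces total failure probability at most $\exp(-m^\beta)$. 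Converting the net bound to a uniform bound by adding back the tail and rounding errors gives $\norm{\calM \alpha} \gtrsim \delta_r h / \lambda = \theta^r h / (\lambda \sqrt{m})$, which matches the stated polynomial $1/(Lm\lambda)^{2\lceil 1/\beta \rceil}$ up to the precise constants in $\theta$ and $h$.

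The main obstacle is the simultaneous calibration: $\theta$ must be small enough that the tail coordinates (those in magnitude between $0$ and $\delta_{j^*+1}$) do not swamp the CAA lower bound $\delta_{j^*} h / \lambda$; the discretization $h$ must be small enough to absorb the rounding error; and the net exponent $k m^\beta \log(1/h)$ must still beat $\log \binom{m}{k m^\beta}$ after being multiplied by $c$. The hierarchical depth $r = \lceil 1/\beta \rceil$ is what drives the polynomial exponent in the final bound, and the precondition $c \ge 8/\beta$ in Definition~\ref{defn:caa-property} is exactly tuned to make the union bound close uniformly in $\beta$.
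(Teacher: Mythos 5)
Your proposal is correct and takes essentially the same route as the paper's proof: a hierarchy of $\lceil 1/\beta \rceil$ thresholds scaling geometrically in $\theta$, a pigeonhole argument showing every unit vector has some consecutive pair of levels whose large-coordinate counts differ by at most a factor $m^\beta$, a net at that level whose cardinality is controlled by $k m^\beta$, and a union bound against the CAA tail using $c \ge 8/\beta$ to absorb both the binomial-coefficient and discretization factors. The paper formalizes the same idea with parametrized nets $\calN_j$ indexed by the level where the ratio break first occurs and a separate ``dense'' net $\calN_0$; your variant fuses these cases into a single pigeonhole statement, but the resulting calibration of $\theta$, $h$, the depth $r$, and the final polynomial loss $(Lm\lambda)^{-2\lceil 1/\beta\rceil}$ matches.
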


As discussed in Section~\ref{sec:overview}, the natural approach to proving such a result would be to take nets based on the sparsity of the test vector $\alpha$. In other words, if there are $k$ nonzero values of magnitude $\delta >0$, the CAA property yields a least singular value lower bound of $\delta /\lambda$ (choosing $h$ to be a small constant), and we can take a union bound over a net of size $\exp(k)$. The issue with this argument is that $\alpha$ might have many other non-zero values that are \emph{slightly}  smaller than $\delta$, and these might lead to a zero singular value (unless it so happened that $\lambda < 1/m$, which we do not have a control of). Of course, in this case, we should have worked with a slightly smaller value of $\delta$, but this issue may recur, so we need a more careful argument.
 
The rest of this subsection will focus on proving Theorem~\ref{thm:caa-to-sigma}. 
For defining the nets, we will use threshold values $\tau_1 = 1/m$, $\tau_2 = \theta/m$, and so on (more generally, $\tau_j = \theta^{j-1}/m$). $\theta$ is a parameter that will be chosen appropriately; for now we simply use $\theta \in (0, 1/m)$. \anote{12/5: Isn't this $(1/m,1/m^2)$?}

We construct a sequence of nets $\calN_1, \calN_2, \dots, \calN_{s-1}$ as follows. The net $\calN_1$ is a set of vectors parametrized by pairs $(r_1, r_2) \in \mathbb{N}^2$, where: (a) $1 \leq r_1 \le m^{1-\beta}$, (b) $r_2 \le m^\beta r_1$. For each pair $(r_1, r_2)$, we include all the vectors whose entries are integer multiples of $\frac{\theta}{m}$ with have exactly $(r_1 + r_2)$ non-zero entries, of which $r_1$ entries are in  $(\tau_1,1]$ and $r_2$ entries are in $[\tau_2, \tau_1]$.

Thus, the number of vectors in $\calN_1$ for a single pair $(r_1, r_2)$ is bounded by:
\[ \binom{m}{r_1} \binom{m}{r_2} \left( \frac{m}{\theta}  \right)^{r_1}  \left( \frac{m}{\theta}  \right)^{r_2} < \left( \frac{m}{\theta}  \right)^{2(r_1+r_2)}. \]
\anote{12/5: changed $(1/\theta)^{r_2}$ to $(m/\theta)^{r_2}$ as the former is slighlty off, so removed it. }

The next net $\calN_2$ has vectors parametrized by $(r_1, r_2, r_3)\in\mathbb{N}^3$, where (a) $r_2 \le m^{1-\beta}$, (b) $r_3 \le m^{\beta} r_2$, and  additionally, (c) $r_2 \ge m^{\beta} r_1$. For each such tuple, we include vectors that have exactly $(r_1 + r_2 + r_3)$ non-zero entries (in the corresponding $\tau$ ranges as above), and have values that are all integer multiples of $\theta^2/m$. 

More generally, the vectors of $\calN_j$ will be parametrized by $(r_1, r_2, \dots, r_{j+1})\in\mathbb{N}^{j+1}$, where (a) $r_j \le m^{1-\beta}$, (b) $r_{j+1} \le m^\beta r_j$, and additionally, (c) for $1\le i<j$, we have $r_{i+1} > m^{\beta} r_i$.  In other words, $r_{j+1}$ is the first value that does not grow by a factor $m^\beta$. For every such tuple, $\calN_j$ includes all vectors that have exactly $(r_1 + \dots + r_{j+1})$ non-zero entries, each of which is an integer multiple of $\frac{\theta^j}{m}$, and exactly $r_i$ of them in the range $(\tau_i, \tau_{i-1}]$ for all $i \le j+1$.

We have nets of this form for $j = 1, 2, \dots, s-1$, where $s = \lceil \frac{1}{\beta} \rceil$. We now have the following claim.
\begin{claim}\label{claim:net-nj}
Fix any $1 \le j < s$. We have 
\[ \Pr \left[ \exists \alpha \in \calN_j, \norm{\calM \alpha} < \frac{\theta^{j-\frac{1}{2}}}{m \lambda} \right] < \exp\left(-\tfrac{1}{2} c m^{j\beta}\right). \]
\end{claim}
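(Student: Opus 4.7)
\textbf{Proof plan for Claim~\ref{claim:net-nj}.} The strategy is to (i) apply the CAA property of $\calM$ to a single fixed $\alpha \in \calN_j$, with carefully chosen parameters, and then (ii) union bound over $\calN_j$. The main work is to verify that the per-vector failure probability given by CAA is small enough to absorb the cardinality of $\calN_j$, using the choice of $\theta$ and the hypothesis $c \ge 8/\beta$ in the CAA definition.

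\emph{Per-vector bound.} Fix $\alpha \in \calN_j$. By the growth constraints $r_{i+1} > m^\beta r_i$ for $i<j$ (with $r_1 \ge 1$), we obtain $r_j \ge m^{(j-1)\beta}$, and in particular $r_j \le m^{1-\beta}$ means $r_j m^\beta \le m$, so $\min(m, r_j m^\beta) = r_j m^\beta \ge m^{j\beta}$. The vector $\alpha$ has at least $r_1 + \dots + r_j \ge r_j$ coordinates of magnitude at least $\tau_j = \theta^{j-1}/m$. I apply the CAA property with $\delta = \tau_j$, $k = r_j$, and $h = \theta^{1/2}$; this yields $\delta h/\lambda = \theta^{j-1/2}/(m\lambda)$ and
\[
\Pr\!\left[\,\norm{\calM\alpha} < \tfrac{\theta^{j-1/2}}{m\lambda}\,\right] \le \exp\!\left(-c\, r_j m^\beta \cdot \tfrac{1}{2}\log(1/\theta)\right).
\]

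\emph{Size of $\calN_j$.} Set $K = r_1 + \dots + r_{j+1}$. Using $r_i \le m^{-(j-i)\beta} r_j$ for $i\le j$, the geometric sum gives $\sum_{i=1}^j r_i \le 2 r_j$ (for $m^\beta \ge 2$), and $r_{j+1} \le m^\beta r_j$, so $K \le 2 m^\beta r_j$. For any admissible tuple $(r_1,\dots,r_{j+1})$, the number of position choices is at most $\prod_i \binom{m}{r_i} \le m^K$. Entries are integer multiples of $\theta^j/m$, so the number of value choices for a coordinate in $(\tau_i, \tau_{i-1}]$ is at most $O(\theta^{i-2-j})$; multiplying across coordinates and using the same geometric-sum estimate, the total value count is bounded by $\theta^{-O(m^\beta r_j)}$. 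There are at most $m^{j+1}$ tuples, so
\[
\log |\calN_j| \le (j+1)\log m \;+\; C\, m^\beta r_j\bigl(\log(1/\theta) + \log m\bigr)
\]
for an absolute constant $C$ independent of $j$ and $\beta$.

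\emph{Union bound and parameter check.} Combining the two estimates, the failure probability in the claim is at most $\exp\bigl(\log|\calN_j| - (c/2) r_j m^\beta \log(1/\theta)\bigr)$, and I need this to be $\le \exp(-(c/2) m^{j\beta})$. Using $r_j m^\beta \ge m^{j\beta}$ and the size bound, it suffices to verify
\[
\bigl(c/2 - C\bigr)\log(1/\theta) \;-\; C\log m \;\ge\; c/2 + o(1),
\]
which holds provided $c$ is at least a fixed multiple of $C$ and $\theta$ is chosen small enough (say $\theta \le m^{-a}$ for a sufficiently large absolute constant $a$). Since the CAA definition assumes $c \ge 8/\beta \ge 8$ and $C$ is an absolute constant, both conditions can be met by the choice of $\theta$ made in the construction of the nets.

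\emph{Main obstacle.} The delicate step is to match the $\log(1/\theta)$ savings delivered by the CAA property against the $(1/\theta)^{jK}$ blow-up in the number of value choices in $\calN_j$ (entries are multiples of $\theta^j/m$, ranging up to $1$). Choosing $h = \sqrt{\theta}$, rather than a constant, is what introduces the crucial $\log(1/\theta)$ factor into the CAA exponent, and the bound $K = O(m^\beta r_j)$ combined with $r_j m^\beta \ge m^{j\beta}$ is what guarantees that this factor is large enough to absorb the union bound uniformly in $j$. Once this balance is verified, the claim follows.
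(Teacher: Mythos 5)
Your proposal follows the paper's own argument step for step: the same CAA application with $k=r_j$, $\delta=\theta^{j-1}/m$, $h=\theta^{1/2}$, the same reduction $\min(m,r_jm^\beta)=r_jm^\beta\ge m^{j\beta}$ using $r_j\le m^{1-\beta}$ and $r_1\ge 1$, and the same union bound over $\calN_j$ controlled by $K\le 2m^\beta r_j$. Your accounting of the value choices is slightly finer than the paper's crude $(m/\theta^j)^K$ estimate but yields the same order of magnitude, and your closing parameter check ($c$ large relative to an absolute constant, $\theta$ small enough) is morally identical to the paper's ``$c\ge 8s$, $h=\theta^{1/2}$'' verification.
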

\begin{proof}
First consider any single $\alpha \in \calN_j$. By assumption, it has $r_j$ coordinates with magnitude $\ge \frac{\theta^{j-1}}{m}$. Thus, from the CAA property, 
\begin{equation}\label{eq:net-anticonc-bound} \Pr \left[ \norm{\calM \alpha} < \frac{\theta^{j-1}}{m} \frac{h}{\lambda} \right] \le \exp \left( -c r_j m^{\beta} \log(1/h) \right).
\end{equation}
The number of vectors in $\calN_j$ for a given tuple of $r_j$ values is clearly bounded by $\left( \frac{m}{\theta^j} \right)^{r_1+r_2+\dots + r_{j+1}}$. We choose $h = \theta^{1/2}$,\anote{12/5: check?} and $\theta < 1/m$, and argue that as long as these are true,
\begin{equation}
    (r_1 + \dots + r_{j+1}) \log \frac{m}{\theta^j} \le \frac{c}{2} r_j m^{\beta} \log(1/h). \label{eq:net-to-show}
\end{equation} 
We can simplify this by noting that from our assumptions on $r_j$, $r_j m^{\beta} \ge \frac{1}{2} (r_1 + r_2 + \dots + r_{j+1})$. Thus to show~\eqref{eq:net-to-show}, it suffices to have 
\[ \frac{c}{4} \log \frac{1}{h} \ge \log \frac{m}{\theta^j}. \]
Since $\theta < 1/m$, we have $\theta^j / m > \theta^{j+1} > \theta^s$. Since $c \ge 8 s$ \anote{12/5: Why?}and $h = \theta^{1/2}$,\anote{12/5: Should we just pick $h=\theta^{1/16}$ } the above inequality holds, and so inequality~\eqref{eq:net-to-show} also holds.

Next, we observe that for any tuple of $\{r_i\}$ values in $\calN_j$, since $r_1 \ge 1$, we have $r_j \ge m^{(j-1)\beta}$, which implies that $r_j m^{\beta} \ge m^{j \beta}$. Using this fact, together with~\eqref{eq:net-to-show}, we first take a union bound over all $\alpha \in \calN_j$ corresponding to a given tuple $(r_i)_{1\le i \le j+1}$ (call this set $\calN_j'$ for now), and obtain
\[ \Pr \left[ \exists \alpha \in \calN_j' : \norm{\calM \alpha} < \frac{\theta^{j - \frac{1}{2}}}{m\lambda} \right] < \exp \left( -\frac{c}{2} m^{j\beta} \log(1/h) \right). \]
Now since the total number of tuples is easily bounded by $m^{j+1} \le m^s$, taking a further union bound over these choices and simplifying, we obtain the claim.
\end{proof}

Finally, we have a bigger net for all ``dense'' vectors $\alpha$, that have at least $m^{1-\beta}$ coordinates of magnitude $\ge \frac{\theta^{s-1}}{m}$. This net consists of vectors $\in \R^m$ for which (a) every coordinate is an integer multiple of $\theta^s/m$ (between $0$ and $1$), and (b) at least $m^{1-\beta}$ coordinates are $\ge \frac{\theta^{s-1}}{m}$. Call this net $\calN_0$.

An easy upper bound for the size is 
\[  |\calN_0| \le \left( \frac{m}{\theta^s} \right)^m.\]
Using this, we have the following:
\begin{claim}\label{claim:net-n0}
\[ \Pr \left[ \exists \alpha \in \calN_0 : \norm{\calM \alpha} < \frac{\theta^{s-\frac{1}{2}}}{m\lambda}   \right] < \exp \left( -\frac{c}{2} m \right). \]
\end{claim}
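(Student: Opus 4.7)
The plan is to follow the same template as the proof of Claim~\ref{claim:net-nj}, viewing $\calN_0$ as the ``saturated'' case that catches all dense test vectors left over after the hierarchy $\calN_1,\dots,\calN_{s-1}$. The first step is to apply the CAA property to a single fixed $\alpha \in \calN_0$. By construction, $\alpha$ has at least $k := m^{1-\beta}$ coordinates of magnitude at least $\theta^{s-1}/m$, so Definition~\ref{defn:caa-property}, applied with $\delta = \theta^{s-1}/m$, gives for any $h \in (0,1)$:
\[
\Pr\left[\,\norm{\calM \alpha} < \frac{\theta^{s-1}}{m}\cdot\frac{h}{\lambda}\,\right] \;\le\; \exp\!\bigl(-c\,\min(m, k m^{\beta})\log(1/h)\bigr) \;=\; \exp\!\bigl(-cm\log(1/h)\bigr),
\]
since $k m^\beta = m^{1-\beta}\cdot m^\beta = m$, so the minimum saturates at $m$. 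Matching the choice $h = \theta^{1/2}$ used in Claim~\ref{claim:net-nj} converts this to the desired threshold $\theta^{s-\frac{1}{2}}/(m\lambda)$ with single-vector failure probability $\exp(-\tfrac{c}{2}\,m\log(1/\theta))$.

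The second step is to control $|\calN_0|$. Each $\alpha \in \calN_0$ has its $m$ coordinates on a grid of spacing $\theta^s/m$ inside $[0,1]$, so the crude enumeration gives $|\calN_0| \le (m/\theta^s)^m$ and $\log|\calN_0| \le m(\log m + s\log(1/\theta)) \le m(s+1)\log(1/\theta)$, where the last inequality uses $\theta < 1/m$. A union bound over $\calN_0$ then yields
\[
\Pr\left[\,\exists \alpha \in \calN_0:\; \norm{\calM \alpha} < \tfrac{\theta^{s-\frac{1}{2}}}{m\lambda}\,\right] \;\le\; \exp\!\left(\bigl(m(s+1) - \tfrac{c}{2}m\bigr)\log(1/\theta)\right).
\]

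The final step is the arithmetic check that this exponent is at most $-\tfrac{c}{2}m$. Since $s = \lceil 1/\beta \rceil \le 1/\beta + 1$, the hypothesis $c \ge 8/\beta$ gives $c/2 - (s+1) \ge 4/\beta - 1/\beta - 2 = 3/\beta - 2$, a positive quantity that grows as $\beta \to 0$. Combined with $\log(1/\theta) \ge \log m$ (from $\theta < 1/m$), plus a mild additional smallness requirement on $\theta$ (which is anyway imposed when $\theta$ is later chosen in the proof of Theorem~\ref{thm:caa-to-sigma}), the exponent is at most $-\tfrac{c}{2}m$, giving the claim. I do not expect any real obstacle here: $\calN_0$ is simply the $j = s$ analogue of $\calN_j$ in which all slots $r_i$ have saturated at $m^{1-\beta}$, so the CAA minimum saturates at $m$ and the same union-bound accounting carries through. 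The only care required is in balancing the constants between $c$, $s+1$, and $\log(1/\theta)$, which is precisely what the hypothesis $c \ge 8/\beta$ in the CAA definition is designed to enable.
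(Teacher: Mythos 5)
Your proof follows exactly the same route as the paper's: apply the CAA property to a fixed $\alpha \in \calN_0$ with $k = m^{1-\beta}$ so the minimum in the definition saturates at $m$, choose $h = \theta^{1/2}$, bound $|\calN_0| \le (m/\theta^s)^m$, and close with a union bound. The only cosmetic difference is that your final arithmetic works directly from $c \ge 8/\beta$ and $s+1 \le 1/\beta + 2$ rather than the paper's stated intermediate claim $c \ge 8s \ge 4(s+1)$ (which, as the authors themselves flag in a margin note, does not literally follow from $c \ge 8/\beta$ since $s = \lceil 1/\beta \rceil$ can exceed $1/\beta$); your version is thus actually slightly more faithful to the hypothesis as stated in Definition~\ref{defn:caa-property}, and like the paper's it implicitly requires $m$ (hence $\log(1/\theta)$) to be a mild constant.
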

\begin{proof}
First, consider any fixed $\alpha \in \calN_0$. Using part (b) of the definition of $\calN_0$, we can use the CAA property to obtain
\[ \Pr \left[ \norm{\calM \alpha} < \frac{\theta^{s-1} h}{m\lambda}   \right] < \exp \left( -c m \log (1/h) \right),  \]
for any parameter $h$. As before, we show that this is small enough to take a union bound over $|\calN_0|$ terms. Specifically, we argue that setting $h = \theta^{1/2}$,
\[  \log |\calN_0| \le m \log \frac{m}{\theta^s} \le \frac{1}{4} cm \log \frac{1}{\theta}. \]
The latter holds because $\theta < 1/m$, and $c \ge 8s \ge 4(s+1)$. \anote{12/5: again check the $c\ge 8s$ condition.} This completes the proof.
\end{proof}

We can now complete the proof of Theorem~\ref{thm:caa-to-sigma} as follows.

\begin{proof}[Proof of Theorem~\ref{thm:caa-to-sigma}.]
Consider any $\alpha \in \R^m$ with $\norm{\alpha}=1$.  Also, suppose we condition on the event that $\norm{\calM} \le L$ (which happens with probability $1-\eta$). Let $s = \lceil \frac{1}{\beta}\rceil$ as before. Now define $r_1, r_2, \dots r_s$ to be the number of entries of $\alpha$ in the intervals $(\frac{1}{m}, 1], (\frac{\theta}{m}, \frac{1}{m}]$, and so on.  We consider two cases:

\emph{Case 1.} $(r_1 + r_2 + \dots + r_s) \ge m^{1-\beta}$. I.e., there are sufficiently many ``large'' coordinates in $\alpha$.

In this case, we observe that there exists a vector $\alpha' \in \calN_0$ such that $\norm{\alpha - \alpha'} \le \theta^s$. Now, we have from Claim~\ref{claim:net-n0} that with high probability, $\norm{\calM \alpha'} \ge \frac{\theta^{s - \frac{1}{2}}}{m\lambda}$. If this holds, then $\norm{\calM \alpha} \ge \frac{\theta^{s - \frac{1}{2}}}{m\lambda} - L \theta^s$, where $L$ is the spectral norm bound on $\calM$. We choose $\theta$ such that
\[ L\theta^{1/2} < \frac{1}{m\lambda} \iff \theta < \frac{1}{m^2 L^2 \lambda^2}. \]
Thus for this value of $\theta$, 
\begin{equation}\label{eq:net-dense-vector-bound} \Pr \left[ \exists \alpha, \norm{\alpha}=1, \text{ satisfying (Case 1)} : ~\norm{\calM \alpha} < \frac{\theta^{s-\frac{1}{2}}}{2m\lambda}  \right] < \exp\left(-\frac{c}{2} m\right). 
\end{equation}

\emph{Case 2.}  $(r_1 + r_2 + \dots + r_s) < m^{1-\beta}$. In this case, we claim that we must have some index $j < s$ such that $r_{j+1}/r_j \le m^{\beta}$. This is because $r_1 \ge 1$ (a unit vector must have some entry $> 1/m$), and if the above does not hold, then $r_{j} > m^{(j-1)\beta}$. Plugging $j = s$ gives a contradiction to our assumption that $(r_1 + r_2 + \dots r_s) < m^{1-\beta}$. 

Thus, let $j$ be the smallest index for which we have $r_{j+1} / r_j \le m^{\beta}$. We can now consider the net $\calN_j$ and find some $\alpha' \in \calN_j$ such that $\norm{\alpha - \alpha'} < \theta^j$. 

We can use an argument identical to the one in (Case 1), this time leveraging Claim~\ref{claim:net-nj}, to conclude that for all $j$,
\[ \Pr \left[ \exists \alpha, \norm{\alpha}=1, \text{satisfying (Case 2)}: \norm{\calM  \alpha} < \frac{\theta^{s-\frac{1}{2}}}{2m\lambda}  \right]  < \sum_{1\le j<s} \exp\left( - \frac{c}{2} m^{j\beta} \right).  \]
The first term on the RHS dominates, and plugging in the chosen values of $\theta, s$, this completes the proof.
\end{proof}

One of the advantages of our $\eps$-net argument is that if we only care about ``well spread'' vectors, we can obtain a much stronger concentration bound (Eq~\eqref{eq:net-dense-vector-bound}).

\begin{observation}\label{obs:dense-vectors}
Suppose $\calM$ is a random matrix that satisfies the CAA property with parameter $\beta$. Let us call a test vector $\alpha$ (of length $\le 1$) ``dense'' if it has at least $m^{1-\beta}$ coordinates of magnitude $>\delta$. Then
\[ \Pr \left[ \exists \text{ dense } \alpha: \norm{\calM \alpha} < \frac{1}{(Lm\lambda)^{2 \lceil \frac{1}{\beta} \rceil}}  \right] < \exp \left( -\tfrac{1}{2} c m \right).  \]
\end{observation}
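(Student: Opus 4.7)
The plan is to isolate Case 1 from the proof of Theorem~\ref{thm:caa-to-sigma} and apply it directly, since the failure probability $\exp(-cm/2)$ claimed in the observation matches exactly the bound arising in that case (whereas Case 2, which handled sparse vectors, was what degraded the overall bound of Theorem~\ref{thm:caa-to-sigma} to $\exp(-m^\beta)$). Restricting attention to dense vectors eliminates Case 2 entirely.

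First I would set $s = \lceil 1/\beta \rceil$ and choose $\theta$ small enough that $L\theta^{1/2} < 1/(m\lambda)$, i.e., $\theta < 1/(m^2 L^2 \lambda^2)$, exactly as in the proof of Theorem~\ref{thm:caa-to-sigma}. The natural choice of threshold is $\delta = \theta^{s-1}/m$, which is precisely the value above which $\calN_0$ is refined. A dense unit vector $\alpha$ then has at least $m^{1-\beta}$ coordinates of magnitude exceeding $\delta$, so rounding each coordinate to the nearest integer multiple of $\theta^s/m$ produces some $\alpha' \in \calN_0$ with $\|\alpha - \alpha'\| \le \theta^s$.

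Now I would invoke Claim~\ref{claim:net-n0}: except on an event of probability at most $\exp(-cm/2)$, every $\alpha' \in \calN_0$ satisfies $\|\calM\alpha'\| \ge \theta^{s-1/2}/(m\lambda)$. Conditioning additionally on $\|\calM\| \le L$ and applying the triangle inequality gives
\[ \|\calM\alpha\| \ge \frac{\theta^{s-1/2}}{m\lambda} - L\theta^s \ge \frac{\theta^{s-1/2}}{2m\lambda}, \]
where the last inequality uses the defining property of $\theta$. Substituting the chosen value of $\theta \sim 1/(m^2 L^2 \lambda^2)$ and simplifying yields the claimed lower bound $1/(Lm\lambda)^{2\lceil 1/\beta \rceil}$ up to absolute constants that can be absorbed.

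The argument is essentially bookkeeping on top of machinery already developed; there is no new technical obstacle. The only minor point to track is the spectral norm failure event of probability $\eta$ from Theorem~\ref{thm:caa-to-sigma}: in the observation this is either absorbed into the $\exp(-cm/2)$ bound (since in all applications of interest $\eta$ decays exponentially in $m$, typically at a faster rate) or inserted as an additive term in the failure probability. Either way, no new ideas beyond Claim~\ref{claim:net-n0} and the Case 1 analysis are required.
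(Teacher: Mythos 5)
Your proof is correct and is exactly the paper's intended proof: the paper states that Observation~\ref{obs:dense-vectors} ``follows immediately from \eqref{eq:net-dense-vector-bound},'' and your argument is precisely the derivation of that equation, i.e., the Case~1 analysis isolated from the proof of Theorem~\ref{thm:caa-to-sigma} together with Claim~\ref{claim:net-n0}. Your identification of the implicit threshold $\delta = \theta^{s-1}/m$ and your note about the spectral norm event $\eta$ are both consistent with the paper's (somewhat terse) intent, so there is nothing to add.
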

Note that in the above claim, $m$ could be quite large compared to $n$. The observation follows immediately from \eqref{eq:net-dense-vector-bound}, but we will use it later in Section~\ref{sec:khatri-rao-jacobian}.

\section{Higher Order Lifts of Smoothed Matrices}\label{sec:contraction}\label{sec:kronecker}
\newcommand{\flatW}{W^{\textrm{flat}}}

We provide the following theorem. 

\begin{theorem}\label{thm:kronecker:sym}
Suppose $d \in \N$, and let $\Phi: \Sym^d(\R^n) \to \R^D$ be an orthogonal projection of rank $R = \delta \binom{n+d-1}{d}$ for some constant $\delta>0$, and let $\Sym_d: (\R^{n})^{\otimes d} \to \Sym^d(\R^n)$ be the orthogonal projection on to the symmetric subspace of $(\R^n)^{\otimes d}$. Let $U = (u_i : i \in [m]) \in \R^{n \times m}$ be an arbitrary matrix, and let $\tilde{U}$ be a random $\rho$-perturbation of $U$. Then there exists a constant $c_d>0$ such that for $m \le c_d \delta n$, with probability at least $1-\exp\big(- \Omega_{d,\delta} (n)\big)$, we have the least singular value 
\begin{align}
\sigma_{\binom{m+d-1}{d}}(\Phi \tilde{U}^{\veep d})  &\ge \frac{\rho^d}{n^{O(d)}}, \text{ where }\nonumber \\
\tilde{U}^{\veep d} &\coloneqq \Big(\Sym_d(\tilde{u}_{i_1} \otimes\tilde{u}_{i_2} \dots \otimes \tilde{u}_{i_d}): 1 \le i_1 \le i_2 \le \dots \le i_d \le n \Big).
\end{align}
\end{theorem}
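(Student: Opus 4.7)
My plan is to reduce the symmetric statement to a non-symmetric one and then prove the non-symmetric statement by induction on $d$. Specifically, I will first show that $\sigma_{\min}\bigl(\Phi \tilde U^{\veep d}\bigr)$ can be lower bounded in terms of $\sigma_{\min}\bigl(\Phi (\tilde U^{(1)} \otimes \tilde U^{(2)} \otimes \cdots \otimes \tilde U^{(d)})\bigr)$ for suitable auxiliary matrices $\tilde U^{(1)}, \dots, \tilde U^{(d)}$ whose randomness is independent across the factors. The standard polarization trick expresses each symmetric product $\mathrm{Sym}_d(\tilde u_{i_1} \otimes \cdots \otimes \tilde u_{i_d})$ as a signed combination of pure $d$th powers $(\sum_k c_k \tilde u_{i_k})^{\otimes d}$, but to decouple the randomness I will instead expand in the Taylor/polarization identity around a single $\rho$-perturbation and then split each perturbation Gaussian $g_i \sim \mathcal N(0,\rho^2 I)$ as a sum of $d$ independent Gaussians $g_i = g_i^{(1)} + \cdots + g_i^{(d)}$ of smaller variance $\rho^2/d$. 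Expanding $\tilde u_{i_1}\otimes \cdots \otimes \tilde u_{i_d}$ in these split perturbations and grouping terms by which copy contributes in which mode produces the desired non-symmetric object plus lower-order remainder terms whose least singular value is controlled inductively.

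Once in the non-symmetric setting, the core task is to prove: if $\Phi$ has at least $\Omega_{d,\delta}(n^d)$ singular values bounded below by $1$, and $\tilde U^{(1)}, \dots, \tilde U^{(d)}$ are independent $\rho$-smoothed matrices in $\R^{n\times m}$ with $m \le c_d n$, then $\sigma_{\min}(\Phi(\tilde U^{(1)} \otimes \cdots \otimes \tilde U^{(d)})) \ge \mathrm{poly}(\rho,1/n)$ with probability $1-\exp(-\Omega(n))$. I will prove this by induction on $d$, viewing $\Phi$ as an order-$(d+1)$ tensor of shape $R \times n \times \cdots \times n$ and contracting one mode at a time by $\tilde U^{(d)}, \tilde U^{(d-1)}, \dots, \tilde U^{(1)}$. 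After each contraction, the intermediate tensor has shape $R \times m \times \cdots \times m \times n \times \cdots \times n$, and the inductive hypothesis must be that this intermediate tensor, viewed as a linear operator, still has many large singular values (so the residual "projection" on the remaining un-contracted modes still has large robust rank).

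The main obstacle, and also the most interesting technical step, is maintaining the robust-rank condition through each contraction. For one contraction step, I decompose the tensor $\Phi$ (or its current intermediate) into $n$ matrix slices $\Phi_1, \dots, \Phi_n$ indexed by the mode being contracted, and I need to find a subset $S \subseteq [n]$ of $\Omega(n)$ slices each of which has large \emph{relative rank} after projecting out the span of the other slices in $S$. I will do this in two steps following the overview: first apply the Auerbach-basis lemma (Lemma~\ref{lem:sigma-to-col-subset}) to extract $R = \Omega(n^d)$ well-conditioned columns distributed across the $n$ slices, and then randomly restrict to a subset of $[n]$ and discard slices with fewer than $\Omega(n^{d-1})$ vectors still having a large component orthogonal to the span of the other retained slices. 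A Chernoff/concentration argument shows that $\Omega(n)$ slices survive. Contracting by the corresponding block $\tilde U^{(d)}_S$ then produces a matrix of large robust rank via a leave-one-out / relative-rank style argument, while the contribution from $\tilde U^{(d)}_{[n]\setminus S}$ is handled by independence.

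The last piece is to turn the relative-rank preservation into the final least-singular-value bound: I will argue that if a hypothetical test unit vector $\alpha \in \R^{m^d}$ made $\|\Phi(\tilde U^{(1)} \otimes \cdots \otimes \tilde U^{(d)})\alpha\|$ small, then reshaping $\alpha$ along one mode would, by averaging, produce a block index $j^* \in [m]$ for which the inductive bound on the $(d-1)$-fold contraction is violated with probability much larger than permitted. This replaces the naive $\exp(-m^d)$ net argument (which would fail because the randomness budget is only $\exp(-n)$) with a slice-level contradiction, mirroring the strategy described in the overview and formalized in the analog of Lemma~\ref{lem:remaining}. Combining the non-symmetric bound with the decoupling reduction and tracking the polynomial factors in $\rho$ and $n$ through each of the $d$ contraction stages yields the stated bound $\sigma_{\min}(\Phi \tilde U^{\veep d}) \ge \rho^d / n^{O(d)}$ with failure probability $\exp(-\Omega_{d,\delta}(n))$.
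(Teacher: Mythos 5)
Your proposal follows essentially the same route as the paper: the Gaussian-splitting decoupling identity (Lemma~\ref{lemma:UnsymmetrizeUd}/Lemma~\ref{lem:decoupling}) to pass to independent factors, then the induction on $d$ via the non-symmetric statement (Theorem~\ref{thm:kronecker:asym}), using the Auerbach-basis plus random-restriction argument (Lemmas~\ref{lem:sigma-to-col-subset} and~\ref{lem:block}) to find slices of large relative rank and the slice-level contradiction (Lemma~\ref{lem:remaining}) to avoid the $\exp(-m^d)$ net. The only caveats are cosmetic: the decoupling error is handled by a deterministic norm bound rather than ``inductively,'' the survival of $\Omega(n)$ blocks in the random restriction is argued by an expectation computation rather than Chernoff, and the precise invariant carried through the induction is the \emph{relative rank of slices along the remaining mode} rather than ``the intermediate tensor has many large singular values,'' but none of these change the structure of the argument.
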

In the above statement, one can also consider an arbitrary linear operator $\Phi$ and suffer an extra factor of $\sigma_R(\Phi)$ in the least singular value bound (by considering the projector onto the span of the top $R$ singular vectors). In the rest of the section, we assume that $\Phi$ is an orthogonal projector of rank $R$ without loss of generality. 

\begin{figure}
\centering
\includegraphics[width=\textwidth]{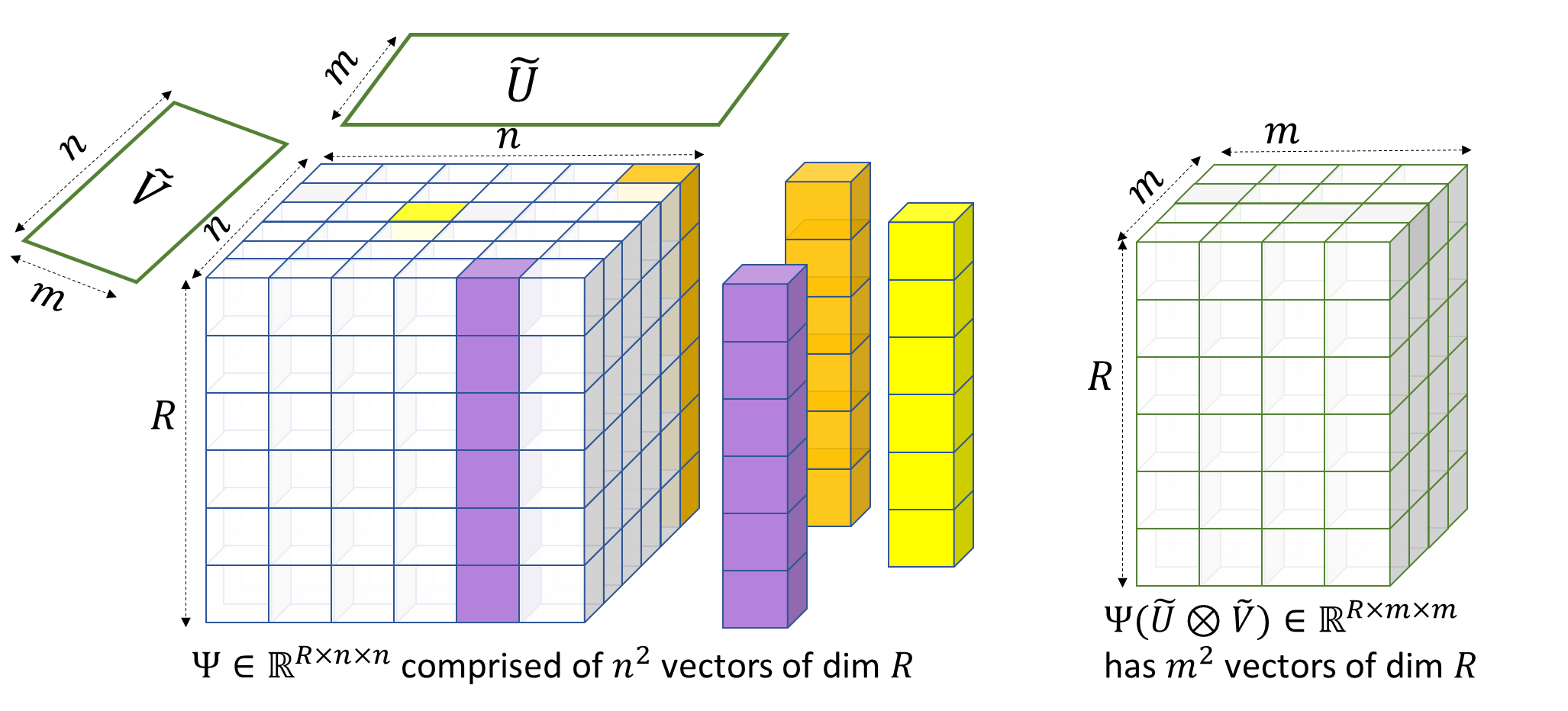}
\caption{\textit{Left}: The linear operator $\Psi: \R^{n \times n} \to \R^R$ interpreted as a tensor consisting of a $n \times n$ array of $R$-dimensional vectors. There are {\em smoothed} or random contractions applied using matrices $\tilde{U}, \tilde{V} \in \R^{n \times m}$. \textit{Right}: The operator $\Psi(\tilde{U} \otimes \tilde{V}): \R^{m \times m} \to \R^R$ interpreted as an $m^2$ array of $R$-dimensional vectors. Theorem~\ref{thm:kronecker:asym} shows that under the conditions of the theorem, with high probability the robust rank of this operator is $m^2$ i.e, the least singular value of $R \times m^2$ matrix is inverse polynomial.}
\label{fig:psi}
\end{figure}

Theorem \ref{thm:kronecker:sym} follows from the following theorem (Theorem~\ref{thm:kronecker:asym}) which gives a non-symmetric analog of the same statement. The proof of Theorem~\ref{thm:kronecker:sym} follows from a reduction to Theorem~\ref{thm:kronecker:asym} that is given by Lemma~\ref{lem:decoupling}. In what follows, $\Psi \in \R^{R \times n^{d}}$ denotes the natural matrix representation of $\Phi$ such that $\Psi x^{\otimes d} = \Phi(x^{\otimes d})$ for all $x \in \R^n$. 
\anote{4/16/24: changed the theorem statement to arbitrary linear operator with $\sigma_R(\Phi) \ge 1$ instead of a projector. }

\begin{theorem}\label{thm:kronecker:asym}
Suppose $\ell \in \N$, $R = \delta \binom{n+d-1}{d}$ for some constant $\delta>0$ and let $\Psi: (\R^{n})^{\otimes \ell} \to \R^D$ be 
a linear operator with $\sigma_R(\Psi) \ge 1$.
Suppose random matrices $\tilde{U}^{(1)}, \dots , \tilde{U}^{(d)} \in \R^{n \times m}$ are generated as follows:
$$\forall j \in [d],\  \tilde{U}^{(j)}= U^{(j)}+Z^{(j)}, \text{ where } Z^{(j)} \sim_{i.i.d} \calN(0,\rho^2)^{n \times m} \text{ and is independent of } U^{(j)},$$
while $U^{(j)} \in \R^{n \times m}$ is arbitrary and can also depend on $\tilde{U}^{(j+1)}, \dots, \tilde{U}^{(d)}$. 
\anote{Changed the directionality in independence to match figure, proofs. }
Then there exists constants $c_d, c'_d>0$ and an absolute constant $c_0 \ge 1$ such that for $m \le c_d \delta n$, 
with probability at least $1-\exp\big(- \Omega_{d,\delta} (n)\big)$, we have 
\begin{align}
\sigma_{m^d}\Big(\Psi \big(\tilde{U}^{(1)} \otimes \dots \otimes \tilde{U}^{(d)}\big) \Big)  &\ge \frac{c'_d \rho^d}{n^{c_0 d}}.
\end{align}
\end{theorem}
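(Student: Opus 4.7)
The plan is to induct on $d$ (the $\ell$ in the theorem statement). The base case $d = 1$ asks for a smoothed least singular value bound on $\Psi \tilde{U}^{(1)}$ when $\sigma_R(\Psi) \ge 1$ and $R = \delta n \ge m$; this follows from projecting $\tilde{U}^{(1)}$ onto the top $R$ right singular directions of $\Psi$ and invoking the classical smoothed bound for a Gaussian matrix via leave-one-out (Lemma~\ref{lem:leaveoneout}).

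\textbf{Inductive step: slicing and block selection.} For $d \ge 2$, slice $\Psi$ along its first $n$-mode to obtain matrices $\Psi_1, \dots, \Psi_n \in \R^{D \times n^{d-1}}$. The target matrix has column blocks indexed by $j \in [m]$, each of the form $\sum_{i \in [n]} \tilde{U}^{(1)}_{ij}\, M_i$, where $M_i \coloneqq \Psi_i\bigl(\tilde{U}^{(2)} \otimes \cdots \otimes \tilde{U}^{(d)}\bigr) \in \R^{D \times m^{d-1}}$. The nested dependence structure (each $U^{(j)}$ may depend only on $\tilde{U}^{(j+1)}, \dots, \tilde{U}^{(d)}$) ensures that the Gaussian noise $Z^{(1)}$ in $\tilde{U}^{(1)}$ is independent of the whole family $\{M_i\}_{i \in [n]}$. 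Before sampling $Z^{(1)}$, apply the randomized block-restriction argument outlined in Section~\ref{sec:overview:kronecker}: using the column extraction of Lemma~\ref{lem:sigma-to-col-subset} together with a random subsampling of slices, identify a subset $S_1 \subseteq [n]$ of size $\Omega(\delta n)$ such that each $\Psi_i$ with $i \in S_1$ contains $\Omega(\delta n)$ vectors whose components orthogonal to the span of $\bigcup_{r \in S_1 \setminus \{i\}} \Psi_r$ are non-negligible. Applying the inductive hypothesis to the restricted tensor built from $\{\Psi_i : i \in S_1\}$ together with the remaining $d - 1$ contractions yields that the $\{M_i\}_{i \in S_1}$ inherit an analogous relative-rank structure in $\R^{D \times m^{d-1}}$.

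\textbf{Combining the pieces and main obstacle.} Decompose $\tilde{U}^{(1)}$ row-wise into its $S_1$ and $[n] \setminus S_1$ parts; each column block splits as $\sum_{i \in S_1} \tilde{U}^{(1)}_{ij} M_i + \sum_{i \notin S_1} \tilde{U}^{(1)}_{ij} M_i$, where the first summand is a Gaussian combination of a family with good relative-rank structure, so any single block has large least singular value on its own. The genuine difficulty is the joint statement: the $m$ blocks must be \emph{simultaneously} well-separated in $\R^D$ so that the full concatenation has $\sigma_{m^d}$ bounded below. A direct $\eps$-net union bound over test vectors $\alpha \in \R^{m^d}$ fails, because the required concentration $\exp(-\Omega(m^d))$ outstrips the $\exp(-\Omega(n))$ noise budget available from $\tilde{U}^{(1)}$. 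The resolution, and the main technical work, is Lemma~\ref{lem:remaining}: argue by contradiction that any bad $\alpha$ localizes the failure to a single block index $j^* \in [m]$, and then the relative-rank property of $\{M_i\}_{i \in S_1}$ together with the fresh independent randomness in $\tilde{U}^{(1)}_{[n]\setminus S_1}$ makes that localized event exponentially unlikely. Assembling the per-step failure probabilities across the induction then gives the claimed $\poly(\rho, 1/n)$ lower bound.
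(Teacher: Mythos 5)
Your proposal tracks the paper's proof of Theorem~\ref{thm:kronecker:asym} essentially step by step: the same induction on $d$, the same slicing of $\Psi$ into blocks $\Psi_1,\dots,\Psi_n$, the same ``good block'' extraction $S_1$ via the Auerbach-basis column extraction of Lemma~\ref{lem:sigma-to-col-subset} followed by a random subsampling of slices (this is Lemma~\ref{lem:block}), the same application of the inductive hypothesis through the projected blocks $\Pi^\perp_{-i}\Psi_i$ to propagate relative-rank structure to the contracted blocks $M_i = W_i$, the same row-split of $\tilde U^{(1)}$ into its $S_1$ and $[n]\setminus S_1$ parts, and the same invocation of Lemma~\ref{lem:remaining} to localize a bad test vector $\alpha$ to a single block $j^*\in[m]$ and derive a contradiction from Gaussian small-ball estimates. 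The base case is also handled the same way (via leave-one-out after projecting to the top $R$ directions of $\Psi$, which is what Lemma~\ref{lem:remaining} with $s=1$ does).

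One sentence does contain a real inconsistency that you should fix. You correctly identify $\sum_{i\in S_1}\tilde U^{(1)}_{ij}M_i$ as the summand carrying the controlled relative-rank structure, but then attribute the anticoncentration to ``the fresh independent randomness in $\tilde U^{(1)}_{[n]\setminus S_1}$.'' This is backwards. In the argument one \emph{conditions} on $\tilde U^{(1)}_{[n]\setminus S_1}$, which yields only the \emph{fixed} offsets $C^{(j)} = W^{(j)}_{[n]\setminus S_1}\tilde U^{(1)}_{[n]\setminus S_1}$; the noise that must remain fresh, and that Lemma~\ref{lem:remaining} actually consumes in its net-plus-anticoncentration step (Claim~\ref{claim:allblocks}), is $\tilde U^{(1)}_{S_1}$, because that is exactly the randomness multiplied against the $S_1$-indexed matrices $W^{(j)}_{S_1}$ whose relative rank you controlled. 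If you instead froze $\tilde U^{(1)}_{S_1}$ and used the $[n]\setminus S_1$ rows as the anticoncentrating noise, you would be pairing fresh Gaussians with the uncontrolled blocks $\{M_i\}_{i\notin S_1}$, and the relative-rank hypothesis of Lemma~\ref{lem:remaining} would no longer apply, so the localized event need not be unlikely. Swapping those two roles makes your sketch match both the paper's proof and your own preceding paragraph.
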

While $\Psi$ is specified as a matrix of dimension $R \times n^d$ in Theorem~\ref{thm:kronecker:asym}, one can alternately view $\Psi$ as a $(d+1)$-order tensor of dimensions $R \times n \times n\times \dots \times n$ as shown in Figure~\ref{fig:psi}. Theorem~\ref{thm:kronecker:asym} then gives a lower bound for the multilinear rank (in fact, for its robust analog) under smoothed modal contractions along the $d$ modes of dimension $n$ each.  

 Applying Theorem~\ref{thm:kronecker:sym} along with the block leave-one-out approach (see Lemma~\ref{lemma:BlockLeaveoneOut}) we arrive at the following corollary. 

\begin{corollary}\label{corr:kron:blocks}
Suppose $d,t \in \N$ and let $1 \geq \delta_1>\delta_2 >0$ be given.  Also let $\Phi: \Sym^d(\R^n) \to \R^D$ be an orthogonal projection of rank $R \geq \delta_1 \binom{n+d-1}{d}$. Let $\{U_j\}_{j=1}^t \subset \R^{n \times m}$ be an arbitrary collection of $n \times m$ matrices, and  for each $j$, let $\tilde{U}_j$ be a random $\rho$-perturbation of $U_j$. Then there exists a constant $c_d >0$ such that if $t\binom{m+d-1}{d} \leq \delta_2 \binom{n+d-1}{d}$ and $m \leq c_d (\delta_1-\delta_2) n$, then with probability at least $1-\exp\big(- \Omega_{d,\delta_1,\delta_2} (n)\big)$, we have the least singular value 
\begin{align}
\label{eq:BlockKronLowerBound}
\sigma_{t\binom{m+d-1}{d}}\left(\Phi \begin{bmatrix} \tilde{U}_1^{\veep d} & \tilde{U}_2^{\veep d} & \dots & \tilde{U}_t^{\veep d} \end{bmatrix}\right)  &\ge \frac{\rho^d}{\sqrt{t} n^{O(d)}}.
\end{align}

\end{corollary}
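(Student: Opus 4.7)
\textbf{Proof proposal for Corollary~\ref{corr:kron:blocks}.}  The plan is to combine the block leave-one-out bound (Lemma~\ref{lemma:BlockLeaveoneOut}) with $t$ independent applications of Theorem~\ref{thm:kronecker:sym}, one per block. Concretely, for each $j \in [t]$ I will establish that the columns of $\Phi \tilde U_j^{\veep d}$ are well-separated from the span of the columns of all the other blocks $\{\Phi \tilde U_k^{\veep d} : k \neq j\}$, with high probability.

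Fix $j \in [t]$ and condition on the randomness of $\{\tilde U_k : k \neq j\}$. Let $W_j \subseteq \mathrm{Im}(\Phi)$ denote the span of the other blocks, and let $Q_j$ be the orthogonal projection onto the orthogonal complement of $W_j$ inside $\mathrm{Im}(\Phi)$. Because $W_j \subseteq \mathrm{Im}(\Phi)$ and $\Phi$ is an orthogonal projection, the composition $Q_j \Phi$ is itself an orthogonal projection of $\Sym^d(\R^n)$ onto $\mathrm{Im}(\Phi) \ominus W_j$, whose rank satisfies
\[
\mathrm{rank}(Q_j \Phi) \;\geq\; R - \dim(W_j) \;\geq\; \delta_1 \binom{n+d-1}{d} - (t-1)\binom{m+d-1}{d} \;\geq\; (\delta_1 - \delta_2) \binom{n+d-1}{d},
\]
using the hypothesis $t \binom{m+d-1}{d} \leq \delta_2 \binom{n+d-1}{d}$. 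Crucially, $Q_j \Phi$ is a deterministic function of $\Phi$ and $\{\tilde U_k : k \neq j\}$, so it is independent of $\tilde U_j$.

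Now I apply Theorem~\ref{thm:kronecker:sym} to the orthogonal projection $Q_j \Phi$ and the $\rho$-perturbed matrix $\tilde U_j$, with the parameter $\delta$ of the theorem replaced by $\delta_1 - \delta_2$. The assumption $m \leq c_d (\delta_1 - \delta_2) n$ is exactly the required hypothesis, so with probability at least $1 - \exp(-\Omega_{d, \delta_1 - \delta_2}(n))$ we obtain
\[
\sigma_{\binom{m+d-1}{d}}\bigl(Q_j \Phi \tilde U_j^{\veep d}\bigr) \;\geq\; \rho^d / n^{O(d)}.
\]
Union-bounding over the $t \leq \binom{n+d-1}{d}$ choices of $j$ (a polynomial-in-$n$ factor absorbed into the exponential), all $t$ ``leave-one-out'' guarantees hold simultaneously. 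Feeding these into Lemma~\ref{lemma:BlockLeaveoneOut}, which converts per-block residual least singular value bounds into a bound on the least singular value of the concatenated matrix with the standard $1/\sqrt{t}$ loss, yields \eqref{eq:BlockKronLowerBound}. The whole argument is essentially a reduction; there is no new obstacle once Theorem~\ref{thm:kronecker:sym} is in hand, and the only thing that needed checking is that the residual rank of $Q_j \Phi$ remains a constant fraction of $\binom{n+d-1}{d}$, which is exactly the role of the gap $\delta_1 - \delta_2$.
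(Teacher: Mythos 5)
Your proposal is correct and follows essentially the same route as the paper: for each $j$ you apply Theorem~\ref{thm:kronecker:sym} to the residual projection $Q_j\Phi$ (the paper's $\Pi_{-j}$), whose rank is at least $(\delta_1-\delta_2)\binom{n+d-1}{d}$ by the same dimension count, and then you feed the resulting per-block bounds into the block leave-one-out Lemma~\ref{lemma:BlockLeaveoneOut}. Your write-up is slightly more explicit than the paper's in two places — you spell out the conditioning on $\{\tilde U_k : k\neq j\}$ to justify independence of $Q_j\Phi$ from $\tilde U_j$, and you mention the union bound over $j$ — but these are details the paper leaves implicit, not a different argument.
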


\begin{proof}
For each $j=1,\dots,t$, let $\Pi_{-j}$ denote the orthgonal projection onto
\[
\mathrm{range}(\Phi) \cap \Big(\mathrm{range} \big([\tilde{U}_1^{\veep d} \ \dots \tilde{U}_{j-1}^{\veep d}\ \tilde{U}_{j+1}^{\veep d}\  \dots \tilde{U}_t^{\veep d}]\big)\Big)^\perp.
\]
We first lower bound the least singular value of $\Pi_{-j} (\tilde{U}_j^{\veep d}).$ Observe that from our assumptions, the rank of $\Pi_{-j}$ is at least
\[
\delta_1 \binom{n+d-1}{d}+(1-\delta_2) \binom{n+d-1}{d} - \binom{n+d-1}{d} = (\delta_1-\delta_2) \binom{n+d-1}{d}.
\]
Taking $c_d$ to be the constant from Theorem \ref{thm:kronecker:sym}, we can then apply Theorem \ref{thm:kronecker:sym} to conclude that with probability at least $1-\exp(-\Omega_{d,\delta_1,\delta_2}(n))$ we have
\[
\sigma_{\binom{m+d-1}{d}} (\Pi_{-j} \tilde{U}_j^{\otimes d}) \geq \frac{\rho^d}{n^{O(d)}}.
\]
The result now follows by applying the block leave one out bound of Lemma \ref{lemma:BlockLeaveoneOut}.
\end{proof}

\subsection{Proof of Theorem~\ref{thm:kronecker:asym}}

We will prove Theorem~\ref{thm:kronecker:asym} for general $d$ by induction on $d$. 
The following crucial lemma considers a linear operator $\Psi$ acting on the space $\R^{n_1} \otimes \R^{n_2}$, and shows that if $\Psi$ has large rank $\Omega(n_1 n_2)$, then it has many ``blocks'' of large relative rank as described in Section~\ref{sec:overview:kronecker}. 

\begin{lemma}\label{lem:block}
Let $\Psi \in \R^{R \times (n_1 n_2)}$ be a projection matrix of rank $R = \delta n_1 n_2$ for some constant $\delta>0$, and let $\Psi=[\Psi_1\ \Psi_2\ \dots\ \Psi_{n_1}]$ where the blocks $\Psi_i \in \R^{R \times n_2}~ \forall i \in [n_1]$. Then there exists constants $c_1, c_2,c_3>0$ and a subset $S_1 \subset [n_1]$ with $|S_1| \ge c_1 \delta n_1$ such that
\begin{equation} \label{eq:goodblocks}
    \forall i \in S_1, ~\sigma_{c_2 \delta n_2}\Big( \Pi^{\perp}_{S_1\setminus \{i\}} \Psi_i \Big) \ge \frac{1}{(nk)^{c_3}},
\end{equation}
where $\Pi^{\perp}_{S}$ is the projection orthogonal to $\spn \big(\cup_{i \in S} \colspn(\Psi_i)\big)$.
\end{lemma}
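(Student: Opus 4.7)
My plan is to combine an Auerbach-style selection of $R$ well-conditioned columns of $\Psi$ with a random restriction of the blocks, then take $S_1$ to be the blocks that remain ``good'' after the restriction.

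First, I would apply Claim~\ref{lem:sigma-to-col-subset} to $\Psi$ (using $\sigma_R(\Psi) = 1$ since $\Psi$ is a rank-$R$ projection) to extract an index set $I \subseteq [n_1 n_2]$ of size $R$ such that the submatrix $M := \Psi_I \in \R^{R \times R}$ satisfies $\sigma_R(M) \ge \theta$ with $\theta = 1/\poly(n_1 n_2)$. I then partition $I$ by block via $I_j := I \cap \{\text{columns of block } j\}$ and let $K_j := |I_j|$; since $\sum_j K_j = R = \delta n_1 n_2$, a pigeonhole argument produces a set $B \subseteq [n_1]$ with $|B| \ge \delta n_1/2$ and $K_j \ge \delta n_2/2$ for every $j \in B$. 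Next I would sample $T \subseteq B$ by including each block independently with a suitably small constant probability $p$, and let $S_1 \subseteq T$ be the set of ``good'' blocks, where $j \in T$ is declared good iff $\sigma_{c_2 \delta n_2}(\Pi^\perp_{W_j} \Psi_j) \ge 1/\poly(n_1 n_2)$ with $W_j := \spn\big(\bigcup_{r \in T \setminus \{j\}} \colspn(\Psi_r)\big)$. By the probabilistic method, it suffices to show that $\E[|S_1|] \ge 2 c_1 \delta n_1$.

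To verify that most sampled blocks are good, the well-conditioning of $M$ combined with a block leave-one-out argument (Lemma~\ref{lemma:BlockLeaveoneOut}) yields $\sigma_{K_j}(\Pi^\perp_{M_{-j}} M_j) \ge \theta / \poly(n)$, where $M_{-j}$ collects the selected columns from blocks other than $j$. This controls the interaction between $M_j$ and the other selected columns, but $W_j$ additionally contains the non-selected columns of $\Psi_r$ for $r \in T \setminus \{j\}$. I would handle these by passing to the $M$-basis: write $Q := M^{-1}\Psi$, so $\|Q\| \le 1/\theta$, and note that selected columns become standard basis vectors $e_s$ while rows of $Q$ have bounded $\ell_2$ norm. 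Random sampling reduces the expected interference contributed to each coordinate $e_s$ (for $s \in I_j$) from non-selected columns by a factor $p$, and averaging over the $K_j \ge \delta n_2/2$ selected columns of $M_j$ then ensures that at least $c_2 \delta n_2$ of them retain a non-negligible component orthogonal to $W_j$ for a constant fraction of blocks $j \in T$. Tuning $p$ appropriately (small in terms of $\delta$) and taking expectations concludes $\E[|S_1|] = \Omega(\delta n_1)$.

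\textbf{Main obstacle.} The principal difficulty is the interference analysis from non-selected columns: the crude bound $\|Q\| \le 1/\theta = \poly(n)$ is too weak, since a single non-selected column can in principle absorb an entire $M_j$-column. The adversarial configuration noted in the overview ($\Phi_1 = \{e_1,\dots,e_t, \varepsilon e_{t+1},\dots\}$, $\Phi_2 = \{\varepsilon e_1,\dots\}$) shows that no deterministic selection suffices: random restriction of the blocks is essential. The delicate point is to average the interference across many columns of $M_j$ and across the random choice of $T$, using the row-wise norm bound on $Q$, so that only a constant fraction of blocks in $T$ become bad and hence $|S_1| \ge c_1 \delta n_1$ with positive probability.
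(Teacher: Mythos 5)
Your first step matches the paper exactly: apply Lemma~\ref{lem:sigma-to-col-subset} to $\Psi$ to extract $R = \delta n_1 n_2$ columns forming a well-conditioned submatrix $M$, then randomly restrict the blocks.  The difference — and the gap — is in how you argue that $\Omega(\delta n_1)$ blocks remain ``good'' after the restriction.  Your proposed mechanism (pass to $Q = M^{-1}\Psi$, bound $\|Q\| \le 1/\theta$, and claim that random sampling dilutes the interference at each coordinate $e_s$ by a factor $p$, then ``average across the $K_j$ selected columns'') is not a complete argument, and you say so yourself.  The problem is structural: being a ``good'' block is a statement about a \emph{subspace} projection, not a linear sum of per-column contributions, so averaging per-coordinate interference from non-selected columns does not obviously control $\sigma_{c_2\delta n_2}(\Pi^\perp_{W_j}\Psi_j)$.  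In the adversarial example you cite, a single non-selected column from another block can sit almost exactly on top of a selected column of $M_j$, and your averaging step gives no mechanism to rule out that the span of the non-selected columns across $T\setminus\{j\}$ swallows a constant fraction of $M_j$'s columns for \emph{every} block $j$.

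The paper closes the gap with a decoupling argument you do not use.  For each selected column $v_i$ in block $j$, define $Y_i$ to be the indicator that $v_i$ has a non-negligible component orthogonal to $\spn\big(\bigcup_{r\in T\setminus\{j\}}\colspn(\Psi_r)\big)$.  The crucial observation is that $Y_i$ is measurable with respect to the \emph{other} blocks' inclusion, hence \emph{independent} of the indicator $X_j$ that block $j$ is chosen; this lets $\E[X_j\cdot(\,\cdot\,)]$ factorize, turning the expected number of surviving blocks into a function of $\E[\sum_i Y_i]$.  To lower-bound $\E[\sum_i Y_i]$ the paper introduces a second variable $Z_i\le Y_i$, the indicator that $v_i$ survives projection orthogonal to \emph{all} of $\bigcup_{r\in T}\colspn(\Psi_r)$.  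Because $T$ is small (whp, by Markov on $\E[|T|]$) the subspace $\spn(\bigcup_{r\in T}\colspn(\Psi_r))$ has dimension at most $\tfrac{2}{3}R$, so a rank argument on the well-conditioned matrix $M$ forces $\sum_i Z_i \ge \Omega(R)$ deterministically.  This is what makes the counting go through; no norm bound on $M^{-1}\Psi$ or per-column averaging is involved.  Your uniform sampling (the paper samples block $j$ with probability proportional to the number of $M$-columns in it) and your pigeonhole restriction to blocks with $K_j \ge \delta n_2/2$ are harmless deviations; they are not where the proof breaks.  But without the $Y_i/Z_i$ decoupling, or something playing the same role, your argument for $\E[|S_1|]=\Omega(\delta n_1)$ is not established.
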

We prove this lemma in Section~\ref{sec:goodblocks} by restricting to randomly chosen columns as described in the overview (Section~\ref{sec:overview:kronecker}). We now proceed with the proof of Theorem~\ref{thm:kronecker:asym} assuming the above lemma. 

%
%
The following lemma will be important in the inductive proof of the theorem. It reasons about the robust rank (also called multi-linear rank) after the modal contraction by a smoothed matrix along a specific mode. The lemma is proved in slightly more generality; we will use it for the theorem with $\varepsilon=1$.

 \begin{lemma}[Robust rank under random contractions]\label{lem:remaining}
     Suppose $\varepsilon \in (0,1]$ is a constant. For every constant $\gamma,C>0$, there is a constant $c \in (0,1)$ such that the following holds for all $s = 2^{o(k)}$. Consider matrices  $A_1, A_2, \dots, A_s \in \R^{R \times k}$, $C_1, \dots, C_s \in \R^{R \times m}$ and $\forall j \in [s]$ let $\Pi^{\perp}_{-j}$ denote the projector orthogonal to the span of the column spaces of $\{A_{j'}: j' \ne j, j' \in [s] \}$.   Suppose the following conditions are satisfied:
\begin{align}
    \forall j \in [s], ~ \sigma_{\varepsilon k} (\Pi^{\perp}_{-j} A_j) \ge k^{-\gamma}  
\end{align}
and $\sigma_1( A_j), \sigma_1(C_j) \le k^{C}$. For a random $\rho$-perturbed matrix $\tilde{U} \in \R^{k \times m}$  with $m \le c \varepsilon k$, we have with probability at least $1-\exp(-\Omega(\varepsilon k))$ that
     $$ \text{if } ~\forall j \in [s],~ M_j = C_j + A_j \tilde{U} ,\text{ then }\sigma_{sm} \Big(M_1 \mid M_2 \mid \dots \mid M_s \Big) \ge \frac{\rho}{2k^{\gamma +1 }\sqrt{s}}. $$
 \end{lemma}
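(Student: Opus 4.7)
The plan is to prove the claim by a union-bound argument built on two complementary observations. First, for any candidate unit vector $\alpha=(\alpha_1,\ldots,\alpha_s)\in\R^{sm}$ there is always a ``pivot'' block $j^*=j^*(\alpha)\in[s]$ whose randomness is essentially isolated by the projector $\Pi^\perp_{-j^*}$ supplied in the hypothesis. Second, within that pivot block the Gaussian perturbation has a slice of dimension $\varepsilon k$ with non-negligible variance, which yields per-point anti-concentration of order $\exp(-\Omega(\varepsilon k\log k))$. The target bound is $\eta := \rho/(2k^{\gamma+1}\sqrt{s})$, and the overall failure probability should be $\exp(-\Omega(\varepsilon k))$.

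Concretely, fix a unit $\alpha$ and pick $j^*$ with $\|\alpha_{j^*}\|^2\ge 1/s$. Using $\Pi^\perp_{-j^*}A_{j'}=0$ for $j'\neq j^*$ together with $\tilde U = U+Z$, one obtains
\[
\|M\alpha\|\;\geq\;\bigl\|\Pi^\perp_{-j^*}(C\alpha+A_{j^*}U\alpha_{j^*})\;+\;\Pi^\perp_{-j^*}A_{j^*}Z\alpha_{j^*}\bigr\|,
\]
so the only randomness is $g:=\Pi^\perp_{-j^*}A_{j^*}Z\alpha_{j^*}$, a centred Gaussian whose covariance $\|\alpha_{j^*}\|^2\rho^2(\Pi^\perp_{-j^*}A_{j^*})(\Pi^\perp_{-j^*}A_{j^*})^\transpose$ has at least $\varepsilon k$ eigenvalues of size $\geq k^{-2\gamma}\|\alpha_{j^*}\|^2\rho^2$ by the hypothesis $\sigma_{\varepsilon k}(\Pi^\perp_{-j^*}A_{j^*})\geq k^{-\gamma}$. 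The standard anti-concentration estimate for Gaussians (project onto the top-$\varepsilon k$ left singular subspace and apply a ball-volume bound) then gives, uniformly in any deterministic shift $y$,
\[
\Pr\bigl[\|y+g\|<\eta\bigr]\;\leq\;\Bigl(\tfrac{C\eta k^\gamma}{\|\alpha_{j^*}\|\rho\sqrt{\varepsilon k}}\Bigr)^{\varepsilon k}.
\]
Plugging in $\|\alpha_{j^*}\|\geq 1/\sqrt{s}$ and the target $\eta$, the ratio inside the parenthesis collapses to $O(1/(k^{3/2}\sqrt{\varepsilon}))$, yielding the per-$\alpha$ tail $\exp(-\Omega(\varepsilon k\log k))$.

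To turn this into a uniform statement, I would not place an $\epsilon$-net over the full unit sphere of $\R^{sm}$: that net has size $\exp(\Theta(sm\log k))$, which is too large once $s$ is super-constant. Instead I plan to do the net \emph{block by block}. For each fixed $j^*\in[s]$, use the spectral norm bounds $\sigma_1(A_j),\sigma_1(C_j)\le k^C$ and the standard bound on $\|Z\|$ to control $\|M\|$, then place an $\epsilon$-net only on the pivot coordinate $\alpha_{j^*}\in\R^m$; since $m\le c\varepsilon k$, this net has size $\exp(O(c\varepsilon k\log k))$, and by choosing $c$ small enough it is dominated by the $\exp(\varepsilon k\log k)$ anti-concentration exponent. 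The shift $y$ depends on the remaining coordinates $\alpha_{-j^*}$ only through $\Pi^\perp_{-j^*}C_{-j^*}\alpha_{-j^*}$; since the anti-concentration bound above is completely uniform in $y$, one can read it as simultaneously handling every choice of $\alpha_{-j^*}$ in the bounded range $\|\alpha_{-j^*}\|\le 1$, up to a triangle-inequality perturbation. A final union bound over the $s=2^{o(k)}$ choices of $j^*$ costs a factor absorbed by $\exp(-\Omega(\varepsilon k\log k))$.

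The main obstacle will be making the block-by-block net rigorous: the given hypothesis controls $\sigma_{\varepsilon k}(\Pi^\perp_{-j^*}A_{j^*})$ with the weaker projector that is orthogonal only to the $A_{j'}$'s (not to the $C_{j'}$'s), so one cannot simply replace $\Pi^\perp_{-j^*}$ by the more aggressive projector orthogonal to all $\{A_{j'},C_{j'}\}_{j'\neq j^*}$ without potentially losing too many singular directions when $s\cdot m \gg \varepsilon k$. I expect the argument to go through by keeping the anti-concentration tied to $\Pi^\perp_{-j^*}$ itself and absorbing the adversarial shift via its uniformity in $y$, together with a moderate net that covers only the bounded perturbation $\Pi^\perp_{-j^*}(C\alpha + A_{j^*}U\alpha_{j^*})$ induced by $\alpha_{j^*}$ on the $\epsilon$-net. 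This keeps the total union-bound exponent under control and delivers the stated probability bound $1-\exp(-\Omega(\varepsilon k))$.
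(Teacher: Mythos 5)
Your proposal is structurally the same as the paper's proof. Both pick a pivot block $j^*$ with $\norm{\alpha^{(j^*)}}\ge 1/\sqrt{s}$, pass to $\Pi^{\perp}_{-j^*}$ so that only the block-$j^*$ randomness $\Pi^{\perp}_{-j^*}A_{j^*}Z\alpha^{(j^*)}$ survives, anti-concentrate along the top $\varepsilon k$ singular directions of $\Pi^{\perp}_{-j^*}A_{j^*}$, put an $\epsilon$-net only on the unit vector $v=\alpha^{(j^*)}/\norm{\alpha^{(j^*)}}\in\R^m$ (affordable because $m\le c\varepsilon k$), and union bound over the $s$ choices of $j^*$. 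The paper isolates this as Claim~\ref{claim:allblocks}: for a \emph{fixed} $w\in\R^R$, with probability $1-\exp(-\Omega(\varepsilon k))$ one has $\norm{w+\Pi^{\perp}_{-j}A_jZv}\ge \rho/(2k^{\gamma+1})$ simultaneously for all $j$ and all unit $v$.

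The ``main obstacle'' you flag at the end of your writeup is precisely the spot the paper also has to navigate, and your diagnosis is accurate. When the claim is invoked inside the lemma, the shift $w=\Pi^{\perp}_{-j^*}b_\alpha/\norm{\alpha^{(j^*)}}$ depends on \emph{all} of $\alpha$: the contribution $\sum_{j'\ne j^*}\Pi^{\perp}_{-j^*}C_{j'}\alpha^{(j')}$ varies over a set parameterized by the $(s-1)m$ free coordinates $\alpha_{-j^*}$, and $(s-1)m$ can be far larger than $\varepsilon k$, so the net cannot simply be enlarged to cover it. Your proposed fix — ``the anti-concentration is completely uniform in $y$'' — restates only that the per-point tail $\Pr[\norm{y+g}<\eta']\le(c_0\delta)^{\varepsilon k}$ holds with the same constant for each \emph{fixed} $y$; it does not by itself say the failure event is small for a $y$ that the adversary is free to choose after seeing the randomness $Z$, which is exactly where a union bound over the $y$-degrees of freedom would normally be needed. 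The paper's written proof is thin in the same place: Claim~\ref{claim:allblocks} is stated and proved for a fixed $w$ and then applied with an $\alpha$-dependent (hence $Z$-dependent) $w$, with no further discussion. So your proposal faithfully reproduces the paper's argument, including its soft spot; you correctly identified the remaining difficulty, but neither your ``uniformity in $y$'' remark nor the small secondary net you sketch (which tracks only the $\alpha_{j^*}$-dependence of the shift, not the $\alpha_{-j^*}$-dependence) closes it.
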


\begin{proof}
 Let $\tilde{U} = U+Z$ where $Z \sim N(0,\rho^2)^{k \times m}$ is the random perturbation. Denote $M=(M_1 \mid \dots \mid M_s)$. Recall $\Pi^{\perp}_{-j^*}$ is the projector orthogonal to the span of the column spaces of $\{A_{j}: j \ne j^*, j \in [s] \}$. We prove that with high probability, for any (test) unit vector $\alpha \in \R^{s\cdot m}$, we have $\norm{M \alpha}_2$ is non-negligible. A standard argument would consider a net over all potential unit vectors $\alpha \in \R^{sm}$. However this approach fails here, since we cannot get high enough concentration (of the form $e^{-\Omega(sm)}$) that is required for this argument. Instead, we argue that if there were such a test vector $\alpha \in \R^{s \cdot m}$, there exists a block $j^*\in [s]$ where we observe a highly unlikely event.

 We will use the following simple claim that is proven using a standard net argument. 
 \begin{claim}\label{claim:allblocks}
In the above notation, given a (fixed) vector $w \in \R^R$, and a random matrix $Z \sim N(0,\rho^2)^{R \times k}$ with i.i.d entries, we have that with probability at least $1-\exp(-\Omega(k))$ that 
$$\forall v \in \R^m \text{ with } \norm{v}_2=1 \text{ and } \forall j \in [s], ~ 
\norm{w+\Pi^{\perp}_{-j} (A_j Z) v}_2 \ge \frac{\rho}{2k^{\gamma+1}} $$
\end{claim}
\noindent {\em Proof of Claim.}
We first prove the claim using a net argument over test vectors $v \in \R^{m}$. 
Consider a fixed $j \in [s]$; we will do a union bound over all $j \in [s]$. 

Let $v \in \R^m$ be a fixed unit vector. Let $A'=\Pi^{\perp}_{-j} A_{j}$. 
Observe that $Z v \sim N(0,\rho^2 I)$ is a random Gaussian vector with i.i.d entries each with mean $0$ and variance $1$. 
By assumption $\sigma_{\varepsilon k}( A') = \sigma_{\varepsilon k}(\Pi^{\perp}_{-j} A_{j}) \ge k^{-\gamma}$. Let the SVD of $A'=E \diag(\mu) F^\top = \sum_{i=1}^k \mu_i e_i f_i^\top$ where $\mu_i \ge 0~ \forall i \in [k]$ and $\mu_1 \ge \mu_2 \ge \dots \ge \mu_{\varepsilon k} \ge k^{-\gamma}$. 
$$A' (Z v)=E \diag(\mu) (F Z v) = E \diag(\mu) \zeta = \sum_{i=1}^k \zeta_i \mu_i e_i,$$
where $\zeta \sim N(0,I)$ is a random Gaussian vector with i.i.d entries. Moreover $\mu_i \ge k^{-\gamma}$ for all $i \in [\varepsilon k]$. Hence $\forall \delta \in (0,1/2)$,
\begin{align*}
\Pr\Big[\norm{w+\Pi^{\perp}_{-j} A_j Zv } < \frac{\delta \rho}{k^{\gamma}}\Big] &\le \Pr\Big[ \forall i \in [\varepsilon k], |\iprod{w,e_i}+ \iprod{e_i, \Pi^{\perp}_{-j} A_j Zv}| <  \frac{\delta \rho}{k^{\gamma}}\Big]  \\
&= \prod_{i=1}^{\varepsilon k} \Pr\Big[ |\iprod{w,e_i}+ \mu_i \zeta_i| <  \frac{\delta \rho}{k^{\gamma}}\Big] \le (c_0\delta)^{\varepsilon k},
\end{align*}
for some absolute constant $c_0>0$. The equality used the independence of $(\zeta_i: i \in [k])$, while the last inequality used that $\mu_i \ge k^{-\gamma}$ for $i \le \varepsilon k$ along with standard anti-concentration of a Gaussian r.v. 

Set $\epsilon' \coloneqq \delta/(4 k^{C+1/2})$. Consider an $\epsilon'$-net $\calN_{\epsilon'}$ over unit vectors in $\R^{m}$. By a union bound over $\calN_{\epsilon'}$, we get 
\begin{align*}
    \Pr\Big[ \forall j \in [s], \forall \hat{v} \in \calN_{\epsilon}, \norm{w+\Pi^{\perp}_{-j} A_j Z\hat{v} } \ge \frac{\delta \rho}{k^{\gamma}}~ \Big] &\ge 1 -s(c \delta)^{\varepsilon k} |\calN_{\epsilon'}| \\
    &\ge 1 - \exp\Big(-\varepsilon k \log\big(\tfrac{1}{c\delta}\big)+ \log s+ m \log(2/\epsilon')\Big)\\ &\ge 1 - \exp(-\Omega(k)),
\end{align*}
by picking $\delta=1/k$, and $m \le c \varepsilon k$ for an appropriately small constant $c>0$ (depending on $C$). Finally conditioned on the above event, for any unit vector $v \in \R^k$, we can consider the closest point $\hat{v}$ in $\calN_{\epsilon'}$ and conclude that $\forall j \in [s], \norm{\Pi^{\perp}_{-j} A_j Z v} \ge \norm{\Pi^{\perp}_{-j} A_j Z v} - \norm{A_j}\norm{Z}\norm{v - \hat{v}} \le O(k^C \cdot \sqrt{k} \rho) \epsilon' \ge \frac{\delta \rho}{2 k^{\gamma}}$.

\noindent {\em Finishing the proof of Lemma~\ref{lem:remaining}} 
Let us condition on the event that the conclusion of Claim~\ref{claim:allblocks} holds; note that this holds with probability at least $1-\exp(-\Omega(\varepsilon k))$. 

Suppose for contradiction there exists a vector $\alpha \in \R^{s\cdot k}$ such that $\norm{M \alpha}_2 < \frac{\rho}{k^{\gamma +1 }\sqrt{s}}$. The vector $M \alpha \in \R^R$ is
\begin{align}
M \alpha &= \sum_{j \in [s]} M_j \alpha^{(j)} = \sum_{j \in [s]} \big(C_j + A_j (U+Z) \big)\alpha^{(j)} = \sum_{j \in [s]} \big(C_j + A_j U \big)\alpha^{(j)} + \sum_{j \in [s]}  A_j Z \alpha^{(j)}  \nonumber \\
& = b_\alpha + \sum_{j \in [s]} A_j Z \alpha^{(j)}, 
\end{align}
where $b_\alpha$ is a fixed vector in $\R^R$.  We have for all $j^* \in [s]$
\begin{align}
\Pi^{\perp}_{-j^*} M \alpha &= \Pi^{\perp}_{-j^*} b_\alpha + \sum_{j \in [s]} \Pi^{\perp}_{-j^*} A_j Z \alpha^{(j)} \nonumber\\
&=  \Pi^{\perp}_{-j^*} b_\alpha +  \Pi^{\perp}_{-j^*} A_{j^*}  (Z \alpha^{(j^*)}), \label{eq:contraction:temp:1}
\end{align}
In the above, \eqref{eq:contraction:temp:1} holds since $\Pi^{\perp}_{-j^*}$ is orthogonal to the column spaces of all $j \ne j^*$ and $A'=\Pi^{\perp}_{-j^*} A_{j^*}$. 

Now consider any index $j^* \in [s]$ such that $\norm{\alpha^{(j^*)}}_2 \ge 1/\sqrt{s}$ (note $\sum_{j=1}^s \norm{\alpha^{(j)}}^2_2 =1$). Now applying Claim~\ref{claim:allblocks} with $j=j^*$, $v=\alpha^{(j^*)}/ \norm{\alpha^{(j^*)}}$ and $w=\Pi^{\perp}_{-j^*} b_\alpha/ \norm{\alpha^{(j^*)}}$, we get that 
$$\norm{M\alpha}_2 \ge \norm{\Pi^{\perp}_{-j^*} M \alpha} \ge \frac{1}{\sqrt{s}} \cdot \frac{\rho}{2k^{\gamma+1}},$$
which contradicts the assumption. This concludes the proof.

\end{proof}

\begin{figure}
\centering
\includegraphics[width=\textwidth]{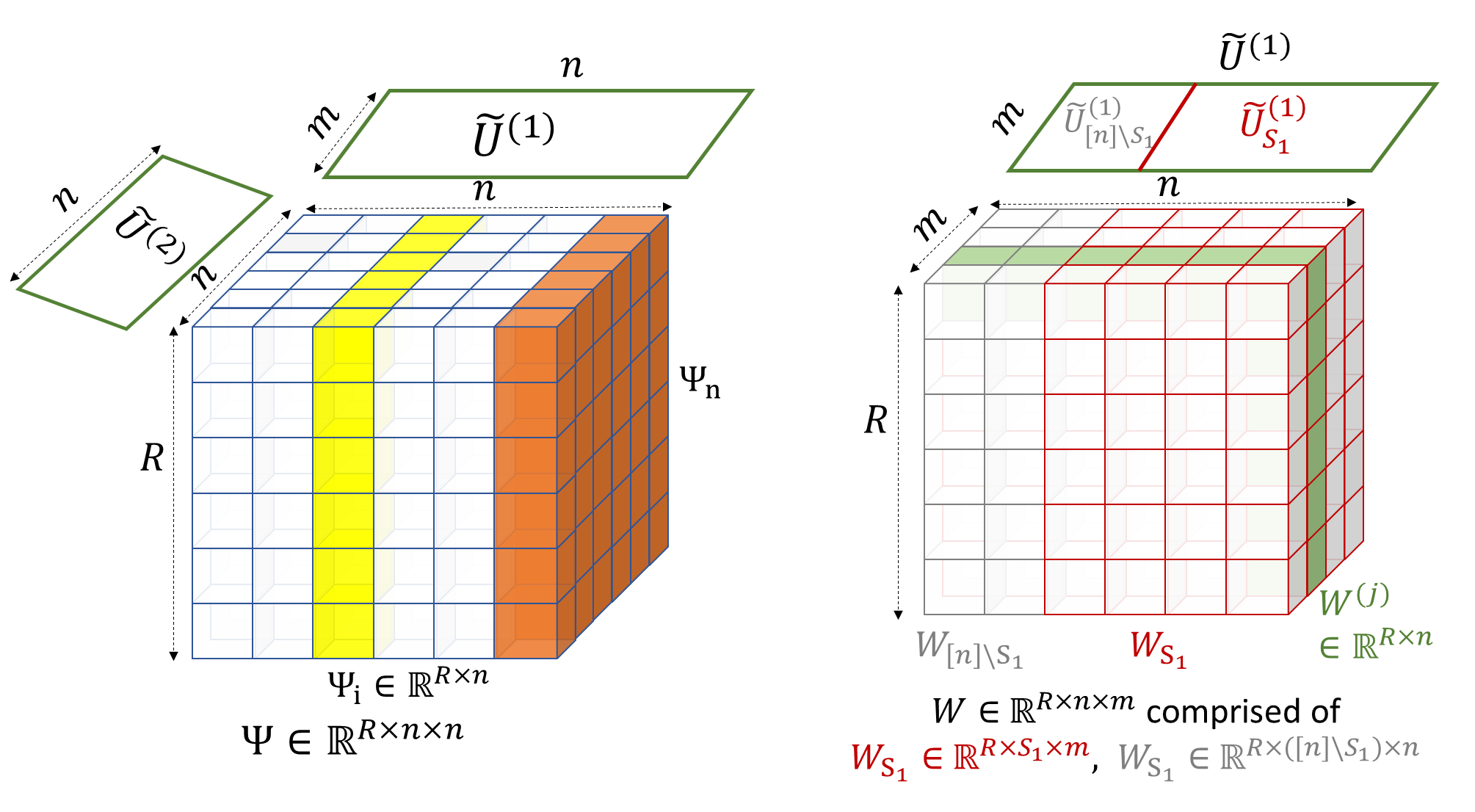}
\caption{\textit{Left}: The setting of $d=2$ with linear operator $\Psi: \R^{n \times n} \to \R^R$ having slices $\Psi_1, \dots, \Psi_n \in \R^{R \times n}$. The modal contractions $\tilde{U}^{(1)}, \tilde{U}^{(2)} \in \R^{n \times m}$ have not yet been applied. \textit{Right}: After modal contraction along $U^{(2)} \in \R^{n \times m}$, we get $W \in \R^{R \times n \times m}$ with lateral slices $W_1, \dots, W_n$. The subtensor  $W_{S_1} \in \R^{R \times |S_1| \times m}$ represents the slices obtained from the ``good'' blocks $S_1 \subset [n]$, and $W_{[n] \setminus S_1} \in \R^{R \times |[n]\setminus S_1|\times m}$ represents the remaining slices. The random modal contraction $\tilde{U}^{(1)}$ can also now be split into $\tilde{U}^{(1)}_{S_1} \in \R^{S_1 \times m}, \tilde{U}^{(1)}_{[n]\setminus S_1} \in \R^{[n]\setminus S_1 \times m}$. Let $W^{(j)} \in \R^{R \times n}$ denote the $j$th frontal slice for each $j \in [m^{d-1}]$. Then the final matrix slice obtained for each $j \in [m^{d-1}]$ can be written as $M^{(j)}= W^{(j)}_{S_1} \tilde{U}^{(1)}_{S_1}+W^{(j)}_{[n]\setminus S_1} \tilde{U}^{(1)}_{[n]\setminus S_1}$, where the randomness in the two summands is independent.}
\label{fig:contraction:3}
\end{figure}

\begin{proof}[Proof of Theorem~\ref{thm:kronecker:asym}]
We now proceed by induction on $d$. For the proof it will be useful to think of $\rho$ as a sufficiently small inverse polynomial (this is without loss of generality and suffers only a $\poly(n)$ extra factor in the bound). 

The base case $d=1$ follows by simple random matrix arguments; specifically,  Lemma~\ref{lem:remaining} applied with $s=1, d=1$ implies it. 

For higher $d$, we will apply the induction hypothesis for modal contractions along the last $d-1$ modes using matrices $\tilde{U}^{(2)}, \dots, \tilde{U}^{(d)}$, and then finally apply modal contraction along $\tilde{U}^{(1)}$. 

Set $n_1=n,n_2=n^{d-1}$. First applying Lemma~\ref{lem:block} with $\Psi$ (i.e., applied with the projection matrix onto the top $R$ singular vectors), we get a set of blocks $\{\Psi_i: i \in S_1\}$ with $|S_1| = \Omega(\delta n_1)$, satisfying \eqref{eq:goodblocks}. Define for each $i \in S_1$, $\calV_i \coloneqq \colspn(\Psi_i)$, and $\calV_{-i} \coloneqq \spn(\cup_{j \in S_1, j\ne i} \colspn(\Psi_i))$ and let $\Pi^{\perp}_{-i}$ be the projection matrix for the subspace orthogonal to $\calV_{-i}$. 
 Then for absolute constants $c,c'>0,c''>0$, 
$$ \sigma_{c \delta n^{d-1}}\Big(\Pi^{\perp}_{-i} \Psi_i\Big) \ge \frac{c''}{(nk)^{c'}} \text{ for all } i \in S_1.$$
\anote{12/4: Changed from $W_i$ to $\Psi_i$.}

Also suppose for each $i \in [n]$ that $W_i \coloneqq \Psi_i \big(\tilde{U}^{(2)} \otimes \tilde{U}^{(3)} \otimes \dots \otimes \tilde{U}^{(d)}\big) \in \R^{R \times m^{d-1}}$.
By using the induction hypothesis with order $(d-1)$ with the matrices $\{\Pi^{\perp}_{-i} \Psi_i\}$ along with a union bound over the $n$ blocks, for appropriate constants $c_0>0$ and $c'_{d-1}>0$,
\anote{4/16/24: added a $\Pi^{\perp}_{-i}$ to the second expression i.e., right before $\Psi_i$.}
\begin{equation}
    \forall i \in S_1, ~ \sigma_{m^{d-1}}\Big(\Pi^{\perp}_{-i} W_i \Big) = \sigma_{m^{d-1}}\Big( \Pi^{\perp}_{-i} \Psi_i \big(\tilde{U}^{(2)} \otimes \dots \otimes \tilde{U}^{(d)}\big) \Big) \ge \frac{c'_{d-1} \rho^{d-1}}{n^{c_0(d-1)}}.
\end{equation}

Let $W \in \R^{R \times n \times m^{d-1}}$ be the tensor obtained by stacking the matrices $W_i \in \R^{R \times m^{d-1}}$ as shown in the Figure~\ref{fig:contraction:3}. Let $W_{S_1} \in \R^{R \times S_1 \times m^{d-1}}$ denote the subtensor comprising just the slices $i \in S_1$, and let $W_{[n]\setminus S_1}$ be the remaining portion. For each $j \in [m^{d-1}]$, let $W^{(j)} \in \R^{R \times n}$ be obtained from the slices along the third mode. We will use $W^{(j)}_{S_1}, W^{(j)}_{[n]\setminus S_1}$ to denote the portions of the slices $W^{(j)}$ formed by the columns $S_1$ and $[n] \setminus S_1$ respectively. If $\flatW_{S_1} \in \R^{R \times (|S_1|m^{d-1})}$ is the matrix obtained by flattening $W_{S_1}$ appropriately, then by Lemma~\ref{lemma:BlockLeaveoneOut} on the block leave-one-out distance,
\begin{equation}\label{eq:leaveoneblock}
    \sigma_{|S_1| m^{d-1}}\Big(\flatW_{S_1}\Big)= \sigma_{|S_1| m^{d-1}}\Big(W_{S_1}^{(1)} \mid \dots \mid W_{S_1}^{(m^{d-1})} \Big) \ge \frac{c'_{d-1} \rho^{d-1}}{n^{c_0(d-1)+\tfrac{1}{2}}}.
\end{equation}

The final matrix is obtained by concatenating the matrices $M^{(j)} \in \R^{R \times m}$ for each $j \in [m^{d-1}]$, where
\begin{align}
M^{(j)} &= W^{(j)} \tilde{U}^{(1)} 
= W^{(j)}_{S_1} \tilde{U}^{(1)}_{S_1} + W^{(j)}_{[n]\setminus S_1} \tilde{U}^{(1)}_{[n]\setminus S_1}\nonumber\\
&= W^{(j)}_{S_1} \tilde{U}^{(1)}_{S_1} + C^{(j)},
 ~\text{ where }
 C^{(j)} \coloneqq W^{(j)}_{[n]\setminus S_1} \tilde{U}^{(1)}_{[n]\setminus S_1}\nonumber.
 \end{align}
 
 Consider any fixed $j \in [m^{d-1}]$. We will treat $C^{(j)}$ as fixed matrices. Note that the randomness in $\tilde{U}^{(1)}_{S_1}$ is independent of the randomness in $\tilde{U}^{(1)}_{[n]\setminus S_1}$.  
 Now we can apply Lemma~\ref{lem:remaining} with $s=m^{d-1}$, $A_j = W^{(j)}_{S_1}$ and $\tilde{U}=\tilde{U}^{(1)}_{S_1}$ and $C_j = C^{(j)}$ to conclude the inductive proof of Theorem~\ref{thm:kronecker:asym}. 
\end{proof}

\subsection{Finding many blocks with large relative rank}\label{sec:goodblocks}

We note that while the statement of the lemma is quite intuitive, the proof is non-trivial because we require that in any selected block, there must be many vectors with a large component orthogonal to the \emph{entire span} of the other selected blocks. As a simple example, consider setting $n_2 = 2t$ and $\Psi_1 = \{e_1, e_2, \dots, e_t, \epsilon e_{t+1}, \epsilon e_{t+2}, \dots, \epsilon e_{2t} \}$, and $\Psi_2 = \{\epsilon e_1, \epsilon e_2, \dots, \epsilon e_t, e_{t+1}, e_{t+2}, \dots, e_{2t} \}$. In this case, even if $\epsilon$ is tiny, we cannot choose both the blocks, because the span of the vectors in $\Psi_2$ contains all the vectors in $\Psi_1$. 

The proof will proceed by first identifying a set of roughly $R$ vectors (spread across the blocks) that form a well conditioned matrix, followed by randomly restricting to a subset of the blocks. 

We start with the following lemma, which gives us the first step.

\begin{lemma}\label{lem:sigma-to-col-subset}
Suppose $A$ is an $m \times n$ matrix such that $\sigma_k (A) \ge \theta$. Then there exists a submatrix $A_S$ with $|S| = k$ columns, such that $\sigma_k (A_{S}) \ge \theta / \sqrt{nk}$.
\end{lemma}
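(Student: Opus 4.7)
The plan is to exploit the hinted maximum-volume / Auerbach basis idea. First I would reduce to a $k \times n$ matrix: let $U_k$ be the top-$k$ left singular vectors of $A$, put $P = U_k U_k^T$, and set $B = U_k^T A \in \R^{k \times n}$. Since $U_k$ has orthonormal columns, $\sigma_k(B) = \sigma_k(A) \ge \theta$ and $B$ has rank $k$. For any column subset $S$ of size $k$ we have $P A_S = U_k B_S$ and therefore
\[
\sigma_k(A_S) \ge \sigma_k(P A_S) = \sigma_k(B_S),
\]
where the first inequality holds because an orthogonal projection cannot increase $\|A_S x\|$ for any $x$. Thus it suffices to produce $S$ of size $k$ with $\sigma_k(B_S) \ge \theta/\sqrt{nk}$.

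Next I would choose $S$ to be a $k$-subset maximizing $|\det(B_S)|$; this maximum is strictly positive since $B$ has rank $k$, which already guarantees that some $k$ columns are linearly independent. The key combinatorial step is an exchange argument: for any $j \notin S$, write $B_j = \sum_i c_{ij} B_{s_i}$ in the basis $\{B_{s_i}\}_{i\in[k]}$. Replacing column $s_i$ of $B_S$ by column $j$ produces a $k \times k$ matrix whose determinant equals $c_{ij} \det(B_S)$ by linearity of the determinant in that single column; maximality of $|\det(B_S)|$ then forces $|c_{ij}| \le 1$. Packaging these coefficients into a $k \times n$ matrix $C$ with $C_S = I_k$ and all other entries of absolute value at most $1$, we obtain the factorization $B = B_S C$, and
\[
\|C\| \le \|C\|_F \le \sqrt{nk}.
\]

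To finish, I would apply the standard singular-value inequality $\sigma_k(B_S C) \le \sigma_k(B_S)\, \sigma_1(C)$, which is valid because $B_S$ is $k \times k$ so $\sigma_k(B_S)$ is its smallest singular value. Combined with $\sigma_k(B) \ge \theta$ and $\|C\| \le \sqrt{nk}$ this gives
\[
\sigma_k(B_S) \ge \sigma_k(B)/\|C\| \ge \theta/\sqrt{nk},
\]
and hence $\sigma_k(A_S) \ge \sigma_k(B_S) \ge \theta/\sqrt{nk}$, as desired.

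The only step that is at all delicate is the Cramer-style exchange argument producing $|c_{ij}| \le 1$; the remaining pieces are routine SVD and norm manipulations, and I do not anticipate a substantive obstacle. One piece of bookkeeping is to verify that the maximum of $|\det(B_S)|$ is nonzero, but this is immediate from $\sigma_k(B) \ge \theta > 0$.
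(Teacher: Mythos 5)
Your proof is correct. Both your argument and the paper's pivot on the Auerbach/barycentric basis (you construct it explicitly as a maximum-volume column subset via a Cramer-type exchange argument; the paper cites its existence and the defining property), and both start with the same reduction to a $k \times n$ matrix by projecting onto the top $k$ left singular directions. Where the routes genuinely diverge is the concluding step. The paper argues by contradiction: if $\sigma_k(A_S)$ were below $\theta/\sqrt{nk}$, some column of $A_S$ would be nearly in the span of the remaining $k-1$, and then the Auerbach coefficient bound forces \emph{every} column of $A$ to be nearly in that span, yielding a rank-$(k-1)$ Frobenius approximation of error less than $\theta^2$ and contradicting $\sigma_k(A) \ge \theta$ via Eckart--Young. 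You instead package the Auerbach coefficients into a $k \times n$ matrix $C$ with $B = B_S C$, $C_S = I_k$, and $\|C\| \le \|C\|_F \le \sqrt{nk}$, then read off $\sigma_k(B_S) \ge \sigma_k(B)/\|C\|$ from the standard bound $\sigma_k(XY) \le \sigma_k(X)\,\|Y\|$ for square $X$. Your factorization avoids the low-rank-approximation detour, gives the quantitative bound in one line, and is arguably the slicker way to cash in the Auerbach property; both arguments land in the same place with the same constant.
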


\paragraph{Remark.} The lemma is a robust version of the simple statement that if $\sigma_k (A) >0$, then there exist $k$ linearly independent columns. 

\begin{proof}
We start by noting that we can restrict to the case $m=k$. This is because we can project the columns of $A$ onto the span of the top $k$ singular vectors of $A$ and pick the $S$ using the resulting matrix. Formally, if $\Pi$ is the $(k \times m)$ matrix that defines the projection, then we work with $\Pi A$. (By definition, $\sigma_k (\Pi A) = \sigma_k (A) \ge \theta$, so the hypothesis of the lemma holds.) For the obtained set $S$, it is easy to see that the vectors before projection will satisfy, for any test vector $\alpha$,
\[ \| \sum_i \alpha_i v_i \| \ge \| \sum_i \alpha_i \Pi v_i \|. \] 
Thus if we show a lower bound for $\sigma_k (\Pi A_S)$, the same bound holds for $\sigma_k (A_S)$. So in what follows, assume that $m=k$.

Next, we find an Auerbach basis~\cite{lindenstrauss1996classical} (also referred to as a Barycentric spanner or a well-conditioned basis~\cite{awerbuch2008online,hazan2016volumetric}) for the columns of $A$. Recall that this is a subset of the columns of $A$ defined by a subset $S$ of indices such that $|S|=k$, and for all $i \in [n]$, $A_i$ can be expressed as $\sum_{j \in S} \alpha_j A_j$ with $|\alpha_j| \le 1$. 

We claim that for this choice of $S$, we have a lower bound on $\sigma_k (A_S)$. Suppose not; suppose $\| \sum_{i\in S} \alpha_i A_i \| < \theta / \sqrt{nk}$ for some unit vector $\alpha$ (whose non-zero entries are indexed by $S$). Since $\alpha$ is a unit vector with at most $k$ non-zeros, one of its coefficients, say $\alpha_j$, must be $\ge 1/\sqrt{k}$. Thus we have $A_j = x + w$, for some $x \in \text{span}(A_{S \setminus \{j\}})$ and $\| w \| \le \theta/\sqrt{n}$.

Next, consider any column $A_\ell$ for $\ell \not\in S$. From the above, we have that $A_\ell$'s projection orthogonal to the span of $A_{S\setminus \{j\}}$ is at most $\theta/\sqrt{n}$ (because of the Auerbach basis property, and the fact that $A_j$ is almost in the span of $A_{S \setminus \{j\}}$). 
This implies that the squared rank-$(k-1)$ approximation error (in the Frobenius norm) of the matrix $A$ is $\le (n-k) \theta^2/n < \theta^2$, which contradicts the fact that $\sigma_k (A) \ge \theta$.
\end{proof}

We can now complete the proof of Lemma~\ref{lem:block}. 

\begin{proof}[Proof of Lemma~\ref{lem:block}]
The outline of the argument is as follows:
\begin{enumerate}
    \item First find a subset $M$ of $R = \delta n_1 n_2$ columns of $\Psi$ such that $\sigma_R(M)$ is large (using Lemma~\ref{lem:sigma-to-col-subset}).
    \item Randomly sample a subset $T \subseteq [n_1]$ of the blocks.
    \item Discard any block $j \in T$ that has fewer than $\delta n_2 / 6$ vectors with a non-negligible component orthogonal to the span of $\cup_{r \in (T \setminus \{j\})} \Psi_r$; argue that there are $\Omega(\delta n_1)$ blocks remaining.
\end{enumerate}

The first step is a direct application of Lemma~\ref{lem:sigma-to-col-subset}; we thus obtain $M$ with $R = \delta n_1 n_2$ columns such that 
\begin{equation}\label{eq:sigmaR-M}
\sigma_{R}(M) \ge \frac{1}{n_1 n_2 \sqrt{\delta}}.
\end{equation}
For convenience, we will denote the columns of $M$ by $v_1, v_2, \dots, v_R$.

Now for the second step of the outline: $T \subseteq [n_1]$ is selected by including each block $j$ in $T$ with probability equal to $|M \cap \Psi_j|/6 n_2$. I.e., the probability is proportional to the fraction of the ``$M$'' columns contained in a block. For convenience, we will write $\alpha_j = |M \cap \Psi_j|/n_2$.

Step (3) of the outline is thus the bulk of the argument. We start by introducing two random variables. First, for $j \in [n_1]$, define $X_j$ to be the indicator that is $1$ if block $j$ is chosen in $T$ and $0$ otherwise. Thus by definition, $\Pr[X_j = 1] = \alpha_j/6$, and  the $X_j$ are independent for different $j$. Second, for $i \in [R]$, if $j$ is the index of the block that contains $v_i$, we define $Y_i$ to be $1$ if the vector $v_i$ has a projection of length $\ge \frac{1}{Rn_1 n_2 \sqrt{\delta}} $ orthogonal to the span of all the columns in $\cup_{r \in T \setminus \{j\}} \Psi_r$ and $0$ otherwise. 

Now, note that a block $j$ ``survives'' step (3) of the outline above if (a) $j \in T$ to start with, and (b) $\sum_{v_i \in \Psi_j} Y_i \ge \delta n_2/6$. [This is a sufficient condition for survival, not an equivalence.] Thus, if $Q_j$ is a random variable indicating if block $j$ survives, we can write
\begin{equation}
    Q_j \ge \frac{ X_j \left( \sum_{v_i \in \Psi_j} Y_i - \frac{\delta n_2}{6} \right)_+  }{n_2 \alpha_j}.
\end{equation} 
Here, for a random variable $Z$, the notation $(Z)_+$ denotes $\max\{Z,0\}$. We will use the RHS expression to give a positive lower bound on $\E[ \sum_{j \in [n_1]} Q_j]$. Note that this will complete the proof of the lemma, because we are only interested in an existential statement.

To this end, the key observation is that for any $v_i \in \Psi_j$, the random variable $Y_i$ is \emph{independent} of $X_j$. This is because by definition, $Y_i$ indicates if $v_i$ had a large enough component orthogonal to the span of the \emph{other} chosen blocks (irrespective of whether block $j$ is chosen or not). Thus, since $\E[X_j]= \alpha_j/6$, we have that
\[ \E \left[  \frac{ X_j \left( \sum_{v_i \in \Psi_j} Y_i - \frac{\delta n_2}{6} \right)_+  }{n_2 \alpha_j} \right] = \frac{1}{6n_2} \E \left[ \left( \sum_{v_i \in \Psi_j} Y_i - \frac{\delta n_2}{6} \right)_+ \right] \ge \frac{1}{6n_2} \left( \E[ \sum_{v_i \in \Psi_j} Y_i ]  - \frac{\delta n_2}{6} \right).\]

Thus, we have
\begin{equation}\label{eq:num-grp-bound}
    \sum_{j \in [n_1]} Q_j \ge \frac{1}{6n_2} \left( \E \big[ \sum_{i \in [R]} Y_i \big] - \frac{\delta n_1 n_2}{6} \right).
\end{equation} 
So it complete the proof, it suffices to prove that $\mathbb{E}[\sum_i Y_i]$ is sufficiently large. We do this by introducing an auxiliary random variable $Z_i$. For any $i \in [R]$, define $Z_i$ to be the random variable that is $1$ if $v_i$ has a projection of length $\ge \frac{1}{Rn_1 n_2 \sqrt{\delta}} $ orthogonal to the span of the vectors in \emph{all} the chosen blocks, $\cup_{r \in T} \Psi_r$.

Thus by definition, the inequality $Z_i \le Y_i$ always holds, and $Z_i$ will be zero if $X_j = 1$ (where $\Psi_j$ is the block that contains $v_i$). We will prove that in fact, $\E [\sum_i Z_i]$ is large. Observe that by the law of conditional expectation,
\[ \E [\sum_i Z_i] = \sum_T \Pr[T] \cdot \E[\sum_i Z_i | T] . \]  
Indeed, the last term is deterministic conditioned on $T$ (so it is simply the number of $i$ for which $Z_i$ is 1 for the chosen $T$). We split the sum into two, depending on $|T|$.
\[ \E [\sum_i Z_i] = \sum_{T : |T| > 2n_1 \delta/3} \Pr[T] \cdot \E[\sum_i Z_i | T] + \sum_{T : |T| \le 2n_1 \delta/3} \Pr[T] \cdot \E[\sum_i Z_i | T] . \]  
We will simply ignore the first sum, as our goal is to obtain a lower bound. To show that this is good enough, we first observe that
\[ \E[ |T| ] = \sum_{j \in [n_1]} X_j = \sum_j \frac{\alpha_j}{6} \le \frac{\delta n_1}{6}.\]
Thus by Markov's inequality, $\Pr[ |T| \le 2n_1 \delta/3 ] \ge 3/4$. Let us thus condition on one such $T$.

\emph{Claim.} For any $T$ with $|T| \le 2\delta n_1/3$, we have $\sum_{i \in [R]} Z_i \ge \frac{\delta n_1 n_2}{3}$. 

Informally, $v_i$ are vectors that are all ``well conditioned'', and thus many of them must have a component orthogonal to any subspace of dimension $< R/2$.

This can be made formal as follows: let $\cS$ be the subspace $\spn \left( \cup_{r \in T} \Psi_r \right)$. Clearly, its dimension is $\le |T| n_2 \le \delta 2n_1 n_2 /3 = 2R/3$. Now from our definition of $\{v_i\}$, the matrix $M$ whose columns are the $v_i$ has $\sigma_R (M)$ bounded as in~\eqref{eq:sigmaR-M}. Thus, if $\Pi_{\cS}^\perp$ is the matrix that projects every vector to the space $\cS^\perp$, we have, by the Min-Max characterization of eigenvalues,
\[ \sigma_{R - \text{dim}(\cS)} (\Pi_{\cS}^{\perp} M) \ge \sigma_R (M) \ge \frac{1}{n_1 n_2 \sqrt{\delta}}.  \]

Thus, at least $R - \text{dim}(\cS) \ge \delta n_1 n_2 /2$ columns of $\Pi_{\cS}^\perp M$ must have \emph{length} $\ge \frac{1}{R n_1 n_2 \sqrt{\delta}}$.\footnote{Here we are using the simple observation that if $\sigma_k (X) \ge \delta$ for a matrix $X$ with $C$ columns, then at least $k$ of the columns must be $\ge \delta / C$. This holds because if not, we can project to the space orthogonal to at most $(k-1)$ columns and have every column being of length $< \delta/C$, which  means the max singular value of the matrix with these projected columns is $< \delta)$; this contradicts the assumption on $\sigma_k (X) \ge \delta$.}

This will let us conclude that $\mathbf{E}[\sum_i Z_i] \ge \delta n_1 n_2/3$, thus completing the proof of the claim.

Next, we use the claim together with our observations above to conclude that
\[ \E\left[  \sum_i Z_i \right] \ge \frac{\delta n_1 n_2}{3} \cdot \Pr\left[ |T| \le 2n_1 \delta/3 \right] \ge \frac{\delta n_1 n_2}{3} \cdot \frac{3}{4} = \frac{\delta n_1 n_2}{4}.  \]

Plugging this into~\eqref{eq:num-grp-bound}, we obtain $\sum_{j \in [n_1]} Q_j \ge \Omega(n_1)$, thus completing the proof.
\end{proof}

\subsection{From Symmetric to Non-Symmetric Products}
\newcommand{\Sel}{\mathrm{Perm}}
\newcommand{\Sell}{\Sel_{\textrm{fw}}}
\newcommand{\Selg}{\Sel_{\text{rv}}}
\newcommand{\Sela}{\Sel_{\text{avg}}}
\newcommand{\asym}{\textrm{asym}}
\newcommand{\Phiasym}{\Psi}


Recall $U \in \R^{n \times m}$ and $\tilde{U}=U+Z$ where $U=(u_i : i \in [m])$ is an arbitrary matrix and $Z \in \R^{n \times m}$ is a random matrix with i.i.d. entries drawn from  $\calN(0,\rho^2)$. In what follows, $\Phi:\Sym(\R^{n^d}) \to \R^r$ denotes an operator acting on the symmetric space, and let $\Phiasym \in \R^{r \times n^d}$ denote the natural matrix representation of $\Phi$ such that $\Phi(x^{\otimes d}) = \Phiasym x^{\otimes d}$, and where where every row of $\Phiasym$ corresponds to a symmetric matrix. 

Additionally, we define $\Sela \in \R^{m^d \times \binom{m+d-1}{d}}$ to be the unique matrix with the property that for any collection of matrices $\{V^{(i)}\}_{i=1}^d \subset \R^{n \times m}$, we have that $\left(\otimes_{i=1}^d V^{(i)}\right) \Sela \in \R^{n^d \times \binom{m+d-1}{d}}$ is the matrix with columns indexed by tuples $(i_1,i_2,\dots,i_d)$ with $1 \leq i_1 \leq i_2 \leq \dots \leq i_d \leq m$, where the column corresponding to $(i_1,i_2,\dots,i_d)$ is given by $\frac{1}{|S_d|} \left( \sum_{\pi \in S_d} \otimes_{j=1}^d V^{(j)}_{i_{\pi(j)}}\right)$. Here $S_d$ denotes the permutation group on $d$ indices and $V^{(j)}_{i_{\pi(j)}}$ is the $i_{\pi(j)}$th column of $V^{(j)}$. 

The matrix $\Sela$ has two important properties that we note. First, for any matrix $U$, we have that $(U^{\otimes d} ) \Sela$ is the matrix with columns $\Sym_d (\tilde{u}_{i_1} \otimes \tilde{u}_{i_2} \otimes \dots \otimes u_{i_d}) $ where $ 1\le i_1 \le i_2 \dots \le i_d \le m$. That is $(U^{\otimes d} ) \Sela = U^{\veep d}$. It follows that 
\[
 \Big(\Phi\big( \Sym_d ~\tilde{u}_{i_1} \otimes \tilde{u}_{i_2} \otimes \dots \otimes u_{i_d} \big): 1\le i_1 \le i_2 \dots \le i_d \le m\Big) = \Phiasym \left(U^{\otimes d}  \right) \Sela.
\]
Second, since $\Phiasym$ is determined by ts action on $\Sym(\mathbb{R}^{\otimes d})$, we obtain that for any collection of matrices $\{V^{(i)}\}_{i=1}^d \subset \R^{n \times m}$ and any permutation $\pi \in S_d$, one has
\begin{equation}
\label{eq:SelaPermInvar}
\Phiasym \left(\otimes_{i=1}^d V^{(i)}\right) \Sela= \Phiasym\left(\otimes_{i=1}^d V^{(\pi(i))}\right) \Sela.
\end{equation}
The following lemma is useful in our reduction from nonsymmetric products to symmetric products. 

\begin{lemma}
\label{lemma:UnsymmetrizeUd}
Given $d \in \mathbb{N}$, for each $i=1,\dots,d$, let $Z_j \sim \calN(0,\rho_j^2)^{n\times m}$ and set $\rho^2 = \rho_1^2+\dots+\rho_d^2$ so that $Z:=Z_1+\dots+Z_d \sim \mathcal{N}(0,\rho^2)^{n\times m}$. Also let $\tilde{U} = U+Z \in \R^{n \times m}$ be a $\rho$-smoothed matrix. For each $\ell \in [d]$, set $\tilde{V}^{(\ell)} = \tilde{U}-Z_1-\dots-Z_\ell$. Then one has
\[
\Phiasym \left(\tilde{U}^{\otimes d} \right)\Sela = \Phiasym\left(\otimes_{j=1}^d (\tilde{V}^{(j)}+(d-j+1) Z_j)\right)\Sela + \Phiasym  E 
\]
where the error matrix $E$ is a random matrix that satisfies $\norm{E}_F \le c_d (1+\|U\|^{d-2}) \rho^2 (nm)^{\frac{d}{2}})$ with probability at least $1-\exp(-\Omega(nm))$, 
for some constant $c_d>0$ that only depends on $d$. 
\end{lemma}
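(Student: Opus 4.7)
The plan is a telescoping decoupling argument based on the two identities $\tilde{V}^{(j-1)} = \tilde{V}^{(j)} + Z_j$ (with the convention $\tilde{V}^{(0)} \coloneqq \tilde{U}$) and $W^{(j)} \coloneqq \tilde{V}^{(j)} + (d-j+1)Z_j$. For each $j=1,\ldots,d$, I substitute $\tilde{V}^{(j-1)}=\tilde{V}^{(j)}+Z_j$ into the $d-j+1$ trailing modes of $W^{(1)}\otimes\cdots\otimes W^{(j-1)}\otimes (\tilde{V}^{(j-1)})^{\otimes(d-j+1)}$, expand by the binomial theorem, and use the permutation invariance \eqref{eq:SelaPermInvar} of $\Phiasym(\cdot)\Sela$ to group terms by the number $k$ of $Z_j$ factors, obtaining
\begin{align*}
&\Phiasym\bigl(W^{(1)}\otimes\cdots\otimes W^{(j-1)}\otimes (\tilde{V}^{(j-1)})^{\otimes(d-j+1)}\bigr)\Sela \\
&\qquad = \sum_{k=0}^{d-j+1}\binom{d-j+1}{k}\,\Phiasym\bigl(W^{(1)}\otimes\cdots\otimes W^{(j-1)}\otimes Z_j^{\otimes k}\otimes (\tilde{V}^{(j)})^{\otimes(d-j+1-k)}\bigr)\Sela.
\end{align*}
The $k=0$ and $k=1$ contributions combine, via $W^{(j)}=\tilde{V}^{(j)}+(d-j+1)Z_j$, into $\Phiasym\bigl(W^{(1)}\otimes\cdots\otimes W^{(j)}\otimes (\tilde{V}^{(j)})^{\otimes(d-j)}\bigr)\Sela$, while the $k\ge 2$ terms form the step-$j$ error. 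Telescoping over $j=1,\ldots,d$ (the step-$d$ error is empty since $d-j+1<2$) yields the identity with
\[
E \;=\; \sum_{j=1}^{d-1}\sum_{k=2}^{d-j+1}\binom{d-j+1}{k}\bigl(W^{(1)}\otimes\cdots\otimes W^{(j-1)}\otimes Z_j^{\otimes k}\otimes (\tilde{V}^{(j)})^{\otimes(d-j+1-k)}\bigr)\Sela.
\]

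The second step is to bound $\|E\|_F$. Since $\Sela$ has pairwise orthogonal columns of norm at most one, $\|\Sela\|\le 1$, and since $\|A_1\otimes\cdots\otimes A_d\|_F=\prod_i\|A_i\|_F$, each summand is bounded by the product of Frobenius norms of its factors. Standard Gaussian concentration gives $\|Z_j\|_F\le O(\rho\sqrt{nm})$ with probability $1-\exp(-\Omega(nm))$, and on this event $\|W^{(i)}\|_F,\|\tilde{V}^{(j)}\|_F\le \|U\|_F+O(\rho\sqrt{nm})\le O(\sqrt{nm}(\|U\|+\rho))$ via $\|U\|_F\le\sqrt{nm}\|U\|$. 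Hence each summand, having $k\ge 2$ copies of some $Z_j$ and $d-k$ remaining ``$U$-like'' factors, is at most $O_d\bigl((\|U\|+\rho)^{d-k}\rho^{k}(nm)^{d/2}\bigr)\le c_d(1+\|U\|^{d-2})\rho^2(nm)^{d/2}$ (absorbing $\rho^{k-2}\le 1$ when $\rho\le 1$, losing only a polynomial in $\rho$ otherwise). Summing the $O_d(1)$ such terms preserves the bound.

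The main technical obstacle is engineering the coefficient $(d-j+1)$ in $W^{(j)}$ so that the $k=0$ and $k=1$ contributions at each telescoping step collapse exactly into a single $W^{(j)}$ factor — this is precisely what the definition of $W^{(j)}$ is designed to accomplish, and it crucially relies on the permutation symmetry of $\Phiasym(\cdot)\Sela$ from \eqref{eq:SelaPermInvar} to move the lone $Z_j$ into the correct slot. Once this identity is in place, the error automatically consists of tensors carrying at least two copies of some independent Gaussian $Z_j$, whose Frobenius norms are controlled by elementary concentration.
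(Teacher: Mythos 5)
Your proposal is correct and takes essentially the same approach as the paper: the paper proves the identity by induction on the number $k$ of decoupled modes, peeling one mode at a time, applying the binomial expansion, invoking the permutation invariance of $\Phiasym(\cdot)\Sela$ to collapse the $k=0$ and $k=1$ binomial terms into a single $W^{(j)}=\tilde{V}^{(j)}+(d-j+1)Z_j$ factor, and collecting the $k\ge 2$ terms into the error $E_j$ — which is exactly your telescoping argument written as a formal induction. Your Frobenius-norm bound on $E$ (multiplicativity over tensor factors, Gaussian concentration for $\|Z_j\|_F$, and absorbing the $k\ge 2$ powers of $\rho_j$ under the standing assumption $\rho\le 1$) matches the paper's error control as well.
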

\begin{proof}

The proof follows an induction argument that carefully leverages symmetry (equation \eqref{eq:SelaPermInvar}), and groups together terms in a way that decouples the randomness. 

     We give the proof by induction on $k$. In particular, we will show that for $k \leq d$, we have
     \[
    \Phiasym \left(\tilde{U}^{\otimes d} \right)\Sela = \Phiasym\left(\otimes_{j=1}^k (\tilde{V}^{(j)}+(d-j+1) Z_j) \otimes (\tilde{V}^{(k)})^{\otimes d-k} \right)\Sela + \Phiasym  \left(\sum_{j=1}^k E_j\right) 
    \]
    where for each $j$ we have $\norm{E_j}_F \leq c_{d}' O((1+\|U\|^{d-2}) \rho_j^2 (nm)^{\frac{d}{2}})$ with probability at least $1-\exp(-\Omega(nm))$ for some constant $c_{d}'$ that depends only on $d$. Setting $\tilde{V}^{(0)} := \tilde{U}$, the statement trivially holds in the base case of $k=0$. We now suppose the result is true for $k=\ell$ and prove it true for $\ell+1$.  Set 
    \[
    W_\ell:=\otimes_{j=1}^\ell (\tilde{V}^{(j)}+(d-j+1) Z_j).
    \]
    Then using our induction hypothesis with the identity $\tilde{V}^{(\ell)}=\tilde{V}^{(\ell+1)}+Z_{\ell+1}$, we obtain
    \begin{align*}
    \Phiasym \left(\tilde{U}^{\otimes d} \right)\Sela &=  \Phiasym \left(  W_\ell \otimes  (\tilde{V}^{(\ell)})^{\otimes d-\ell} \right)\Sela +\Phiasym \left(\sum_{j=1}^\ell E_j\right) \\
    & = 
     \Phiasym \left(  W_\ell \otimes  (\tilde{V}^{(\ell+1)}+Z_{\ell+1})^{\otimes d-\ell} \right)\Sela +\Phiasym \left(\sum_{j=1}^\ell E_j\right)
    \end{align*}
    Applying equation \eqref{eq:SelaPermInvar} and expanding with the binomial theorem gives
     \begin{align*}
     &\Phiasym \left(  W_\ell \otimes \left(\tilde{V}^{(\ell+1)}+Z_{\ell+1})\right)^{\otimes d-\ell} \right)\Sela \\
    = &\Phiasym \Big(W_\ell \otimes \Big( \sum_{j=0}^{d-\ell} \binom{d-\ell}{j} \left(Z_{\ell+1}^{\otimes j}\right) \otimes \left((\tilde{V}^{(\ell+1)})^{\otimes d-\ell-j} \right)\Big)\Big)\Sela \\
    = &  \Phiasym \left(W_\ell \otimes \bigg(\left(\tilde{V}^{(\ell+1)}\right)^{\otimes d-\ell}  +(d-\ell)\left(Z_{\ell+1}\right) \otimes \left(\tilde{V}^{(\ell)}\right)^{\otimes d-\ell+1} + E_{\ell+1}'  \bigg)\right)\Sela \\
    = & \Phiasym \left(W_\ell \otimes \left(\tilde{V}^{(\ell+1)}+(d-\ell)Z_{\ell+1}\right) \right) \otimes \left((\tilde{V}^{(\ell+1)})^{\otimes d-\ell-1} \right)\Sela + \Phiasym (W_\ell \otimes E_{\ell+1}')\Sela
    \end{align*}
    Here 
    \[
     E_{\ell+1}':= \sum_{j=2}^{d-\ell} \binom{d-\ell}{j} Z_{\ell+1}^{\otimes j} \otimes (V^{(\ell+1)})^{\otimes d-j}.
    \]
 Setting $E_{\ell+1} := (W_\ell \otimes E_{\ell+1}') \Sela$, we obtain that  $\norm{E_{\ell+1}}_F \leq c_{d}' O((1+\|U\|^{d-2}) \rho_j^2 (nm)^{\frac{d}{2}})$ with probability at least $1-\exp(-\Omega(nm))$. From here, observing that
    \[
    W_{\ell+1} = W_\ell \otimes \left(\tilde{V}^{(\ell+1)}+(d-\ell)Z_{\ell+1}\right)
    \]
    completes the proof by induction. 
\end{proof}

\begin{lemma}[Symmetric to Non-symmetric Products]\label{lem:decoupling}
 Suppose $d \in \Z_+$ be a positive integer. Suppose $\Phi:\Sym(\R^{n^d}) \to \R^r$ with $\norm{\Phi} \le 1$. For every $\rho_1,\dots,\rho_d>0$ with $\sum_{j=1}^d \rho_j^2 = \rho^2$ and $\delta \in (0,1)$ 
 the following holds when $\tilde{U}=U+Z$ is drawn as described above with the entries of $Z$ being drawn i.i.d from $\calN(0,\rho^2)$:
\begin{align}\label{eq:decoupling}
\forall t \ge 0,~&\Pr\Big[\sigma_{\binom{m+d-1}{d}}\Big( \Phi\big( \Sym_d \tilde{u}_{i_1} \otimes \tilde{u}_{i_2} \otimes \dots \otimes u_{i_d} \big): 1\le i_1 \le i_2 \dots \le i_d \le m\Big) \le t\Big] \nonumber \\
& \le \Pr\Big[\sigma_{m^d} \Big(\Phiasym\left(\otimes_{j=1}^d (\tilde{V}^{(j)}+(d-j+1) Z_j)\right)\Big) \le \sqrt{d!} \cdot t +  \|E\| \log(1/\delta) \Big] + \delta.
\end{align}
 Here $V^{(\ell)} = \tilde{U}-Z_1-\dots-Z_\ell$ and for each $\ell \in [d]$ and $Z_\ell$ is a random matrix with i.i.d entries drawn from $\calN(0,\rho_\ell^2)$ and $E$ is the error matrix appearing in Lemma \ref{lemma:UnsymmetrizeUd} which has norm $\|E\|= O((1+\|U\|^{d-2}) \rho^2 (n+m)^{\frac{d}{2}})$. 
\end{lemma}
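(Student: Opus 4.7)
The plan is to use Lemma~\ref{lemma:UnsymmetrizeUd} to replace $\tilde{U}^{\otimes d}$ by the decoupled non-symmetric product and then transfer the least singular value bound through the symmetrization matrix $\Sela$. Since the rows of $\Phiasym$ lie in $\Sym(\R^{n^d})$, we have $\Phi \tilde{U}^{\veep d} = \Phiasym(\tilde{U}^{\otimes d})\Sela$, and Lemma~\ref{lemma:UnsymmetrizeUd} gives the identity
\[
\Phi \tilde{U}^{\veep d} \;=\; \Phiasym\bigl(\otimes_{j=1}^d (\tilde{V}^{(j)} + (d-j+1)Z_j)\bigr)\Sela \;+\; \Phiasym E,
\]
where $\|E\|_F \le c_d(1+\|U\|^{d-2})\rho^2(nm)^{d/2}$ with probability $\ge 1 - \exp(-\Omega(nm))$. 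Abbreviate the first factor as $M := \Phiasym(\otimes_{j=1}^d (\tilde{V}^{(j)} + (d-j+1)Z_j))$.

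The key structural step is to prove $\sigma_{\min}(\Sela) \ge 1/\sqrt{d!}$. Each column of $\Sela$ is indexed by a sorted tuple $i_1 \le \cdots \le i_d$; the column with multiplicities $m_1, m_2, \dots$ takes the value $(\prod_k m_k!)/d!$ on each of its $d!/\prod_k m_k!$ permutations in $[m]^d$ and vanishes elsewhere. Distinct sorted tuples therefore have disjoint supports, so the columns are mutually orthogonal, and the squared column norm equals $\prod_k m_k!/d!$, minimized at $1/d!$ on all-distinct tuples. Hence $\|\Sela\alpha\| \ge \|\alpha\|/\sqrt{d!}$ for every $\alpha$. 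Combining with $\|\Phiasym\|_{\mathrm{op}} \le \|\Phi\| \le 1$ (since $\Phiasym$ annihilates the antisymmetric subspace and agrees with $\Phi$ on $\Sym(\R^{n^d})$), for any unit $\alpha \in \R^{\binom{m+d-1}{d}}$,
\[
\|\Phi\tilde{U}^{\veep d}\alpha\| \;\ge\; \|M(\Sela\alpha)\| - \|\Phiasym E \alpha\| \;\ge\; \frac{\sigma_{m^d}(M)}{\sqrt{d!}} \;-\; \|E\|_{\mathrm{op}}.
\]
Taking the infimum over $\alpha$ and contraposing, the event $\{\sigma_{\binom{m+d-1}{d}}(\Phi\tilde{U}^{\veep d}) \le t\}$ is contained in $\{\sigma_{m^d}(M) \le \sqrt{d!}\,t + \sqrt{d!}\,\|E\|_{\mathrm{op}}\}$.

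The main obstacle that remains is to replace the random quantity $\sqrt{d!}\,\|E\|_{\mathrm{op}}$ by the deterministic $\|E\|\log(1/\delta)$ appearing in the lemma. Because $E$ is a polynomial of degree at most $d$ in the independent Gaussian blocks $Z_1, \dots, Z_d$, Gaussian polynomial-chaos concentration (Hanson--Wright / hypercontractivity) yields a tail bound of the form $\Pr[\|E\|_{\mathrm{op}} > \|E\|\cdot \poly(\log(1/\delta))] \le \delta$, where $\|E\|$ is the nominal scale $c_d(1+\|U\|^{d-2})\rho^2(nm)^{d/2}$ from Lemma~\ref{lemma:UnsymmetrizeUd}. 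Splitting on this high-probability event and paying an additive $\delta$ on its complement, and absorbing the $\sqrt{d!}$ constant into the $\log(1/\delta)$ factor, yields the stated inequality. The delicate point in carrying this out is tracking the polynomial degree in $\log(1/\delta)$ so that the resulting bound matches the stated $\|E\|\log(1/\delta)$; in the regime of interest for the later applications, $\delta$ is taken exponentially small and so the difference is harmless, but it requires care to verify the Gaussian-chaos tail bound at the required level.
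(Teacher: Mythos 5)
Your proposal follows the paper's own proof essentially step for step: rewrite the symmetric lift as $\Phiasym(\tilde{U}^{\otimes d})\Sela$, invoke Lemma~\ref{lemma:UnsymmetrizeUd} to decouple, and transfer the singular-value bound through $\Sela$ using its disjoint-support column structure (giving $\sigma_{\min}(\Sela)\ge 1/\sqrt{d!}$). The one place you go beyond the paper is in spelling out how the random $\|E\|_{\mathrm{op}}$ turns into the deterministic $\|E\|\log(1/\delta)$ plus an additive $\delta$ via a Gaussian-chaos tail bound — the paper's proof simply states the deterministic inequality and asserts "the desired singular value lower bound follows" without this bookkeeping, so your flagged concern about the precise $\log$-power is a genuine loose end in the paper's statement rather than a gap in your argument.
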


\begin{proof}
    First observe that
    \[
\Phi\big( \Sym_d \tilde{u}_{i_1} \otimes \tilde{u}_{i_2} \otimes \dots \otimes u_{i_d} \big): 1\le i_1 \le i_2 \dots \le i_d \le m\Big) = \Phiasym\left(\tilde{U}^{\otimes d} \right)\Sela
    \]
    Using Lemma \ref{lemma:UnsymmetrizeUd} shows then shows that
    \[
\Phiasym \left(\tilde{U}^{\otimes d} \right)\Sela = \Phiasym\left(\otimes_{j=1}^d (\tilde{V}^{(j)}+(d-j+1) Z_j)\right)\Sela + \Phiasym E.
    \]
    Using the fact that $\Sela$ has columns with disjoint support each with $\ell_2$ norm at least $\frac{1}{\sqrt{d!}}$ shows that the matrix $\Sela$ has full column rank with all singular values in $[ \frac{1}{\sqrt{d!}},1]$. Combining this with the above equality gives
    \[
\sigma_{\binom{m+d-1}{d}}\left(\Phiasym \left(\tilde{U}^{\otimes d} \right)\Sela\right) \geq  \frac{1}{\sqrt{d!}} \sigma_{m^d} \left( \Phiasym \left(\otimes_{j=1}^d (\tilde{V}^{(j-1)}+j Z_j)\right)\right) - \|\Phiasym\|\|E\|,
    \]
    from which the desired singular value lower bound follows. 
\end{proof}

\section{Anticoncentration of a vector of homogeneous polynomials} \label{sec:jacobian}

\subsection{Overview}
\label{sec:overview:anticonc}

\anote{Need to potentially remove some redundant text.}


Formally, we consider the following setting: let $p_1, p_2, \dots, p_N$ be a collection of homogeneous polynomials over $n$ variables $(x_1, x_2, \dots, x_n)$, and define 
\begin{align}
P(x)= \begin{bmatrix}
    p_1(x) \\
    p_2(x) \\
    \vdots \\
    p_N(x)
    \end{bmatrix}
\end{align}

Our goal will be to show anticoncentration results for $P$. Specifically, we want to prove that $\Pr[ \norm{P(\xtil) - y} < \eps]$ is small for all $y$, where $\xtil$ is a perturbation of some (arbitrary) vector $x \in \R^n$.

We first observe that such a statement is not hard to prove if we know that the Jacobian $J(x)$ of $P(x)$ has many large singular values at \emph{every} $x$, and if the perturbation $\rho$ is small enough. This is because around the given point $x$, we can consider the linear approximation of $P(\tilde{x})$ given by the Jacobian. Now as long as the perturbation has a high enough projection onto the span of the corresponding singular vectors of $J(x)$, $P(\tilde{x})$ can be shown to have desired anticoncentration properties (by using the standard anticoncentration result for Gaussians). Finally, if $J(x)$ has $k$ large singular values, a random $\rho$-perturbation will have a large enough projection to the span of the singular vectors with probability $1-\exp(-k)$.

Now, in the applications we are interested in, the polynomials $P$ tend to have the Jacobian property above for ``typical'' points $x$, but not all $x$. Our main result here is to show that this property suffices. Specifically, suppose we know that for every $x$, the Jacobian at a $\rho$ perturbed point has $k$ singular values of magnitude $\ge c \rho$ with high probability. Then, in order to show anticoncentration, we view the $\rho$ perturbation of $x$ as occurring in two independent steps: first perturb by $\rho \sqrt{1-z^2}$ for some parameter $z$, and then perturb by $\rho z$. The key observation is that for Gaussian perturbations, this is identical to a $\rho$ perturbation!

This gives an approach for proving anticoncentration. We use the fact that the first perturbation yields a point with sufficiently many large Jacobian singular values with high probability, and combine this with our earlier result (discussed above) to show that if $z$ is small enough, the linear approximation can indeed be used for the second perturbation, and this yields the desired anticoncentration bound. 

\emph{Applications.}  The simplest application for our framework is the setting where $\calM$ has columns being $\tilde{u}_i \otimes \tilde{v}_i$, for some $\rho$-perturbations of underlying vectors $u_i, v_i$. (This setting was studied in~\cite{BCMV, ADMPSV18} and already had applications to parameter recovery in statistical models.) Here, we can show that $\calM$ has the CAA property. To show this, we consider some combination $\sum_i \alpha_i (\tilde{u}_i \otimes \tilde{v}_i)$ with $k$ ``large'' coefficients in $\alpha$, and show that in this case, the Jacobian property holds. Specifically, we show that the Jacobian has $\Omega(kn)$ large singular values. This establishes the CAA property, which in turn implies a lower bound on $\sigma_{\min}(\calM)$. This gives an alternative proof of the results of the works above.


\subsection{Jacobian rank property and anticoncentration results}

We give a sufficient condition for proving such a result, in terms of the Jacobian of $P$. (See Section~\ref{sec:prelims} for background.)

\begin{definition}[Jacobian rank property]\label{defn:jacobian-rank}
We say that $P$ has the Jacobian rank property with parameters $(k, c, \gamma)$ if for all $\rho >0$ and for all $x$, the matrix $J(\xtil)$ has at least $k$ singular values of magnitude $\ge c\rho$, with probability at least $1-\gamma$. Here, $\xtil = x + \eta$, where $\eta \sim \calN(0, \rho^2)$ is a perturbation of the vector $x$. 
\end{definition}

\paragraph{Comment.}  Indeed, all of our results will hold if we only have the required condition for \emph{small enough} perturbations $\rho$. To keep the results simple, we work with the stronger definition.


For many interesting settings of $P$, the Jacobian rank property turns out to be quite simple to prove. Our main result now is that the property above implies an anticoncentration bound for $P$.

\begin{theorem}\label{thm:jacobian-to-antic}
Suppose $P(x)$ defined as above satisfies the Jacobian rank property with parameters $(k,c, \gamma)$, and suppose further that the 
the Jacobian $P'$ is $M$-Lipschitz 
in our domain of interest. Let $x$ be any point and let $\tilde{x}$ be a $\rho$-perturbation. Then for any $h>0$, we have
\[ \forall y \in \R^N,\ \Pr \left[ \norm{P(\tilde{x}) -y} < \frac{c \rho^2 h}{64 Mnk} \right] \le \gamma + \exp(-\frac{1}{4} \cdot k \log (1/h)).  \]
\end{theorem}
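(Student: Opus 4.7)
The strategy is to decouple the perturbation into two independent pieces using the stability of the Gaussian, then apply the Jacobian rank property to the first piece and a linearized Gaussian anticoncentration argument to the second. Concretely, fix a parameter $z \in (0,1/2)$ (to be optimized at the end) and write the $\rho$-perturbation as $\tilde x = x' + \eta_2$ with $x' = x + \eta_1$, where $\eta_1 \sim \calN(0, \rho^2(1-z^2) I_n)$ and $\eta_2 \sim \calN(0, \rho^2 z^2 I_n)$ are independent. Since $x'$ is itself a $\rho\sqrt{1-z^2}$-perturbation of $x$, the Jacobian rank property guarantees that with probability at least $1-\gamma$ over $\eta_1$, the matrix $J(x')$ has at least $k$ singular values of magnitude at least $c\rho\sqrt{1-z^2} \geq c\rho/2$. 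Condition on this event and on $x'$ from here on.

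Let $J(x')^{T} = V\Sigma W^{T}$ be the SVD, with $v_1,\dots,v_k$ the top $k$ left singular vectors and $w_1,\dots,w_k$ the corresponding right singular vectors. By Taylor's theorem combined with the hypothesis that $P' = J^{T}$ is $M$-Lipschitz, the remainder $R(\eta_2) := P(\tilde x) - P(x') - J(x')^{T}\eta_2$ satisfies $\|R(\eta_2)\| \leq \tfrac{M}{2}\|\eta_2\|^2$. Projecting onto the top-$k$ left singular space and setting $\zeta := W_k^{T}\eta_2 \in \R^k$, which is distributed as $\calN(0,\rho^2 z^2 I_k)$ by orthogonal invariance, the event $\|P(\tilde x) - y\| < \epsilon$ implies
\[
\bigl\| \Sigma_k \zeta - b + V_k^{T} R(\eta_2)\bigr\| < \epsilon, \qquad b := V_k^{T}(y - P(x')),
\]
where $b$ is fixed once we condition on $x'$.

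Next, restrict to the high-probability event $\calA = \{\|\eta_2\| \leq s\}$ where $s := \sqrt{\epsilon/M}$ is chosen so that $\|R\| \leq \epsilon/2$ on $\calA$. Under $\calA$ we obtain $\|\Sigma_k \zeta - b\| \leq 3\epsilon/2$, so each coordinate satisfies $|\sigma_i \zeta_i - b_i| \leq 3\epsilon/2$, i.e., $\zeta_i$ lies in an interval of length $3\epsilon/\sigma_i$. Since the $\zeta_i$ are independent $\calN(0,\rho^2 z^2)$ and $\sigma_i \geq c\rho/2$, applying the standard one-dimensional Gaussian anticoncentration bound coordinatewise and multiplying yields
\[
\Pr\bigl[\calA \cap \{\|P(\tilde x)-y\| < \epsilon\}\bigr] \;\leq\; \Bigl(\tfrac{C\epsilon}{c\rho^2 z}\Bigr)^{k},
\]
for an absolute constant $C$. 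The complementary tail $\Pr[\calA^c]$ is a chi-squared tail for $\|\eta_2\|^2/(\rho^2 z^2)$ and will be controlled by choosing $z$ so that $\rho^2 z^2 n \ll \epsilon/M$, i.e., $z^2 = \Theta(\epsilon/(Mn\rho^2))$; this makes $s^2/(\rho^2 z^2) = \Theta(n)$ large enough for a standard chi-squared tail bound.

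The main obstacle is balancing the three parameters $z$, $s$, and $\epsilon$. With the choice $z^2 \asymp \epsilon/(Mn\rho^2)$, substituting the target $\epsilon = c\rho^2 h/(64 Mnk)$ gives $C\epsilon/(c\rho^2 z) = O(\sqrt{h/k})$, so the anticoncentration bound is $O(h/k)^{k/2} \leq h^{k/4}$ (using $h \leq k^2$, which we may assume WLOG since otherwise the stated bound is vacuous). The chi-squared tail contributes $\exp(-\Omega(n))$, which is dominated by $\exp(-\tfrac14 k\log(1/h))$ in the regime of interest (and can otherwise be absorbed by a mild rescaling of $z$). Combining the $\gamma$ failure of the Jacobian property with these two bounds yields the claimed inequality.
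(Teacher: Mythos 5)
Your high-level strategy is the same as the paper's: split the $\rho$-perturbation into two independent Gaussian pieces, invoke the Jacobian rank property on the first piece to obtain a point $x'$ with $k$ large singular values of $J$, and then run a linearized anticoncentration argument on the second piece. The paper organizes the second half through its Lemma~\ref{lem:linearization} and then optimizes over the split parameter $z$; you inline that lemma. So the route matches.

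However, there is a genuine gap in your treatment of the tail of $\|\eta_2\|$. You fix the linearization cutoff at $s=\sqrt{\varepsilon/M}$ so that the error satisfies $\|R\|\le\varepsilon/2$ on $\mathcal{A}$, and then choose $z$ so that $s^2/(\rho^2 z^2)=\Theta(n)$, yielding $\Pr[\mathcal{A}^c]\le\exp(-\Omega(n))$. But the theorem claims a failure probability of $\gamma+\exp\bigl(-\tfrac14 k\log(1/h)\bigr)$, and $\exp(-\Omega(n))$ is \emph{not} dominated by $\exp\bigl(-\tfrac14 k\log(1/h)\bigr)$ once $\log(1/h)\gtrsim n/k$; since $h$ is allowed to be arbitrarily small, this is a real regime, not a corner case. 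Your parenthetical fix — ``can otherwise be absorbed by a mild rescaling of $z$'' — does not work as stated: making $z$ smaller by a factor $g$ to shrink the tail multiplies the coordinatewise anticoncentration bound $(C\varepsilon/(c\rho^2 z))^k$ by $g^{-k}$, and the required shrink is the not-at-all-mild factor $\sqrt{n/(k\log(1/h))}$. One would have to redo the entire balancing with $z^2\asymp\varepsilon/\bigl(M\rho^2\max\{n,\,k\log(1/h)\}\bigr)$ and re-verify that the anticoncentration term still lands below $h^{k/4}$ — which it does, but this is exactly the part you handwave.

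The paper sidesteps this by never fixing the cutoff in terms of $\varepsilon$. In Lemma~\ref{lem:linearization} the cutoff is $\|\eta\|^2\le n\rho^2 k\theta$ with $\theta=\log\bigl(\tau/(M\rho n k)\bigr)$, which depends on the target singular-value scale rather than on $\varepsilon$, and the residual $M\|\eta\|^2$ is carried \emph{inside} the small-ball radius rather than being forced under $\varepsilon/2$. This yields two terms: $(2\varepsilon/(\tau\rho))^k$ and $(2M\rho nk/\tau)^{k/2}$, and in the theorem the split parameter $z$ and the ball size $\varepsilon$ are jointly tuned so that both equal $(f/2)^{k/2}$ with $f=\sqrt{h}$; the tail probability $\exp(-k\theta)$ automatically scales as $h^{\Theta(k)}$ because $\theta=\log(4/f)$ grows with $\log(1/h)$. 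To repair your version: allow the cutoff $s$ and $z$ to both scale with $\log(1/h)$ (take $s^2/(\rho^2 z^2)\asymp\max\{n,\,k\log(1/h)\}$), and re-derive the anticoncentration bound under that choice; it still closes, but the calculation must be done, not asserted.
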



A key ingredient in the proof is the following ``linearization'' based lemma. 

\begin{lemma}\label{lem:linearization}
Suppose $x$ is a point at which the Jacobian $J(x)$ of a polynomial map $P$ has at least $k$ singular values of magnitude $\ge \tau$. Also suppose that the norm of the Hessian of each $p_i(x)$ is bounded by $M$ in the domain of interest. Then, for ``small'' perturbations, $0 < \rho < \frac{\tau}{4Mnk}$, we have that for any $\eps >0$,
\[ \forall y, ~\Pr[\norm{ P(\xtil ) - y } < \eps ] < \left(  \frac{2\eps}{\tau\rho} \right)^k + \left( \frac{2M \rho n k }{\tau}   \right)^{k/2}.  \]
\end{lemma}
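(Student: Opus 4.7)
The plan is to combine the first-order Taylor expansion of $P$ at $x$ with Gaussian anticoncentration on the directions where $J(x)$ is well-conditioned. Write $\xtil = x + \eta$ with $\eta \sim \calN(0,\rho^2 I_n)$. Using the Hessian bound together with Taylor's theorem,
\[
P(\xtil) \;=\; P(x) + J(x)^\top \eta + R(\eta), \qquad \|R(\eta)\| \le \tfrac{M}{2}\|\eta\|^2.
\]
Let $J(x) = U\Sigma V^\top$ be the SVD and let $U_k, V_k, \Sigma_k$ denote the top-$k$ factors, so that $\sigma_{\min}(\Sigma_k)\ge \tau$. A direct computation gives $V_k^\top J(x)^\top = \Sigma_k U_k^\top$, so projecting onto the span of the top-$k$ columns of $V$,
\[
V_k^\top\bigl(P(\xtil)-y\bigr) \;=\; z + \Sigma_k w + V_k^\top R(\eta),
\]
where $z := V_k^\top(P(x)-y)$ is deterministic and $w := U_k^\top\eta \sim \calN(0,\rho^2 I_k)$. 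Since projection is a contraction, the event $\{\|P(\xtil)-y\|<\eps\}$ implies $\|z + \Sigma_k w\| < \eps + \tfrac{M}{2}\|\eta\|^2$.

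Next I introduce a truncation event $G := \{\|\eta\|^2 \le s\}$ for a threshold $s$ to be optimized. On $G$ the Taylor remainder is uniformly bounded by $Ms/2$, so
\[
\Pr\bigl[\|P(\xtil)-y\|<\eps\bigr] \;\le\; \Pr[G^c] + \Pr\bigl[\|z + \Sigma_k w\| < \eps + \tfrac{Ms}{2}\bigr].
\]
Since $\Sigma_k w$ is a $k$-dimensional Gaussian whose covariance $\rho^2\Sigma_k^2$ has every eigenvalue at least $\tau^2\rho^2$, its density is pointwise bounded by $(2\pi\rho^2\tau^2)^{-k/2}$, and the standard small-ball estimate yields
\[
\Pr\bigl[\|z + \Sigma_k w\| < r\bigr] \;\le\; \Bigl(\tfrac{r}{\tau\rho}\Bigr)^k
\]
up to absolute ball-volume constants. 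For the tail of $\|\eta\|^2 \sim \rho^2\chi_n^2$, applying Markov to the $(k/2)$-th moment together with $\E[\chi_n^{2\ell}] \le (n+2\ell)^\ell$,
\[
\Pr[\|\eta\|^2 > s] \;\le\; \Bigl(\tfrac{(n+k)\rho^2}{s}\Bigr)^{k/2}.
\]

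It remains to pick the threshold. Setting $s \asymp \rho\tau/(Mk)$ makes $\Pr[G^c] \le (2Mnk\rho/\tau)^{k/2}$, matching the second term in the claim; the deterministic slack on $G$ is then $Ms/2 \asymp \tau\rho/k$, so in the regime $\eps \gtrsim \tau\rho/k$ the anticoncentration bound yields $(2\eps/(\tau\rho))^k$, and in the opposite regime the small-ball factor is already absorbed by the same expression, using the hypothesis $\rho < \tau/(4Mnk)$ to guarantee consistency of the threshold. The main obstacle is the coupling between $w$ and $R(\eta)$: the remainder depends on all of $\eta$, hence on $w$ itself, so one cannot simply condition on the component of $\eta$ orthogonal to $U_k$ and apply the Gaussian bound directly. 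The truncation event $G$ is precisely what resolves this coupling, trading off a uniform deterministic bound on the remainder against a controlled tail probability; the regime $\rho < \tau/(4Mnk)$ is exactly what ensures that the optimal $s$ simultaneously keeps the remainder small and the Gaussian tail manageable.
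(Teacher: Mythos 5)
Your overall decomposition --- linearize via Taylor, project onto the span of the top-$k$ singular directions of $J(x)$, truncate on $\|\eta\|$, and apply Gaussian small-ball anticoncentration --- is exactly the route the paper takes. The gap is in the final threshold optimization, and it is real: with $s \asymp \rho\tau/(Mk)$, the deterministic slack on $G$ is $Ms/2 \asymp \rho\tau/k$, so the small-ball radius is $r = \eps + \Theta(\rho\tau/k)$ and
\[
\bigl(r/(\tau\rho)\bigr)^k \;\le\; \bigl(2\eps/(\tau\rho)\bigr)^k + (c/k)^k
\]
for an absolute constant $c$. That residual $(c/k)^k$ is a fixed positive quantity independent of $\rho$, while $(2M\rho nk/\tau)^{k/2}$ can be made arbitrarily small by sending $\rho \to 0$ subject to $\rho < \tau/(4Mnk)$. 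So in the regime $\eps \ll \rho\tau/k$ with $\rho$ small, your bound is genuinely weaker than the lemma claims; the ``absorbed by the same expression'' step in your last paragraph does not go through.

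The culprit is pairing a $(k/2)$-th-moment Markov tail with a fixed threshold. With that tail bound, forcing $\Pr[G^c]\le(2M\rho nk/\tau)^{k/2}$ pins $s \asymp \rho\tau/(Mk)$ and hence the slack to $\Theta(\rho\tau/k)$. The paper instead uses an exponential tail estimate $\Pr[\|\eta\|^2 > n\rho^2 k\theta] \le \exp(-k\theta)$ and takes $\theta = \log\bigl(\tau/(M\rho nk)\bigr)$. This logarithmic choice keeps the tail at $\exp(-k\theta) = (M\rho nk/\tau)^k \le (2M\rho nk/\tau)^{k/2}$ while driving the normalized slack $M\rho^2 nk\theta/(\tau\rho) = O\bigl((M\rho nk/\tau)\log(\tau/(M\rho nk))\bigr)$ to zero as $\rho\to 0$, which is what lets the $(2\eps/\tau\rho)^k$ term dominate for all $\eps$ simultaneously. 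To repair your version you would either need a moment of order growing like $\log(\tau/(M\rho nk))$, or, more cleanly, just use the exponential tail from the start.
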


We remark that the lemma does not imply Theorem~\ref{thm:jacobian-to-antic} directly because it only applies to the case where the perturbation $\rho$ is much smaller than the singular value threshold $\tau$. 

\begin{proof}
Let $\eta$ be the random perturbation of $x$ as in the lemma statement. We have
\[ P(x+\eta) = P(x) + J(x)^T \eta + E(\eta),\]
where $E(\eta)$ is an error term, bounded in magnitude by $M \norm{\eta}^2$ because of our assumption of the Jacobian being Lipschitz. Now, the desired probability is equivalent to 
\[ \Pr \left[ J(x)^T \eta + E(\eta) \in \text{Ball}(y - P(x), \eps) \right].\]
From the bound on $\eta$, the above probability can be upper bounded by
\[ \Pr \left[ J(x)^T \eta \in \text{Ball}\left( y - P(x), \eps + M \norm{\eta}^2 \right) \right]. \]

Let us denote the event in the parentheses above by $\cE$.
Now, consider the top $k$ singular directions of $J(x)$; suppose the eigenvalues are $\sigma_1, \sigma_2, \dots, \sigma_k$, and suppose $\eta_1, \eta_2, \dots, \eta_k$ are the components of $\eta$ along these directions. By hypothesis, $\sigma_i \ge \tau$ for all $i \le k$. Thus if $\cE$ occurs, we also have,
\begin{equation}\label{eq:small-ball-i} 
\forall i \le k,~ \sigma_i \eta_i \in \text{Ball}\left( (y-P(x))_i, \eps + M \norm{\eta}^2  \right) .
\end{equation}

Let $\theta > 1$ be a parameter that we set later. We note that by Gaussian tail bounds, 
\[ \Pr [ \norm{\eta}^2 > n \rho^2 k\theta ] \le \exp(- k \theta).  \]

In what follows, let us condition on the event $\norm{\eta}^2 \le n \rho^2 k \theta$. Then, the probability in~\eqref{eq:small-ball-i} is upper bounded by
\[ \left( \frac{\eps + M \rho^2 n k \theta}{\tau \rho}  \right)^k \le \left(  \frac{2\eps}{\tau\rho} \right)^k + \left( \frac{2M \rho n k \theta}{\tau}   \right)^k.  \]

We will choose the parameter $\theta = \log \left( \frac{\tau}{M\rho nk}  \right)$. This ensures that the term $\exp(-k\theta)$ is the same order of magnitude as the last term on the RHS above. Simplifying, we obtain the desired claim.
\end{proof}


\begin{proof}[Proof of Theorem~\ref{thm:jacobian-to-antic}.]
The main idea in the proof is to view the perturbation $x \rightarrow \tilde{x}$ as occurring in two independent steps $x \rightarrow x' \rightarrow \tilde{x}$, where the first perturbation has norm $\rho \sqrt{1-z^2}$ and the second perturbation has norm $\rho z$. By standard properties of Gaussian perturbations, this is equivalent to a $\rho$ perturbation of $x$. We pick the parameter $z<1/2$ carefully (later).

Using the Jacobian rank property of $P$ on the first perturbation, we have that with probability $\ge 1-\gamma$, $J(x')$ has at least $k$ singular values of magnitude $\ge c (\rho/2)$ (we are using the fact that $z<1/2$). Let us call this value $\tau$, which we will use to apply Lemma~\ref{lem:linearization}. As long as we choose $z$ such that
\[ z \rho < \frac{\tau}{4Mnk} = \frac{c\rho}{8 Mnk},  \]
we can apply the Lemma to conclude that for any $\eps >0$,
\begin{align*}
\forall y, ~\Pr[\norm{ P(\xtil ) - y } < \eps ] &< \left(  \frac{2\eps}{\tau z\rho} \right)^k + \left( \frac{2M z\rho n k }{\tau}   \right)^{k/2}\\
&= \left(  \frac{4\eps}{z\rho^2 } \right)^k + \left( \frac{4M z n k }{c}   \right)^{k/2}.
\end{align*}
Let $0 < f < 1/2$ be a parameter that we will fix shortly. We first choose $z = \frac{cf}{8 Mnk}$, so that the latter term above becomes $(f/2)^{k/2}$. Then, we pick $\eps = \frac{fz\rho^2}{8}$, so that the former term becomes $(f/2)^k$. Putting these together, we have that for all $f \in (0,1/2)$,
\[\forall y, ~\Pr\left[ \norm{ P(\xtil ) - y } < \frac{f^2 c\rho^2}{64 Mnk} \right] < f^{k/2} = \exp \left( - \frac{1}{2} k \log (1/f) \right). \]
Writing $h = f^2$ and incorporating the failure probability of the Jacobian rank guarantee, the theorem follows.
\end{proof}

\subsection{Jacobian rank property for Khatri Rao products}\label{sec:khatri-rao-jacobian}
As the first application, let us use the machinery from the previous sections to prove the following.

\begin{theorem}\label{thm:application:kr2}
    Suppose $U, V \in \R^{n \times m}$ and suppose their entries are independently perturbed (by Gaussians $\calN(0,\rho^2)$) to obtain $\tilde{U}$ and $\tilde{V}$. Then whenever $m \le n^2/C$ for some absolute constant $C$, we have 
    \[ \sigma_{\min}  (\tilde{U} \odot \tilde{V}) \ge \poly\left( \rho, \frac{1}{n} \right), \]
    with probability $1- \exp(-\Omega(n))$.
\end{theorem}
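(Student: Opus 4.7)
The plan is to verify that the Khatri-Rao matrix $\calM = \tilde U \odot \tilde V$, whose $i$-th column is $\tilde u_i \otimes \tilde v_i$, satisfies the CAA property of Definition~\ref{defn:caa-property} with parameter $\beta = 1/2$, and then invoke Theorem~\ref{thm:caa-to-sigma}. The required spectral-norm bound $\|\calM\| \leq \poly(n,\rho)$ follows from standard Gaussian concentration on the columns, so the entire task reduces to proving the CAA anti-concentration inequality: for every $\alpha \in \R^m$ with at least $k$ entries of magnitude $\geq \delta$, I need $\Pr[\|\calM\alpha\| < \delta h/\lambda] \leq \exp(-c \min(m, k\sqrt{m}) \log(1/h))$ for an absolute constant $c \geq 16$.

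To obtain this I will use the Jacobian anti-concentration machinery of Section~\ref{sec:jacobian}. Let $x \in \R^{2nm}$ collect the entries of $U$ and $V$, and consider $P_\alpha(x) = \sum_i \alpha_i(u_i \otimes v_i) \in \R^{n^2}$. The Jacobian at the perturbed point $\tilde x$ has only $2nk$ nontrivial rows: the row for $u_i(j)$ is $\alpha_i(e_j \otimes \tilde v_i)$ and the row for $v_i(\ell)$ is $\alpha_i(\tilde u_i \otimes e_\ell)$, where $i$ ranges over indices with $|\alpha_i| \geq \delta$. For a fixed such $i$, these $2n$ rows span a $(2n-1)$-dimensional subspace $H_i$; the sole linear dependence is $\sum_j \tilde u_i(j)\,(e_j \otimes \tilde v_i) = \tilde u_i \otimes \tilde v_i = \sum_\ell \tilde v_i(\ell)\,(\tilde u_i \otimes e_\ell)$. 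If I can show, with probability $1 - \exp(-\Omega(n\min(n,k)))$, that the Jacobian has at least $k' = \Omega(n \min(n,k))$ singular values of size $\gtrsim \delta\rho/\poly(n)$, then the Jacobian rank property of Definition~\ref{defn:jacobian-rank} holds and Theorem~\ref{thm:jacobian-to-antic} delivers the required CAA bound --- the arithmetic $n\min(n,k) \geq 16\,\min(m, k\sqrt{m})$ reduces to the hypothesis $m \leq n^2/C$ with a large enough absolute constant $C$.

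The main obstacle is the \emph{robust} version of this Jacobian rank bound: the $\Omega(n\min(n,k))$ singular values must be simultaneously lower-bounded by an inverse polynomial, not merely be generically nonzero. I plan to achieve this through a block leave-one-out argument in the style of Lemma~\ref{lemma:BlockLeaveoneOut}. For each selected index $i$, condition on all $(\tilde u_{i'}, \tilde v_{i'})$ with $i' \neq i$ and bound below the component of $H_i$ orthogonal to $\spn \bigcup_{i' \neq i} H_{i'}$; this span has dimension at most $(2n-1)(k-1) = O(nk)$, so its orthogonal complement in $\R^{n^2}$ has codimension at most $n^2/2$ by the hypothesis $m \leq n^2/C$. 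Since $\tilde u_i$ and $\tilde v_i$ are independent $\rho$-smoothed Gaussians conditionally on the rest, a direct Gaussian small-ball computation on the $2n$ vectors $\{e_j \otimes \tilde v_i\}_j \cup \{\tilde u_i \otimes e_\ell\}_\ell$ projected onto this complement shows that, except with probability $\exp(-\Omega(n))$, at least $2n - O(1)$ of them retain length $\geq \rho/\poly(n)$. Aggregating across the $k$ indices via the block leave-one-out inequality yields the Jacobian rank property with the required parameters, and chaining through Theorem~\ref{thm:jacobian-to-antic} and Theorem~\ref{thm:caa-to-sigma} gives $\sigma_{\min}(\calM) \geq \poly(\rho,1/n)$ with failure probability $\exp(-\Omega(m^{1/2})) = \exp(-\Omega(n))$.
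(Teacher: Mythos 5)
Your overall plan — verify the CAA property for $\tilde U \odot \tilde V$ via the Jacobian machinery and then invoke Theorem~\ref{thm:caa-to-sigma} — is exactly what the paper does (Lemma~\ref{lem:caa-for-kr2} and the surrounding text). However, your execution diverges from the paper's in a way that opens a genuine gap.

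You work with the full Jacobian with respect to both the $u$ and $v$ variables, and then attempt a block leave-one-out across the $k$ selected blocks $H_i$, each of dimension $2n-1$ in $\R^{n^2}$. The step where you claim that $\spn\bigcup_{i'\neq i} H_{i'}$ has dimension at most $n^2/2$ "by the hypothesis $m \le n^2/C$" does not follow: the CAA property must hold for \emph{every} $k$ up to $m$, and $(2n-1)(k-1)$ exceeds $n^2$ once $k \gtrsim n/2$, well below $m = n^2/C$. In that regime the span of the other blocks fills (or nearly fills) $\R^{n^2}$, so there is no orthogonal complement left for $H_i$ to project onto, and the block leave-one-out argument collapses. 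You would need to explicitly cap the number of blocks you use at $O(n)$, verify that this still delivers $\Omega(n\min(n,k))$ large Jacobian singular values, and rework the constants — none of which is in your argument. There is also an unaddressed subtlety in the small-ball step: the $n$ rows $\{e_j\otimes\tilde v_i\}_j$ all correlate through the single vector $\tilde v_i$, and similarly for $\{\tilde u_i\otimes e_\ell\}_\ell$ through $\tilde u_i$, so you cannot treat them as $2n$ independent Gaussian small-ball events; you need to argue about the singular values of the two structured block matrices they form and how those interact with the conditioned span.

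The paper avoids both difficulties by using \emph{only} the Jacobian with respect to the $u$-variables. For a block $i$, that sub-Jacobian is literally a copy of $\alpha_i\tilde v_i$ placed on $n$ disjoint row-blocks — the Kronecker pattern $I_n \otimes (\alpha_i\tilde v_i)$ — so as $i$ ranges over the $k$ large-coordinate indices, the full $u$-Jacobian is $I_n \otimes W$ (up to permutation of rows), where $W\in\R^{n\times k}$ has columns $\alpha_i\tilde v_i$. Its singular values are exactly those of $W$, each with multiplicity $n$, so the problem reduces to showing $W$ has $k/2$ singular values that are $\gtrsim \delta\rho$ — a simple small matrix with at most $\min(n,k)$ effective columns, handled by Lemma~\ref{lem:perturbed-sing-value-basic}. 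That lemma restricts to $k\le n$ by discarding extra columns, and there is no $\R^{n^2}$-dimensional leave-one-out at all. You should redo your argument with the $u$-Jacobian only if you want a clean proof, or else carefully restrict to $O(n)$ blocks and redo the constants.
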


Note that the result is stronger in terms of the success probability than the main result of~\cite{BCMV} and matches the result  of~\cite{ADMPSV18}. The following lemma is the main ingredient of the proof, as it proves the CAA property for $\tilde{U} \odot \tilde{V}$. Theorem~\ref{thm:application:kr2} then follows immediately from Theorem~\ref{thm:caa-to-sigma}. 

\begin{lemma}\label{lem:caa-for-kr2}
Suppose $\alpha \in \R^m$ is a \anote{4/16/24 added unit} unit vector at least $k$ of whose coordinates have magnitude $\ge \delta$. Let $U, V$ be arbitrary (as above), and let $\tilde{U}$ and $\tilde{V}$ be $\rho$ perturbations. Define $P(\tilde{U}, \tilde{V}) = \sum_i \alpha_i \tilde{u}_i \otimes \tilde{v}_i$.  Then for all $h>0$, we have
\[  \Pr\left[ \norm{P(\tilde{U},\tilde{V})} < \delta h \cdot \frac{\rho^2}{64 Mnk} \right] < \exp \left( -\frac{1}{16} kn \log(1/h) \right).   \]
\end{lemma}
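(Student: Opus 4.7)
My plan is to view $P(\tilde{U},\tilde{V})=\sum_{i=1}^m \alpha_i\,\tilde{u}_i\otimes\tilde{v}_i$ as a vector-valued polynomial map from $\R^{2nm}$ (the entries of $\tilde{U}$ and $\tilde{V}$) to $\R^{n^2}$ and apply the Jacobian-based anti-concentration result, Theorem~\ref{thm:jacobian-to-antic}. The coordinate functions are the bilinear forms $P_{(a,b)}(U,V)=\sum_i \alpha_i U_{a,i}V_{b,i}$, so every Hessian has entries only of the form $\pm\alpha_i$, and thus the Jacobian is $M=O(\norm{\alpha}_\infty)=O(1)$-Lipschitz. The main step is then to verify the Jacobian rank property with rank parameter $K=\Omega(kn)$.

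To analyze $J(U,V)$, I would exploit a clean block structure of the $U$-derivatives. Since $\partial P_{(a,b)}/\partial U_{a',i}=\alpha_i V_{b,i}\,\mathbb{1}[a=a']$ (and symmetrically for $V$-derivatives), the $nm$ rows of $J$ indexed by $U$-entries split into $n$ groups by $a'\in[n]$: the rows in group $a'$ are supported only on the disjoint column set $\{a'\}\times[n]$, and on that support form the $m\times n$ matrix $D_\alpha\,\tilde{V}^{\top}$ with $D_\alpha=\diag(\alpha)$. Restricting to the $k$ rows indexed by $i\in S$, where $S$ is the set of coordinates with $|\alpha_i|\ge\delta$, yields a $k\times n$ submatrix $D_{\alpha_S}\,\tilde{V}_S^{\top}$ whose smallest singular value is at least $\delta\cdot\sigma_{\min}(\tilde{V}_S)$. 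Because $\tilde{V}_S$ is a $\rho$-perturbation of an $n\times k$ matrix, the leave-one-out bound of Lemma~\ref{lem:leaveoneout} gives $\sigma_{\min}(\tilde{V}_S)\ge \rho/\poly(n)$ with failure probability $\exp(-\Omega(n))$ when $k\le n$; the case $k>n$ is handled by restricting to any $n$ columns of $\tilde{V}_S$ and obtaining the analogous bound without the $\poly(n)$ loss. Because the $n$ blocks have disjoint column supports, stacking them produces a submatrix of $J$ of size $\min(kn,n^2)\times n^2$ whose smallest singular value is $\ge\delta\rho/\poly(n)$, so $\sigma_K(J)\ge\delta\rho/\poly(n)$ for $K=\min(kn,n^2)$. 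This establishes the Jacobian rank property with parameters $(K,c,\gamma)=\bigl(\min(kn,n^2),\;\delta/\poly(n),\;\exp(-\Omega(n))\bigr)$.

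Feeding these parameters into Theorem~\ref{thm:jacobian-to-antic} (applied with $y=0$ and with $2nm$ playing the role of its input dimension) yields, for every $h>0$, a bound of the form $\Pr\bigl[\norm{P(\tilde{U},\tilde{V})}<c\rho^2 h/(128\,Mnm\,K)\bigr]\le \gamma+\exp(-K\log(1/h)/4)$. The threshold here is smaller than the lemma's $\delta\rho^2 h/(64Mnk)$ by only polynomial-in-$n$ factors (which can be absorbed into the $\lambda$ parameter of the CAA property as it enters Theorem~\ref{thm:caa-to-sigma}, since using a smaller threshold only strengthens the event being bounded), and in the relevant regime $k\le 4n$ — which covers the range of $k$ invoked by the net argument of Theorem~\ref{thm:caa-to-sigma} whenever $m\le n^2/C$ with $C$ a sufficiently large constant — we have $K\log(1/h)/4\ge kn\log(1/h)/16$, so the exponential term dominates the target by a constant factor. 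The main obstacle is making $\gamma=\exp(-\Omega(n))$ negligible relative to $\exp(-kn\log(1/h)/16)$ for the full range of $h>0$; I would handle this either by observing that Theorem~\ref{thm:caa-to-sigma} only ever invokes the CAA property at values of $h$ bounded below by an inverse polynomial in $m$, or by sharpening the Jacobian step with the Rudelson--Vershynin-style tail $\Pr[\sigma_{\min}(\tilde{V}_S)<\epsilon]\le(Cn\epsilon/\rho)^{n-k+1}$, whose strictly stronger decay as $\epsilon\to 0$ can be balanced against $\log(1/h)$ to close the remaining gap.
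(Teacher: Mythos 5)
Your overall strategy is the same as the paper's — view $P$ as a polynomial map, establish the Jacobian rank property by analyzing the block $D_\alpha\tilde V^\top$ that repeats along the diagonal, and then invoke Theorem~\ref{thm:jacobian-to-antic} — and your identification of the block structure, the Hessian bound, and the threshold bookkeeping are all fine. The gap, which you partially notice but your two proposed fixes do not close, is in how many singular values of the block you demand and what failure probability that buys.

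Concretely: you ask for \emph{all} $k$ singular values of $D_{\alpha_S}\tilde V_S^\top$ to be at least $h\delta\rho$, i.e.\ you control $\sigma_{\min}(\tilde V_S)$. This is essentially a one-codimension event — there is a single bad test vector $\alpha$ with $\|\tilde V_S\alpha\|$ small — and its probability only decays like $\exp(-\Omega(n-k))$ (leave-one-out) or, sharpened à la Rudelson--Vershynin, like $(Cn\epsilon/\rho)^{n-k+1}\approx\exp\bigl(-(n-k+1)\log(1/h)\bigr)$ after setting $\epsilon = h\delta\rho$. Neither of these has a factor of $k$ in the exponent; the latter actually \emph{degrades} as $k\to n$. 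But the lemma demands $\exp\bigl(-\tfrac{1}{16}kn\log(1/h)\bigr)$, and this $k$ factor is not cosmetic: the CAA property with $\beta=1/2$ (which is what the Remark after the lemma plugs in) requires the exponent to be at least $16\min(m,k m^{1/2})\log(1/h) = \Theta(kn)\log(1/h)$ for sub-saturation $k$, and the net argument of Theorem~\ref{thm:caa-to-sigma} invokes the CAA property for $k$ ranging all the way up to $m^{1-\beta}\approx n/\sqrt C$ (and Observation~\ref{obs:dense-vectors} even goes up to $m\sim n^2$), so restricting to $k\le 4n$ as you suggest does not cover the cases that matter. Your first fix (observing $\log(1/h)=\Omega(\log m)$) also does not help, since $\exp(-\Omega(n))$ is still much larger than $\exp\bigl(-\Omega(kn\log m)\bigr)$ the moment $k\ge 1$.

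What the paper does differently, via Lemma~\ref{lem:perturbed-sing-value-basic}, is to \emph{give up half the singular values}: it only asks that $\sigma_{k/2}\bigl(\tilde V_S\diag(\alpha_S)\bigr)\ge h\delta\rho$. The event that this fails is a $k/2$-dimensional event: some $k/2$ columns approximately lie in the span of the other $k/2$, and since those $k/2$ columns carry \emph{independent} Gaussian noise, each costs a factor $h^{\,n-k/2}$. Union-bounding over $\binom{k}{k/2}\le 2^k$ choices of the spanning set gives $2^k\, h^{\frac{k}{2}(n-\frac{k}{2})}\le \exp\bigl(-\tfrac18 kn\log(1/h)\bigr)$ — the $k$ in the exponent comes exactly from the $k/2$ independent close-to-subspace events. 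This trade of rank $nk$ for rank $nk/2$ in the Jacobian costs only a constant factor in the conclusion of Theorem~\ref{thm:jacobian-to-antic}, but it upgrades $\gamma$ from $\exp(-\Omega(n))$ to $\exp\bigl(-\Omega(kn\log(1/h))\bigr)$, which is exactly what is needed. To repair your argument, replace the $\sigma_{\min}(\tilde V_S)$ step with this many-columns-near-a-fixed-subspace argument and set the Jacobian rank target to $nk/2$ rather than $\min(kn,n^2)$.
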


\paragraph{Remark.} To see why this satisfies the CAA property (hypothesis of Theorem~\ref{thm:caa-to-sigma}), note that as long as $m < n^2/C$ for a sufficiently large (absolute) constant $C$, the term $\frac{kn}{16} \ge 16 \min(m, k m^{1/2})$, thus it satisfies the condition with $\beta = 1/2$.

\begin{proof}
Recall that $P$ is a map that has $mn$ variables whose output is an $n^2$ dimensional vector. We will argue (using the $k$ large coordinates of $\alpha$) that its Jacobian has sufficiently many nontrivial eigenvalues with high probability.  
To see this, observe that for a single term $\alpha_i (\tilde{u}_i \otimes \tilde{v}_i)$, the Jacobian (with respect to only $u_i$ variables) is simply an $n^2 \times n$ matrix, structured as follows: in the $j$th column, the $j$th ``block'' of size $n$ is $\alpha_i \tilde{v}_i$, and the rest of the entries are $0$. This holds for all $j$. Thus if $|\alpha_i| \ge \delta$, this matrix has $n$ singular values $\ge \delta \norm{\tilde{v}_i}$.
    
Next, if we take $\sum_i \alpha_i (\tilde{u}_i \otimes \tilde{v}_i)$, as the set of variables is different for every $i$, the overall Jacobian is the concatenation of the matrices described above (which is an $(n^2 \times nm)$ matrix). Thus, suppose we consider indices $I = \{i: |\alpha_i| > \delta\}$ and form the matrix (call it $W$) with columns $\{ \alpha_i \tilde{v}_i \}_{i \in I}$. If we argue that $W$ has $k'$ large singular values, then the structure above will imply that the Jacobian has $nk'$ large singular values. 

Thus, let us focus on $W$. Lemma~\ref{lem:perturbed-sing-value-basic} now  shows that for any $h$, $W$ has at least $k/2$ singular values of magnitude $\ge h \delta \rho$, with probability at least $1 - \exp(-\frac{1}{8} kn \log(1/h))$. Thus, the Jacobian has $nk/2$ singular values of magnitude $\ge h\delta \rho$ (with the same probability). \anote{4/16/2024 added:} Moreover the spectral norm of the Hessian is uniformly upper bounded by $1$ since $\alpha$ is a unit vector. Thus, we can apply Theorem~\ref{thm:jacobian-to-antic}, with parameters $c = h\delta$ and rank $nk/2$. We obtain, for any $h>0$,
\[  \Pr\left[ \norm{P(\tilde{x})} < \frac{\rho^2 h^2 \delta}{64Mnk} \right] \le 2\exp\left( -\frac{1}{8} kn \log (1/h) \right). \]
Replacing $h^2$ with $h$, the lemma follows.
\end{proof}

\paragraph{Higher order Khatri-Rao products.}  The Jacobian property used to show Lemma~\ref{lem:caa-for-kr2} can be extended to higher order Khatri-Rao products. We outline the argument for third order tensors: suppose $\{ \tilde{u}_i, \tilde{v}_i, \tilde{w}_i \}$ are $\rho$-perturbed vectors, and define $P = \sum_{i} \alpha_i (\tilde{u}_i \otimes \tilde{v}_i \otimes \tilde{w}_i)$, for a coefficient vector $\alpha$ that has $k$ coordinates of magnitude $\ge \delta$. Now, the Jacobian (with respect to the $\tilde{u}$ variables) will have as a sub-matrix, the matrix $W$ whose columns are $\{\alpha_i (\tilde{v}_i \otimes \tilde{w}_i) \}$. Now, we need to show that $W$ has at least $k/2$ large singular values, for $k$ up to $\Omega(n^2)$. Instead of a direct argument, we can now use our result for Khatri-Rao products of two matrices! 

The natural idea is to use our result for Khatri-Rao products (i.e., Theorem~\ref{thm:application:kr2}) directly. While this shows that all $k$ singular values of $W$ are large enough, the success probability we obtain is not high enough. We would ideally want a success probability close to $1- \exp(-kn)$ (and not ``merely'' $1-\exp(-n)$ as the Theorem gives us). The key observation is that such an improved bound is possible if we start with the weaker goal of obtaining $k/2$ singular values of large magnitude. 
Indeed, the following simple lemma shows that for a matrix $W$ with $k$ columns to have $k/2$ large singular values, it suffices to show that $\norm{W\alpha}$ is large for all ``well spread'' vectors $\alpha$. Specifically, it shows that if $W$ had fewer than $k/2$ large singular values, the space spanned by the small singular values must have a well-spread vector.

\begin{lemma}\label{lem:spread-vector}
Suppose $S \subset \R^n$ is a subspace of dimension $k$. Then there exists a unit vector $u \in S$ that has at least $k$ entries of magnitude $\ge \frac{1}{k \sqrt{n}}$. 
\end{lemma}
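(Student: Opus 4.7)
The plan is to construct an explicit $u \in S$ with $k$ coordinates of magnitude exactly $1$ and all other coordinates bounded by $k$, so that normalizing $u$ gives the claimed unit vector. The key tool is an Auerbach / maximum-volume basis, exactly in the spirit of Lemma~\ref{lem:sigma-to-col-subset}.

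First I would fix a matrix $V \in \R^{n \times k}$ whose columns form any basis of $S$, and pick an index set $I = \{i_1, \dots, i_k\} \subseteq [n]$ that maximizes $|\det(V_I)|$ over all $k \times k$ row submatrices of $V$; this maximum is strictly positive since $V$ has rank $k$. Then, for each $j \in [k]$, define
\[
u^{(j)} := V (V_I^{-1})_{\cdot, j} \in S,
\]
i.e.\ the $j$-th column of the $n \times k$ matrix $V V_I^{-1}$. By construction $(u^{(j)})_{i_\ell} = (V_I V_I^{-1})_{\ell j} = \delta_{j,\ell}$, so $u^{(j)}$ has a $1$ in coordinate $i_j$ and a $0$ in every other coordinate of $I$.

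Next I would verify that $\|u^{(j)}\|_\infty \le 1$. For a coordinate $p \notin I$, let $V_I^{(p \to j)}$ be the matrix obtained from $V_I$ by replacing its $j$-th row (which is row $i_j$ of $V$) by row $p$ of $V$. Cramer's rule applied to expressing the $p$-th row of $V$ as a combination of the rows of $V_I$ gives
\[
(u^{(j)})_p \;=\; (V V_I^{-1})_{pj} \;=\; \frac{\det\bigl(V_I^{(p \to j)}\bigr)}{\det(V_I)},
\]
and the maximality of $|\det(V_I)|$ forces $|(u^{(j)})_p| \le 1$. Combined with the $\delta$-behaviour on $I$, this yields $\|u^{(j)}\|_\infty \le 1$ for every $j$.

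Finally, set $u := \sum_{j=1}^k u^{(j)} \in S$. Then $u_{i_j} = 1$ for every $j \in [k]$, and $|u_p| \le k$ for every $p \in [n]$, which gives $\|u\|_2 \le k\sqrt{n}$. Hence $\tilde{u} := u/\|u\|_2$ is a unit vector in $S$ with $|\tilde{u}_{i_j}| \ge 1/(k\sqrt{n})$ for all $j \in [k]$, proving the lemma. There is no real obstacle here; the only nontrivial step is the Cramer's-rule bound on $\|u^{(j)}\|_\infty$, and this is exactly the standard Auerbach / maximum-volume argument already used earlier in the paper.
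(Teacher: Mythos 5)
Your argument is correct, and it hinges on exactly the same combinatorial object (a maximum-volume $k\times k$ row submatrix, i.e.\ an Auerbach-type basis) as the paper. The difference is in the packaging. The paper first passes through Lemma~\ref{lem:sigma-to-col-subset}: take an orthonormal basis $U$ of $S$, apply that lemma to $U^T$ to extract a row set $J$ with $\sigma_k(U_{|J}) \ge 1/\sqrt{nk}$, write the all-$(1/\sqrt{k})$ vector as $U_{|J}\alpha$ with $\|\alpha\|\le\sqrt{nk}$, and output $U\alpha/\|\alpha\|$. You instead skip the singular-value intermediary entirely: you extract the max-volume row block $V_I$ directly, use Cramer's rule to show that every entry of $VV_I^{-1}$ has magnitude at most $1$, and sum the $k$ columns to get a vector with entries exactly $1$ on $I$ and at most $k$ elsewhere. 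Your version is marginally more elementary (no singular value reasoning) and also works for an \emph{arbitrary} basis of $S$ rather than requiring orthonormality, since the Cramer's-rule bound on $\|VV_I^{-1}\|_\infty$ is normalization-free; the paper's version, on the other hand, reuses an ingredient it already developed and needed elsewhere (Lemma~\ref{lem:sigma-to-col-subset}), so it is shorter in context. Both land on the same $1/(k\sqrt{n})$ bound.
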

[The proof is straightforward, and is deferred to Appendix~\ref{sec:lem:spread-vector}.]

Next, for well-spread vectors, we can use Observation~\ref{obs:dense-vectors} to conclude that $\norm{W\alpha}$ is large with very high probability (around $1- \exp(-kn)$, as desired). Thus, unless $W$ has $k/2$ large singular values, we have a contradiction. Then, we can complete the proof as before, except now we apply Theorem~\ref{thm:caa-to-sigma} with $\beta = 1/3$. We omit the details as they are identical to Theorem~\ref{thm:application:kr2}.

\paragraph{Other applications.}  In Section~\ref{sec:kronecker} we saw other natural candidates for $\calM$ including, most significantly, matrices obtained by applying a linear operator to a Kronecker product of matrices on some base variables. It is natural to ask if we can prove these results using the Jacobian based techniques we saw above. It turns out that this is possible for second order Kronecker products (here the CAA property corresponds to amplification for test ``matrices'' $\alpha$ that have large rank). But the method is not strong enough to handle higher order Kronecker products. We omit the details, since we can handle the general case using techniques in Section~\ref{sec:kronecker}.

\section{Application: Certifying Quantum Entanglement and Linear Sections of Varieties}
\label{sec:entangled}

We consider the setting in the work of Johnston, Lovitz and Vijayaraghavan~\cite{JLV2023}, where we are given a conic algebraic variety $\X$ and a linear subspace $\U$. The subspace $\U$ is specified by a basis 
while the variety $\X$ is specified by a set of polynomials that cut it out. Conic varieties (or equivalently, projective varieties) are closed under scalar multiplication and are cut out by homogeneous polynomials, which can further be chosen to all have the same degree $d$. In \cite{JLV2023}, they give a polynomial time algorithm that  certifies that the intersection $\U \cap \X$ is trivial i.e., $\U \cap \X=
\{0\}$ for a {\em generic} subspace $\U$ up to a certain dimension $m$. In this section, we prove a robust analogue of this statement in the smoothed analysis setting.  

In the smoothed setting, the subspace $\tcU$ is spanned by $\rho$-perturbed vectors $\tilde{u}_1, \dots, \tilde{u}_m$, with $\forall i \in [m], ~\tilde{u}_i=u_i + z_i$ where $\norm{u_i}\le 1$ and $z_i \sim_{i.i.d} N(0,\rho^2)^n$. The subspace $\tcU$ is specified in terms of any orthonormal basis $\hat{u}_1, \dots, \hat{u}_m$. The variety is specified by a set of degree-$d$ homogenous polynomials that cut out the variety $\X$.  Our goal is to certify that every element of $v \in \X$ (with $\norm{v}=1$) is far from the subspace $\tcU$.

\begin{theorem}
\label{thm:certifyingsections}
Let $\X \subseteq \field^n$ be an irreducible variety cut out by $p= \delta \binom{n+d-1}{d}$ linearly independent homogeneous degree-$d$ polynomials $f_1,\dots, f_p \in \field[x_1,\dots, x_n]_d$, for constants $d \geq 2$ and $\delta \in (0,1)$. 
There exists a constant $c_d>0$ 
(that only depends on $d$) such that for a randomly $\rho$-perturbed subspace $\tcU \subseteq \R^{n}$ of dimension $m \le c_d \cdot \delta n$ as described above, 
we have that with probability $1-\exp(-\Omega(n))$, the algorithm in Figure~\ref{fig:alg} on input $f_1, \dots, f_p$ and a basis $u_1, \dots, u_m$ for $\tcU$ certifies in polynomial time (i.e.,  $(n/\rho)^{O(d)}$)  that 
\begin{equation}
    \forall v \in \X \text{ with } \norm{v}=1, ~ \textrm{dist}(v, \tcU) \coloneqq \inf_{u \in \tcU} \norm{u-v}_2 \ge \frac{\rho^d}{n^{O(d)}}. 
\end{equation}
\end{theorem}

This theorem gives a robust analog of the genericity statement in \cite{JLV2023} that certifies trivial intersection with a generic subspace (Theorem 2 with $s=0$). We remark that \cite{JLV2023} also considers a version of the problem where there are also some generic elements of $\X$ planted in the subspace $\tcU$. While it is natural to think of generic elements of $\X$ (using the induced Zariski topology on $\X$), it is not clear how to define an appropriate smoothed analysis model that captures the planted setting. This is an interesting research direction that is beyond the scope of this work.  

We use the algorithm of Johnston, Lovitz and Vijayaraghavan~\cite{JLV2023}, which is based on Hilbert's projective Nullstellensatz certificates. Recall that $\X$ is a conic variety that is cut out by a finite set of homogeneous degree-$d$ polynomials $f_1,\dots, f_p \in \field[x_1,\dots,x_n]_d$. Viewing $f_1, \dots, f_p \in (\R^n)^{\otimes d}$ as vectors in the dual space, we consider the map
\begin{align}
\label{eq:phi}
\Phi_{\X}^d: (\R^n)^{\otimes d} &\rightarrow \field^{p}\\
v &\mapsto (f_{1}(\Sym_d v),\dots,f_{p}(\Sym_d v))^{\top}.
\end{align}
\vnote{Suggested sentence addition: Note that \(\Phi^d_\mathcal{X}\) is a measure of how close a given point \(v\) is to the variety \(\mathcal{X}\). 12/5: done}
Note that for rank-1 \(v\), i.e. \(v = x^{\otimes d}\) for some \(x \in \mathbb{R}^n\), \(\Phi^d_{\mathcal{X}} = \mathbf{0}\) exactly when \(x\) lies in the variety \(\mathcal{X}\).   
By picking an orthonormal basis for the vector space spanned by the dual vectors $f_1, \dots, f_p$, we can assume without loss of generality that the operator $\Phi_\X^d$ is an orthogonal projection matrix with  rank $p$. Moreover we can assume that the given basis $u_1, \dots, u_m$ for $\tcU$ is an orthonormal basis. 

\anote{Maybe we should change this environment if we don't want this figure to look like \cite{JLV2023} :). Anyone has an algorithm environment you like?}

\begin{figure}[htbp]
\begin{center}
\fbox{\parbox{0.98\textwidth}{

\begin{center}\textbf{\Large{Algorithm certifying that $\tcU$ is far from $\X$.}}\end{center}
\vspace{10pt}

\textbf{Input:} A basis $\{\hat{u}_1,\dots, \hat{u}_{m}\}$ for a linear subspace $\tcU \subseteq \R^n$, and a collection of homogeneous degree-$d$ polynomials $f_1,\dots, f_p$ that cut out a conic variety $\X\subseteq \R^n$.
\begin{enumerate}
\item Compute the least singular value of the following matrix
\begin{equation}\label{eq:lininalg}
\eta=\sigma_{\binom{m+d-1}{d}}\Big(\Phi_{\X}^d(\hat{u}_{i_1} \otimes \dots \otimes \hat{u}_{i_d}) : 1 \le i_1 \le \dots \le i_d \le m \Big).
\end{equation}
\item If $\eta>0$,  output: ``$\tcU$ is at least $\eta$-far from $\X$.''
\item Otherwise, output: ``Don't know.''
%
%
\end{enumerate}
}
}
\end{center}
\label{fig:alg}
\end{figure}

\begin{proof}

    From standard random matrix theory (or see e.g., Claim~\ref{claim:allblocks}) with probability $1-\exp(-\Omega(n))$ we have for some constant $c_1>0$, $\sigma_{m}(\tilde{U}) \ge c\rho/(n^{c_1})$ and $\sigma_1(\tilde{U}) \le O(\sqrt{n}(1+\rho))$. Conditioned on the above event, every unit vector $u \in \tcU$ can be expressed as 
    \begin{align}\label{eq:cert:temp1}
        u= \sum_{i=1}^m \alpha_i \tilde{u}_i, \text{ for some } \alpha \in \R^m \text{ where } \frac{\Omega(1+\rho)}{\sqrt{n}} \le \norm{\alpha}_2 \le O\big(n^{c_1}/\rho\big).
    \end{align} 
    

    Moreover, from Theorem~\ref{thm:kronecker:sym}, with probability at least $1-\exp(-\Omega(n))$ we have for some $c_d>0$ (that is constant for constant $d$) that
    \begin{equation}\label{eq:cert:temp2}
        \sigma_{\binom{m+d-1}{d}}\Big(\Phi_{\X}^d ~ \widetilde{U}^{\veep d}\Big) \ge \frac{c_d \rho^d}{n^{O(d)}}. 
    \end{equation}
    \vnote{Suggestion: is there a nice way to write this with the \(\veep\) notation?  Would make it more consistent with Theorem \ref{thm:kronecker:sym}.  Or include a sentence that reminds the reader that the \(\le\) conditions make this a symmetric lift and not an asymmetric Kronecker product. } \anote{12/5: done.}
    We condition on all the above high probability events (about the least singular values) that hold with probability $1-\exp(-\Omega(n))$. 
    
    Consider any vector $v \in \X \subset \R^n$ such that $\norm{v}=1$. Let $u^* \in \tcU$ be the closest vector to $v$ in the subspace $\tcU$. On the one hand, from \eqref{eq:cert:temp1} applied with $u^*$ 
    \begin{align*}
        \Phi_{\X}^d((u^*)^{\otimes d})&= \sum_{i_1, i_2\dots,i_d=1}^m (\alpha_{i_1}\dots \alpha_{i_d})\cdot \Phi_{\X}^d \big( \tilde{u}_{i_1} \otimes \dots \otimes \tilde{u}_{i_d}\big)  \\
        &=  \sum_{1 \le i_1 \le i_2 \le  \dots,i_d\le m} \beta_{i_1, i_2, \dots, i_d}\cdot \Phi_{\X}^d \big( \tilde{u}_{i_1} \otimes \dots \otimes \tilde{u}_{i_d}\big),
    \end{align*}
where $\beta_{i_1, \dots, i_d}=c_{i_1,\dots,i_d} \alpha_{i_1}\dots \alpha_{i_d}$ and $c_{i_1,\dots,i_d}$ is a constant coefficient in $[1,d!]$.\footnote{More precisely, \(c_{i_1, \dots, i_d}\) is the number of entries in the tensor/dual form that correspond to the relevant monomial in the polynomial.  Thus, $c_{i_1,\dots,i_d}=t_1! t_2! \dots t_\ell!$ where $\ell$ is the number of distinct indices among $i_1, \dots, i_d$, and $t_1, \dots, t_\ell$ are the frequencies of these $\ell$ distinct indices.} \vnote{Here, I understand \(c_{i_1, \dots, i_d}\) to be the number of entries in the tensor/dual form that correspond to the same monomial in the polynomial.  Is that correct?  If so, might be helpful to say that for intuition. 12/5: done, see footnote.} Also $1\le \norm{\beta} \le \sqrt{d!} \cdot \norm{\alpha}_2$. Hence, from \eqref{eq:cert:temp2}  
    \begin{align}
 \Big\|\Phi_{\X}^d((u^*)^{\otimes d})\Big\|_2&= \Big\| \sum_{1 \le i_1 \le i_2 \le  \dots,i_d\le m} \beta_{i_1, i_2, \dots, i_d} \Phi_{\X}^d \big( \tilde{u}_{i_1} \otimes \dots \otimes \tilde{u}_{i_d}\big)\Big\|_2 \nonumber\\
 &\ge \frac{c_d \rho^d}{n^{O(d)}}\cdot \norm{\beta}_2 \ge \frac{c_d \rho^d}{n^{O(d)}}. \nonumber
 \end{align}
On the other hand, we have $\Phi_{\X}^d (v^{\otimes d})=0$ since $v \in \X$. As $\norm{\Phi_{\X}^d}\le 1$ by assumption, we get the following sequence of inequalities
\begin{align*}
\norm{v - u^*}\ge \frac{1}{d} \Bignorm{v^{\otimes d}-(u^*)^{\otimes d}} \ge \frac{1}{d} \Bignorm{\Phi_{\X}^d v^{\otimes d}- \Phi_{\X}^d (u^*)^{\otimes d}} \ge \frac{c_d \rho^d}{n^{O(d)}}.
\end{align*}
This proves the theorem. 
\end{proof}

\subsection{Certifying quantum entanglement}

We can instantiate the above theorem with specific choices of the variety $\X$ to get algorithms that certify that a smoothed subspace is robustly entangled i.e., it is far from any non-entangled state, for different notions of entanglement. In what follows, we restrict to the real domain for all of our statements and proofs. However, quantum states are defined over the complex domain. While we expect the ideas to extend in a natural way to the complex domain as well, we do not handle this setting in this paper. \anote{4/19/2024: edited the line a bit. can make it even weaker.} 

We start with the bipartite setting of dimension $n_1 \times n_2$, which is captured by matrices $\R^{n_1 \times n_2}$. The set of {\em separable} states is captured by the variety of rank-$1$ matrices:
\begin{equation} \label{eq:X1}
\X_1 = \set{M \in \R^{n_1 \times n_2}: ~ \rank(M) \le 1}. 
\end{equation}
A pure state $v$ that is non-separable i.e., $v \notin \X_1$ is said to be {\em entangled}. For any $\varepsilon>0$ we say that a subspace $\U \subset \R^{n_1 \times n_2}$ is said to be $\varepsilon$-robustly entangled if every unit vector in $\X_1$ is at least $\epsilon$ far from the subspace $\U$ i.e., 
\begin{equation} \text{($\varepsilon$-robust entanglement)}~\qquad~\label{eq:robust:X1}
\text{dist}(\U, \X_1) = \inf_{\substack{v \in \X_1: \norm{v}_2=1,\\ u \in \U}} \norm{v -u}_2 \ge \varepsilon. 
\end{equation}

More generally, the determinantal variety $\X_r$ corresponds to states that have {\em Schmidt rank} at most $r$ where
\begin{align}\label{eq:Xr}
&\X_r=\{ M \in \R^{n_1 \times n_2}: ~ \rank(M) \le r\}.\\
\text{($\varepsilon$-robust $r$-entanglement)}~\qquad~\label{eq:robust:Xr}&
 \text{dist}(\U, \X_r) = \inf_{\substack{v \in \X_r: \norm{v}_2=1,\\ u \in \U}} \norm{v -u}_2 \ge \varepsilon.
\end{align}
captures the $\varepsilon$-robustly $r$-entanglement of a subspace, for an $\varepsilon>0$. Such subspaces are only close to highly entangled states. Entangled subspaces corresponds to the setting when $r=1$. The variety $\X_r$ is cut out by $p=\binom{n_1}{r+1} \binom{n_2}{r+1}$ linearly independent homogenous polynomials of degree $r+1$;\footnote{corresponding to all the determinants of the $(r+1) \times (r+1)$ submatrices being $0$} see \cite{JLV2023}. The following corollary then follows immediately from Theorem~\ref{thm:certifyingsections}. 

\begin{corollary}
     \label{cor:XR}
 Let $n_1, n_2$ be positive integers, let $r < \min \{n_1,n_2\}$ be a positive integer. 
There exists constants $c_r, c'_r>0$ 
(that only depends on $r$), an absolute constant $c>0$ and an algorithm that given a randomly $\rho$-perturbed subspace $\tcU \subseteq \R^{n_1 n_2}$ of dimension $m \le c_r  \cdot n_1 n_2$, 
runs in $(n_1 n_2/\rho)^{O(r)}$ time and certifies with probability at least $1-\exp(-\Omega(n_1n_2))$ that $\tcU$ is $\eta$-robustly $r$-entangled for $\eta=c'_r \rho^r/(n_1 n_2)^{c r}$ i.e., 
\begin{equation}
    \text{dist}(\tcU, \X_r) \ge \frac{c'_r \rho^{r+1}}{(n_1 n_2) ^{cr}}. 
    \end{equation}

\end{corollary}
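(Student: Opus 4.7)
The plan is to apply Theorem~\ref{thm:certifyingsections} directly to the determinantal variety $\X_r$, viewed as sitting in the ambient space $\R^{n_1 n_2}$. I would set $n = n_1 n_2$ and $d = r+1$, and then check that the hypotheses of Theorem~\ref{thm:certifyingsections} are satisfied: (i) $\X_r$ is an irreducible conic variety, which is a classical fact about determinantal varieties, and (ii) it is cut out by a set of homogeneous polynomials of a single degree $d$ whose count is a constant fraction of $\binom{n+d-1}{d}$.

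For the second point, I would use the $p = \binom{n_1}{r+1}\binom{n_2}{r+1}$ minors of degree $r+1$ that define $\X_r$. These are linearly independent because each $(r+1)\times(r+1)$ minor contains monomials (diagonal products of the corresponding submatrix entries) that appear in no other minor. The proportionality condition then reduces to a routine asymptotic estimate: for any fixed $r$ and $n_1,n_2 \ge r+1$,
\[
\frac{p}{\binom{n_1 n_2 + r}{r+1}} \;=\; \frac{\binom{n_1}{r+1}\binom{n_2}{r+1}}{\binom{n_1 n_2 + r}{r+1}} \;=\; \Theta_r(1),
\]
so we may take $\delta = \delta(r) > 0$, a positive constant depending only on $r$ (the dominant term is $1/(r+1)!$).

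With these two conditions verified, Theorem~\ref{thm:certifyingsections} directly gives a constant $c_d = c_{r+1}$ such that, for $m \le c_d \cdot \delta(r) \cdot n_1 n_2$, the algorithm of Figure~\ref{fig:alg} certifies in time $(n/\rho)^{O(d)} = (n_1 n_2/\rho)^{O(r)}$, with probability $1 - \exp(-\Omega(n_1 n_2))$, that every unit vector in $\X_r$ is at distance at least $\rho^{d}/n^{O(d)} = \rho^{r+1}/(n_1 n_2)^{O(r)}$ from $\tcU$. Absorbing $c_d \cdot \delta(r)$ into a single constant $c_r > 0$ and setting $c'_r$ and the absolute constant $c$ from the exponent in the distance bound yields precisely the statement of the corollary.

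The proof is essentially bookkeeping on top of Theorem~\ref{thm:certifyingsections}, and there is no serious obstacle. The only point that requires care is the dimensional check that $p$ is a constant fraction of $\binom{n+d-1}{d}$ uniformly in $n_1$ and $n_2$ (so that the allowed subspace dimension $m$ scales like $n_1 n_2$ rather than degenerating). This is handled by the asymptotic displayed above together with the trivial observation that if $\min(n_1,n_2) < r+1$ then $\X_r = \R^{n_1 \times n_2}$ and the statement is vacuous.
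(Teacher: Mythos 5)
Your proof is correct and follows exactly the route the paper intends: instantiate Theorem~\ref{thm:certifyingsections} with $n=n_1n_2$, $d=r+1$, and the variety $\X_r$, using the fact (stated in the paper just above the corollary) that $\X_r$ is irreducible and cut out by $p=\binom{n_1}{r+1}\binom{n_2}{r+1}$ linearly independent degree-$(r+1)$ minors, then check $p=\Theta_r(1)\cdot\binom{n_1n_2+r}{r+1}$ so that the rank hypothesis is met with $\delta=\delta(r)>0$. The dimension-counting step you flag as the only point requiring care is indeed the only nontrivial check, and your asymptotic estimate handles it correctly.
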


Note that when $r=1$, we get Corollary~\ref{cor:X1} about robustly certifying entanglement of smoothed subspaces.
This is the robust generalization of Corollary 28 in \cite{JLV2023} which certifies that $\tcU \cap \X_r = \{0\}$ for a generic subspace $\tcU$  (this can be seen as a special case when $\eta  \to 0$). Our result gives a way of certifying a lower bound on the distance of a subspace from the set of rank-$1$ matrices for smoothed subspaces. This certification problem is closely related to the best separable state problem~\cite{HM10} which also relates this question to several other important questions in quantum information theory and polynomial optimization.  

\paragraph{Comparison to Barak, Kothari and Steurer~\cite{barak2017quantum}}
Barak, Kothari and Steurer~\cite{barak2017quantum} give an algorithm for an $\epsilon$-promise version of the entanglement certification problem, where given an arbitrary subspace $\U \in \R^{n \times n}$, the goal is distinguish between
\begin{itemize}
\item {\bf YES: } There is a unit vector $v \in \X_r$ that also lies in $\U$.
\item {\bf NO: } For all unit vectors $v \in \X_r$, $\norm{v-u} \ge \epsilon$ for all $u \in \U$.  
\end{itemize}
The algorithm in \cite{barak2017quantum} gives a $2^{O(\sqrt{n})/\epsilon}$ time algorithm to solve the above promise problem. Harrow and Montanaro ~\cite{HM10} presented evidence through conditional hardness results that polynomial time algorithms may not exist even for constant $\epsilon>0$  in the worst-case, i.e., for arbitrary subspaces. In contrast, our algorithm gives polynomial time algorithms for smoothed subspaces up to dimension $c \cdot n^2$ (for some constant $c>0$) even for $\epsilon$ that is inverse polynomially small.

We can also use Theorem~\ref{thm:certifyingsections} with other choices of $\X$ to get robust analogs of the other certification results in \cite{JLV2023}. These include multi-partite entanglement notions like complete entanglement and genuine entanglement which have been studied in quantum information. They follow by applying Theorem~\ref{thm:certifyingsections} along with the corresponding claims in \cite{JLV2023}. We state one such result for complete entanglement. Denote the set of separable order $d$-tensors 
\begin{equation} 
\X_{sep} = \{v_1 \otimes v_2 \otimes \dots \otimes v_d: v_1 \in \R^{n_1}, \dots v_d \in \R^{n_d} \}.
\end{equation}
A $\varepsilon$-robust completely entangled subspace $\U$ is one which is $\epsilon$-far from every unit vector in $\X_{sep}$. Again by using the fact that $\X_{sep}$ is cut out by $p=\binom{n_1 n_2 \dots n_d+1}{2}-\binom{n_1+1}{2} \cdot \dots \cdot \binom{n_d+1}{2}$ linearly independent homogenous polynomials of degree $2$ (see Section 2.3 of \cite{JLV2023}) we get the following corollary.

\begin{corollary}
     \label{cor:Xsep}
 Let $n_1, n_2, \dots, n_d$ be positive integers. 
There exists constants $c_d, c'_d>0$ 
(that only depends on $d$), an absolute constant $c>0$ and an algorithm that given a randomly $\rho$-perturbed subspace $\tcU \subseteq \R^{n_1 n_2 \dots n_d}$ of dimension $m \le c_d  \cdot n_1 n_2 \dots n_d$, 
runs in $\poly(n_1 n_2 \dots n_d/\rho)$ time and certifies with probability at least $1-\exp(-\Omega(n_1 \dots n_d))$ that $\tcU$ is $\eta$-robustly completely entangled for $\eta=c'_d \rho^2/(n_1 n_2\cdot n_d)^{c}$ i.e., 
\begin{equation}
    \text{dist}(\tcU, \X_r) \ge \frac{c'_d \rho^{2}}{(n_1 n_2 \cdot n_d) ^{c}}. 
    \end{equation}

\end{corollary}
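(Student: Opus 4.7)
The plan is to apply Theorem~\ref{thm:certifyingsections} directly, with the variety $\X = \X_{sep}$, ambient dimension $N \coloneqq n_1 n_2 \cdots n_d$, and degree parameter equal to $2$ (since $\X_{sep}$ is cut out by quadrics). Three hypotheses need verification: (i) $\X_{sep}$ is an irreducible variety, (ii) the number of defining quadrics is at least $\delta \binom{N+1}{2}$ for some constant $\delta = \delta(d) > 0$, and (iii) the dimension bound $m \le c_d N$ matches the one produced by the theorem.

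For (i), $\X_{sep}$ is the affine cone over the Segre variety: it is the closure of the image of $\R^{n_1} \times \cdots \times \R^{n_d}$ under the Segre map $(v_1,\ldots,v_d) \mapsto v_1 \otimes \cdots \otimes v_d$. As the closure of the morphic image of an irreducible variety, $\X_{sep}$ is irreducible. For (ii), as recorded in the excerpt, $\X_{sep}$ is cut out by $p = \binom{N+1}{2} - \prod_{i=1}^d \binom{n_i+1}{2}$ linearly independent quadrics. Assuming $n_i \ge 2$ for all $i$ (otherwise a factor collapses and the statement reduces to one of smaller partite number), a direct computation gives
\[
\frac{\prod_{i=1}^d \binom{n_i+1}{2}}{\binom{N+1}{2}} \;=\; \frac{\prod_{i=1}^d (n_i+1)}{2^{d-1}(N+1)} \;=\; \frac{N}{N+1} \cdot \frac{1}{2^{d-1}} \prod_{i=1}^d \left(1 + \tfrac{1}{n_i}\right).
\]
For fixed $d$, the right-hand side is bounded away from $1$: in the limit where all $n_i \to \infty$ it tends to $1/2^{d-1} \le 1/2$, while for small $n_i$ one can verify directly (e.g.\ the worst case $n_1 = \cdots = n_d = 2$ gives value $(3/4)^d < 1$). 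Hence there exists $\delta(d) > 0$ depending only on $d$ with $p \ge \delta(d)\binom{N+1}{2}$.

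With these hypotheses in place, Theorem~\ref{thm:certifyingsections} supplies a constant $c_d \coloneqq c_2 \cdot \delta(d) > 0$ such that whenever $m \le c_d N$, the algorithm of Figure~\ref{fig:alg} certifies with probability at least $1 - \exp(-\Omega(N))$ that
\[
\inf_{\substack{v \in \X_{sep} \\ \|v\|=1}} \mathrm{dist}(v, \tcU) \;\ge\; \frac{\rho^2}{N^{O(1)}},
\]
which, after rewriting $N = n_1 n_2 \cdots n_d$, is precisely the bound claimed in the corollary. The running time follows from Theorem~\ref{thm:certifyingsections}'s $(n/\rho)^{O(d)}$ bound applied with $n = N$ and degree $d_{\textup{thm}} = 2$, giving $\poly(N/\rho)$. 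Because the argument reduces to invoking an already-established theorem, there is no serious obstacle; the only step requiring care is the constant bookkeeping in (ii), and in particular confirming that the lower bound on $p/\binom{N+1}{2}$ survives the edge case of small $n_i$. The heavy lifting was done inside Theorem~\ref{thm:certifyingsections} via the higher-order lift machinery of Section~\ref{sec:kronecker}.
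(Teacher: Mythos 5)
Your proposal is correct and takes essentially the same route as the paper: invoke Theorem~\ref{thm:certifyingsections} with the variety $\X_{sep}$, using the count $p = \binom{N+1}{2} - \prod_{i=1}^d\binom{n_i+1}{2}$ of defining quadrics taken from \cite{JLV2023}. The paper states the corollary as an immediate consequence and leaves the check that $p \ge \delta(d)\binom{N+1}{2}$ for some $\delta(d)>0$ to the reader; your explicit bookkeeping supplies it. One small correction in that step: the ratio at the extremal point $n_1=\cdots=n_d=2$ is $\frac{2(3/2)^d}{2^d+1}$ (which equals $9/10$ when $d=2$), not $(3/4)^d$ as written — monotonicity of $\frac{\prod_i(n_i+1)}{2^{d-1}(N+1)}$ in each $n_i$ does place the maximum there, and that maximum is indeed strictly below $1$ for every $d \ge 2$, so the conclusion stands.
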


\section{Application: Decomposing of Sums of Powers of Polynomials}\label{sec:power-sums}\label{sec:bafna}

\subsection{Overview of Bafna, Hsieh, Kothari and Xu}

One application of our techniques is to extend the results of Bafna, Hsieh, Kothari, and Xu \cite{BHKX2022} (which in turn build on \cite{GKS2020}) to the smoothed analysis setting.  In \cite{BHKX2022} they consider the problem of \emph{decomposing power-sum polynomials}.  In the most fundamental setting \anote{12/5: basic setting?} they consider \(n\)-variate polynomials of the form 
\[\widehat{p}(\vx) = \sum_{t \le m} a_t(\vx)^3 + e(\vx),\]
where \(\vx = [x_1, \dots, x_n]\), each \(a_t(\vx)\) is a homogeneous quadratic polynomial, and \(e(\vx)\) is a polynomial of small norm. \cite{BHKX2022} also consider a more general setting where \(\widehat{p}(\vx) = \sum_{t \le m} a_t(\vx)^{3D} + e(\vx)\), where each \(a_t(\vx)\) is a homogeneous polynomial of degree \(K\), that is then taken to the \(3D\)th power for some integer \(D \ge 1\).  In this work we consider smoothed analysis for the setting when \(K = 2, D = 1\), as this is the simplest setting that captures many of their algorithmic ideas.  We do not handle the setting of general \(K, D\) in this work.  

The goal is to recover the underlying \(a_i(\vx)\)s from \(\widehat{p}(\vx)\).  \cite{BHKX2022} give an algorithm that is able to recover these components when there are up to \(m \sim \widetilde{O}(n)\) components, while withstanding noise of inverse polynomial magnitude, when each of the components \(a_i(\vx)\) is drawn \emph{randomly} from a mean-0 distribution.  

A natural setting for this problem is one in which each \(a_i(\vx)\) is \emph{perturbed}, rather than fully random.  In this setting, \cite{BHKX2022} are only able to show that their algorithm can withstand noise of inverse exponential magnitude.  Our contribution is to provide a new analysis of the random matrices that arise in their algorithm, and thus give a smoothed analysis guarantee for the algorithm of \cite{BHKX2022} that is robust to noise of inverse polynomial magnitude.  

In this section, we provide an overview of their algorithm and highlight the three main matrices that arise.  First, we rephrase the question about polynomials as a question about tensors.  That is, we identify each homogeneous quadratic polynomial \(a_t(\vx)\) with a symmetric coefficient matrix \(A_t\) such that \(a_t(\vx) = \vx^\top A_t \vx\).  Then, one way to represent \(p(\vx)\) is as a symmetric order-6 coefficient tensor $P_{\mathrm{sym}}$ defined by 
\enote{Is this a definition of $P_{\mathrm{sym}}$.} \vnote{12/5: resolved.}
\[\mathrm{vec}(P_{\mathrm{sym}}) = \mathrm{Sym}_6 \Big( \mathrm{vec}\Big(\sum_{t \le m} A_t^{\otimes 3}\Big)\Big) ,\]
where \(\mathrm{Sym}_6\) is a linear operator that performs a 6-way symmetrization operation.\footnote{In terms of polynomials, \(\mathrm{Sym}_6\) essentially combines terms that are the same due to commutativity, i.e. \(x_1 x_2 x_3 x_4 x_5 x_6 = x_2 x_3 x_4 x_5 x_6 x_1\).  Note that, even though the underlying \(A_t\)s are (2-way) symmetric matrices, and each \(A_t^{\otimes 3}\) is taking a 3-way symmetric product, \(A_t^{\otimes 3}\) is \emph{not} necessarily 6-way symmetric.  One way to think of this is that the 2-way symmetry of the \(A_t\)s allows the variables of \(x_1 x_2 \cdot x_3 x_4 \cdot x_5 x_6\) to commute with their pairwise partners, and the 3-way symmetry of the tensor product allows the pairs to commute with each other.  However, this does not allow \(x_2\) and \(x_3\) to individually commute with each other, for example. Thus, the \(\mathrm{Sym}_6\) operator can change the structure of \(P\) significantly. }
Since we are given the input in the form of a polynomial, we only have access to this symmetrized form of the tensor.  

The main observation that the \cite{BHKX2022} algorithm is built on is that, if we could recover the asymmetric version of the tensor 
\[P_{\mathrm{asym}} = \sum_{t \le m} A_t^{\otimes 3},\] 
then we could use standard tensor decomposition techniques to recover the underlying \(A_t\)s.  In general, recovering an asymmetric tensor from a symmetric tensor is impossible, as there is a whole subspace of asymmetric tensors \anote{12/5: "a whole subspace of asymmetric tensors" instead of "infinite number of asymmetric tensors"?} \vnote{12/5: resolved.}
that could be mapped to the same symmetric tensor.  However, if the \(A_t^{\otimes 3}\) all belonged to a known generic low-dimensional subspace, then the symmetrization operator would actually be invertible over this subspace.  

In \cite{BHKX2022} they focus on recovering a basis for the subspace spanned by the (vectorized) \(A_t\)s.  If we recover such a basis, represented by the columns of a matrix \(C = [C_1, \dots, C_m]\), then we have that (the vectorized form of) \(P_\mathrm{asym}\) lives in the column span of \(C^{\veep 3}\), where \(\veep\) denotes the symmetrized Kronecker product.
We can then invert the symmetrization operator and retrieve \(P_\mathrm{asym}\), as long as the following claim holds. 

\begin{proposition}[Symmetrization invertible over Kronecker basis]
Given \(C \in \mathbb{R}^{n^2 \times m}\) representing a basis for a subspace of $\rho$-perturbed symmetric matrices, with \(m \le c n^2\) for some absolute constant \(c \in (0,1)\), 
\(\mathrm{Sym}_6 ~ C^{\veep 3}\)
is \emph{robustly} invertible w.h.p.  That is, with probability at least $1-\exp(-\Omega(n))$,\anote{12/5: added the high prob.} 
\[\sigma_{\mathrm{min}} \left( \mathrm{Sym}_6 ~ C^{\veep 3} \right) \ge \poly\left( \rho, \frac{1}{mn} \right). 
\]
\label{claim:symmetrization-invertible-over-Kronecker-basis}
\end{proposition}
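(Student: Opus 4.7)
The plan is to reduce this statement to a direct application of Theorem~\ref{thm:kronecker:sym} with a carefully chosen projection operator and a reinterpretation of $C$ as a perturbed matrix in an appropriate ambient space.

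First, I would reinterpret the setup. Each column of $C$ is a vectorized $\rho$-perturbed symmetric matrix, which naturally lives in the subspace $\mathrm{Sym}^2(\R^n) \subseteq \R^{n^2}$ of dimension $N := \binom{n+1}{2}$. Fixing an orthonormal basis for $\mathrm{Sym}^2(\R^n)$, I would view $C$ as an $N \times m$ matrix whose columns are $\rho'$-perturbations of fixed vectors in $\R^N$, with $\rho' = \Theta(\rho)$ and perturbations that are Gaussian and independent in the coordinates of this orthonormal basis. Consequently, the columns of $C^{\veep 3}$ naturally live in $\mathrm{Sym}^3(\R^N) \cong \mathrm{Sym}^3(\mathrm{Sym}^2(\R^n))$, which is itself a subspace of $(\R^n)^{\otimes 6}$.

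Next, I would verify that $\mathrm{Sym}_6$ serves as a suitable orthogonal projection in the sense required by Theorem~\ref{thm:kronecker:sym}. Since any fully $S_6$-symmetric tensor is also symmetric under the smaller subgroup $S_2 \wr S_3$, we have $\mathrm{Sym}^6(\R^n) \subseteq \mathrm{Sym}^3(\mathrm{Sym}^2(\R^n))$. Therefore $\mathrm{Sym}_6$ restricts to an orthogonal projection of $\mathrm{Sym}^3(\R^N)$ onto the subspace $\mathrm{Sym}^6(\R^n)$, with rank $\binom{n+5}{6}$. A direct counting argument shows
\[
\frac{\binom{n+5}{6}}{\binom{N+2}{3}} \longrightarrow \frac{1/720}{1/48} = \frac{1}{15} \quad \text{as } n \to \infty,
\]
so this projection has rank at least $\delta \binom{N+2}{3}$ for an absolute constant $\delta > 0$. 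Applying Theorem~\ref{thm:kronecker:sym} with parameters $n \mapsto N$, $d = 3$, $\Phi = \mathrm{Sym}_6$, and $\tilde{U} = C$, the hypothesis becomes $m \leq c_3 \delta N = O(n^2)$, matching the assumption of the proposition, and the conclusion yields $\sigma_{\binom{m+2}{3}}(\mathrm{Sym}_6 \, C^{\veep 3}) \geq (\rho')^3/N^{O(1)} = \poly(\rho, 1/(mn))$ with probability $1 - \exp(-\Omega(n))$.

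The main technical obstacles are twofold. First, Theorem~\ref{thm:kronecker:sym} is stated for perturbations that are iid $\calN(0,\rho^2)$ in the standard basis of $\R^{n \times m}$, whereas our perturbations are iid Gaussian only in an orthonormal basis of $\mathrm{Sym}^2(\R^n)$; the entrywise variances in this basis differ by bounded constants between ``diagonal'' and ``off-diagonal'' directions, so one must either invoke rotational invariance of Gaussians or trace through the proof of the theorem to verify it tolerates such a rescaling, which should only affect the $\poly$ factors. Second, the precise meaning of ``$C$ represents a basis'' needs to be pinned down---if $C$ is an orthogonalized basis for $\spn(\mathrm{vec}(A_1), \dots, \mathrm{vec}(A_m))$ rather than the direct list of vectorized perturbed matrices, a change-of-basis argument is needed to transfer the bound; this should follow by arguing that the original basis $[\mathrm{vec}(A_1), \dots, \mathrm{vec}(A_m)]$ has a well-controlled condition number (by a least-singular-value bound for perturbed matrices), so the transformation between the two bases is $\poly(n,1/\rho)$-conditioned and only affects the final bound by a polynomial factor.
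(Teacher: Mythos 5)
Your proposal is correct and follows essentially the same route as the paper: both reduce to Theorem~\ref{thm:kronecker:sym} with $d=3$, ambient dimension $N=\binom{n+1}{2}$, and the orthogonal projection $\mathrm{Sym}_6$ (restricted to $\mathrm{Sym}^3(\mathrm{Sym}^2(\R^n))$), using the rank count $\binom{n+5}{6}/\binom{N+2}{3}\to 1/15$ to supply the constant $\delta$. The paper's own argument is only a sketch (``$C=QA$, then apply the theorem''); the two technical points you raise---the mild entrywise variance rescaling in the symmetric basis, and the well-conditioned change of basis from the raw perturbed columns to an orthonormal basis $C$---are exactly the gaps that sketch glosses over, and your proposed fixes for both are sound.
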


\bnote{12/5: in the theorem, should the number of rows of $C$ be $\binom{n+1}{2}$? There should also be an upper bound on $m$.} \vnote{12/5: resolved.}

Proposition \ref{claim:symmetrization-invertible-over-Kronecker-basis} follows from Theorem \ref{thm:kronecker:sym}.  First, we can consider \(C = Q A\) where \(Q\) is a basis change matrix (which is well-conditioned w.h.p), and \(A\) is the matrix of vectorized \(A_t\)s.  Then, we are interested in \(\mathrm{Sym}_6 ~ C^{\veep 3} = \mathrm{Sym}_6 ~ Q^{\otimes 3} A^{\veep 3}.\)  The rank of \(\mathrm{Sym}_6\) is $\binom{n+5}{6}$ \anote{12/5: added the rank} is only a constant factor less than the dimension of the total dimension of the tensored space $(\binom{n+1}{2})^3$. 
Thus this statement follows from Theorem \ref{thm:kronecker:sym}, considering the linear operator \(\mathrm{Sym}_6 ~R^{\otimes 3}\) applied to symmetric lifts of the \(\rho\)-perturbed \(A_t\)s.
\anote{12/5: removed: "the dimension of the space of all 2-way symmetric vectors". I wasn't sure if this included the tensor product?}. 
\vnote{Resolved: include details of basis rotation from \(C\) and \(A\), mention that \(m\) is at most constant times \(n^2\). }

To recover the subspace of the \(A_t\)s, \cite{BHKX2022} use the partial derivatives of \(p(\vx)\).  Specifically, the form of the second partial derivatives are 
\begin{align*}
    \frac{\partial^2}{\partial x_i \partial x_j} p(\vx) &= \frac{\partial^2}{\partial x_i \partial x_j} \sum_{t \le m} a_t(\vx)^3 
    = \sum_{t \le m} a_t(\vx) q_{t,i,j}(\vx),
\end{align*}
for some homogeneous quadratic polynomials \(q_{t,i,j}(\vx)\).  Now, we have a space \(\mathcal{U}\) spanned by these polynomial combinations of the \(a_t(\vx)\), and we would like to recover the underlying quadratic \(a_t(\vx)\)s from this space.  

\(\mathcal{U}\) is related to the space \(\mathcal{V}\), which spans \emph{all} quartic multiples of the \(a_t(\vx)\).  That is, 
\[ \mathcal{V} = \{ a_t (\vx) q(\vx)~:~ t \le m \text{ and } q(\vx) \text{ is a quadratic polynomial} \}.\]
Given \(\mathcal{V}\), \cite{BHKX2022} show a way to recover the span of the underlying \(a_t(\vx)\). So first, we would like to find \(\mathcal{V}\) starting from \(\mathcal{U}\).  By definition, we have that \(\mathcal{U} \subseteq \mathcal{V}\).  
However, it is \emph{not} true that \(\mathcal{U} = \mathcal{V}\).  This can be observed by dimension-counting: there are only \(\binom{n + 1}{2}\) partial derivatives, which cannot span the space of multiples which has dimension \(m \binom{n + 1}{2}\) for generic \(a_t(\vx)\).  

The \cite{BHKX2022} algorithm gets around this by projecting \(\mathcal{U}\) to a smaller dimensional space.  In particular, they show that by projecting \(\mathcal{U}\) onto \(\ell \in O(\sqrt{n})\) variables, they recover the span of all quartic multiples of the \(a_t(\vx)\) restricted to these \(\ell\) variables, i.e. the projected \(\mathcal{V}\).  We already know that the projected \(\mathcal{U}\) must be contained in the projected \(\mathcal{V}\).  Thus to show equality, we only need to show that the rank of the projected \(\mathcal{U}\) matches the rank of the projected \(\mathcal{V}\).  \cite{BHKX2022} do this by showing that a certain system of equations that recovers an element of the projected \(\mathcal{V}\) from a corresponding element of the projected \(\mathcal{U}\) is solvable.  This ends up boiling down to the following proposition. 


\begin{restatable}[Projected \(\mathcal{U}\) same dimension as projected \(\mathcal{V}\)]{proposition}{projectedUequalsprojectedV}
Fix parameters $m,n,\ell \in \mathbb{N}$ where and let $\tilde{U}_1,\dots,\tilde{U}_m$ be a collection of $\rho$-smoothed $n \times n$ matrices which satisfy $\max_t {\|U_t\|_F} \leq L$. Let $M \in \R^{n \times \ell}$ be a column selection matrix and set $\tilde{S}_t = \tilde{U}_t M$ for each $t=1,\dots,m$. Also let $V \in \R^{n^2 \times m\binom{\ell+1}{2}+m}$ be the block matrix 
\[
V := \begin{bmatrix}
\tilde{S}_1 \veep \tilde{S}_1 & \dots  & \tilde{S}_m \veep \tilde{S}_m & \mathrm{vec}(\tilde{U}_1) & \dots & \mathrm{vec}(\tilde{U}_m) 
\end{bmatrix}.
\]
Finally, assume the parameter $r := n^2-n\ell-m\binom{\ell+1}{2}-m+1$ satisfies $r \geq \delta n^2$ for some $\delta \in (0,1)$. 
Then there exists constants $c,c'>0$ (potentially depending on $\delta$ and $L$) such that with probability at least $1-\exp(-\Omega_\delta(n))$ we have 
\[
\sigma_{m\binom{\ell+1}{2}+m} (V) \ge \frac{c' \rho^4}{n^c}. \]
\bnote{Should the first dimension be $\binom{n+1}{2}$ instead of $n^2$?} \vnote{12/5: resolved (not changed)}
\label{lem:projected-U-equals-projected-V}
\end{restatable}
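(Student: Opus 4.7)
The plan is to apply Lemma \ref{lemma:BlockLeaveoneOut} (block leave-one-out) with the two natural blocks $A = [\tilde{S}_1 \veep \tilde{S}_1 \mid \dots \mid \tilde{S}_m \veep \tilde{S}_m]$, the ``quadratic'' block of $m\binom{\ell+1}{2}$ columns that all lie in $\mathrm{Sym}^2(\mathbb{R}^n)$, and $B = [\mathrm{vec}(\tilde{U}_1) \mid \dots \mid \mathrm{vec}(\tilde{U}_m)]$, the ``linear'' block of $m$ columns in $\mathbb{R}^{n^2}$. This reduces the task to lower bounding $\sigma_m(\Pi_{A^\perp} B)$ and $\sigma_{m\binom{\ell+1}{2}}(\Pi_{B^\perp} A)$ separately; Lemma \ref{lemma:BlockLeaveoneOut} then combines them into the claimed bound on $\sigma_{m\binom{\ell+1}{2}+m}(V)$.

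For $\sigma_m(\Pi_{A^\perp} B)$, I will exploit a ``free coordinates'' observation. Because $\tilde{S}_t = \tilde{U}_t M$ only uses the $\ell$ columns of $\tilde{U}_t$ selected by $M$, the $n(n-\ell)$ coordinates of $\mathrm{vec}(\tilde{U}_t)$ corresponding to the \emph{unselected} columns form a subspace on which $A$ vanishes identically. Letting $\Pi_{\mathrm{free}}$ denote the coordinate projection onto this subspace, $\Pi_{\mathrm{free}}$ is contained in $\Pi_{A^\perp}$, and $\Pi_{\mathrm{free}} B$ is an $n(n-\ell) \times m$ matrix whose entries are independent $\rho$-smoothed Gaussians (with arbitrary means coming from the $U_t$). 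The assumption $r \ge \delta n^2$ guarantees slack $n(n-\ell) - m = \Omega(\delta n^2)$, so a standard least singular value bound for a tall rectangular $\rho$-smoothed matrix (via an $\epsilon$-net over the unit sphere in $\mathbb{R}^m$) yields $\sigma_m(\Pi_{\mathrm{free}} B) \ge \Omega(\rho)$ with probability $1 - e^{-\Omega(n)}$, and hence $\sigma_m(\Pi_{A^\perp} B) \ge \Omega(\rho)$.

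For $\sigma_{m\binom{\ell+1}{2}}(\Pi_{B^\perp} A)$, I will use that every column of $A$ lies in $\mathrm{Sym}^2(\mathbb{R}^n)$. A direct Pythagorean calculation (using $B^\top(A\alpha) = B_{\mathrm{sym}}^\top(A\alpha)$ for $A\alpha \in \mathrm{Sym}^2$, where $B_{\mathrm{sym}} := \mathrm{Sym}_2(B)$, combined with the operator inequality $(B^\top B)^{-1} \preceq (B_{\mathrm{sym}}^\top B_{\mathrm{sym}})^{-1}$) shows that $\|\Pi_{B^\perp}(A\alpha)\| \ge \|\Phi(A\alpha)\|$, where $\Phi$ is the orthogonal projection within $\mathrm{Sym}^2$ onto $B_{\mathrm{sym}}^\perp$. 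This $\Phi$ has rank at least $\binom{n+1}{2} - m$ (with high probability over the $\tilde{U}_t$), so Corollary \ref{corr:kron:blocks} applied with $d = 2$, $t = m$ blocks each of width $\ell$, and operator $\Phi$ yields $\sigma_{m\binom{\ell+1}{2}}(\Phi A) \ge \rho^2/n^{O(1)}$ with probability $1 - e^{-\Omega(n)}$. The rank/width hypotheses of the corollary follow from the assumption $r \ge \delta n^2$: plugging in $\delta_1 = 1 - m/\binom{n+1}{2}$ and $\delta_2 = m\binom{\ell+1}{2}/\binom{n+1}{2}$ shows that $\delta_1 - \delta_2 \gtrsim 2\delta - 1 + 2\ell/n$ is large enough (in the parameter regime of interest) for the width condition $\ell \le c_2(\delta_1 - \delta_2)n$ to hold.

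The main obstacle is that the operator $\Phi$ used in the quadratic-block analysis is \emph{correlated} with the matrix $A$ through their common dependence on the $\tilde{U}_t$: the symmetric part $B_{\mathrm{sym}}$ that defines $\Phi$ is determined by the $\tilde{U}_t$, as are the $\tilde{S}_t$ appearing in $A$. I will resolve this by a decoupling step in the spirit of Lemma \ref{lem:decoupling}: write each $\tilde{U}_t = \tilde{V}_t + \rho' Z_t$ as the sum of two independent Gaussian components, use the outer randomness $\tilde{V}_t$ to define a proxy projection $\Phi^{(1)}$ depending only on the symmetric parts of $\mathrm{vec}(\tilde{V}_t)$, and then apply Corollary \ref{corr:kron:blocks} to $\Phi^{(1)}$ acting on $A$ -- whose remaining randomness $Z_t$ is now fresh and independent of $\Phi^{(1)}$. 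A final perturbation argument comparing $\Phi^{(1)}$ and the true $\Phi$ (whose difference has rank at most $m$ and norm controlled by $\rho'$) absorbs a small additional loss into the polynomial factor in $n$ and $\rho$, delivering the claimed $c' \rho^4 / n^c$ bound.
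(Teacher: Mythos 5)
Your high-level plan — split $V$ into the quadratic block $A$ and the linear block $B$, bound each leave-one-out singular value, and invoke Corollary~\ref{corr:kron:blocks} for the quadratic part — is reasonable in outline, but both of the two key steps have genuine gaps.

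\textbf{The ``free coordinates'' claim is false.} You assert that because $\tilde{S}_t = \tilde{U}_t M$ depends only on the $\ell$ selected columns of $\tilde{U}_t$, the block $A$ vanishes on the $n(n-\ell)$ coordinates of $\R^{n^2}$ corresponding to unselected columns. But the columns of $\tilde{S}_t \veep \tilde{S}_t$ are symmetrized Kronecker products $\tfrac12(s_i \otimes s_j + s_j \otimes s_i)$, where $s_i, s_j \in \R^n$ are full columns of $\tilde{U}_t$ (dense $n$-vectors); viewed as a symmetric $n\times n$ matrix, this has entry $\tfrac12\bigl((s_i)_a(s_j)_b + (s_j)_a(s_i)_b\bigr)$ which is generically nonzero at every $(a,b)$. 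So $A$ is dense, $\Pi_{\mathrm{free}} \not\subseteq \Pi_{A^\perp}$, and the inequality $\sigma_m(\Pi_{A^\perp} B) \ge \sigma_m(\Pi_{\mathrm{free}}B)$ does not hold. What the paper actually uses in this spot is \emph{independence}, not vanishing: restrict \emph{all} columns of $V$ to the ``free'' coordinates of $\mathrm{vec}(\tilde{U}_t)$ (so they are generally nonzero there), but observe that on those coordinates the randomness in $\mathrm{vec}(\tilde{U}_t)$ is independent from everything else appearing in $V$; the rank of the restricted remainder of $V$ is at most $m\binom{\ell+1}{2}+m-1$, so a Gaussian small-ball bound with $r$ degrees of freedom gives anticoncentration for the projected column.

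\textbf{The decoupling for $\sigma_{\min}(\Pi_{B^\perp}A)$ does not close.} Your Pythagorean reduction $\|\Pi_{B^\perp}(A\alpha)\| \ge \|\Phi(A\alpha)\|$ with $\Phi$ the projection within $\mathrm{Sym}^2$ orthogonal to $B_{\mathrm{sym}}$ is correct. But the proposed fix for the correlation between $\Phi$ and $A$ (split $\tilde{U}_t = \tilde{V}_t + Z_t$, build a proxy $\Phi^{(1)}$ from $\tilde{V}_t$, bound $\sigma_{\min}(\Phi^{(1)}A)$ using the fresh $Z_t$-randomness, then perturb back to $\Phi$) fails quantitatively. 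The difference $\Phi - \Phi^{(1)}$ of the two projections has operator norm on the order of $\|Z_{\mathrm{sym}}\| / \sigma_{\min}(B_{\mathrm{sym}})$, which is $\rho_2\sqrt{n}/(\rho/\mathrm{poly}(n))$, while the main term you obtain, $\sigma_{\min}(\Phi^{(1)}A)$, is on the order of $\rho_2^2/\mathrm{poly}(n)$. For the main term to dominate you would need $\rho_2\rho \gtrsim \mathrm{poly}(n)$, which is incompatible with the regime $\rho \le 1/\mathrm{poly}(n)$ that this proposition is meant for. The paper sidesteps this entirely: it never needs $\sigma_{\min}(\Pi_{B^\perp}A)$. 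Instead it runs a direct test-vector argument with two cases on the mass $\|\alpha^{(2)}\|$ in the $B$-coordinates: when $\|\alpha^{(2)}\|$ is small, it suffices to bound $\sigma_{\min}(A)$ (not the projected version), which is a clean, uncorrelated application of Corollary~\ref{corr:kron:blocks} with $\Phi$ taken to be the fixed symmetrizer $\mathrm{Sym}_2$; when $\|\alpha^{(2)}\|$ is large, it projects a \emph{single} column $\mathrm{vec}(\tilde{U}_j)$ orthogonal to the rest using the independent-randomness observation above. This asymmetric case split is what makes the argument work without any decoupling between $\Phi$ and the perturbation.
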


Now, we shift our view to the case where we have \(\mathcal{V}\), the space of all quartic multiples of \(a_t(\vx)\).\footnote{In reality, we have access to the space of quartic multiples of the projected \(a_t(\vx)\).  However, we can repeat this step for various choices of the projection, and recover the \(a_t(\vx)\) restricted to different coordinates.  For simplicity of notation, we will still refer to the dimension of the (projected) polynomial as \(n\).}  \cite{BHKX2022} consider an equation of the form 
\begin{equation}
\sum_{t \le m} a_t(\vx) q_t(\vx) = 0,
\label{eq:generically-wellbehaved-polynomial-equation}
\end{equation}
where the \(a_t(\vx)\) are generic, and the \(q_t(\vx)\) are variables.  They note that since generic polynomials are \emph{irreducible}, the only solutions to this system have a certain structure.  In particular, the space of  solutions are spanned by solutions of the form 
\begin{equation}
\text{for } i \neq j: \quad  q_i(\vx) = a_j(\vx), \quad q_j(\vx) = - a_i(\vx), \quad q_k(\vx) = 0 \quad \forall k \neq i, j.
\label{eq:solutions-to-wellbehaved-polynomial-equation}
\end{equation}
and the dimension of the solution space is \(\binom{m}{2}\).  Note that solutions of this form always exist, even when the \(a_t(\vx)\) are not irreducible.  A key observation of \cite{BHKX2022} is that when \(a_t(\vx)\) are irreducible,
these are the \emph{only} solutions.  Now consider 
\begin{equation}
    \sum_{t \le m} a_t(\vx) q_t(\vx) = a_0(\vx) q_0(\vx),
    \label{eq:planted-polynomial-equation}
\end{equation} 
where \(a_0(\vx)\) is a fresh generic polynomial chosen by our algorithm.  The space of polynomials that have the form of the LHS of (\ref{eq:planted-polynomial-equation}) is precisely \(\mathcal{V}\).  Once we choose an \(a_0(\vx)\), the algorithm can also construct the space of all quartic multiples of \(a_0(\vx)\), which we denote \(\mathcal{V}_0\).  Now, consider all solutions to (\ref{eq:planted-polynomial-equation}) where \(q_0(\vx)\) is nonzero.  By the characterization of solutions in (\ref{eq:solutions-to-wellbehaved-polynomial-equation}), we know that if it is nonzero, \(q_0(\vx)\) must live in the span of the \(a_t(\vx)\).  Thus, we have that 
\[(\mathcal{V} \cap \mathcal{V}_0) / a_0(\vx) = \mathrm{span}\{a_t(\vx): t \le m\}.\]
To go from a generic guarantee to a smoothed guarantee, we must make the characterization (\ref{eq:solutions-to-wellbehaved-polynomial-equation}) of the solution space of this equation robust.  Thus, rather than only requiring the \(a_t(\vx)\) to be irreducible, we require the stronger condition that they are perturbed (smoothed).  
To formalize this, once again we represent our polynomials as symmetric tensors.  Denote the coefficient matrix of \(a_t(\vx)\) be \(A_t\), and the coefficient matrix of \(q_t(\vx)\) be \(Q_t\).  Then, we can write (\ref{eq:generically-wellbehaved-polynomial-equation}) as 
\[\mathrm{Sym}_4 ~\mathrm{vec}\left(\sum_{t \le m} A_t \otimes Q_t \right) = 0.\]
If we combine our \(Q_t\)s into one large vector, with the appropriate rearranging of indices\footnote{Think of \(Q = [Q_1 Q_2 \dots Q_m]\), where the \(Q_i\)s are vectorized into columns.  Then we vectorize \(Q\) in row-major order, rather than column major order.}, we can write this as 
\[\mathrm{Sym}_4 \left( I_{\binom{n+1}{2}} \otimes A \right) \mathbf{q} = 0,\]
where \(A\) is the matrix such that column \(i\) is (the vectorized) \(A_t\), and \(\mathbf{q}\) is the appropriate vectorization of the \(Q_t\)s.  Recall that we want to show that the dimension of the space of solutions \(\mathbf{q}\) is exactly \(\binom{m}{2}\).  Thus we can write the robust condition as following. In what follows $N_2 \coloneqq \binom{n+1}{2}$ is the dimension of the space of homogenous quadratic polynomials over $n$ variables.  

\begin{proposition}[Not too many solutions to polynomial equation system]
    The space of solutions to the above polynomial equation system has rank at most \(\binom{m}{2}\) in a \emph{robust} sense.  That is, let \(A \in \mathbb{R}^{N_2 \times m}\) be made up of perturbed symmetric (when viewed as matrices) columns. Then there exists constants $c,c',c''>0$ such that when $m \le c N_2$, with probability at least $1-\exp(-\Omega(n))$  
    \[\sigma_{mN_2  - \binom{m}{2}} \left( \mathrm{Sym}_4 ~ \left(I_{N_2 \times N_2} \otimes A \right) \right) \ge \frac{c'' \min\{\rho,1\}^2}{n^{c'}},\]
    where \(\otimes\) denotes the Kronecker product and $N_2 = \binom{n+1}{2}$.  
    \label{claim:not-too-many-solutions-to-polynomial-equation}
\end{proposition}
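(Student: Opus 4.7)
The plan is to exhibit an explicit $\binom{m}{2}$-dimensional Koszul kernel of $T \coloneqq \mathrm{Sym}_4 (I_{N_2}\otimes A)$, and then lower bound $\sigma_{\min}(T|_{K^\perp})$, which by the min--max characterization equals $\sigma_{mN_2-\binom{m}{2}}(T)$. The explicit kernel is $K = \mathrm{span}\{\mathbf{q}^{(i,j)}: 1 \le i < j \le m\}$, where $\mathbf{q}^{(i,j)}$ has $i$-th block equal to $a_j$, $j$-th block equal to $-a_i$, and all other blocks zero; a direct calculation gives $T\mathbf{q}^{(i,j)} = \mathrm{Sym}_4(a_j\otimes a_i - a_i\otimes a_j)=0$. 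Since $\sigma_m(A) \ge \rho/\poly(n)$ holds with probability $1-\exp(-\Omega(n))$ for the $\rho$-perturbed $A$ (a standard Gaussian perturbation bound), the Gram matrix of the Koszul generators is well-conditioned, so $K$ is genuinely $\binom{m}{2}$-dimensional and projecting onto $K^\perp$ is benign.

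Next I would decompose $K^\perp$ orthogonally as $(K^\perp \cap S_A) \oplus S_A^\perp$, where $S_A = \{\mathbf{q} : \mathbf{q}_t \in \mathrm{span}(A)\ \forall t\}$. Parameterizing $\mathbf{q}_t = \sum_s \beta_{ts} a_s$, the permutation-symmetry of $\mathrm{Sym}_4$ makes $T|_{S_A}(\mathbf{q})$ depend only on the symmetric part $\beta^{\mathrm{sym}}$, so $K \subset S_A$ corresponds precisely to antisymmetric $\beta$, giving $\dim(K^\perp\cap S_A)=\binom{m+1}{2}$ and $\dim S_A^\perp = m(N_2-m)$. On the first piece, $T|_{K^\perp \cap S_A}$ is, up to a well-conditioned change of basis in $\beta$, exactly the operator $\mathrm{Sym}_4 \cdot A^{\veep 2}$. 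This is the setup of Theorem~\ref{thm:kronecker:sym} applied with base dimension $N_2 = \binom{n+1}{2}$, order $d=2$, and $\Phi = \mathrm{Sym}_4$ viewed as an orthogonal projection $\Sym^2(\R^{N_2}) \to \Sym^4(\R^n)$, whose rank $\binom{n+3}{4}$ is a constant fraction of $\binom{N_2+1}{2}$. The theorem then yields $\sigma_{\min}(T|_{K^\perp\cap S_A}) \ge \rho^2/n^{O(1)}$.

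The main obstacle is the second piece: on $S_A^\perp$, writing $\mathbf{q}_t = V \gamma_t$ for $V$ an orthonormal basis of $\mathrm{span}(A)^\perp$, the restriction becomes $\mathrm{Sym}_4(V \otimes A)\gamma$. Because $V$ is a deterministic function of $A$ rather than a fresh Gaussian perturbation, Theorem~\ref{thm:kronecker:asym} cannot be invoked verbatim. I would handle this by a two-stage decoupling in the spirit of Lemma~\ref{lem:decoupling}: split $A = A' + Z$ with $Z$ an independent Gaussian of variance $\rho^2/2$; conditional on $A'$, the basis $V'$ of $\mathrm{span}(A')^\perp$ is fixed, and one expands
\[ \mathrm{Sym}_4(V \otimes A) \;=\; \mathrm{Sym}_4(V' \otimes A') + \mathrm{Sym}_4(V' \otimes Z) + \bigl(\text{error from } V-V'\bigr), \]
controlling the error by $O(\|Z\|/\sigma_m(A'))$ via standard perturbation bounds on orthogonal projectors. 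The leading stochastic term $\mathrm{Sym}_4(V' \otimes Z)$ now fits Theorem~\ref{thm:kronecker:asym} exactly, with $\Psi = \mathrm{Sym}_4 \cdot (V' \otimes \,\cdot\,)$ a fixed operator of rank $\Theta(n^4)$ (conditioned on $A'$) and $Z$ a fresh $\rho$-perturbation of the zero matrix, yielding $\sigma_{\min}\ge \rho/n^{O(1)}$.

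Finally, to combine the two components into a single bound on all of $K^\perp$ we need a quantitative angle between the image subspaces $T(K^\perp\cap S_A) \subset \mathrm{Sym}_4(\mathrm{span}(A)^{\otimes 2})$ and $T(S_A^\perp) \subset \mathrm{Sym}_4(\mathrm{span}(A)^\perp \otimes \mathrm{span}(A))$ inside $\Sym^4(\R^n)$. A similar decoupling argument, together with the well-conditioning of $A$, forces this angle to be bounded below by $1/\poly(n)$, so that $\|T\mathbf{q}\|^2 \gtrsim \|T\mathbf{q}^{(1)}\|^2 + \|T\mathbf{q}^{(2)}\|^2$ for any $\mathbf{q}=\mathbf{q}^{(1)}+\mathbf{q}^{(2)} \in K^\perp$. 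The hardest part of the proof will be making the $V$-$A$ decoupling quantitative without losing polynomial rates, since naive conditioning destroys precisely the independence needed to invoke Theorem~\ref{thm:kronecker:asym} cleanly.
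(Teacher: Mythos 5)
Your proposal identifies exactly the same Koszul kernel and the same central technical obstacle -- that an orthonormal basis $V$ for $\mathrm{span}(A)^\perp$ is a deterministic function of $A$ rather than independent fresh randomness -- but it then diverges from the paper in how it handles that obstacle, and the divergent route has serious gaps.

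The paper does not split $K^\perp$ into the two summands $(K^\perp \cap S_A)$ and $S_A^\perp$ and glue later. Instead, it parameterizes the whole $mN_2 -\binom{m}{2}$-dimensional complement by the columns $\projsymfour(A_i\otimes A_j + A_j\otimes A_i)$ together with $\projsymfour(A_i\otimes F_j + F_j\otimes A_i)$, factors the resulting $M'$ as $\Phi(U\otimes Q)$ with $U=[B+Z_1, Z_2]$ and $Q=[B+Z_1+2Z_2, F]$ (Lemma~\ref{lem:solutionbafna} and equation~\eqref{eq:bafna:temp1}), and then establishes that both factors are well conditioned: $Q$ via Claim~\ref{claim:bafna:intemed:1} (because $F$ completes the basis, adding a small Gaussian cannot destroy invertibility) and $W$ via a random restriction to a set $T\subset[N_2]$ (Claim~\ref{claim:bafna:largerelrank}) feeding into Lemma~\ref{lem:remaining}. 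The ``lopsided contraction'' structure is handled head-on rather than by reduction to independent halves, so no angle argument is ever needed.

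The first genuine gap in your plan is that the projector perturbation used in the decoupling is much too large to absorb as error. After splitting $A = A' + Z$, the change in the orthogonal complement basis is controlled (Wedin-type bound) by $\|Z\|_{\mathrm{op}}/\sigma_m(A')$, and $\|Z\|_{\mathrm{op}} \sim \rho\sqrt{N_2} \sim \rho n$, not $\rho$. In the representative worst case $\sigma_m(A')=\Theta(1)$ (e.g., when the underlying $B_0$ is well-conditioned and $\rho$ is small), the error matrix $\mathrm{Sym}_4\bigl((V-V')\otimes A\bigr)$ has operator norm $\gtrsim \rho n$, which overwhelms the target lower bound $\rho/n^{O(1)}$ by a polynomial factor; shrinking the variance of $Z$ proportionally shrinks the stochastic signal as well, so the ratio cannot be repaired. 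The second gap is that the intended application of Theorem~\ref{thm:kronecker:asym} to $\mathrm{Sym}_4(V'\otimes Z)$ does not match dimensions: you need a full-column-rank bound on a matrix with $(N_2-m)m$ columns, but one modal contraction at order $d=1$ by the $m$-column matrix $Z$ (or $A$) only controls $\sigma_m$; to reach all $(N_2-m)m$ columns one must argue block-by-block that each $\Psi_j$ has large \emph{relative} rank after projecting out the others -- which is precisely what the paper's random restriction Claim~\ref{claim:bafna:largerelrank} establishes and what your plan lacks. Finally, the angle argument you correctly flag as ``the hardest part'' is unresolved and would not follow for free from the two separate singular value bounds: the images $T(K^\perp\cap S_A)$ and $T(S_A^\perp)$ both live in $\Sym^4(\R^n)$ and there is no a priori reason for them to be quantitatively transverse without additional work. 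The paper's factorization sidesteps this entirely because inverse-polynomial conditioning of a product of two well-conditioned factors needs no angle control.
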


In \cite{BHKX2022} they prove analogues of propositions \ref{claim:symmetrization-invertible-over-Kronecker-basis}, \ref{lem:projected-U-equals-projected-V}, and \ref{claim:not-too-many-solutions-to-polynomial-equation},
for the case where the underlying matrices are fully random (mean-0), and they are able to show that these propositions hold with inverse-polynomial failure probability.  Our framework allows us to prove these propositions for perturbed matrices, and show that the failure probability is \emph{exponentially} small.  These improvements allow us to conclude that the algorithm of \cite{BHKX2022} provides a smoothed analysis guarantee.
We omit the analysis for general values of $K,D$ in this version of the paper.   


\subsection{Least singular value bounds for  
Proposition~\ref{lem:projected-U-equals-projected-V} } \anote{12/5: Remove Prop 7.1 from section heading?}

\projectedUequalsprojectedV*
\anote{12/5: better if $L$ is a bound on just $U_t$ and not after random perturbation.}\anote{12/5: Changed $\dots,\tilde{U}_d$ to $\dots,\tilde{U}_m$.}


\begin{proof}

Without loss of generality assume that $\tilde{S}_t$ is the first $\ell$ columns of $\tilde{U}_t$. Also, we have by standard concentration bounds, $\norm{\tilde{U}_t}_F \le L + \rho \cdot \poly(n)$ with probability at least $1-\exp(-\Omega(n))$. We condition on the success event for the rest of the proof, and assume without loss of generality that the upper bound $L$ already includes the additive term $\rho \poly(n)$. 

We first argue that each column of the form $\mathrm{vec}(\tilde{U}_t)$ has a large component that is orthogonal to the remaining columns of $V$. To this end, set
\[
V_t:=\begin{bmatrix}
\tilde{S}_1 \veep \tilde{S}_1 & \dots  & \tilde{S}_m \veep \tilde{S}_m  & \mathrm{vec}(\tilde{U}_1) & \dots & \mathrm{vec}(\tilde{U}_{t-1}) & \mathrm{vec}(\tilde{U}_{t+1}) & \dots & \mathrm{vec}(\tilde{U}_m) 
\end{bmatrix}
\]
We can restrict to considering only the last $n^2-n\ell$ entries of each column of this matrix. By assumption, each $\tilde{S}_t$ is made up the first $\ell$ columns of $\tilde{U}_t$, so the smoothing in the last $n^2-n\ell$ entries of $\mathrm{vec} (\tilde{U}_t)$ is independent from the smoothing in the last $n^2-n\ell$ entries of the columns of $V_t$. 

Now, let $Q:\mathbb{R}^{n^2} \to \mathbb{R}^{n^2-n\ell}$ be the matrix that restricts onto the last $n^2-n\ell$ entries of a vector and let $\Pi_{RV_t}^\perp$ be the projection onto the orthogonal complement of the range of $QV_t$. Note that the matrix $QV_t \in \R^{n^2-n\ell \times m\binom{\ell+1}{2}+m-1}$ has rank at most $m\binom{\ell+1}{2}+m-1$, so using Lemma \ref{lem:MultiVarGaussBall} we have
\[
\mathbb{P}[\|\Pi_{QV_t}^\perp Q\mathrm{vec}(\tilde{U}_t)\| \geq \epsilon] \geq 1-\left(\frac{c''\epsilon}{\rho}\right)^r,
\]
where $r = n^2-n\ell-m\binom{\ell+1}{2}-m+1$ and $c''>0$ is an absolute constant. The probability that the above holds for all $t=1,\dots,m$ is then at least 
\[
1-m\left(\frac{c''\epsilon}{\rho}\right)^r.
\]

Now suppose that $\sigma_{\min} (V) < \epsilon$. Then there must exist some test unit vector $\alpha \in \R^{m\binom{\ell+1}{2}+m}$ such that $\|V \alpha\| < \epsilon.$ Write $\alpha = \alpha^{(1)} \oplus \alpha^{(2)}$ where $\alpha^{1} \in \R^{m \binom{\ell+1}{2}}$ and $\alpha^{2} \in \R^{m}$. Intuitively, $\alpha^{(1)}$ contains the entries of $\alpha$ that are coefficients of columns of $V$ which are from the matrices $\tilde{S}_t \veep \tilde{S}_t$ while $\alpha^{(2)}$ contains the coefficients of the columns of $V$ of the form $\mathrm{vec}(\tilde{U}_t)$.

We consider two cases. In the first case suppose that $\|\alpha^{(2)}\| \leq \frac{\sqrt{\epsilon}}{2 \sqrt{m} L}$.  In this case we upper bound the probability that $\|V \alpha \| \leq \sqrt{\epsilon}$, which in turn upper bounds the probability that $\|V\alpha\| \leq \epsilon$. We have
\begin{align*}
\|V \alpha\| &= \left\|\begin{bmatrix}
\tilde{S}_1 \veep \tilde{S}_1 & \dots  & \tilde{S}_m \veep \tilde{S}_m
\end{bmatrix} \alpha^{(1)} + \begin{bmatrix}
\mathrm{vec}(\tilde{U}_{1}) & \dots & \mathrm{vec}(\tilde{U}_m) 
\end{bmatrix} \alpha^{(2)} \right\| \\
& \geq \left\|\begin{bmatrix}
\tilde{S}_1 \veep \tilde{S}_1 & \dots  & \tilde{S}_m \veep \tilde{S}_m
\end{bmatrix} \alpha^{(1)} \right\|-\sqrt{m}L \|\alpha^{(2)}\|
\geq \frac{\sqrt{\epsilon}}{2}.
\end{align*}
This would imply that
\[
\sigma_{\min} \left(\begin{bmatrix}
\tilde{S}_1 \veep \tilde{S}_1 & \dots  & \tilde{S}_m \veep \tilde{S}_m
\end{bmatrix}\right) < \epsilon/4. 
\]
Taking $\epsilon = \frac{c' \min\{1,\rho^4\}}{n^c}$ as in the statement of the proposition, we can use Corollary \ref{corr:kron:blocks} to conclude that the probability that such an $\alpha$ exists is at most $\exp\big(- \Omega_{m} (n)\big)$. 

In the second case, we have that $\|\alpha^{(2)}\| > \frac{\epsilon}{2 \sqrt{m} L}$. Therefore, $\alpha^{(2)}$ must have some component $\alpha^{(2)}_j$ with magnitude at least $|\alpha^{(2)}_j| \geq\frac{\sqrt{\epsilon}}{2 m L}$. It follows that  
\[
 \|\Pi_{QV_j}^\perp Q \mathrm{vec}(\tilde{U}_j) \| \|\alpha^{(2)}_j\|= \|\Pi_{QV_j}^\perp Q V \alpha \|  \leq  \|V \alpha\| < \epsilon.
\]
Here the first equality follows from the fact that the only nonzero column of $\Pi_{QV_j}^\perp Q V$ is $\Pi_{QV_j}^\perp Q \mathrm{vec}(\tilde{U}_j)$ which implies $\Pi_{QV_j}^\perp Q V \alpha = \Pi_{QV_j}^\perp Q \mathrm{vec}(\tilde{U}_j) \alpha^{(2)}_j$. 
From this we obtain
\[
 \|\Pi_{QV_j}^\perp Q \mathrm{vec}(\tilde{U}_j) \|  < 2mL\sqrt{\epsilon}.
\]
Using the discussion above, the probability that there exists some $j \in [m]$ for which this inequality holds is at most $m\left(\frac{c''mL\sqrt{\epsilon}}{\rho}\right)^r$. Taking $\epsilon = \frac{c' \min\{1,\rho^4\}}{n^c}$ as in the statement of the proposition and recalling that $r \geq \delta n^2$, this can be written as $m\left(\frac{c''mL\sqrt{\epsilon}}{\rho}\right)^r = \exp\big(- \Omega_{m,\delta} (n)\big)$

Combining these two cases, we see that the probability that there exists a unit vector $\alpha$ such that $\|V \alpha\| \leq \frac{c' \min\{1,\rho^4\}}{n^c}$ is bounded above by
\[
\exp\big(- \Omega_{\delta,m} (n)\big)+\exp\big(- \Omega_{m} (n)\big) = \exp\big(- \Omega_{\delta,m} (n)\big).
\]
which completes the proof. 

\end{proof}

\subsection{Bounding the solutions for system of equations: Proof of Proposition~\ref{claim:not-too-many-solutions-to-polynomial-equation}}
\newcommand{\projsymfour}{\projsym_{2 \to 4}}
\newcommand{\tensorfour}{W_{2 \to 4}}
\newcommand{\Mcont}{W}

Recall that $N_2 = \binom{n+1}{2}$ is the dimension of the space of all symmetric homogeneous polynomials of degree $2$ in $n$ variables. Let $\projsymfour: (\R^{n})^{\veep 2} \otimes (\R^{n})^{\veep 2} \to (\R^{n})^{\veep 4}$ 
be the orthogonal projector onto the fully symmetric space over $4$th-order tensors. Note that if $v_1, v_2 \in (\R^{n})^{\veep 2} \cong \R^{N_2}$, we have that $\projsymfour(v_1 \otimes v_2) = \projsymfour(v_2 \otimes v_1) = \frac{1}{2}\projsymfour(v_1 \otimes v_2+ v_2 \otimes v_1)$. (Note that there are other symmetries that are also captured by $\projsymfour$.)  

In the smoothed setting $A_1, \dots, A_m \in \R^{N_2} \cong (\R^{n})^{\veep 2}$ are randomly $\rho$-perturbed and represent the polynomials $a_1(x), \dots, a_m(x)$, and let $\mathcal{A} \coloneqq \spn(\{A_i: i \in [m]\})$. With high probability, the $A_i$ are linearly independent, in which case we can let $F_{m+1}, \dots, F_{N_2} \in \R^{N_2}$ be a (random) orthonormal basis for $\mathcal{A}^{\perp}$. Together $A_1, \dots, A_m, F_{m+1}, \dots, F_{N_2}$ form a basis for $\R^{N_2}$. Consider the space $\calV \subset (\R^{N_2})^{\veep 2}$ given by 
\begin{align}
    \calV &\coloneqq \spn\Big( \left\{ A_i \otimes A_j + A_j \otimes A_i : 1 \le i \le j \le m \right\} \bigcup \big\{ A_i \otimes F_j + F_j \otimes A_i : i \in [m], j \in [N_2] \setminus [m] \} \big\} \Big). \nonumber\\
    &= (\calA \veep \calA) \oplus (\calA \veep \calA^\perp). \label{eq:yellowspace}
\end{align} 
Observe that $\calV^{\perp}= \calA^{\perp} \veep \calA^\perp \subset (\R^{N_2})^{\veep 2}$.
Proposition~\ref{claim:not-too-many-solutions-to-polynomial-equation} is challenging to show because it is not easy to reason about the nullspace of the system of equations directly. Instead we argue about the larger vector space $\R^{N_2} \veep \R^{N_2}$ to help identify a basis for the range space of the polynomial system. Claim~\ref{claim:not-too-many-solutions-to-polynomial-equation} is implied by the following crucial lemma.

\begin{figure}
\centering
\includegraphics[width=0.5\textwidth]{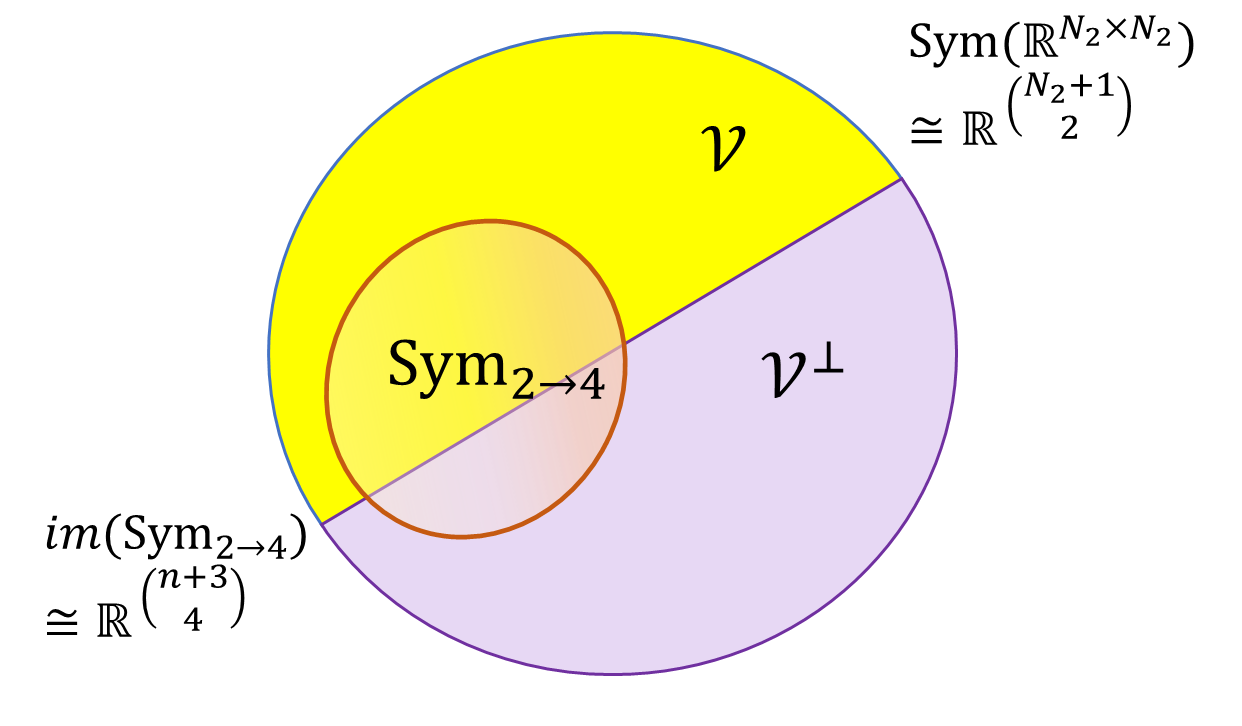}
\caption{In this illustration, $N_2 = \binom{n+1}{2}$ is the dimension of the space of homogeneous quadratic polynomials. The picture shows the space corresponding to the symmetric lift $(\R^{N_2})^{\veep 2} \cong \Sym(\R^{N_2 \times N_2})$ being expressed as $\calV \oplus \calV^{\perp}$, where $\calV = (\calA \veep \calA) \oplus (\calA \veep \calA^{\perp})$ as defined in \eqref{eq:yellowspace}. $\projsymfour$ denotes the orthogonal projector onto the space of fully symmetric tensors in $(\R^{n})^{\otimes 4}$. Lemma~\ref{lem:solutionbafna} shows that $\ker(\projsymfour) \cap \calV=\{0\}$ w.h.p. Note that $\projsymfour$ may not be a projection matrix when restricted to $\calV$.}
\label{fig:maps1}
\end{figure}

\begin{lemma} [$\projsymfour$ does not annihilate any vector in $\calV$ for smoothed instances]\label{lem:solutionbafna}
Consider the following matrix $M \in \R^{\binom{n+3}{4} \times (mN_2 - \binom{m}{2})}$ formed by columns 
\begin{align}\label{eq:bafna:M}
\text{columns}(M) =& \left\{ \projsymfour(A_i \otimes A_j + A_j \otimes A_i) : 1 \le i \le j \le m \right\} \\
& \bigcup \left\{  \projsymfour(A_i \otimes F_j + F_j \otimes A_i) : 1 \le i \le m, m+1 \le j \le N_2 \right\}\nonumber.
\end{align}
There exists a constant $c>0$, such that when $m<c N_2$, with probability at least $1-\exp(-\Omega(n))$, we have that $M$ has full column rank in a robust sense i.e.,  
\begin{equation}
    \sigma_{\min}( M) \ge \frac{\rho^2}{n^{O(1)}}.
\end{equation}
\end{lemma}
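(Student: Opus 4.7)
}

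The plan is to reduce the statement to a robust injectivity bound for $\projsymfour$ on a specific random subspace of $\Sym^2(\R^{N_2})$, then handle the two natural pieces of that subspace separately.

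\textbf{Step 1 (Factorization and well-conditioned carrier).} First I would write $M = \projsymfour \cdot M'$, where $M' \in \R^{N_2^2 \times (mN_2 - \binom{m}{2})}$ is the unsymmetrized matrix with columns $A_i \otimes A_j + A_j \otimes A_i$ for $1 \le i \le j \le m$ and $A_i \otimes F_j + F_j \otimes A_i$ for $i \in [m], j > m$. Up to a factor in $\{1,2\}$ per column, $M'$ is a column sub-selection of $V^{\veep 2}$, where $V = [A \mid F] \in \R^{N_2 \times N_2}$. Because $A$ is smoothed and $F$ is an orthonormal basis of $\calA^\perp$, we have $V^\top V = \diag(A^\top A, I)$, so $\sigma_{\min}(V) = \min(\sigma_{\min}(A), 1) \ge \rho/n^{O(1)}$ with probability $1-\exp(-\Omega(n))$. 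Standard estimates give $\sigma_{\min}(V^{\veep 2}) \ge \sigma_{\min}(V)^2 / \sqrt{2} \ge \rho^2 / n^{O(1)}$, and sub-selecting columns only increases $\sigma_{\min}$, so $\sigma_{\min}(M') \ge \rho^2 / n^{O(1)}$. Hence it suffices to lower bound $\kappa := \inf_{v \in \calV,\ \|v\|=1} \|\projsymfour(v)\|$, where $\calV := \colspn(M') = \calA \veep \R^{N_2}$.

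\textbf{Step 2 (Group-1 subspace via Theorem~\ref{thm:kronecker:sym}).} The map $\projsymfour: \Sym^2(\R^{N_2}) \to (\R^n)^{\veep 4}$ is an orthogonal projection of rank $\binom{n+3}{4}$, which is a constant fraction of $\dim \Sym^2(\R^{N_2}) = \binom{N_2+1}{2}$ for $n$ large. Applying Theorem~\ref{thm:kronecker:sym} with $d=2$, ambient dimension $N_2$, smoothed matrix $A$, and linear operator $\projsymfour$ yields $\sigma_{\min}(\projsymfour\, A^{\veep 2}) \ge \rho^2 / n^{O(1)}$ with probability $1-\exp(-\Omega(n))$ provided $m \le c N_2$ for a constant $c$. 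Combined with the well-conditioning of $A^{\veep 2}$, this gives $\|\projsymfour(v)\| \ge \rho^2 / n^{O(1)}$ for every unit $v$ in the subspace $\calA \veep \calA \subset \calV$.

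\textbf{Step 3 (Mixed subspace $\calA \veep \calA^\perp$; the main obstacle).} For a general unit $v \in \calV$, split $v = v_1 + v_2$ orthogonally with $v_1 \in \calA \veep \calA$ and $v_2 \in \calA \veep \calA^\perp$. When $\|v_2\|$ is non-negligible, Step~2 alone does not suffice because one must rule out cancellations between $\projsymfour(v_1)$ and $\projsymfour(v_2)$. The main obstacle here is that the basis $F$ of $\calA^\perp$ is a deterministic function of $A$, so neither Theorem~\ref{thm:kronecker:sym} nor Theorem~\ref{thm:kronecker:asym} applies ``out of the box'' to the matrix $M_2$ whose columns live in $\projsymfour(\calA \veep \calA^\perp)$. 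To extract usable fresh randomness, the plan is a block leave-one-out argument (Lemma~\ref{lemma:BlockLeaveoneOut}) over the blocks $M^{(j^*)} := \{\projsymfour(A_i \otimes F_{j^*} + F_{j^*} \otimes A_i)\}_{i \in [m]}$ indexed by $j^* > m$, combined with a decoupling in the spirit of Lemma~\ref{lem:decoupling}: split $\tilde A_i = \bar A_i + Z_i^{(1)}$ with $Z_i^{(1)} \sim \calN(0,\rho_1^2)^{N_2}$ for a carefully tuned $\rho_1 \ll \rho$, and let $\bar F$ be the orthonormal basis of $\spn(\bar A_i)^\perp$. Conditioning on $\bar A$ (and hence $\bar F$) leaves $Z^{(1)}$ fresh, while $F$ differs from $\bar F$ by a perturbation of norm $O(\rho_1 / \sigma_{\min}(\bar A))$. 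A Gaussian anti-concentration argument analogous to Claim~\ref{claim:allblocks} then shows that each $M^{(j^*)}$ has a component of norm at least $\rho^{O(1)} / n^{O(1)}$ orthogonal to the span of all other columns of $M$, once $\rho_1$ is small enough that the $\|F - \bar F\|$ error is dominated by the anti-concentration bound, and large enough that the fresh randomness has enough scale. Assembling these block-wise orthogonal-component estimates via Lemma~\ref{lemma:BlockLeaveoneOut}, and combining with the Group-1 bound of Step~2, yields $\kappa \ge \rho^2/n^{O(1)}$ and completes the proof.
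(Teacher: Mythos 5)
Your Steps~1 and~2 are sound in isolation, and the factorization $\sigma_{\min}(M)\ge\kappa\cdot\sigma_{\min}(M')$ with $\calV=\colspn(M')$ is the right reduction. The problem is Step~3: the block leave-one-out over $F_{j^*}$-indexed blocks cannot be closed by the conditioning argument you describe, and this is a real gap, not a detail. Claim~\ref{claim:allblocks} and Lemma~\ref{lem:remaining} require the blocks $A_{j'}$ generating the projections $\Pi^\perp_{-j}$ to be \emph{deterministic} matrices, so that $\Pi^\perp_{-j^*}$ annihilates $A_{j'}\tilde U\alpha^{(j')}$ for every realization of the random matrix $\tilde U$ — that is the whole mechanism that defeats shared randomness. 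In your Step~3, after conditioning on $\bar A$ (hence $\bar F$), the span you wish to project orthogonal to still depends on the very same perturbation $Z^{(1)}$ you want to anti-concentrate over: the $\calA\veep\calA$ columns $\projsymfour(A_i\otimes A_j)$ are quadratic in $Z^{(1)}$ and the other mixed blocks $\projsymfour(A_i\otimes F_{j'})$ are (at least) linear in $Z^{(1)}$. Every column of $M$ is built from the same $mN_2$ fresh Gaussian coordinates. No choice of $\rho_1$ creates independence between block $j^*$ and its complement, so there is no fixed complementary subspace to anti-concentrate against, and Claim~\ref{claim:allblocks} simply does not apply. The $j^*>m$ axis is the wrong axis along which to leave one out.

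The paper sidesteps exactly this obstruction by a different factorization. After a two-step noise split $A=B+Z_1+Z_2$, the asymmetric carrier for the decoupled product is written as $\widehat M=\Phi(U\otimes Q)$ with $U=[B+Z_1\,,\,Z_2]$ and $Q=[B+Z_1+2Z_2\,,\,F]$, which factors (up to a column permutation) as $\widehat M=W\cdot(I_{2m}\otimes Q)$ where $W=[\Phi_1 U\mid\cdots\mid\Phi_{N_2}U]$ is a contraction of the \emph{deterministic} tensor $\Phi$ by a single random matrix $U$. The block leave-one-out is done over the deterministic slices $\Phi_i$, after a random column-restriction $T\subset[N_2]$ (Claim~\ref{claim:bafna:largerelrank}) that manufactures blocks with large \emph{relative} rank, and the shared randomness sits entirely inside $\tilde U=U_T$ on the right of the fixed blocks — precisely the shape that Lemma~\ref{lem:remaining} handles. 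The well-conditioning of $Q$ (Claim~\ref{claim:bafna:intemed:1}) and of $W$ (Claim~\ref{claim:bafna:intemed:2}) are then multiplied. If you want to pursue your route, you must rebuild Step~3 around a fixed-block/random-right-contraction structure of this kind rather than around the $F_{j^*}$-blocks; the heuristic ``condition on $\bar A$, approximate $F\approx\bar F$, anti-concentrate over fresh $Z^{(1)}$'' does not yield a valid Claim~\ref{claim:allblocks}-style step as written.
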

Note that $\binom{m+1}{2}+m(N_2-m)=mN_2 - \binom{m}{2}$ is the number of columns of the above matrix. . 
The above lemma shows that while $\projsymfour$ is an orthogonal projection matrix of rank $\binom{n+3}{4}$ acting on a space of dimension $\binom{N_2+1}{2}$ (which is larger by a constant factor $\approx 3$), it {\em does not annihilate} any vector in the vector space $\calV$.  
In other words we show that $\ker(\projsymfour) \cap \calV = \{0\}$. See Figure~\ref{fig:maps1} for an illustration. Moreover this is true in a robust sense. We now show why the above lemma suffices for the Proposition~\ref{claim:not-too-many-solutions-to-polynomial-equation}. 

\begin{proof}[Proof of Proposition~\ref{claim:not-too-many-solutions-to-polynomial-equation} from Lemma~\ref{lem:solutionbafna}]
We know that $\sigma_m(A) \ge \rho/n^{O(1)}$ with probability $1-\exp(-\Omega(N_2))$, since $m < cN_2$. We condition on this event for the rest of the argument. 
The proof just follows from identifying the null space and a dimension counting argument. From the properties of $\projsymfour$ stated earlier, for any $v_1, v_2 \in \R^{N_2} \cong (\R^{n})^{\veep 2} \cong \R^{N_2}$ we have  
$$ \Sym_4( v_1 \otimes v_2) = \projsymfour\Big( \frac{(v_1 \otimes v_2 + v_2 \otimes v_1)}{2} \Big)=\projsymfour(v_1 \otimes v_2).$$
Hence, we can restrict to the subspace $\R^{N_2} \veep \R^{N_2}$ as considered in Figure~\ref{fig:maps1}. Now we observe that
\begin{align}\label{eq:bafna:temp:2}
    \forall w \in (\calA^\perp) \veep (\calA^\perp),  v \in \calA \veep \R^{N_2},& ~ \text{ we have } \iprod{w,v}=0,  ~\text{i.e.,}\\
     \forall j_1, j_2 \in \{m+1, \dots, N_2\}, \forall i \in [m], v \in \R^{N_2},&~  \iprod{(F_{j_1} \otimes F_{j_2} + F_{j_2} \otimes F_{j_1}),(A_i \otimes v + v \otimes A_i)}=0.  \nonumber
\end{align} 
The subspace $\calV^{\perp}=\calA^{\perp} \veep \calA^{\perp}$ has dimension $\binom{N_2 -m+1}{2}$. The space orthogonal to this is exactly $\calV$ and has dimension 
$$\dim(\calV) = \binom{N_2+1}{2} - \binom{N_2 -m+1}{2} = \frac{N_2(N_2+1)}{2} - \frac{(N_2-m)(N_2-m+1)}{2} = m N_2 - \binom{m}{2}.$$
Moreover, each of the $mN_2 - \binom{m}{2}$ columns of the matrix $M$ from \eqref{eq:bafna:M} belong to the column space of the matrix $\mathrm{Sym}_4 ~ \left(I_{N_2 \times N_2} \otimes A \right)$ of Proposition~\ref{claim:not-too-many-solutions-to-polynomial-equation}. 
Hence Lemma~\ref{lem:solutionbafna} implies Proposition~\ref{claim:not-too-many-solutions-to-polynomial-equation}.
\end{proof}

\paragraph{Proof of Lemma~\ref{lem:solutionbafna} using random restrictions and contraction.} We now proceed to the proof of Lemma~\ref{lem:solutionbafna}. We use the ideas we have developed in the previous sections to tackle the random matrix in Lemma~\ref{lem:solutionbafna}. As in Section~\ref{sec:contraction}, we will view $\projsymfour$ as a tensor $\tensorfour \in \R^{\binom{n+3}{4} \times N_2 \times N_2}$; this tensor has rank $\binom{n+3}{4}$ when viewed as a flattened $\R^{\binom{n+3}{4} \times N_2^2}$ matrix. The random matrix in Lemma~\ref{lem:solutionbafna} is very similar to the random matrices analyzed in Section~\ref{sec:contraction} through random contractions. However in this case, we are doing {\em lopsided} ``random contractions'' corresponding to $A^{\veep 2}+ A \otimes F$ where\footnote{Note that this does not quite correspond to a Kronecker product as in Section~\ref{sec:kronecker}, but we will see that similar arguments can be applied here.} $A \in \R^{N_2 \times m}, F \in \R^{N_2 \times (N-m)}$, and we want the entire set of $m \times N_2$ array of vectors in $R=\binom{n+3}{4}$ dimensions to be linearly independent in a robust sense. While there is a clear contraction along one mode ($N_2$ to $m \ll N_2$),  there is no effective reduction in the dimension along the other mode (it remains $N_2$ since the concatenated matrix $[A , F]$ still has $N_2$ columns). 

To tackle this, we will use a stronger property of the tensor $\tensorfour$ (the tensor corresponding to the operator $\projsymfour$) that ensures that we can keep all of the dimensions corresponding to one of the modes. Consider the slices $W_1, \dots, W_{N_2} \in \R^{R \times N_2}$ (it does not matter which of the last two modes we consider since it is symmetric), and let $\Pi_{-i}^{\perp}$ denote the projector perpendicular to the span of $\cup_{i' \in [N_2] \setminus \{i\}} \mathrm{cols}(W_{i'})$. We would like to argue that for all $i \in [N_2]$, $\Pi^\perp_{-i} W_i$ has a large rank of $\Omega(N_2)$ in a robust sense i.e., $\forall i \in [N_2], ~ \sigma_{\Omega(N_2)}(\Pi_{-i}^{\perp} W_i)$ is inverse polynomial with high probability. However, this is unfortunately too good to be true. 

The key idea here is to use the {\em random restriction} idea from Section~\ref{sec:goodblocks}. Set $\delta \coloneqq 1/16$. We define a random set $T \subseteq [N_2]$ such that each $i \in [N_2]$ belongs to $T$ independently with probability $\delta$ each. For each $i \in [N_2]$, let $W_{i,T} \in \R^{R \times T}$ denote the restriction of $W_i$ to the columns given by $T$. Similarly, let $\Pi_{-i,T}^{\perp}$ denote the (orthogonal) projection matrix that is perpendicular to the span of $\cup_{i' \in [N_2] \setminus \{i\}} \mathrm{cols}(W_{i',T})$. We show the following claim. 

\anote{12/5: This proof seems to assume that each entry of the symmetrization tensor is $0$ or $1$. But is it more of an "averaging" tensor? I'm quite confused about this. Maybe it's fine as is. }
\begin{claim}(Large relative rank when random restricted to $T \subseteq [N_2]$) \label{claim:bafna:largerelrank}
 In the above notation, there is a constant $c\ge 1/64$ such that with probability $1-\exp(-\Omega(N_2))$ over the random choice of $T \subseteq [N_2]$
\begin{equation} \label{eq:bafna:largerelrank}
    \forall i \in [N_2], ~\sigma_{c N_2}( \Pi_{-i,T}^{\perp} W_{i,T}) \ge 1,
\end{equation} 
where $\Pi_{-i,T}$ is the projector onto the span of the union of the columns of $W_{i',T}$ for all $i' \ne i$. In particular, there exists a $T \subseteq [N_2]$ such that \eqref{eq:bafna:largerelrank} holds (deterministically). 
\end{claim}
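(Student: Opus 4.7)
The plan is to exploit the combinatorial structure of $\projsymfour$ through the monomial basis. Identify $\R^{N_2} \cong \Sym_2(\R^n)$ using the basis vectors $\{e_k\}_{k \in [N_2]}$ indexed by degree-$2$ multisets $I_k \subseteq [n]$, and similarly index $\Sym_4(\R^n)$ by size-$4$ multisets. Then the column $W_k[\ell] = \projsymfour(e_k \otimes e_\ell)$ is supported only on the basis element indexed by $I_k + I_\ell$, with a normalization factor bounded uniformly away from $0$ (a short multinomial calculation). Consequently any two columns $W_k[\ell]$ and $W_{k'}[\ell']$ are either proportional (when $I_k + I_\ell = I_{k'} + I_{\ell'}$) or exactly orthogonal. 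A case analysis on the multiplicity pattern of a size-$4$ multiset shows it admits at most three unordered decompositions into two size-$2$ multisets; therefore, for any fixed pair $(i,k)$ there are at most a constant number (at most $5$) of alternative pairs $(i',k')$ with $i' \neq i$ satisfying $I_i + I_k = I_{i'} + I_{k'}$.

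Fix $i \in [N_2]$ and define a \textbf{conflict graph} $G_i$ on vertex set $[N_2]$ by declaring $\{k,k'\}$ an edge whenever there exists $i' \in [N_2] \setminus \{i\}$ with $I_i + I_k = I_{i'} + I_{k'}$. By the decomposition bound, each vertex $k$ has bounded degree $\Delta = O(1)$ in $G_i$, so $|E(G_i)| \le \Delta N_2 / 2$. Sample $T \subseteq [N_2]$ by including each element independently with probability $\delta = 1/16$. Chernoff yields $|T| \ge (1-o(1))\delta N_2$ with probability $1 - \exp(-\Omega(N_2))$. For the edge count $|E(G_i[T])| = \sum_{\{k,k'\} \in E(G_i)} X_k X_{k'}$, decompose $E(G_i)$ into $O(\Delta)$ matchings via Vizing's theorem; each matching gives a sum of \emph{independent} products $X_k X_{k'}$ to which Chernoff applies, and combining matchings gives $|E(G_i[T])| \le 2\Delta \delta^2 N_2$ with probability $1 - \exp(-\Omega(N_2))$.

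Call $k \in T$ \emph{good} if $k$ has no $G_i$-neighbor in $T$. The number of bad vertices in $T$ is at most $2|E(G_i[T])| \le 4\Delta \delta^2 N_2$, so the number of good $k$ is at least $(1-o(1))\delta N_2 - 4\Delta \delta^2 N_2 \ge cN_2$ for some $c \ge 1/64$ (with $\delta = 1/16$ and $\Delta \le 5$). For each good $k$, no alternative decomposition of $I_i + I_k$ involves any $k' \in T$ paired with $i' \neq i$, so $W_i[k]$ is already orthogonal to every column of $W_{i',T}$ with $i' \neq i$; hence $\Pi_{-i,T}^\perp W_i[k] = W_i[k]$. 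These good columns are moreover pairwise orthogonal (they correspond to distinct degree-$4$ multisets $I_i + I_k$) and each has norm bounded below by the normalization constant of $\projsymfour$, yielding the claimed bound $\sigma_{cN_2}(\Pi^\perp_{-i,T} W_{i,T}) \ge 1$ (absorbing the uniform normalization into the constants of Lemma~\ref{lem:solutionbafna}). A union bound over the $N_2$ choices of $i$ preserves the failure probability $\exp(-\Omega(N_2))$, and the existence of a deterministic $T$ satisfying \eqref{eq:bafna:largerelrank} then follows by the probabilistic method.

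The main technical obstacle is making the concentration of $|E(G_i[T])|$ rigorous with $\exp(-\Omega(N_2))$ tail probability, despite the bilinear dependencies in the Bernoullis $\{X_k\}$. The matching-decomposition route reduces this to independent Bernoulli sums one matching at a time; alternatively, Kim-Vu polynomial concentration or Talagrand's inequality gives the same conclusion directly. A secondary point to verify is the uniform lower bound $\|W_i[k]\| = \Omega(1)$ on the normalized monomial basis, which reduces to an elementary computation of the multinomial symmetrization factors and is nonzero across all multiplicity types of $I_i + I_k$.
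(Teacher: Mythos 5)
Your proposal follows the same route as the paper's own proof — identify columns with degree-$4$ monomials so that distinct columns are proportional or orthogonal, count the $\le 5$ alternative decompositions of a degree-$4$ multiset, and use a random restriction $T$ plus concentration to find many surviving columns — and it improves on the paper's sketch by noticing that the goodness events across $j$ are not independent and proposing matching decompositions (or Kim--Vu/Talagrand) to handle the dependence.

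However, the assertion that ``each vertex $k$ has bounded degree $\Delta = O(1)$ in $G_i$'' is false, and this breaks the argument. For every $k \neq i$, taking $(i',k') = (k,i)$ yields the alternative decomposition $I_i + I_k = I_k + I_i$ with $i' = k \neq i$, so $\{k,i\}$ is \emph{always} an edge of $G_i$; hence $\deg_{G_i}(i) = N_2 - 1$. Neither Vizing nor Kim--Vu/Talagrand can then give exponential concentration, since flipping the single indicator ``$i\in T$'' changes the good-vertex count by $\Omega(N_2)$. Worse, whenever $i\in T$ — probability $\delta$, not exponentially small — \emph{no} $k \in T\setminus\{i\}$ can be good: the column $W_i[k] = \projsymfour(e_i\otimes e_k) = \projsymfour(e_k\otimes e_i) = W_k[i]$ is literally a column of $W_{k,T}$, so $\Pi^\perp_{-i,T} W_i[k] = 0$. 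Therefore for each fixed $i$ the bound $\sigma_{cN_2}(\Pi^\perp_{-i,T}W_{i,T})\ge 1$ fails with probability at least $\delta$, and holding it for all $i\in[N_2]$ simultaneously would force $T=\emptyset$. The paper's own proof shares this gap (the line asserting the goodness events are independent across $j$ is false precisely because $i$ sits inside every blocking set $N(j)$), so the claim itself — and its downstream use in Claim~\ref{claim:bafna:intemed:2} — needs a repair, e.g.\ projecting only orthogonal to $\{W_{i',T} : i' \in [N_2]\setminus(T\cup\{i\})\}$ or quantifying only over $i\notin T$. As a secondary issue, even taking $\Delta = 5$ on faith your constants do not close: with $\delta = 1/16$ the bad-vertex bound $4\Delta\delta^2 N_2 = (20/256)N_2$ already exceeds $|T|\approx(16/256)N_2$, whereas the expectation-first count $\delta(1-5\delta)N_2$ (the paper's) stays positive; the over-counting via $2|E(G_i[T])|$ is too lossy.
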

\begin{proof}
Set $\delta \coloneqq 1/16$.
In this proof, we will have unordered $4$-tuples represented by multiset of the form $\{i_1, i_2, i_3, i_4\}$ where $i_1, \dots, i_4 \in [n]$;  similarly we will use multisets of the form $\{i_1, i_2\}$ for unordered pairs. For example $\{1,3,3,5\}$ is the same as $\{3,5,1,3\}$. We will use the multiset $\{i_1, i_2, i_3, i_4\}$ given by the canonical ordering $1 \le i_1 \le i_2 \le i_3 \le i_4 \le n$ to uniquely represent the unordered $4$-tuple. 
Note that some indices may be repeated.

Consider any $i=\{i_1, i_2\} \in [N_2]$ where $1 \le i_1 \le i_2 \le n$. 
We will now argue about $\Pi_{-i,T} W_i$
Consider a fixed $j=\{j_1, j_2\} \in T$ with $j_1 \le j_2$. Now the vector corresponding to $W(:, \{i_1,i_2\}, \{j_1,j_2\})$ has exactly one entry that is $1$ corresponding to the index given by the unordered tuple $\{i_1, i_2, j_1, j_2\}$, and $0$ otherwise. In other words, if $e_{i}$ refers to the $i$th standard basis vector (in appropriate dimensions), then  $\projsymfour(e_{\{i_1,i_2\}}, e_{\{j_1,j_2\}}) = e_{\{i_1,i_2,j_1,j_2\}}$. But there are at most $\binom{4}{2}-1$ other pairs $i' \in [N_2], j' \in [N_2]$ whose vectors $W(:,i',j') \in \R^{\binom{n+3}{4}}$ have a non-zero entry corresponding at the $\{i_1,i_2,j_1,j_2\}$th index. Each of these other $\binom{4}{2}-1$ choices for $i'=\{i'_1,i'_2\}$ has its corresponding $j'=\{j'_1,j'_2\}$ present in $T$ with probability at most $\delta$ (note that $j' \ne j$ since $i' \ne i$, and hence conditioning on $j \in T$ does not affect whether $j' \in T$). Hence, by a union bound, we have
$$\Pr\Big[ (\Pi_{-i,T}^{\perp} W_i) e_{\{i_1,i_2,j_1,j_2\}} = e_{\{i_1,i_2,j_1,j_2\}}\Big] \ge 1- \Big(\binom{4}{2}-1\Big) \delta \ge 1- 5 \delta.$$
Moreover the above event is independent for each $j=\{j_1,j_2\} \in [N_2]$. The expected number of indices $j=\{j_1,j_2\}$ which belong to $T$ and the above event holds is
$$\E\Big[\Big|\big\{\{j_1,j_2\} \in [N_2]:  \{j_1,j_2\} \in T \text{ and } (\Pi_{-i,T}^{\perp} W_{i,T}) e_{\{i_1,i_2,j_1,j_2\}} = e_{\{i_1,i_2,j_1,j_2\}}\big\}\Big|\Big] \ge \delta ( 1- 5\delta) |N_2| \ge \delta/2 |N_2|.$$
These unordered pairs $\{j_1,j_2\}$ that satisfy $(\Pi_{-i,T}^{\perp} W_{i,T}) e_{\{i_1,i_2,j_1,j_2\}} = e_{\{i_1,i_2,j_1,j_2\}}$ define an entire subspace of vectors $v$ such that $\norm{\Pi_{-i,T}^{\perp} W_{i,T} v} \ge \norm{v}$. Hence by the variational characterization of singular values and standard large deviation bounds we have that
$$\Pr\Big[\sigma_{\delta N_2/4}( \Pi_{-i,T}^{\perp} W_{i,T}) \ge 1 \Big] \ge 1 - \exp(-\Omega(N_2)).$$
By a union bound over $i \in [N_2]$, we get the statement of our claim. 

\end{proof}

We now use Claim~\ref{claim:bafna:largerelrank} along with contraction along the smoothed direction $A$ to get our final lemma. Recall $R=\binom{n+3}{4}$. Let $\Phi \in \R^{R \times N_2^{2}}$ denotes the natural matrix representation of $\projsymfour$ acting on $\R^{N_2^2}$ obtained by flattening $\projsymfour$ appropriately. Define the matrix $M \in \R^{R \times (\binom{m+1}{2}+m(N_2-m))}$ to be the block matrix $$M=\Big[ \Phi (A \veep A) ~,~  \Phi (A \otimes F) \Big]=\Phi \Big[ A \veep A ~,~   A \otimes F \Big],$$
where $F \in \R^{N_2 \times (N_2-m)}$ matrix with columns $F_{m+1}, \dots, F_{N_2}$. To analyze the least singular value of $M$, we first need to apply the decoupling technique in used Lemma~\ref{lem:decoupling}.  

Let $A = B+Z_1+Z_2$ where $Z_1,Z_2 \in \R^{N_2 \times m}$ are random Gaussian matrixes with independent entries $N(0,\rho_1^2)$ and $N(0,\rho_2^2)$ respectively with $\rho_1^2+\rho_2^2=\rho^2$. Then by the arguments in Lemma~\ref{lem:decoupling}, it suffices to consider the  matrix $M' \in \R^{R \times m N_2}$ given below and prove an inverse polynomial lower bound on its least singular value. That is, we need to show that with probability at least $1-\exp(-\Omega(n))$
\begin{align}\label{eq:bafna:intermed}
\sigma_{mN_2}\big(M' \big) \ge \frac{\Omega(\rho)}{n^{O(1)}}, \text{ where } M' \coloneqq \Phi \Big[ (B+Z_1) \otimes (B+Z_1+2Z_2) ~,~ (B+Z_1+Z_2) \otimes F \Big].
\end{align}

The above bound \eqref{eq:bafna:intermed} will follow from the following two simpler claims.

\begin{claim}\label{claim:bafna:intemed:1}
In the above notation, the matrix $Q=[B+Z_1+2Z_2 ~,~ F]$ is full rank in a robust sense i.e., with probability $1-\exp(-\Omega(N_2))$ 
$$\sigma_{N_2}(Q) \ge \frac{c \rho}{n^{O(1)}},$$
for some absolute constant $c>0$.
\end{claim}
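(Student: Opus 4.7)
\textbf{Proof plan for Claim~\ref{claim:bafna:intemed:1}.} The plan is to exploit the fact that $F$ is, by construction, orthogonal to $A = B+Z_1+Z_2$, and rewrite $Q = [A+Z_2, F]$ in a block-triangular form via a change of basis. Concretely, we will tune the decoupling so that $\rho_2$ is much smaller than $\rho_1$: set $\rho_2 = \rho/n^{c_0}$ for a sufficiently large constant $c_0$ to be fixed later (so $\rho_1 \ge \rho/2$). By standard Gaussian smoothed-analysis bounds, with probability $1 - \exp(-\Omega(N_2))$ we have $\sigma_m(A) \ge \rho/\poly(n)$, $\|Z_2\|_{\mathrm{op}} \le O(\rho_2 \sqrt{N_2})$ and $\|A\|_{\mathrm{op}} \le O(\|B\| + \rho\sqrt{N_2})$; we condition on these events throughout.

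Let $E \in \R^{N_2 \times m}$ be any orthonormal basis for $\colspn(A)$; then $[E, F] \in \R^{N_2 \times N_2}$ is orthogonal. Since $F^\top A = 0$ and $F^\top F = I_{N_2-m}$, we get
\[
[E,F]^\top Q \;=\; \begin{pmatrix} E^\top(A+Z_2) & 0 \\ F^\top Z_2 & I_{N_2-m} \end{pmatrix},
\]
a block lower-triangular matrix whose least singular value equals $\sigma_{N_2}(Q)$. For any unit test vector $\alpha = (u,v) \in \R^m \times \R^{N_2-m}$,
\[
\|[E,F]^\top Q \alpha\|^2 \;=\; \|E^\top(A+Z_2)u\|^2 + \|F^\top Z_2 u + v\|^2.
\]

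Let $s = \sigma_m(E^\top(A+Z_2))$ and $L = \|F^\top Z_2\|_{\mathrm{op}}$. A simple case analysis on $\|u\|$ finishes the bound: when $\|u\| \ge 1/\sqrt{2}$, the first summand is $\ge s^2/2$; otherwise $\|v\| > 1/\sqrt{2}$ and $\|F^\top Z_2 u + v\| \ge \|v\| - L\|u\| \ge 1/\sqrt{2} - L$, which is $\ge 1/(2\sqrt{2})$ provided $L \le 1/(2\sqrt{2})$. Thus $\sigma_{N_2}(Q) \ge \frac{1}{\sqrt{8}}\min(s,1)$. It remains to bound $s$ and $L$. Since $E$ is an orthonormal basis for $\colspn(A)$, $\sigma_m(E^\top A) = \sigma_m(A) \ge \rho/\poly(n)$, and $\|E^\top Z_2\|_{\mathrm{op}} \le \|Z_2\|_{\mathrm{op}} \le O(\rho_2 \sqrt{N_2})$; picking $c_0$ large enough makes this perturbation negligible, giving $s \ge \rho/\poly(n)$. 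For $L$, the key observation is that even though $F$ depends on $Z_2$ through $A$, the orthonormality of the columns of $F$ gives the \emph{deterministic} bound $L = \|F^\top Z_2\|_{\mathrm{op}} \le \|Z_2\|_{\mathrm{op}} \le O(\rho_2 \sqrt{N_2})$, which is $\le 1/(2\sqrt{2})$ for our choice of $\rho_2$. Combining these yields $\sigma_{N_2}(Q) \ge \Omega(\rho)/\poly(n)$ with probability $1 - \exp(-\Omega(N_2))$.

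The only mildly subtle point is the correlation between $F$ and $Z_2$, which would prevent us from treating $F^\top Z_2$ as a fresh Gaussian. This is precisely what the block-triangular rewriting sidesteps: we never need any anti-concentration property of $F^\top Z_2$, only the easy operator-norm bound that holds for any orthonormal $F$. The rest is routine Gaussian smoothed analysis applied to $A$ itself and a small-norm bound on $Z_2$ enabled by the decoupling flexibility in $\rho_1, \rho_2$.
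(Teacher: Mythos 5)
Your proof is correct, but it takes a genuinely different route from the paper's. The paper reparametrizes \( (Z_1, Z_2) \) into independent Gaussians \( Y_1 = Z_1 + Z_2 \) and a residual \( Y_2 \), works for any polynomially-bounded ratio \(\rho_2/\rho_1\), and uses Gaussian anti-concentration on \(Y_2\) together with an \(\eps\)-net argument in the case where \(\alpha_{[m]}\) is non-negligible. You instead exploit the orthogonality \(F \perp \colspn(A)\) directly: writing \([E,F]^\top Q = \begin{pmatrix} E^\top(A+Z_2) & 0 \\ F^\top Z_2 & I \end{pmatrix}\), the entire problem reduces to operator-norm bounds, and the dependence between \(F\) (a function of \(A = B+Z_1+Z_2\)) and \(Z_2\) is handled deterministically via \(\|F^\top Z_2\| \le \|Z_2\|\) — no anti-concentration of a correlated quantity is ever needed. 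The key algebraic step \(\sigma_m(E^\top A) = \sigma_m(A)\) is also correct and non-probabilistic. This is cleaner and requires no net argument.

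What each approach buys: the paper's argument is agnostic to the split \(\rho_1,\rho_2\) (it works as long as \(\gamma = \rho_2^2/\rho^2\) is polynomially bounded), whereas yours commits to \(\rho_2 = \rho/n^{c_0}\). That choice is at the author's discretion since the split is an artifact of the decoupling in Lemma~\ref{lem:decoupling}, but you should note that \(Z_2\) also enters the matrix \(U = [B+Z_1,\,Z_2]\) used in the companion Claim~\ref{claim:bafna:intemed:2}. There, the application of Lemma~\ref{lem:remaining} with a \(\rho_2\)-perturbed block gives a bound proportional to \(\rho_2 = \rho/n^{c_0}\) rather than \(\rho\); this is still inverse polynomial, so Lemma~\ref{lem:solutionbafna} goes through, just with a worse (but still \(n^{O(1)}\)) exponent. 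It would be worth stating this explicitly to make clear that your parameter choice is globally consistent with the rest of the argument.
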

Note that the matrix $[B+Z_1+Z_2 ~,~ F] = [A ~,~ F]$ has inverse polynomial least singular value by design, since $F$ was chosen to complete the basis and $\sigma_m(A)$ is lower bounded w.h.p. We just need to show that adding the random matrix $[Z_2 ~,~ 0]$ does not affect its least singular value. This is shown by rewriting the random matrices and a standard net argument in Appendix~\ref{app:bafna}. The second claim shows that after applying a random modal contraction with the random matrix $U=[B+Z_1 ~,~ Z_2] \in \R^{N_2 \times (2m)}$, the matrix is well-conditioned. 

\begin{claim}\label{claim:bafna:intemed:2}
Let $U=[B+Z_1 ~,~ Z_2]$. Suppose the matrix $\Mcont \in \R^{R \times (2m  \cdot N_2)}$ is obtained by random modal contraction applied to $\projsymfour$ (along the second mode). Formally, suppose $\Phi_1, \Phi_2, \dots, \Phi_{N_2} \in \R^{R \times N_2}$ represent the matrix slices of $\Phi$, and suppose $\forall i \in [N_2]~ W_i = \Phi_i U$ where random matrix $U=[B+Z_1 ~,~ Z_2] \in \R^{N_2 \times (2m)}$. Then the matrix $\Mcont=[W_1 ~|~ W_2 ~|~ \dots ~|~W_{N_2}]$ satisfies with probability at least $1-\exp(-\Omega(N_2))$ that $\sigma_{(2mN_2)}\big(\Mcont\big) \ge c' \rho/n^{c}$ for absolute constants $c,c'>0$
\end{claim}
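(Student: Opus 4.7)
The plan is to mirror the inductive strategy used in the proof of Theorem~\ref{thm:kronecker:asym}: first isolate a ``good'' subset of slice columns with large relative rank via Claim~\ref{claim:bafna:largerelrank}, then peel off the contribution from the complementary columns as a fixed offset, and finally invoke Lemma~\ref{lem:remaining} to obtain robust full rank after the smoothed contraction. Concretely, I would fix a deterministic subset $T \subseteq [N_2]$ guaranteed by Claim~\ref{claim:bafna:largerelrank} so that $\sigma_{cN_2}(\Pi^\perp_{-i,T}\,\Phi_{i,T}) \ge 1$ for every $i \in [N_2]$, with $c \ge 1/64$. Because $T$ depends only on $\projsymfour$ and not on $U$, the row-partition $U_T$ and $U_{[N_2]\setminus T}$ of the random matrix $U = [B+Z_1,\, Z_2]$ remains independent.

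For each $i \in [N_2]$ I decompose
$$W_i \;=\; \Phi_i U \;=\; \underbrace{\Phi_{i,T}}_{A_i}\;\underbrace{U_T}_{\tilde U} \;+\; \underbrace{\Phi_{i,[N_2]\setminus T}\,U_{[N_2]\setminus T}}_{C_i},$$
which exactly matches the setup of Lemma~\ref{lem:remaining} with $s = N_2$ blocks, inner dimension $k = |T| = \Omega(N_2)$, parameters $\varepsilon = c$ and $\gamma = 0$, and output width $2m$ per block. Conditioning on $U_{[N_2]\setminus T}$ turns each $C_i$ into a fixed matrix; the spectral norm bounds $\|A_i\| \le \|\Phi\| \le 1$ and $\|C_i\| \le \poly(n)$ hold w.h.p. by standard concentration on $\|U\|$; and the size requirement $2m \le c'\varepsilon k$ is implied by $m \le c'' N_2$. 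Applying Lemma~\ref{lem:remaining} then yields
$$\sigma_{2mN_2}(\Mcont) \;\ge\; \frac{\rho_{\mathrm{eff}}}{\poly(n)}$$
with probability $1 - \exp(-\Omega(N_2))$, which is exactly the claim.

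The one subtlety I foresee is that $\tilde U = U_T$ is not a homogeneous $\rho$-perturbation as Lemma~\ref{lem:remaining} is stated: its first $m$ columns are $\rho_1$-perturbations of $B_T$ while its last $m$ columns are pure $\rho_2$-Gaussians. However, inspecting the proof of Lemma~\ref{lem:remaining} (specifically Claim~\ref{claim:allblocks}), the Gaussian enters only through the small-ball probability of $\tilde U v$ on a fixed unit vector $v \in \R^{2m}$, whose per-coordinate variance is $\rho_1^2\|v_{[m]}\|^2 + \rho_2^2\|v_{(m,2m]}\|^2 \ge \min(\rho_1,\rho_2)^2$. Substituting $\rho_{\mathrm{eff}} = \min(\rho_1,\rho_2)$ for the lemma's $\rho$ preserves every step of its proof verbatim. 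Apart from this minor adaptation, the argument is a direct instantiation of the machinery already built in Sections~\ref{sec:kronecker} and would produce the required bound $\sigma_{2mN_2}(\Mcont) \ge c'\rho/n^c$.
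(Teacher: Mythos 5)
Your proof follows essentially the same route as the paper's: select a subset $T$ via Claim~\ref{claim:bafna:largerelrank}, decompose $W_i = \Phi_{i,T} U_T + \Phi_{i,[N_2]\setminus T} U_{[N_2]\setminus T}$ so that the two summands have independent randomness, and invoke Lemma~\ref{lem:remaining} with $A_i = \Phi_{i,T}$, $\tilde U = U_T$, $C_i = \Phi_{i,[N_2]\setminus T} U_{[N_2]\setminus T}$. Your explicit handling of the heterogeneous perturbation scales in $U_T = [(B+Z_1)_T,\; Z_{2,T}]$ (replacing $\rho$ with $\min(\rho_1,\rho_2)$ inside Claim~\ref{claim:allblocks}) is a genuine subtlety that the paper passes over silently, and you resolve it correctly.
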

The above Claim~\ref{claim:bafna:intemed:2} follows directly by applying Lemma~\ref{lem:remaining} along with Claim~\ref{claim:bafna:largerelrank}. We defer the proof to Appendix~\ref{app:bafna}.

With Claims~\ref{claim:bafna:intemed:1} and \ref{claim:bafna:intemed:2} in hand we can now establish \eqref{eq:bafna:intermed} complete the proof of Lemma~\ref{lem:solutionbafna}. Consider the matrix $\widehat{M} \in \R^{R \times (2m N_2)}$ given by 
\begin{align}
    \widehat{M}&= \Phi \Big( \big[B+Z_1 ~,~ Z_2 \big] \otimes \big[ (B+Z_1+2Z_2) ~,~ F \big] \Big),\\
    &=\Phi \Big( U \otimes Q\Big), \nonumber\\
    \text{ where } U&=[B+Z_1 ~,~ Z_2] \in \R^{N_2 \times (2m)}, \text{ and } Q=[B+Z_1+2Z_2 ~,~ F] \in \R^{N_2 \times N_2}, \nonumber 
\end{align}
as defined in the earlier claims. Note that $M' \in \R^{R \times mN_2}$ is a submatrix of $\widehat{M} \in \R^{R \times (2m N_2)}$. Hence a least singular value bound on $\widehat{M}$ (note we show it has full column rank) implies the same least singular value bound for $M'$. Suppose $W \in \R^{R \times (2mN_2)}$ is the matrix defined in Claim~\ref{claim:bafna:intemed:2} and $W^{(1)}, \dots, W^{(2m)} \in \R^{R \times N_2}$ are corresponding blocks of $W$ that correspond to the matrix slices taken along the second mode of the tensor corresponding to $W$. Then up to rearranging columns   
\begin{equation} \widehat{M} \cong \Big(W^{(1)} Q ~,~ W^{(2)} Q~,~\dots~, W^{(2m)} Q \Big)=  \Big(W^{(1)}  ~,~ W^{(2)} ~,~\dots~, W^{(2m)}\Big) \Big(I_{2m \times 2m} \otimes Q \Big).\label{eq:bafna:temp1}
\end{equation}
We condition on the success events in both Claim~\ref{claim:bafna:intemed:1} and Claim~\ref{claim:bafna:intemed:2}; by a union bound, they both hold with probability at least $1-2\exp(-\Omega(n))$. From Claim~\ref{claim:bafna:intemed:2}, the matrix $W$ defined in Claim~\ref{claim:bafna:intemed:2} has
\begin{equation}
\sigma_{2mN_2} \Big(W^{(1)}  ~,~ W^{(2)} ~,~\dots~, W^{(2m)}\Big)= \sigma_{2mN_2}\big(W \big) \ge \frac{c'\rho}{n^{O(1)}}. \label{eq:bafna:temp2}
\end{equation}
From Claim~\ref{claim:bafna:intemed:1}, 
\begin{equation}
    \sigma_{2mN_2}\Big(I_{2m \times 2m} \otimes Q \Big) = \sigma_{N_2}(Q)\ge \frac{c'' \rho}{n^{O(1)}}.
\label{eq:bafna:temp3}
\end{equation}
Combining \eqref{eq:bafna:temp1}, \eqref{eq:bafna:temp2} and \eqref{eq:bafna:temp3}, we finish the proof of the lemma. 
\qed

\section{Application: Subspace Clustering}\label{sec:subspace-clustering}
Very recently, Chandra, Garg, Kayal, Mittal, and Sinha~\cite{chandra2024Learning} extended the framework of~\cite{GKS2020} and obtained robust analogs of the corresponding learning algorithms. One of the applications they consider is subspace clustering, in which there are $s$ underlying hidden \(t\)-dimensional subspaces \(W_1, \dots, W_s\).  We see a set of points \(A\) such that \(A = A_1 \cup \dots \cup A_s \), where each \(A_i \subseteq W_i\).  Our goal is to recover the \(W_i\) from \(A\).  

\cite{chandra2024Learning} give an algorithm for this problem by reducing to \emph{robust vector space decomposition}.  
For a specific block \(i\), let \(V_i = \left[ \mathbf{v}_{i1}, \dots, \mathbf{v}_{it} \right]\) be a basis for \(W_i\).  
Treating each \(a^{\veep d}\) as a homogeneous degree-\(d\) polynomial, they take all first-order partial derivatives of all of the \(a^{\veep d}\).  Note that if \(a \in A_i\), then \(a^{\veep d} \in \mathrm{colspan}( V_i^{\veep d} )\), and the first order partial derivatives of \(a^{\veep d}\) are in \(\mathrm{colspan}(V_i^{\veep (d - 1)})\).  

This sets up the problem in the framework of vector space decomposition.  In vector space decomposition, we are given two vector spaces \(U\) and \(V\), where there are hidden decompositions \(U = U_1 \oplus \dots \oplus U_s\) and \(V = V_1 \oplus \dots \oplus V_s\), where \(\oplus\) denotes a direct sum.  We are also have a set of linear maps \(\mathfrak{B} = (B_1, \dots, B_\ell)\), where each \(B_j : U \rightarrow V\), while respecting that \(B_j U_i \subseteq V_i\) for all \(i\).  The goal of vector space decomposition is to recover the \(U_i\) (and therefore the \(V_i\)) given \(U, V,\) and \(\mathfrak{B}\).   

In the case of subspace clustering, we can take \(U\) to be the span of \(\{a^{\veep d} : a \in A\}\), \(\mathfrak{B}\) to be the set maps corresponding to first-order partial derivatives, and \(V\) to be the image space of \(\mathfrak{B}\) applied to \(U\).
Then, as long as subspaces \(\mathrm{colspan}\left(V_i^{\veep (d - 1)}\right) \) are linearly independent, and some non-degeneracy conditions on the set of points \(A\) are met, this set of linear maps allows us to recover the subspaces \(V_i\) via robust vector space decomposition.  

\cite{chandra2024Learning} assume two conjectures to prove that this algorithm succeeds in the smoothed framework.  The first is a smoothed version of the statement that the subspaces \(\mathrm{colspan}(V_i^{\veep (d - 1)}) \) are linearly independent.  We address this first conjecture in Section \ref{sec:garg-kayal-block-kronecker}.  The second conjecture is that for sufficiently many perturbed points \(A_i\) from subspace \(W_i\), the set \(\{a^{\veep d} : a \in A_i\}\) spans \(W_i^{\veep d}\) robustly.  We address this second conjecture in Section \ref{sec:garg-kayal-spanning}.


\subsection{Proof of Conjecture on Robust Linear Independence of Subspace Lifts}
\label{sec:garg-kayal-block-kronecker}
\vnote{Should this section be titled differently?}

\begin{conjecture}[Conjecture 1.1 in \cite{chandra2024Learning}]
\label{conj:SubspaceBlockKron}
    Given $\delta>0$ and $d \in \mathbb{N}$, fix \(m < n\) (\(m\) is the dimension of the subspaces), \(s\) such that \(s \binom{m + d - 1}{d} \le \delta \binom{n + d - 1}{d}\). Additionally assume that $m \leq (1-\delta) c_d  n$, where $c_d$ is as in Corollary \ref{corr:kron:blocks}. 

    For each \(i \in [s]\), let \(V_i\)
    be an orthonormal basis for the subspace \(\widehat{W}_i\). 
    Each \(\widehat{W}_i\) is the subspace spanned by \(P_i + G_i \in \mathbb{R}^{n \times m}\), where \(P_i \in \mathbb{R}^{n \times m}\) is an arbitrary matrix with orthonormal columns, and \(G_i \sim \mathcal{N}(0, \rho^2 / n)^{n \times t}\) is drawn independently for each \(i \in [s]\).  

    Let \(M = \left[ M_1 \ M_2 \ \dots \ M_s \right]\) be a matrix formed of \(s\) blocks, where 
    \[M_i =  V_i^{\veep d}.\]

    Then with probability at least $1-\exp (-\Omega_{d,\delta} (n))$, the least singular value of \(M\) is bounded as 
    \[\sigma_{\mathrm{min}}(M) = \sigma_{s \binom{m + d - 1}{d}}(M) \ge \mathrm{poly}(\rho, 1/n).\]
\end{conjecture}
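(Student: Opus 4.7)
The plan is to apply Corollary~\ref{corr:kron:blocks} directly to the block matrix formed by the perturbed matrices $\widetilde{U}_i := P_i + G_i$ themselves, and then transfer the resulting least singular value bound to $M$ via a well-conditioned change of basis.

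Concretely, I would first apply Corollary~\ref{corr:kron:blocks} with $t = s$, $\delta_1 = 1$, $\delta_2 = \delta$, and $\Phi$ equal to the identity on $\Sym^d(\R^n)$. Since each entry of $G_i$ has standard deviation $\rho/\sqrt{n}$, the matrix $\widetilde{U}_i$ is a $(\rho/\sqrt{n})$-perturbation of $P_i$. The hypotheses $s \binom{m+d-1}{d} \le \delta \binom{n+d-1}{d}$ and $m \le (1-\delta)c_d n$ match the assumptions of the corollary exactly, so with probability at least $1 - \exp(-\Omega_{d,\delta}(n))$,
\[
\sigma_{\min}(N) \ge \frac{(\rho/\sqrt{n})^d}{\sqrt{s}\, n^{O(d)}} = \poly(\rho, 1/n), \qquad \text{where } N := \bigl[\widetilde{U}_1^{\veep d} \mid \cdots \mid \widetilde{U}_s^{\veep d}\bigr].
\]

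Next, I would relate $M$ to $N$ blockwise. Since $V_i$ and $\widetilde{U}_i$ both span $\widehat{W}_i$, the matrices $V_i^{\veep d}$ and $\widetilde{U}_i^{\veep d}$ both have column span equal to $\widehat{W}_i^{\veep d}$. Each has exactly $m_d := \binom{m+d-1}{d}$ columns and, with high probability over $G_i$, each has full column rank (e.g.\ by a standard least singular value bound on $\widetilde{U}_i$). Hence there is a unique invertible $m_d \times m_d$ matrix $T_i$ satisfying $V_i^{\veep d} = \widetilde{U}_i^{\veep d} T_i$. Stacking blockwise gives $M = N \cdot \mathrm{blkdiag}(T_1, \dots, T_s)$, so
\[
\sigma_{\min}(M) \ge \sigma_{\min}(N) \cdot \min_{i \in [s]} \sigma_{\min}(T_i).
\]

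The main technical step is a polynomial lower bound on $\sigma_{\min}(T_i)$. Writing $T_i = (\widetilde{U}_i^{\veep d})^+ V_i^{\veep d}$ and using the fact that, for any $y$ in the column span of $\widetilde{U}_i^{\veep d}$, $\|(\widetilde{U}_i^{\veep d})^+ y\| \ge \|y\|/\sigma_{\max}(\widetilde{U}_i^{\veep d})$, one obtains $\sigma_{\min}(T_i) \ge \sigma_{\min}(V_i^{\veep d})/\sigma_{\max}(\widetilde{U}_i^{\veep d})$. For the numerator, a direct computation using that $V_i$ has orthonormal columns shows that the columns of $V_i^{\veep d}$ are pairwise orthogonal with squared norms $\prod_\ell m_\ell!/d! \in [1/d!,\,1]$ (where $m_1,\dots,m_k$ are the multiplicities of the distinct indices in the tuple defining the column), so $\sigma_{\min}(V_i^{\veep d}) \ge 1/\sqrt{d!}$. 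For the denominator, standard Gaussian operator norm concentration gives $\|G_i\| = O(\rho)$ with probability $1 - \exp(-\Omega(n))$, so $\sigma_{\max}(\widetilde{U}_i) \le 1 + O(\rho)$ and $\sigma_{\max}(\widetilde{U}_i^{\veep d}) \le \sigma_{\max}(\widetilde{U}_i)^d \le (1+O(\rho))^d$. Union bounding over $i \in [s]$ and combining,
\[
\sigma_{\min}(M) \;\ge\; \sigma_{\min}(N)\cdot \frac{1/\sqrt{d!}}{(1+O(\rho))^d} \;\ge\; \poly(\rho,1/n).
\]

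The main obstacle is the change-of-basis step itself: although $V_i = \widetilde{U}_i R_i$ for some $m \times m$ matrix $R_i$, the symmetrized Kronecker power $\veep d$ does not commute naively with this linear map (the selection/averaging operator $\Sela$ interacts nontrivially with $R_i^{\otimes d}$), so one cannot simply take $T_i$ to be $R_i^{\veep d}$. The key observation that bypasses this is that orthonormality of the columns of $V_i$ forces the columns of $V_i^{\veep d}$ to be pairwise orthogonal with norms in $[1/\sqrt{d!},\,1]$, which yields the desired bound on $\sigma_{\min}(V_i^{\veep d})$ directly, without having to compute $T_i$ explicitly.
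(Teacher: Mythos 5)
Your proposal is correct and follows essentially the same route as the paper's proof: apply Corollary~\ref{corr:kron:blocks} (with $\Phi = I$, $\delta_1 = 1$, $\delta_2 = \delta$) to the concatenation of the perturbed lifts $\widetilde{U}_i^{\veep d}$, then transfer the bound to $M$ via a per-block change of basis using $\sigma_{\min}(V_i^{\veep d}) \ge 1/\sqrt{d!}$ and an upper bound on $\sigma_{\max}(\widetilde{U}_i^{\veep d})$. The paper packages the change-of-basis step as a standalone lemma (Lemma~\ref{lem:BlockMatrixChangeOfBasisBound}) proved directly with test vectors, and lower-bounds $\sigma_{\min}(V_i^{\veep d})$ via the singular values of the averaging operator $\Sela$, whereas you derive the identical inequality from the pseudoinverse factorization $M = N\cdot\mathrm{blkdiag}(T_i)$ and compute the column norms of $V_i^{\veep d}$ directly; these are the same argument in different dress.
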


To prove this conjecture we start with a lemma that lets us to reduce to the setting of Corollary \ref{corr:kron:blocks}.

\begin{lemma}
\label{lem:BlockMatrixChangeOfBasisBound}
 Fix $\ell,k,s \in \mathbb{N}$ and $\epsilon,\eta,L \geq 0$. Let $A_1,\dots,A_s,U_1,\dots,U_s \in \R^{\ell \times k}$ be matrices that for all $i = 1,\dots,s$ we have
 \[
    \mathrm{colspan} (A_i) = \mathrm{colspan} (U_i) \qquad \mathrm{and} \qquad \|A_i \|_2 \leq L \qquad \mathrm{and} \qquad \sigma_{\min} (U_i) \geq \eta.
 \]
Additionally assume that $\sigma_{\min} ([A_1 \ \dots \ A_s]) \geq \epsilon$. Then we have
\[
\sigma_{\min} ([U_1 \ \dots \ U_s]) \geq \frac{\eta \epsilon}{L}.
\]
\end{lemma}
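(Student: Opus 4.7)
The plan is to reduce the statement to a simple change-of-basis argument, exploiting the fact that $A_i$ and $U_i$ span the same $k$-dimensional subspace (the lower bound $\sigma_{\min}(U_i) \geq \eta$ in particular forces $U_i$, and hence $A_i$, to have full column rank $k$).

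First, for each $i \in [s]$, I would use the hypothesis $\mathrm{colspan}(A_i) = \mathrm{colspan}(U_i)$ to write $A_i = U_i T_i$ for a uniquely determined $k \times k$ invertible matrix $T_i$. Explicitly, $T_i = U_i^{\dagger} A_i$ where $U_i^{\dagger} = (U_i^\top U_i)^{-1} U_i^\top$ is the Moore--Penrose pseudoinverse; this gives the operator norm bound
\[
\|T_i\| \;\leq\; \|U_i^{\dagger}\| \cdot \|A_i\| \;\leq\; \frac{1}{\sigma_{\min}(U_i)} \cdot \|A_i\| \;\leq\; \frac{L}{\eta}.
\]

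Next, assembling the blocks, set $A := [A_1 \ \dots \ A_s]$, $U := [U_1 \ \dots \ U_s]$, and let $T := \mathrm{blockdiag}(T_1, \dots, T_s)$, so that $A = U T$. Since $T$ is block diagonal with invertible blocks, $T$ is itself invertible, and $\|T\| = \max_i \|T_i\| \leq L/\eta$. Solving for $U$ gives $U = A T^{-1}$, and applying this to any unit vector $x$ of the appropriate dimension yields
\[
\|U x\| \;=\; \|A T^{-1} x\| \;\geq\; \sigma_{\min}(A) \cdot \|T^{-1} x\| \;\geq\; \frac{\sigma_{\min}(A)}{\|T\|} \;\geq\; \frac{\epsilon \eta}{L},
\]
where the last lower bound on $\|T^{-1}x\|$ uses $\sigma_{\min}(T^{-1}) = 1/\|T\|$. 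Taking the infimum over unit vectors $x$ gives the claimed lower bound $\sigma_{\min}(U) \geq \eta\epsilon/L$.

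There is no real obstacle here; the only slight care required is to justify that $T_i$ exists and is invertible (which follows immediately from $U_i$ having full column rank and the two column spans coinciding), and to observe that the block-diagonal structure of $T$ means its operator norm is just the maximum of the block norms.
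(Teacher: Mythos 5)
Your proof is correct and takes essentially the same approach as the paper, just phrased directly via an explicit block-diagonal change-of-basis matrix $T$ with $A = UT$ and $\|T\| \le L/\eta$, whereas the paper argues by contradiction at the level of a test vector $v = v_1 \oplus \dots \oplus v_s$ and the corresponding $z = z_1 \oplus \dots \oplus z_s$ with $U_i v_i = A_i z_i$. The key observation in both — that the change of variables between the two block bases has operator norm at most $L/\eta$ — is identical.
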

\begin{proof}
Suppose towards a contradiction that 
 there is a unit vector $v = v_1 \oplus \dots \oplus v_s \in \R^{sk}$ such that
\[
\|[U_1 \ U_2 \ \dots \ U_s]v\| = \left\|\sum_{i=1}^s U_iv_i \right\| < \frac{\eta \epsilon}{L}.
\]
Then, since $\mathrm{colspan} (A_i) = \mathrm{colspan} (U_i)$ for each $i$, it follows that for each $i=1,\dots,s$, there exists a vector $z_i \in \R^{k}$ such that $U_i v_i = A_i z_i$. Set $z = z_1 \oplus \dots \oplus z_s$. Then we have
\[
\|[A_1 \ \dots \ A_s] z \| = \sqrt{\sum_{i=1}^s \| A_iz_i \|^2} = \sqrt{\sum_{i=1}^s \| U_iv_i \|^2} = \|[U_1 \ \dots \ U_s] v \| < \frac{\eta \epsilon}{L}.
\]
Furthermore, for each $i$ we find that
\[
\eta\|v_i\| \leq \|U_i v_i \| = \|A_i z_i \| \leq L \|z_i\|,
\]
hence
\[
\|z_i\| \geq \frac{\eta \|v_i\|}{L}, \qquad \mathrm{which \ implies} \qquad \frac{1}{\|z\|} \leq \frac{L}{\eta}. 
\]
We then obtain that
\[
\left\|[A_1 \ \dots \ A_s] \frac{z}{\|z\|} \right\| = \frac{1}{\|z\|} \|[A_1 \ \dots \ A_s] z \| < \frac{L}{\eta}\frac{\eta \epsilon}{L},
\]
which contradicts the assumption that $\sigma_{\min} ([A_1 \ \dots \ A_s]) \geq \epsilon$.
\end{proof}

We now prove Conjecture \ref{conj:SubspaceBlockKron}.

\begin{proof}
    For each $i=1,\dots,s$, set $\tilde{P}_i := P_i+G_i$. From Corollary \ref{corr:kron:blocks}, we have that with probability at least $1-\exp(-\Omega_{d,\delta}(n))$ that 
    \[
    \sigma_{\min} ([\tilde{P}_1^{\veep d} \ \dots \  \tilde{P}_s^{\veep d}]) \geq \frac{\rho^d}{\sqrt{t} n^{O(d)}}. 
    \]
    Furthermore, for each $i$, we have with probability at least $1-\exp(-\Omega(n)),$ that
    \[
    \|\tilde{P}_i\| \leq 1+\rho c_1\sqrt{n},
    \]
    for some $c_1 > 0$. For the remainder of the proof, we condition on these events. 
    
    Now, since $\tilde{P}_i$ has the same range as $V_i$, we also have that $\tilde{P}_i^{\veep d}$ has the same range as $V_i^{\veep d}$ for each $i$. Furthermore, since each $V_i$ is an orthogonal matrix and since $V_i^{\veep d} = V_i^{\otimes d} \Sela$ and the matrix $\Sela$ has singular values in $[1/\sqrt{d!},1]$, we obtain that $\sigma_{\min} (V_i) \geq 1/\sqrt{d!}$ for each $i=1,\dots,s$. Therefore, we can apply Lemma \ref{lem:BlockMatrixChangeOfBasisBound} to conclude that
    \[
    \sigma_{\min} ([V_1^{\veep d}\  \dots \ V_s^{\veep d}]) \geq \frac{\rho^d}{(1+\rho c_1 \sqrt{n}) \sqrt{t(d!)} n^{O(d)}},
    \]
    as claimed.
\end{proof}

\subsection{Proof of the ``Spanning'' Conjecture of Subspace Lifts}
\label{sec:garg-kayal-spanning}
\cite{chandra2024Learning} also conjecture that if we have smoothed vectors $\{\tilde{u}_i\}_{i=1}^N$ in $\R^d$, and if $N $ is large enough, then the polynomials $ \langle \tilde{u}_i, x \rangle^r$ are all ``robustly'' linearly independent. Formally, they conjecture the following:

\begin{conjecture}[Conjecture 1.2 in \cite{chandra2024Learning}]\label{kayal-conjecture-2} Suppose $\sigma, \delta >0$ are constants, and let $r\ge 1$ be an integer. Let $\{\tilde{u}_i\}_{i=1}^N$ be independent $\sigma$-perturbations of some (arbitrary) vectors $\{u_i\}_{i=1}^N$ of norm $\le \Delta$. Suppose also that $N > (1+\delta) \binom{d+r-1}{r}$, and let $M$ be the $N \times \binom{d+r-1}{r}$ matrix whose $i$th row contains the vector of coefficients for the polynomial $\langle \tilde{u}_i, x\rangle^r$.  Then with high probability,
\[ \sigma_{\binom{d+r-1}{r}} (M) \ge \mathrm{poly}_r \left( \frac{\sigma \delta}{n \Delta} \right).  \]
\end{conjecture}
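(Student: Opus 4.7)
The $i$-th row of $M$ is the coefficient vector of the polynomial $\langle \tilde{u}_i, x\rangle^r$, which up to a fixed diagonal of multinomial coefficients $\binom{r}{\beta}\in[1,r!]$ agrees with the symmetric tensor power $\tilde{u}_i^{\veep r}$. Identifying any $\alpha\in\R^{K}$ (where $K=\binom{d+r-1}{r}$) with the homogeneous degree-$r$ polynomial $p_\alpha(x)=\sum_{|\beta|=r}\alpha_\beta\binom{r}{\beta}x^\beta$ gives the identity $\|M\alpha\|_2^2=\sum_{i=1}^{N} p_\alpha(\tilde{u}_i)^2$. The task therefore reduces to establishing the uniform lower bound $\min_{\|\alpha\|_2=1}\sqrt{\sum_i p_\alpha(\tilde{u}_i)^2}\ge \poly_r(\sigma\delta/(d\Delta))$ with high probability over the independent smoothings $\tilde{u}_i=u_i+\sigma g_i$, $g_i\sim\calN(0,I_d)$.

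The probabilistic core of the plan is polynomial anti-concentration for smoothed Gaussians. The polynomial $q_i(g):=p_\alpha(u_i+\sigma g)$ has top-degree-in-$g$ part equal to $\sigma^r p_\alpha(g)$; orthogonality of distinct Gaussian Hermite degrees implies $\E[q_i(g)^2]\ge \sigma^{2r}\E[p_\alpha(g)^2]\ge c_r\,\sigma^{2r}\|\alpha\|_2^2$ for some constant $c_r>0$ depending only on $r$. The Carbery--Wright inequality then yields the single-coordinate bound $\Pr[|p_\alpha(\tilde{u}_i)|\le t]\le C_r\bigl(t/(\sigma^r\|\alpha\|_2)\bigr)^{1/r}$. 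Since $\sum_i p_\alpha(\tilde{u}_i)^2<T^2$ forces every nonnegative summand to satisfy $|p_\alpha(\tilde{u}_i)|<T$, independence of the $\tilde{u}_i$ across $i$ upgrades this to the product bound
\[
\Pr\!\left[\|M\alpha\|_2<T\right]\;\le\;\Bigl(C_r\,(T/(\sigma^r\|\alpha\|_2))^{1/r}\Bigr)^{N}.
\]

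We plan to convert this tail bound into a uniform lower bound via the hierarchical-nets framework of Section~\ref{sec:eps-nets}, by verifying the CAA property (Definition~\ref{defn:caa-property}) for $M$. For a test vector with at least $k$ coordinates of magnitude $\delta_0$ we have $\|\alpha\|_2\ge \delta_0\sqrt{k}$, so that choosing $\lambda$ of order $1/\sigma^r$ and $T=\delta_0 h/\lambda$ the product bound becomes $(h/\sqrt{k})^{N/r}$, with a $\log(1/h)$ coefficient in the exponent equal to $N/r=(1+\delta)K/r$. Once CAA is verified for a suitable $\beta\in(0,1]$, Theorem~\ref{thm:caa-to-sigma} delivers $\sigma_{\min}(M)\ge \poly_r(\sigma\delta/(d\Delta))$ with failure probability $\exp(-\Omega(K^{\beta}))$, as desired.

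The principal obstacle is the $1/r$ loss in the Carbery--Wright exponent: the CAA requirement $c\ge 8/\beta$ in the worst case $k\ge K^{1-\beta}$ asks for $(1+\delta)/r\ge 8/\beta$, which a direct application meets only for $r=1$. For general $r$ we expect to need two complementary refinements, following the template of Section~\ref{sec:khatri-rao-jacobian}. First, for \emph{sparse} test vectors with $k=o(K/\log K)$ nonzero coordinates, we bypass the hierarchical net entirely and take a union bound over supports, paying an extra $\binom{K}{k}\le (eK/k)^{k}$ factor that is easily absorbed by the above product bound. Second, for \emph{dense} test vectors we sharpen the per-vector anti-concentration using the Jacobian-rank technique of Theorem~\ref{thm:jacobian-to-antic}, applied to the vector-valued polynomial map $(\tilde{u}_1,\ldots,\tilde{u}_N)\mapsto(p_\alpha(\tilde{u}_i))_{i=1}^{N}$ whose Jacobian is block-diagonal with $N$ independent blocks $\nabla p_\alpha(\tilde{u}_i)$, each of norm $\Omega(\sigma^{r-1}\|\alpha\|_2)$ with high probability (mirroring the dense-$\alpha$ argument following Observation~\ref{obs:dense-vectors}). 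The hope is that the $(1+\delta)$ slack in $N$ is precisely what allows these refinements to close the gap introduced by the $1/r$ factor.
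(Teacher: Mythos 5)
Your opening reconstruction matches the paper's key lemma: the paper's Lemma~\ref{lem:carbery-wright-ext} is exactly a variance lower bound for $f(\tilde x)=\sum_I a_I\prod_{i\in I}\tilde x_i$ (imported from~\cite{GLAZER2022109639} rather than derived from Hermite orthogonality, but same idea), combined with Carbery--Wright, giving $\Pr[|f(\tilde x)-y|<\eps]\le (c'r/\sigma)\eps^{1/r}$; the independent rows then give the product tail bound you write.

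From there, however, your plan diverges from the paper's proof and does not close. Two issues.

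First, the paper does \emph{not} prove the conjecture as stated under $N>(1+\delta)\binom{d+r-1}{r}$: it explicitly weakens to $N\ge 2r\binom{d+r-1}{r}$ (equation~\eqref{eq:conj-weaker}). Under this stronger hypothesis the argument is just a one-level $\eps$-net, not a hierarchical one: writing $D=\binom{d+r-1}{r}$, a $\delta_{\mathrm{net}}$-net has size $(4/\delta_{\mathrm{net}})^D$, and with $N=2rD$ the per-vector bound $(C\eps^{1/r})^N=(C^{2r}\eps)^{2D}$ dominates it once $\eps\approx 1/(C^{2r}\|M\|)$. The $1/r$ loss in the Carbery--Wright exponent is absorbed head-on by the extra factor $r$ demanded of $N$, not by a cleverer net.

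Second, you have correctly diagnosed why the CAA/hierarchical route fails for $r>1$: with $N=(1+\delta)D$ the product tail has exponent $\sim N/r$, and the CAA requirement $c\ge 8/\beta$ from Definition~\ref{defn:caa-property} cannot be met when $r$ is large. That observation should have told you to abandon this route, not press forward. Your proposed repair via the Jacobian-rank technique is genuinely different from the paper and is not unreasonable: $P(\tilde u_1,\dots,\tilde u_N)=(p_\alpha(\tilde u_i))_i$ does have a block-diagonal Jacobian whose $N$ singular values are $\|\nabla p_\alpha(\tilde u_i)\|$, each of which is $\gtrsim\sigma^{r-1}\|\alpha\|$ with constant probability, so the Jacobian-rank parameter can plausibly be $\Omega(N)$ rather than $\Omega(N/r)$. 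But you leave this as ``the hope is that\ldots'', and the details are nontrivial: the per-gradient anticoncentration again needs a Carbery--Wright-type argument at degree $r-1$; the ``constant'' $c$ in Definition~\ref{defn:jacobian-rank} is here only $\sigma^{r-2}$; and the Lipschitz constant $M$ in Theorem~\ref{thm:jacobian-to-antic} scales like $\|\tilde u_i\|^{r-2}$ and needs to be controlled. Until these are worked out, you have no complete proof, whereas the paper obtains one in half a page by weakening the hypothesis.
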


We give a short and self-contained proof of this conjecture, but with a weaker guarantee on $N$. Specifically, we show that the conclusion holds as long as
\begin{equation}\label{eq:conj-weaker} N \ge 2r \binom{ d+r-1}{r}.
\end{equation}

The main ingredient is the observation that for any unit vector $a \in \R^{\binom{d+r-1}{r}}$, $\langle M\su{i}, a \rangle$ is large enough with high probability, where $M\su{i}$ is the $i$th row of $M$, and the probability is over the choice of the randomness in $\tilde{u}_i$. I.e.,

\begin{lemma}\label{lem:carbery-wright-ext}
Suppose $a \in \R^{\binom{d+r-1}{r}}$ is a unit vector that we index using $r$-sized multisets $I$ of $[d]$. Let $x\in \R^n$ be an arbitrary vector, and let $\xtil$ be an entrywise $\sigma$ perturbation, for some $\sigma>0$. In other words, $\xtil_i := x_i + g_i$, where $g_i \sim \calN(0,\sigma^2)$. Finally, denote $f(\xtil) = \sum_I a_I \prod_{i \in I} \xtil_i$.  Then for any $y$ and any $\eps > 0$, we have
\[ \Pr[ | f(\xtil) - y | < \eps ] < \frac{c' r}{\sigma} \eps^{1/r},\]
for some absolute constant $c'$.
\end{lemma}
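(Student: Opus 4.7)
\textbf{Proof proposal for Lemma~\ref{lem:carbery-wright-ext}.} The plan is to reduce to the Carbery--Wright anti-concentration inequality for polynomials of a standard Gaussian. First I would reparametrize: write $\xtil = x + \sigma g$ with $g \sim \calN(0, I_n)$, so that $q(g) \coloneqq f(\xtil) = \sum_{|I|=r} a_I \prod_{i\in I}(x_i + \sigma g_i)$ is a polynomial in $g$ of degree exactly $r$. The statement $\Pr[|f(\xtil) - y| < \eps] < c' r \eps^{1/r}/\sigma$ is then equivalent to a small-ball bound on $q(g)$ around an arbitrary shift $y$, which is precisely the form of Carbery--Wright, provided we can lower bound a suitable $L^2$ norm of $q$ by $\sigma^r$.

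The main content is thus the variance lower bound $\Var[q(g)] \ge \sigma^{2r}$. I would work in the probabilist's Hermite basis $\{He_\alpha\}_{\alpha \in \Z_{\ge 0}^n}$, which is orthogonal in $L^2(\gamma_n)$ with $\|He_\alpha\|_2^2 = \alpha!$. Writing $q = \sum_{|\alpha| \le r} \hat q_\alpha He_\alpha$, one has $\Var[q] = \sum_{|\alpha| \ge 1} \hat q_\alpha^2 \, \alpha! \ge \sum_{|\alpha|=r} \hat q_\alpha^2$. Since $He_k(t) = t^k + (\text{lower order})$, the change of basis from monomials to Hermite is upper-triangular with unit diagonal on each degree stratum, so $\hat q_\alpha$ for $|\alpha|=r$ coincides with the coefficient of $g^\alpha$ in $q$. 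The only multiset $I$ contributing a top-degree $g^\alpha$ term in the expansion of $q(g)$ is the one whose multi-index equals $\alpha$, and it contributes $\sigma^r a_I g^\alpha$. Hence
\begin{equation*}
\sum_{|\alpha|=r} \hat q_\alpha^2 \;=\; \sigma^{2r} \sum_I a_I^2 \;=\; \sigma^{2r} \|a\|_2^2 \;=\; \sigma^{2r},
\end{equation*}
giving $\Var[q] \ge \sigma^{2r}$ as required.

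Finally I would invoke Carbery--Wright: for any polynomial $p$ of degree $r$ on $(\R^n, \gamma_n)$ and any $y \in \R$, $\Pr[|p(g) - y| \le t] \le C r \,(t/\sqrt{\Var[p]})^{1/r}$ (obtained from the standard statement by noting $\|p - y\|_2 \ge \|p - \E p\|_2 = \sqrt{\Var[p]}$ and applying the theorem to $p - y$). Applied to $p = q$, with the variance bound above, this yields $\Pr[|f(\xtil) - y| \le \eps] \le Cr(\eps/\sigma^r)^{1/r} = Cr\, \eps^{1/r}/\sigma$, completing the proof.

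\textbf{Expected obstacle.} Essentially nothing is difficult once the reduction is set up: the Hermite computation is routine, and Carbery--Wright is a black-box input. The only mildly subtle point is making sure that the ``top-degree coefficient equals top-degree Hermite coefficient'' observation is correctly applied when the multisets $I$ have repeated indices; some care is needed to identify the unique $I$ producing a given $\alpha$ and to track the factor $\sigma^r$ rather than $\sigma^{|I|}$ with combinatorial weights. Handling $y$ (rather than just the centered case) is also immediate via the variational lower bound $\|q - y\|_2 \ge \sqrt{\Var[q]}$.
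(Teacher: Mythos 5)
Your proof is correct and follows the same two-step structure as the paper's (variance lower bound, then Carbery--Wright), but it establishes the variance lower bound by a genuinely different and arguably cleaner route. The paper cites a general-purpose result of \cite{GLAZER2022109639} on variance lower bounds for polynomials of independent random variables, which yields $\Var(f) \ge \frac{\sigma^{2r}}{B^r}\sum_I a_I^2$ for a universal constant $B$ (bounded there by $2^{15}$). You instead compute directly in the probabilist's Hermite basis after rescaling $\xtil = x + \sigma g$: since $He_\alpha(g) = g^\alpha + (\text{lower degree})$, the degree-$r$ Hermite coefficients of $q(g) = f(\xtil)$ coincide with its degree-$r$ monomial coefficients, each of which is a single $\sigma^r a_I$ term (one per multiset $I$, with no cancellation), and using $\alpha! \ge 1$ gives the sharper, constant-free bound $\Var(f) \ge \sigma^{2r}\|a\|_2^2 = \sigma^{2r}$. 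This is a self-contained computation that exploits the Gaussian structure, dispensing with the external variance lemma and the footnote the paper includes about that lemma being stated for isotropic (unshifted) distributions; your approach pays for this by being specific to Gaussians, but that is all the lemma requires, and it buys you a tighter constant and a more transparent argument. The Carbery--Wright step, including handling the shift $y$ via $\|q - y\|_2 \ge \sqrt{\Var(q)}$, is the same as the paper's.
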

The Lemma is very similar to the classic Carbery-Wright inequality, but unfortunately it is not a direct consequence, because the Carbery-Wright inequality requires a bound on the \emph{variance} of $f$. Fortunately, it turns out that one can always obtain a lower bound on the variance in terms of the $\ell_2$ norm of the coefficient vector. 
\begin{proof}
As noted above, we first apply the main result from the recent work of~\cite{GLAZER2022109639} to lower bound the variance of $f$. To this end, note that $f(\xtil)$ is a homogeneous polynomial in the variables $\xtil_i$, whose distributions are independent, according to appropriately shifted Gaussians. For this setting,~\cite{GLAZER2022109639} prove that\footnote{We note that the part of Theorem 1 from~\cite{GLAZER2022109639} that we use assumes that the distribution of $\xtil_i$ is isotropic, but their proof applies directly even with shifts.}  
\[  \text{Var}(f(\xtil)) \ge \frac{\sigma^{2r}}{B^r} \cdot \sum_I a_I^2 ,   \]
where $B$ is a universal constant (which they upper bound by $2^{15}$, but is likely much smaller for Gaussians). Next, we apply Carbery-Wright inequality~\cite{CarberyWright}, which states that for any $y\in \R$ and $\eps>0$,
\[  \Pr[ | f(\xtil) - y| < \eps ] \le c r \frac{\eps^{1/r}}{\text{Var}(f)^{1/2r}}, \]
for some absolute constant $c$. Combining this with the lower bound on the variance and noting that $\norm{a}=1$, we obtain:
\[  \Pr[ | f(\xtil) - y| < \eps ] \le \frac{c' r}{\sigma} \eps^{1/r} \]
for some absolute constant $c'$. This completes the proof of the lemma.
\end{proof}

We can now prove the conjecture via a standard $\eps$-net argument. We also condition on the event that $\norm{M}$ is \emph{upper} bounded by $\text{poly}_r(n\Delta)$. This can easily be checked to hold with extremely high probability.

\begin{proof}[Proof of Conjecture~\ref{kayal-conjecture-2} assuming~\eqref{eq:conj-weaker}]
For any unit vector $a \in \R^{\binom{d+r-1}{r}}$, Lemma~\ref{lem:carbery-wright-ext} bounds the probability of one of the entries of $Ma$ being of small norm. Since the rows of $M$ all have independent randomness, we have, for any $\eps>0$,
\[ \Pr[ \norm{Ma} < \eps ]  < \left( \frac{c'r}{\sigma} \eps^{1/r} \right)^N. \]
Let us write $C = \frac{c'r}{\sigma}$, for convenience, so the RHS above becomes $(C\eps^{1/r})^N$; also, write $D = \binom{d+r-1}{r}$.

Let $\delta >0$ be a parameter we will choose later, and let $\calN_{\delta}$ be a $\delta$ net for unit vectors in $\R^{D}$. As is well known, we may assume that $|\calN_{\delta}| \le \left( \frac{4}{\delta} \right)^{D}$. By a union bound, we have that 
\[ \Pr[ \norm{Ma} \ge \eps ~ \forall a \in \calN_{\delta} ] \ge 1- |\calN_{\delta}| (C \eps^{1/r})^N.  \]
If we can ensure that this event occurs, we also have that for all unit vectors $a$, $\norm{Ma} \ge \eps - \delta \norm{M}$, where $\norm{M}$ is the spectral norm of $M$. First, we set $\delta = \frac{\eps}{2 \norm{M}}$, which ensures a lower bound of $\eps/2$ on the least singular value. Now for the high probability bound to hold, we require
\[ \left(  C \eps^{1/r} \right)^N \left( \frac{4}{\delta} \right)^D \ll 1 \iff \left(  C \eps^{1/r} \right)^N \left( \frac{8 \norm{M}}{\eps} \right)^D \ll 1.    \]
Now, by assumption~\eqref{eq:conj-weaker}, we have $N \ge 2rD$. Since adding more rows to $M$ only improves the least singular value, let us assume that $N = 2rD$. The equation above simplifies to $C^{2r} \eps \cdot 8 \norm{M} \ll 1$, which holds if we set $\eps = \frac{1}{16 C^{2r} \norm{M}} $. 

This completes the proof of the conjecture. Indeed, we see that the failure probability in this case is exponentially small in $D$.
\end{proof}

\section*{Acknowledgments}
\addcontentsline{toc}{section}{Acknowledgments}
Vaidehi Srinivas and Aravindan Vijayaraghavan were supported by the
National Science Foundation under Grant Nos. CCF-1652491, ECCS-2216970. Vaidehi Srinivas was also supported by the Northwestern Presidential Fellowship. Aditya Bhaskara was supported by the National Science Foundation under Grant Nos. CCF-2008688 and CCF-2047288. 
Any opinions, findings, and conclusions or recommendations expressed in this material are those of the authors and do not necessarily reflect the views of the National Science Foundation.
We thank the (anonymous) reviewers for their detailed feedback which helped improve the presentation, and we also thank Neeraj Kayal, Ankit Garg, and Benjamin Lovitz for helpful discussions.

\bibliographystyle{alpha}
\addcontentsline{toc}{section}{References}
\bibliography{references}

\appendix 
\section{Auxiliary Lemmas}

\begin{lemma}
\label{lem:MultiVarGaussBall}
Let $\tilde{u} \in \R^n$ be a $\rho$-smoothed vector and fix $\delta >0$. There exists a universal constant $c>0$ such that 
\[
Pr[\|\tilde{u} \|< \delta] \leq \Big(\frac{c \delta}{\rho} \Big)^n 
\]
\end{lemma}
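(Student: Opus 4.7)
The plan is to bound the probability by the standard density-times-volume argument for Gaussian anti-concentration. Write $\tilde{u} = u + g$ where $g \sim \mathcal{N}(0, \rho^2 I_n)$. The density of $\tilde{u}$ at any point $y \in \R^n$ is $(2\pi\rho^2)^{-n/2} \exp(-\|y-u\|^2/(2\rho^2))$, which is uniformly upper bounded by the peak value $(2\pi\rho^2)^{-n/2}$.

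I would then integrate this uniform upper bound over $B(0,\delta)$, giving
\[
\Pr[\|\tilde{u}\| < \delta] \le (2\pi\rho^2)^{-n/2} \cdot \mathrm{Vol}(B(0,\delta)) = (2\pi\rho^2)^{-n/2} \cdot \frac{\pi^{n/2} \delta^n}{\Gamma(n/2+1)} = \frac{1}{\Gamma(n/2+1)} \left(\frac{\delta}{\rho\sqrt{2}}\right)^n.
\]
Finally, apply Stirling's approximation $\Gamma(n/2+1) \ge \sqrt{\pi n}(n/(2e))^{n/2}$ to obtain a bound of the form $(C\sqrt{e/n} \cdot \delta/\rho)^n / \sqrt{\pi n}$. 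Since $\sqrt{e/n} \le \sqrt{e}$ and the $1/\sqrt{\pi n}$ prefactor only helps, this is absorbed into a universal constant $c > 0$ to yield $(c\delta/\rho)^n$, as required.

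There is no real obstacle here — the argument is entirely routine, and in fact the intermediate calculation gives the stronger bound $(c\delta/(\rho\sqrt{n}))^n$ (matching the commented-out version of the lemma in the paper). The stated form $(c\delta/\rho)^n$ is chosen presumably because it suffices for all downstream applications and is cleaner to quote. The only minor subtlety is picking $c$ large enough to swallow the constants $\sqrt{e/2}$ and any lower-order factors; this presents no difficulty.
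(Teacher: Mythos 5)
Your proof is correct and follows the same route as the paper's own: bound the Gaussian density by its peak $(2\pi\rho^2)^{-n/2}$, multiply by the volume of $B(0,\delta)$, and apply Stirling's approximation to $\Gamma(n/2+1)$. The paper's Appendix A proof is word-for-word this argument, and your observation that the intermediate bound is actually the stronger $\bigl(c\delta/(\rho\sqrt{n})\bigr)^n$ is also consistent with the commented-out version of the lemma in the source.
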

\begin{proof}
The PDF of $\tilde{u}$ is bounded above by $\left(\rho\sqrt{2 \pi})^{-n}\right)$ from which we obtain that 
\begin{align*}
Pr[\|\tilde{u} \|< \delta] &\leq \int_{B(0,\delta)} \left(\rho\sqrt{2 \pi}\right)^{-n} dx = \left(\rho\sqrt{2 \pi}\right)^{-n} \mathrm{vol}\big(B(0,\delta)\big)  \\ &= \left(\rho\sqrt{2 \pi}\right)^{-n}  \frac{\left(\delta \sqrt{\pi}\right)^n }{\Gamma\left(\frac{n}{2}+1\right)} = \left( \frac{\delta}{\rho \sqrt{2}}\right)^n \frac{1}{\Gamma(\frac{n}{2}+1)}.
\end{align*}
Now, using Stirling's approximation we have 
\[
\Gamma\left(\frac{n}{2}+1\right) \geq \sqrt{\pi n}\left(\sqrt{\frac{n}{2e}}\right)^n.
\]
Substituting this probability in above gives the desired probability upper bound.
\end{proof}

\subsection*{Block leave-one-out bounds and least singular value of smoothed matrices}

\begin{lemma}
\label{lemma:BlockLeaveoneOut}
    Let $U_1,\dots,U_t \in \R^{n \times m}$ and for each $j=1,\dots,t$ let $\Pi^{\perp}_{-j}$ be the projection on to the orthogonal complement of 
    \[
    \mathrm{Ran} \left( \begin{bmatrix}
        U_1 & \dots & U_{j-1} & U_{j+1} & \dots U_t.
    \end{bmatrix}\right)
    \]
    Define $\ell_B(\{U_j\})= \min_{j} \sigma_{\min} (\Pi^\perp_{-j} U_j).$ Then 
    \[
    \frac{\ell_B(\{U_j\})}{\sqrt{t}} \leq \sigma_{\min} \left (\begin{bmatrix}
        U_1 & \dots & U_t 
    \end{bmatrix}
    \right) \leq \ell_B(\{U_j\}).
    \]
\end{lemma}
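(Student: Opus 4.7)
The plan is to prove the two inequalities separately, following the same template as the scalar version of leave-one-out (Lemma~\ref{lem:leaveoneout}) but tracking blocks of columns rather than single columns. Write $M := [U_1 \mid U_2 \mid \dots \mid U_t]$, and for any $\alpha \in \R^{tm}$ decompose $\alpha = \alpha^{(1)} \oplus \dots \oplus \alpha^{(t)}$ with $\alpha^{(j)} \in \R^m$.

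For the lower bound $\sigma_{\min}(M) \ge \ell_B(\{U_j\})/\sqrt{t}$, I will take an arbitrary unit vector $\alpha$ and show $\|M\alpha\| \ge \ell_B/\sqrt{t}$. Since $\|\alpha\|=1$, there exists $j^*$ with $\|\alpha^{(j^*)}\| \ge 1/\sqrt{t}$. Applying the projector $\Pi^{\perp}_{-j^*}$ kills every term $U_j \alpha^{(j)}$ for $j \ne j^*$ because each such $U_j \alpha^{(j)}$ lies in the range being projected out. Hence
\[
\|M\alpha\| \;\ge\; \|\Pi^{\perp}_{-j^*} M \alpha\| \;=\; \|\Pi^{\perp}_{-j^*} U_{j^*}\, \alpha^{(j^*)}\| \;\ge\; \sigma_{\min}(\Pi^{\perp}_{-j^*} U_{j^*}) \cdot \|\alpha^{(j^*)}\| \;\ge\; \frac{\ell_B(\{U_j\})}{\sqrt{t}}.
\]

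For the upper bound $\sigma_{\min}(M) \le \ell_B(\{U_j\})$, I will exhibit a test vector $\alpha$ whose image under $M$ is short relative to its norm. Pick $j^*$ achieving the minimum in the definition of $\ell_B$, and let $w \in \R^m$ be a unit right-singular vector certifying $\|\Pi^{\perp}_{-j^*} U_{j^*} w\| = \sigma_{\min}(\Pi^{\perp}_{-j^*} U_{j^*}) = \ell_B$. Decompose $U_{j^*} w = \Pi^{\perp}_{-j^*} U_{j^*} w + \Pi_{-j^*} U_{j^*} w$. The second summand lies in $\mathrm{Ran}([U_1 \dots U_{j^*-1}\, U_{j^*+1} \dots U_t])$, so there exist coefficient vectors $z_j \in \R^m$ ($j \ne j^*$) with $\Pi_{-j^*} U_{j^*} w = \sum_{j \ne j^*} U_j z_j$. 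Let $\alpha$ be the block vector with $\alpha^{(j^*)} = w$ and $\alpha^{(j)} = -z_j$ for $j \ne j^*$. Then $M\alpha = \Pi^{\perp}_{-j^*} U_{j^*} w$, so $\|M\alpha\| = \ell_B$ while $\|\alpha\| \ge \|\alpha^{(j^*)}\| = 1$. Therefore $\sigma_{\min}(M) \le \|M\alpha\|/\|\alpha\| \le \ell_B$.

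Neither direction presents a genuine obstacle; this is a direct block analogue of the classical argument. The only small subtlety is justifying that $\Pi_{-j^*} U_{j^*} w$ can actually be represented as a linear combination of the other columns (which is immediate from the definition of $\Pi_{-j^*}$ as projection onto exactly that range), and keeping track of the $1/\sqrt{t}$ loss in the lower bound, which comes solely from the pigeonhole step $\max_j \|\alpha^{(j)}\| \ge 1/\sqrt{t}$ applied to a unit vector with $t$ blocks.
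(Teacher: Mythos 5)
Your proof is correct and follows essentially the same route as the paper: the lower bound uses the pigeonhole step $\max_j \|\alpha^{(j)}\| \ge 1/\sqrt{t}$ followed by projecting via $\Pi^\perp_{-j^*}$, exactly as in the paper's argument. For the upper bound the paper merely remarks it is ``straightforward''; your explicit construction of the test vector $\alpha^{(j^*)} = w$, $\alpha^{(j)} = -z_j$ is the natural way to fill that in and is valid.
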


\begin{proof}
Let $\alpha \in \R^{tm}$ be a unit vector and write $\alpha = \alpha_1  \oplus \dots \oplus \alpha_t$. Intuitively, $\alpha_j$ records the entries of alpha that are coefficients of the columns of $U_j$ in the product $\begin{bmatrix}
        U_1 & \dots & U_t 
    \end{bmatrix} \alpha.$
Since $\alpha$ is a unit vector, there must exist some index $j$ such that $\|\alpha_j\| \geq \frac{1}{\sqrt{t}}$. From this we obtain
\[
\left\| \Pi^\perp_{-j} U_j \alpha_j \right\| = \left\| \Pi^\perp_{-j} \begin{bmatrix}
        U_1 & \dots & U_t 
    \end{bmatrix} \alpha \right\| \leq \left\|\begin{bmatrix}
        U_1 & \dots & U_t 
    \end{bmatrix} \alpha \right\|.
\]
Using the fact our lower bound on the norm of $\alpha_j$ and the fact that $\ell_B (\{U_j\})$ lower bounds the least singular value of $\Pi^\perp_{-j} U_j$, we obtain 
\[
\frac{\ell_B(\{U_j\})}{\sqrt{t}} \leq \left\| \Pi^\perp_{-j} U_j \alpha_j \right\|
\]
from which the desired lower bound follows. The proof of the upper bound is straightforward.
\end{proof}


\begin{lemma}\label{lem:perturbed-sing-value-basic}
Let $V \in \R^{n \times k}$ be an arbitrary matrix with $k\le n$, and let $\tilde{V}$ be a $\rho$ perturbation. Suppose $\alpha_i$ are some scalars such that $\alpha_i \ge \delta$ for all $i \le k$. Then for any $h \in (0, 1/2)$, 
\[ \Pr \left[ \sigma_{k/2} \left( \tilde{V} \diag(\alpha) \right) < h \sigma \delta \right] \le \exp(-\frac{1}{8} kn \log(1/h)).\]   
\end{lemma}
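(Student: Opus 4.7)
The plan is to reduce to bounding $\sigma_{k/2}(\tilde V)$ and then to use a Courant--Fischer characterization combined with a net argument over a Grassmannian of $(k/2+1)$-dimensional subspaces of $\R^k$. Since every $\alpha_i \ge \delta$ implies $\sigma_{k/2}(\tilde V \diag(\alpha)) \ge \delta\,\sigma_{k/2}(\tilde V)$, it suffices to show $\Pr[\sigma_{k/2}(\tilde V) < h\rho] \le \exp(-kn\log(1/h)/8)$. By Courant--Fischer, this event holds iff there exists a $(k/2+1)$-dimensional subspace $L \subseteq \R^k$ with $\|\tilde V P_L\|_{\mathrm{op}} < h\rho$, where $P_L$ is a $k \times (k/2+1)$ matrix with orthonormal columns spanning $L$.

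The key structural observation I would exploit is the following: for any orthonormal basis $w_1, \dots, w_{k/2+1}$ of $L$, the Gaussian vectors $Z w_i \in \R^n$ are \emph{independent} $N(0, \rho^2 I_n)$ (because the $w_i$ are orthonormal), so the images $\tilde V w_i = V w_i + Z w_i$ are independent Gaussians in $\R^n$ with means $V w_i$ and common covariance $\rho^2 I_n$. Since the Gaussian density is maximized at its mean, Lemma~\ref{lem:MultiVarGaussBall} gives $\Pr[\|\tilde V w_i\| < h\rho] \le (ch)^n$ \emph{uniformly in $V w_i$}. By independence across $i$,
\[
\Pr\big[\|\tilde V P_L\|_{\mathrm{op}} < h\rho\big] \le \Pr[\forall i : \|\tilde V w_i\| < h\rho] \le (ch)^{n(k/2+1)},
\]
which already exhibits the desired $\exp(-\Omega(nk)\log(1/h))$ scaling for a fixed $L$.

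To extend to all subspaces I would take an $\epsilon$-net $\mathcal N_\epsilon$ over the Grassmannian $\mathrm{Gr}(k, k/2+1)$, of size at most $(C/\epsilon)^{(k/2+1)(k/2-1)} \le (C/\epsilon)^{k^2/4}$, and union bound. The net error is $\|\tilde V\|\epsilon$, so we need $\epsilon = \Theta(h\rho/\|\tilde V\|)$ for it to be absorbed into $\theta = h\rho$. The main obstacle is precisely that $V$ is \emph{arbitrary}, so $\|\tilde V\|$ can be unbounded and a naive choice of $\epsilon$ would blow up the net size. I would overcome this by a case analysis on $\sigma_{k/2}(V)$. If $\sigma_{k/2}(V) \ge C_0 \rho\sqrt n$ for a large enough constant $C_0$, then Weyl's inequality together with $\|Z\| = O(\rho\sqrt n)$ w.h.p.\ gives $\sigma_{k/2}(\tilde V) \gg h\rho$ deterministically (for $h \le 1/2$), and the lemma is trivial. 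Otherwise, $V$ has at most $k/2-1$ singular values above $C_0\rho\sqrt n$, and any ``bad'' subspace $L$ realizing $\|\tilde V P_L\|_{\mathrm{op}} < h\rho$ must lie (up to negligible tilt, handled by Weyl's inequality again) inside the bottom-$(k/2+1)$ right-singular subspace $W_0$ of $V$, on which $\|V P_{W_0}\|_{\mathrm{op}} = O(\rho\sqrt n)$. Restricting the net to $\mathrm{Gr}(W_0, k/2+1)$ yields an effective spectral norm of $O(\rho\sqrt n)$ that is independent of $V$, so $\epsilon = \Theta(h/\sqrt n)$ suffices.

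Combining the per-subspace probability $(ch)^{n(k/2+1)}$ with a net of size $(C\sqrt n/h)^{k^2/4}$, the union bound exponent becomes $-n(k/2+1)\log(1/h) + O(k^2\log(n/h))$; for $n \ge k$ the first term dominates and absorbs the polylogarithmic correction, yielding the claimed $\exp(-kn\log(1/h)/8)$ after adjusting constants. The hard part will be Step 3 (handling arbitrary $V$): the rest is a standard $\epsilon$-net computation that crucially leverages the independence of $Z w_i$ for orthonormal $w_i$, which is what promotes the single-direction anti-concentration $(ch)^n$ to the product $(ch)^{n(k/2+1)}$ required for the $kn$ exponent.
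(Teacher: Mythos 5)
Your approach is genuinely different from the paper's. The paper reduces the event ``$\sigma_{k/2}(W)$ small'' to a \emph{discrete} union over $\binom{k}{k/2}$ column subsets $J$: conditioned on the columns in $J$, each column $i \notin J$ is an independent Gaussian and has small component orthogonal to $\mathrm{span}(J)$ with probability at most $h^{n-k/2}$, so a discrete union bound over $J$ finishes the job with no $\epsilon$-net and no control on $\norm{V}$ whatsoever. Your route --- Courant--Fischer plus a Grassmannian net, exploiting that $Zw_1, \dots, Zw_{k/2+1}$ are independent $\calN(0,\rho^2 I_n)$ vectors for orthonormal $w_i$ so that the per-subspace probability is $(ch)^{n(k/2+1)}$ --- is a clean observation, but the net step has two gaps that the paper's discrete approach is specifically structured to avoid.

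In Case 1, the conditioning event $\norm{Z}=O(\rho\sqrt n)$ fails with probability $\exp(-\Omega(n))$, which is \emph{larger} than the target $\exp(-\tfrac18 kn \log(1/h))$ once $k\log(1/h)$ exceeds an absolute constant. So the lemma is not ``trivial'' for a constant $C_0$; you would need $C_0 = \Omega(\sqrt{k\log(1/h)})$, and that threshold then propagates into and degrades the Case 2 analysis. More seriously, in Case 2 the tilt of a bad subspace $L$ off $W_0$ is not negligible once $\tilde V$ is applied. Write a unit $u \in L$ as $u_0 + u_1$ with $u_1 \in W_0^\perp$. The constraint $\norm{\tilde V u}<h\rho$ forces $\norm{Vu_1}\le h\rho + \norm{Z}=O(\rho\sqrt n)$ and hence $\norm{u_1}=O(1/C_0)$, but then $\norm{\tilde V u_1} = \Theta(\rho\sqrt n)$, not $O(h\rho)$, so transferring $\norm{\tilde V P_L}<h\rho$ to a net point inside $W_0$ only yields $\norm{\tilde V P_{W_0}}=O(\rho\sqrt n)$, which is vacuous for every $h<1/2$. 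Concretely, take $V$ with $k/2-1$ singular values equal to $T\rho$ for $T\to\infty$ and the rest zero; then $\norm{V}\to\infty$, $W_0=\ker V$, a bad $L$ differs from $W_0$ by a tilt $\Theta(\sqrt n/T)$, and $\norm{\tilde V(P_L-P_{W_0})}=\Theta(\rho\sqrt n)$ independently of $T$. Also note that when $V$ has exactly $k/2-1$ large singular values, $\dim W_0 = k/2+1$ and your Grassmannian $\mathrm{Gr}(W_0, k/2+1)$ is a single point, so there is no net left to run --- yet the transfer to that single point fails as above. A fix would require re-parameterizing the tilt $A: W_0 \to W_0^\perp$ via $B = (V|_{W_0^\perp})A$ so that the net granularity is measured in units where the unbounded singular directions of $V$ cancel; this is a substantive extra step that ``negligible tilt, handled by Weyl'' does not supply.
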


\begin{proof}
Let us denote $W = \tilde{V} \diag(\alpha)$, for convenience. Suppose that $\sigma_{k/2}(W) < h\sigma \delta$. This implies that there exists a set $J$ of $k/2$ columns of $W$ with the property that the rest of the columns together have a squared projection at most $k^2 (h \sigma \delta)^2$ orthogonal to the span of the columns in $J$.\footnote{Here, we are using the well known connection between the low rank error and an approximation via columns~\cite{GuruswamiSinop}.}

Now take any subset of columns $J$ with $|J|=k/2$; the probability that any column $i \not\in J$ has a projection of length $< h \delta \rho$ orthogonal to the span of $J$ is at most $h^{n - \frac{k}{2}}$. (This is because for each of the $n - \frac{k}{2}$ directions orthogonal to the span of $J$, we must have a component $< h \delta \rho$, and we can use the standard Gaussian anticoncentration for each direction.) Since there are $k/2$ columns $i \not\in J$, the probability that all of them satisfy the condition is $\le h^{\frac{k}{2} (n - \frac{k}{2})}$.

The total number of choices for $J$ is clearly at most $2^k$, thus taking a union bound, we have that the probability is at most 
\[ 2^k h^{\frac{k}{2} (n - \frac{k}{2})} \le \exp(-\frac{1}{8} kn \log (1/h) ).\]
\end{proof}

\subsection*{Proof of Lemma~\ref{lem:spread-vector}}\label{sec:lem:spread-vector}
\begin{proof}
Let $U \in \R^{n \times k}$ be a matrix whose columns form an orthonormal basis for $S$. We can now apply Lemma \ref{lem:sigma-to-col-subset} to $U^T$ to conclude that there exists a subset $J$ of the \emph{rows} of $U$ such that $\sigma_{k} (U_{|J}) \le 1/\sqrt{nk}$.  [Here, $U_{|J}$ refers to the matrix $U$ restricted to the rows $J$.]

Now this implies that \emph{every} vector in the column span of $U_{|J}$ can be expressed as $U_{|J}\alpha$, where $\alpha \in \R^k$ and $\norm{\alpha} \le \sqrt{nk}$. In particular, we can conclude that the vector with $1/\sqrt{k}$ in all $k$ coordinates can be so expressed. By considering $\alpha' = \frac{\alpha}{\norm{\alpha}}$, we have that $U \alpha'$ has entries $\ge \frac{1}{\sqrt{k} \norm{\alpha}} \ge \frac{1}{k \sqrt{n}}$ in all the entries corresponding to $J$.

Noting that $|J| = k$ completes the proof.
\end{proof}

\section{Proofs of Claims in Section~\ref{sec:bafna} for Power Sum Decompositions}\label{app:bafna}

\begin{claim}\label{claim:bafna:intemed:app2}[Same as Claim~\ref{claim:bafna:intemed:2}]
Let $U=\big(B+Z_1 ~,~ Z_2 \big)$. Suppose the matrix $\Mcont \in \R^{R \times (2m  \cdot N_2)}$ is obtained by random modal contraction applied to $\projsymfour$ (along the second mode). Formally, suppose $\Phi_1, \Phi_2, \dots, \Phi_{N_2} \in \R^{R \times N_2}$ represent the matrix slices of $\Phi$, and suppose $\forall i \in [N_2]~ W_i = \Phi_i U$ where random matrix $U=[B+Z_1 ~,~ Z_2] \in \R^{N_2 \times (2m)}$. Then the matrix $\Mcont=[W_1 | W_2 | \dots |W_{N_2}]$ satisfies with probability at least $1-\exp(-\Omega(N_2))$ that $\sigma_{2mN_2}(\Mcont) \ge c' \rho/n^{c}$ for absolute constants $c,c'>0$
\end{claim}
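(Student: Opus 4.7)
The plan is to reduce directly to Lemma~\ref{lem:remaining} by exploiting the favorable set of columns provided by Claim~\ref{claim:bafna:largerelrank} in a row-split of the random matrix $U$. First, fix (deterministically, via the existence statement in Claim~\ref{claim:bafna:largerelrank}) a subset $T \subseteq [N_2]$ with $|T| = \Theta(N_2)$ such that for every $i \in [N_2]$,
\[\sigma_{c N_2}\bigl(\Pi^{\perp}_{-i,T}\,\Phi_{i,T}\bigr) \;\ge\; 1,\]
where $\Phi_{i,T}$ is the restriction of the slice $\Phi_i$ to its columns indexed by $T$. This is exactly the ``many blocks with large relative rank'' condition required by the hypothesis of Lemma~\ref{lem:remaining}, but applied to the restricted slices $\Phi_{i,T}$ rather than the full slices.

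Next I would split $U = [B+Z_1,\;Z_2] \in \R^{N_2 \times 2m}$ row-wise along $T$ into independent matrices $U|_T \in \R^{|T| \times 2m}$ and $U|_{\bar T} \in \R^{|\bar T|\times 2m}$; independence holds because $Z_1$ and $Z_2$ have entrywise independent Gaussian entries across rows. The critical identity is
\[W_i \;=\; \Phi_i U \;=\; \Phi_{i,T}\, U|_T \;+\; \Phi_{i,\bar T}\, U|_{\bar T}, \qquad i \in [N_2],\]
which matches the template $M_j = A_j \widetilde{U} + C_j$ of Lemma~\ref{lem:remaining} once we condition on $U|_{\bar T}$. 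I would therefore condition on the high-probability event $\|U|_{\bar T}\| \le \poly(n,\rho)$ (standard Gaussian matrix concentration), treat $C_j \coloneqq \Phi_{j,\bar T} U|_{\bar T}$ as fixed, and apply Lemma~\ref{lem:remaining} with $s = N_2$, $k = |T| = \Theta(N_2)$, $A_j = \Phi_{j,T}$, and $\widetilde{U} = U|_T$. The hypotheses are straightforward to check: the relative rank condition holds with $\gamma = 0$ by the choice of $T$; the constraint $2m \le c\,\varepsilon k$ reduces to $m \le c'N_2$, which is given; $s = N_2 = 2^{o(k)}$ since $k = \Theta(N_2)$; and both $\|A_j\|$ and $\|C_j\|$ are bounded by $\poly(n,\rho)$ using $\|\Phi\|=1$ and the norm control on $U|_{\bar T}$. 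The conclusion then yields
\[\sigma_{2m N_2}(\Mcont) \;\ge\; \frac{\rho}{2\,|T|\,\sqrt{N_2}} \;=\; \Omega\!\left(\frac{\rho}{n^{3}}\right),\]
with failure probability $\exp(-\Omega(|T|)) = \exp(-\Omega(N_2))$, and a final union bound over the conditioning event completes the argument.

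The only mild subtlety is that $U|_T = [B|_T + Z_1|_T,\; Z_2|_T]$ has heteroscedastic Gaussian noise (variances $\rho_1^2$ and $\rho_2^2$ in the two blocks of columns), whereas Lemma~\ref{lem:remaining} is written for a uniform $\rho$-perturbation. Inspecting the proof of Lemma~\ref{lem:remaining}, however, the only randomness-dependent step is anti-concentration applied directionally to Gaussian marginals $Zv$; this argument goes through unchanged if we substitute $\rho \coloneqq \min(\rho_1,\rho_2)$ (the larger-variance columns only give better anti-concentration). Thus the only ``obstacle'' here is bookkeeping; the structural heart of the claim is the row-split that allows Claim~\ref{claim:bafna:largerelrank} to feed cleanly into Lemma~\ref{lem:remaining}.
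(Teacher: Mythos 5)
Your argument is essentially identical to the paper's proof: both fix a set $T$ from Claim~\ref{claim:bafna:largerelrank}, row-split $U$ into $U|_T$ and $U|_{\bar T}$, write $W_i = \Phi_{i,T}\,U|_T + \Phi_{i,\bar T}\,U|_{\bar T}$, and apply Lemma~\ref{lem:remaining} with $A_j = \Phi_{j,T}$, $\tilde{U} = U|_T$, and $C_j = \Phi_{j,\bar T}\,U|_{\bar T}$. Your remark on the heteroscedastic noise (columns with variances $\rho_1^2$ and $\rho_2^2$) and the fix $\rho \coloneqq \min(\rho_1,\rho_2)$ is a valid point of care that the paper elides.
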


\begin{proof}
We will use Claim~\ref{claim:bafna:largerelrank} with the matrix $\Phi$. Let $T \subseteq [N_2]$ be any subset that satisfies \eqref{eq:bafna:largerelrank}. Set $s=N_2, \varepsilon=\delta/4,k=|T|$. We will set the matrix $A_1 = \Phi_{1,T}, \dots, A_{N_2}=\Phi_{N_2,T}$. Let $\tilde{U}=U_T$ (rows restricted to $T$) and 
 let $U'=U_{[N_2]\setminus T}$; note that they are mutually independent. For each $i \in [N_2]$, $C_i=\Phi_{i,[N_2] \setminus T} U'$. For each $i \in [N_2]$, $W_i= C_i+A_i \tilde{U}$. Hence applying Lemma~\ref{lem:remaining} we get the claim. 
\end{proof}

\begin{claim}\label{claim:bafna:intemed:app1}[Same as Claim~\ref{claim:bafna:intemed:1}]
The matrix $Q=(B+Z_1+2Z_2 ~,~ F)$ is full rank in a robust sense i.e., with probability $1-\exp(-\Omega(N_2))$ 
$$\sigma_{N_2}(Q) \ge \frac{c \rho}{n^{O(1)}},$$
for some absolute constant $c>0$.
\end{claim}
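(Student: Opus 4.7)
The plan is to decouple the dependence between $F$ and the extra $Z_2$-perturbation in $C$ via a Gaussian change of variables, and then exploit full $N_2$-dimensional Gaussian anticoncentration.

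\textbf{Step 1 (reparametrization).} Introduce $W_1 := Z_1 + Z_2$ and $W_2 := Z_1 + 2Z_2$, which is an invertible linear transformation of $(Z_1, Z_2)$, so $(W_1, W_2)$ is jointly Gaussian. A direct computation of the $2 \times 2$ covariance of their entries and the Schur complement shows that, conditioned on $W_1$, one has $W_2 = \tfrac{3}{2} W_1 + H$ where $H$ has i.i.d.\ $\mathcal{N}(0, \rho^2/2)$ entries and is independent of $W_1$. Setting $D := B + \tfrac{3}{2} W_1$, we have $A = B + W_1$ and $C = B + W_2 = D + H$. Thus, conditional on $W_1$, the matrix $A$, the orthonormal basis $U$ for $\mathrm{range}(A)$, its complementary basis $F$, and $D$ are all deterministic, while $H$ is fresh Gaussian randomness independent of $(U, F, D)$.

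\textbf{Step 2 (anticoncentration for one fixed test vector).} Fix a unit $v = (v_1, v_2) \in \mathbb{R}^m \oplus \mathbb{R}^{N_2 - m}$. Since $F^\top A = 0$ and $[U, F] \in \mathbb{R}^{N_2 \times N_2}$ is orthogonal, the squared norm decomposes as
\begin{align*}
\|Qv\|^2 &= \|U^\top D v_1 + U^\top H v_1\|^2 + \|F^\top D v_1 + F^\top H v_1 + v_2\|^2 \\
&= \|\mu_1 + \xi_1\|^2 + \|\mu_2 + \xi_2\|^2,
\end{align*}
where $\mu_1, \mu_2$ are fixed given $W_1$ and $v$, and $(\xi_1, \xi_2) = [U, F]^\top (H v_1)$. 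Because $H v_1 \sim \mathcal{N}(0, (\rho^2/2)\|v_1\|^2 I_{N_2})$ and $[U, F]$ is orthogonal, $(\xi_1, \xi_2)$ is an i.i.d.\ Gaussian vector in $\mathbb{R}^{N_2}$ with per-entry variance $(\rho^2/2)\|v_1\|^2$. Hence by Lemma~\ref{lem:MultiVarGaussBall},
\[
\Pr\bigl[\|Qv\| \le \varepsilon\bigr] \;\le\; \Bigl(\tfrac{c\,\varepsilon}{\rho \|v_1\|}\Bigr)^{\!N_2}.
\]

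\textbf{Step 3 (case split and net).} If $\|v_1\| \le 1/n^{D}$ for a sufficiently large constant $D$, then using $\|D\|, \|H\| \le \mathrm{poly}(n)$ (which hold with probability $1 - \exp(-\Omega(N_2))$ by standard Gaussian matrix spectral norm bounds), the second summand in Step 2 is at least $\|v_2\| - \mathrm{poly}(n)/n^D \ge 1/2$, giving $\|Qv\| \ge 1/2$ deterministically. If instead $\|v_1\| > 1/n^{D}$, Step 2 gives $\Pr[\|Qv\| \le \rho/n^{C}] \le \exp(-\Omega(N_2 \log n))$ for any constant $C$ large compared to $D$. Taking a $1/\mathrm{poly}(n)$-net over the unit sphere in $\mathbb{R}^{N_2}$ (of size $\exp(O(N_2))$) and transferring the bound to all unit vectors using $\|Q\| \le \mathrm{poly}(n)$ yields $\sigma_{N_2}(Q) \ge c\rho/n^{O(1)}$ with probability $1 - \exp(-\Omega(N_2))$.

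The main obstacle is Step 1: without the reparametrization, $F$ depends on $Z_2$ through $A$, so the randomness driving $F$ and the randomness of the extra $Z_2$ in $C$ are intertwined, and one cannot harvest the full $N_2$-dimensional Gaussian anticoncentration of $H v_1$ needed to beat the $e^{O(N_2)}$-size net. The Schur-complement conditioning isolates an independent fresh Gaussian $H$, which is precisely what makes Lemma~\ref{lem:MultiVarGaussBall} yield the requisite $e^{-\Omega(N_2 \log n)}$ failure bound per test vector.
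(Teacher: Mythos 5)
Your proof is correct and follows the same strategy as the paper's: both isolate a fresh Gaussian component independent of $A$ (and hence of $F$) via a linear reparametrization of $(Z_1, Z_2)$, then combine per-test-vector Gaussian anticoncentration with an $\varepsilon$-net and a case split on whether the first-block coordinates of the test vector are small. The only nonsubstantive differences are that you invoke Lemma~\ref{lem:MultiVarGaussBall} directly rather than multiplying $N_2$ one-dimensional anticoncentration bounds, and your Schur-complement constants ($W_2 = \tfrac{3}{2}W_1 + H$ with $H \sim \mathcal{N}(0,\rho^2/2)$) implicitly assume $Z_1, Z_2$ have equal variance, whereas the paper keeps general $\rho_1, \rho_2$ and writes $Z_2 = \gamma Y_1 + Y_2$ with $\gamma = \rho_2^2/(\rho_1^2+\rho_2^2)$ and $\lambda_2^2 = \rho_1^2\rho_2^2/(\rho_1^2+\rho_2^2)$.
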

\begin{proof}
Let $Q_1=\big(B+Z_1+Z_2 ~,~ F\big) \in \R^{N_2 \times N_2}$ and $Q_2=(Z_2 ~,~ 0) \in \R^{N_2\times N_2}$. 
First we note that with probability $1-\exp(-\Omega(N_2))$, there exists constants $c_1, c_2>0$ such that
$$\sigma_{N_2}(Q_1) = \sigma_{N_2}\big(B+Z_1+Z_2 ~,~ F\big) = \sigma_{N_2}(A , F) \ge \frac{c_2\rho}{n^{c_1}} \eqqcolon \tau \rho.$$ 
 This is because $\sigma_m(A) \ge \frac{\rho}{n^{\Omega(1)}}$ with high probability by standard random matrix theory, and since $F$ was chosen to complete the basis. We just need to show that adding the random matrix $Q_2$ does not affect decrease its least singular value by a lot. 

Recall $Z_1 \sim N(0,\rho_1^2)^{N_2 \times m}, Z_2 \sim N(0,\rho_2^2)^{N_2 \times m}$. Next we rewrite the random matrices $Z_1+2Z_2$ and $Z_2$ as follows after Gram-Schmidt orthogonalization. Suppose $Y_1, Y_2 \in \R^{N_2 \times m}$ are random matrices with independent entries given by 
\begin{align}\label{eq:app:newrvs}
    Y_1 \sim_{iid}& N(0,\lambda_1^2)^{N_2 \times m}, ~ Y_2 \sim_{iid} N(0,\lambda_2^2)^{N_2 \times m}, \text{ where } \lambda_1^2 = \rho_1^2+\rho_2^2, \lambda_2^2=\frac{\rho_1^2\rho_2^2}{\rho_1^2+\rho_2^2},\nonumber \\
    Z_1+Z_2=&Y_1, \text{ and } Z_2= \tfrac{\rho_2^2}{\rho_1^2+\rho_2^2} Y_1 + Y_2, \text{ i.e., } Y_2=\tfrac{1}{\rho_1^2+\rho_2^2} (\rho_2^2 Z_1 - \rho_1^2 Z_2)
\end{align}
Note that $Y_1$ and $Y_2$ are mutually independent (these are Gaussian r.v.s, and one can verify entrywise that their correlation is $0$). Let $\gamma=\rho_2^2/(\rho_1^2+\rho_2^2)$. Note that $\gamma \in [c_5 N_2^{-c_4}, c_6 N_2^{c_4}]$ for some constants $c_4,c_5,c_6>0$. We know that $Q_1=(B+Y_1~|~F)$ and $Q_2=(\gamma Y_1+Y_2~|~0)$. 

In the rest of this proof we condition on $\sigma_{N_2}(Q_1) \ge $
Consider any test unit vector $\alpha \in \R^{N_2}$, and let $\alpha_{[m]}$ be the restriction to the first $m$ coordinates. We will now show that for any constant $C>0$, $\norm{Q \alpha} \ge \Omega(\rho^2/n^{O(C)})$ with probability $1-\exp(-C N_2 \log(N_2))$. 
We have 
\begin{align*}
Q \alpha = Q_1 \alpha + \gamma Y_1 \alpha_{[m]} + Y_2 \alpha_{[m]}, \text{ where } \norm{Q_1 \alpha}_2 \ge \tau \rho. 
\end{align*}
We know further that with high probability $\norm{Y_1},\norm{Y_2} \le c' \rho \sqrt{N_2}$ for some constant $c'>0$. We split into two cases depending on whether 
{\bf (a)} $\norm{\alpha_{[m]}}_2 \le \tau/(2c' (1+\gamma) \sqrt{N_2})$, or {\bf (b)} otherwise. 

\noindent In case {\bf (a)}, we have 
$$\norm{Q\alpha}_2 \ge \norm{Q_1 \alpha}_2 - \norm{\gamma Y_1 \alpha_{[m]}}_2 - \norm{Y_2 \alpha_{[m]}}_2 \ge \tau \rho - c' \sqrt{N_2} \rho(\gamma+1)\norm{\alpha_{[m]}}_2 \ge \frac{\tau}{2}.$$

In case {\bf (b)} $\norm{\alpha_{[m]}}_2 \ge \tau/(2c' (1+\gamma) \sqrt{N_2})$. We use the anticoncentration from the Gaussian r.v. $Y_2 \alpha_{[m]}$. Let $\beta = Q_1 \alpha + \gamma Y_1 \alpha_{[m]} \in \R^{N_2}$. For matrices $Q, Y_2 \in \R^{N_2 \times m}$, let $Q(i), Y_2(i) \in \R^{m}$ represent the $i$th rows. We have for some absolute constant $c_7>0$
\begin{align*}
    \Pr\Big[\norm{Q \alpha}_2 \le \epsilon \Big] &\le     \Pr\Big[\forall i \in [N_2], |\iprod{Q(i), \alpha}|\le \epsilon \Big] = \Pr\Big[\forall i \in [N_2], |\beta_i + \iprod{Y_2(i), \alpha_{[m]}}|\le \epsilon \Big]\\
    &=\Pr_{\substack{g_1,\dots,g_{N_2}\sim_{iid}\\ N(0,\lambda_2 \norm{\alpha_{[m]}})}} \Big[ \forall i \in [N_2],~ |\beta_i + g_i|\le \epsilon \Big]\\
    &\le \Big( \frac{\epsilon \cdot  (1+\gamma) \sqrt{N_2}}{c_7\lambda_2 \tau}\Big)^{N_2} \le N_2^{-C N_2},  
\end{align*}
by setting $\epsilon$ appropriately as $\epsilon= c_7 \tau \lambda_2 /((1+\gamma) N_2^{C+1})$, which is still inverse polynomial in $N_2$. Now by doing a standard union bound argument over a net of $\alpha \in \R^{N_2}$, the claim follows. 
\end{proof}

\end{document}